\documentclass[11pt]{article}
\usepackage{amssymb,amsmath,amsthm}
\usepackage{bbm}
\usepackage[noend]{algpseudocode}
\usepackage{graphicx}
\usepackage{verbatim}
\usepackage{url,xspace,hyperref}
\usepackage{caption}
\usepackage{subcaption}
\usepackage{multirow}
\usepackage{multicol}
\usepackage{latexsym}
\usepackage{amsmath,amssymb,enumerate}
\usepackage{xspace}
\usepackage[linesnumbered,ruled,vlined]{algorithm2e} 
\usepackage{float}
\usepackage{xcolor}
\usepackage{mathrsfs}
\usepackage{cleveref}
\usepackage{mathtools}
\usepackage{bm}
\usepackage{tikz}
\usepackage{caption}
\usepackage{multirow}
\usepackage[normalem]{ulem}
\usepackage{makecell}
\usepackage{enumitem}  
\usepackage[round]{natbib}
\usetikzlibrary{positioning,decorations.pathreplacing,quotes}
\usepackage{setspace}

\tikzset{tick/.style={draw, minimum width=0pt, minimum height=2pt, inner sep=0pt, label=below:$#1$},
    tick/.default={}}

\usepackage{fullpage}
\usepackage{geometry}
\usepackage{setspace}
\usepackage{changepage}
\usepackage{titlesec}

\hypersetup{
  colorlinks   = true, 
  urlcolor     = blue, 
  linkcolor    = black, 
  citecolor   = black 
}

\newtheorem{theorem}{Theorem}
\newtheorem*{theorem*}{Theorem}

\newtheorem{proposition}{Proposition}[section]
\newtheorem{claim}{Claim}
\newtheorem{lemma}[proposition]{Lemma}

\newtheorem{corollary}[proposition]{Corollary}
\newtheorem{remark}[proposition]{Remark}

\newcommand{\E}{\mathbb{E}} 
\newcommand{\OPT}{\mathtt{OPT}}
\newcommand{\ALG}{\mathtt{ALG}}

\title{\vspace{-0.5in} Online Contract Selection for Continual Coverage}
\author{}

 \author{
 Qinge Chi\thanks{Department of Computational Applied Mathematics and Operations Research, Rice University, USA.}\ 
 \thanks{Ken Kennedy Institute, Rice University, USA.} \and
 Sebastian Perez-Salazar\footnotemark[1]\ \footnotemark[2]
 }
 \date{}

\begin{document}

\maketitle

\begin{abstract}
Motivated by applications where a system must remain operational via continual procurement of contracts, we study two online contract selection problems under uncertain prices. At each time step, a price drawn from a known distribution is revealed online, and the decision-maker may initiate a contract of arbitrary duration, incurring a cost equal to the product of the price and the contract length; moreover, every time period must be covered by at least one active contract. We consider two models depending on how contracts cover time: a \emph{deferred model}, in which contracts are queued back-to-back, and a \emph{concurrent model}, in which contracts become active immediately and may overlap. In both settings, we seek online algorithms that minimize their competitive ratio, i.e., the ratio between the expected cost incurred by the online algorithm and the expected offline optimal cost when all prices are known in advance.

We first focus on the case where prices are independent and identically distributed (i.i.d.). For the deferred model, we characterize exactly the worst-case optimal competitive ratio, which is asymptotically $\zeta^* \approx 2.472$ as the time horizon grows. For the concurrent model, we prove a lower bound of $\zeta^*$ on the optimal competitive ratio and an asymptotic competitive ratio of at most $4.179$. These bounds improve upon the current lower bound of $2.148$ and upper bound of $6.052$ on the optimal competitive ratio. For both models, our algorithms are quantile-based that can be easily translated into practical threshold-based algorithms for any distribution. Our proofs follow from linear programs and duality arguments in quantile spaces. Lastly, we show that, in both models, no finite competitive ratio exists when the prices are still independent but not necessarily identically distributed, proving a striking division in the two price settings.
\end{abstract}



\section{Introduction}

Online selection has attracted increasing attention over the past decades, driven by a wide range of applications such as selling~\citep{correa2021posted,hajiaghayi2007automated}, hiring~\citep{epstein2024selection,perez2025robust}, and more general resource allocation problems~\citep{babaioff2007knapsack,kesselheim2014primal,karp1990optimal,feldman2009online}. Broadly, in online selection, a resource-constrained decision-maker observes online arriving values and must irrevocably decide whether to allocate resources to the currently observed value, trading off the value of the current arrival against the possibility of reserving resources for potentially better future ones.

A less explored setting arises in applications where systems must continuously procure costly contracts in order to remain operational, inducing a nontrivial online cost-minimization problem with a temporal covering constraint. 
For instance, a small business may rely on freelance services to maintain operations, and failing to maintain an active contract at any point in time may jeopardize the operations (see, e.g.,~\citep{Disser2019}). Another example is a person (or firm) who must maintain continuous insurance coverage (e.g., health, renter's, or equipment insurance), where a lapse in coverage may cause 
substantial financial losses in unforeseen events. 
These applications give rise to a new trade-off for the decision-maker: \emph{long contracts provide guaranteed coverage and more time to search for better future prices, but may lock in high prices for long periods of time, 
whereas shorter contracts reduce immediate costs but provide less time to explore future prices.} 

To study this new setting and to understand this new trade-off, we consider two online contract selection models in which a decision-maker observes online a sequence of independent and identically distributed (i.i.d.)\ nonnegative random values $X_1,\ldots,X_n$, drawn from a known distribution, where $X_i$ represents the cost per unit time of initiating a contract at time $i$. Upon observing a value, the decision-maker may initiate a contract of arbitrary duration at a cost equal to the product of the observed value and the contract duration. In an online manner, the decision-maker \emph{must ensure that every time period remains covered by at least one active contract}. Depending on how contracts are assigned to times, we consider the following models:

\medskip

In the \emph{Online Selection with Deferred Contracts} (OSDC) model, contracts are queued back-to-back. More precisely, upon observing value $X_i$ at time $i$, the decision-maker chooses an integer $T_i \geq 0$, covering times $\ell+1,\ldots,\ell+T_i$ at cost $T_i \cdot X_i$, where $\ell \geq i-1$ is the end time of the last active contract. If time $i$ is not yet covered by any contract, i.e., $\ell = i-1$, the decision-maker must choose $T_i \geq 1$ to ensure coverage of time $i$.

\medskip

In the \emph{Online Selection with Concurrent Contracts} (OSCC) model, introduced by~\cite{Disser2019}, newly selected contracts become active immediately and may overlap with previously active contracts. Upon observing value $X_i$, the decision-maker chooses an integer $T_i \geq 0$, which guarantees that times $i,i+1,\ldots,i+T_i-1$ are covered by at least one active contract, at cost $T_i \cdot X_i$. As before, if no contract is active at time $i$, then the decision-maker must choose $T_i \geq 1$.

\medskip



In both models, we seek online algorithms that minimize their corresponding overall expected cost while ensuring continuous contractual coverage at all times. We evaluate algorithms using the \emph{worst-case competitive ratio}, namely, the largest ratio possible among all possible distributions, between the expected cost incurred by the algorithm and the expected offline optimal cost. The offline solution knows all $X_1,\ldots,X_n$ upfront and contracts can be assigned optimally. The competitive ratio is always at least $1$ and quantifies the cost of lacking foreknowledge in the online setting.

Even though optimal algorithms for both models can be obtained through dynamic programming~(DP), natural formulations lead to optimal algorithms that depend on the current time, times covered by active contracts and the observed value. Unfortunately, in online selection problems, such algorithms have been proven notoriously hard to analyze within competitive analysis (see, e.g.,~\citep{Brustle2025splitting}). Instead, for OSCC,~\cite{Disser2019} give an algorithm based on the idea of selecting long contracts while simultaneously searching for better values using threshold strategies. By optimizing over contract lengths and the acceptance thresholds used during active contract periods, they obtain a competitive ratio of at most $6.052$. They also find a lower bound of $2.148$ on the competitive ratio. In contrast, for OSDC, no general guarantees are known.

\paragraph{Summary of our results.}
For OSDC, we fully characterize the optimal competitive ratio, showing that it is equal to a quantity that can be computed recursively through a finite system of equations. To obtain this result, we first provide a reduction to a sequence of cost-minimization single-selection problems, which also yields a characterization of the optimal algorithm. Then, we connect these problems via quantile algorithms and analyze them using linear programming~(LP) and duality. As a byproduct of our analysis, we also have a characterization of the worst-case instance. Asymptotically, we show that the competitive ratio approaches $\zeta^* \approx 2.472$, where $\zeta^*$ is the unique parameter that ensures a the existence of a solution to a nonlinear system (see~\eqref{eq:ode11}-\eqref{eq:ode22}).

For OSCC, we study a class of parameterized quantile-based algorithms. Using LP techniques, we derive a system that allows us to optimize over a small number of parameters and obtain an asymptotic competitive ratio of $4.179$, as well as improved guarantees for the special instance with uniformly distributed values. We also show that algorithms for OSCC can be translated to algorithms for OSDC by not worsening the cost incurred by the algorithm; thus, our tight result for OSDC allows us to conclude that the worst-case competitive ratio for OSCC is at least $\zeta^*$. Table~\ref{tab:table_1} provides a comparison with known results in OSCC.

\begin{table}[h!]\small
        \centering
        {\begin{tabular}{|c|c|c|c|c|}
        \hline
             Type of result       & \multicolumn{2}{c|} {Upper Bounds}  & Lower Bounds \\
            \hline
             &  Uniform distribution & Worst-case distribution & Worst-case distribution \\ \cline{2-4}
             Our results & $2.945$ (any $n$)  &   & $\zeta^* \approx 2.472$ (any $n$) \\
             & $2.908$ $(n\to \infty)$   & $4.179$ ($n\to \infty$)  & (competitive ratio for OSDC) \\
            \hline
            \cite{Disser2019}  & $2.965$ (any $n$) & $6.052$ (any $n$) & $2.148$ (numerical) \\
            \hline
        \end{tabular}}
        \caption{{Comparison with existing results on OSCC. For upper bounds, improvements are reflected by smaller values, vice versa for lower bounds.}}
        \label{tab:table_1}
    \end{table}

We extend the discussion to the case where values remain independent but not necessarily i.i.d. We prove that in both models is impossible to obtain a constant competitive ratio, showing a stark separation between the two value settings. 

\subsection{Problem Formulation} \label{subsec:intro1}

Let $\mathcal{F}$ be the family of CDFs with nonnegative support. The input to both OSDC and OSCC consists of a CDF~$F \in \mathcal{F}$ and an integer $n \geq 1$. Time is discrete and indexed by $i=1,\ldots,n$. Initially, at time $i=1$, no time is covered by a contract. In both problems, an online algorithm observes a sequence of i.i.d.\ random variables $X_1,\ldots,X_n \sim F$ one at a time, and outputs for each observation a contract length $T_i$ subject to the constraint that, \emph{upon observing a time, it must be covered by at least one contract,} i.e., $T_1 + T_2 + \cdots + T_i \geq i$ for all $i=1, \ldots, n$. 

For OSDC, at time $i$, given the observed $X_1,\ldots,X_i$ and the end time $\ell \geq i-1$ of the last active contract, an online algorithm $\text{ALG}^{\text{D}}$ outputs a contract length $T_i \geq 0$ on $X_i$, covering times $\ell+1,\ldots,\ell+T_i$, so the contracts admit a back-to-back structure. 

For OSCC, at time $i$, given the observed $X_1,\ldots,X_i$ and the set of active contracts, an online algorithm $\text{ALG}^{\text{C}}$ outputs a contract length $T_i \geq 0$ on $X_i$, which starts immediately at time $i$ and covers times $i,i+1,\ldots,i+t-1$. Note that multiple contracts can be concurrently active in OSCC. 

For both OSDC and OSCC, the expected cost of an online algorithm $\text{ALG}^{\mathcal{M}}$ is given by $$\ALG_n^{\mathcal{M}}(F) = \E \left[\sum_{i=1}^n T_i X_i\right],$$ for the model $\mathcal{M} \in \{C, D\}$, where $C$ represents the OSCC model and $D$ represents the OSDC model, and $\{T_i\}_{i=1}^n$ are contract lengths output by $\text{ALG}^{\mathcal{M}}$. 

In contrast, an offline algorithm on the same instance observes all realizations of $X_1,\ldots,X_n$ in advance. Note that both OSDC and OSCC models share the same offline optimal solution, which always selects the lowest among the arrived values, resulting in back-to-back contracts. The expected cost $\OPT_n(F)$ incurred by the offline optimal algorithm is given by 
\[
\OPT_n(F)= \E\!\left[ \sum_{i=1}^n \min\{X_1,\ldots,X_i\} \right].
\]

For an online algorithm $\text{ALG}^{\mathcal{M}}$ where $\mathcal{M} \in \{ C, D\}$, let $R_n^{\mathcal{M}}(\text{ALG}^{\mathcal{M}},F)={\ALG^{\mathcal{M}}_n(F)}/{\OPT_n(F)}$ be the competitive ratio of ALG$^\mathcal{M}$ in instances of size $n$ and CDF $F$. Define the worst-case competitive ratio in instances of size $n$ for $\text{ALG}^{\mathcal{M}}$ as 
\[
R_n^{\mathcal{M}}(\text{ALG}^{\mathcal{M}})=\sup_{F \in \mathcal{F}} R_n^{\mathcal{M}}(\text{ALG}^{\mathcal{M}},F), 
\]
and the optimal competitive ratio in instances of size $n$ as $\inf_{\text{ALG}^{{\mathcal{M}}}} R_n^{\mathcal{M}}(\text{ALG}^{\mathcal{M}})$. {To avoid excessive notation, we often omit the superscript in ALG and $\ALG_n$ when the model is clear from the context.}

\subsection{Overview of Our Contributions} \label{subsec:intro2}


{ In this work, we study the competitive ratios of OSDC and OSCC via quantile-based algorithms. Since the worst-case competitive ratio is defined as the supremum over all CDFs $F \in \mathcal{F}$, while the thresholds used by online algorithms naturally depend on the underlying distribution, we parameterize thresholds through their \emph{quantile} levels $q = F(\tau)$. This representation yields algorithm descriptions that are universal across all distributions and allows us to provide our competitive ratio analyses via LP techniques. We refer to the quantile levels used by the algorithms as \emph{quantile benchmarks}. Below, we summarize our results: a tight characterization of the competitive ratio for OSDC together with improved upper and lower bounds for OSCC.}

\paragraph{Tight competitive ratio for OSDC.} The main challenge in analyzing the optimal competitive ratio of contract selection problems is that an optimal algorithm's contract lengths depend on the current time, times covered by active contracts, and the observed value. For OSDC, we overcome this difficulty through the following reduction: the expected cost of the optimal algorithm for OSDC equals the sum of the expected costs of $n$ single-selection problems, where the $i$-th problem corresponds to the online single-selection problem of minimizing the expected value selected from $X_1,\ldots,X_i$, and the $i$-th single-selection problem is linked to the $(i-1)$-th problem via dynamic programming. This reduction yields a quantile-based characterization of the optimal algorithm, in which the quantile benchmark of contracting $X_i$ for different durations can be computed upfront using the DP for the single-selection problems (see Section~\ref{subsec:lb1} for details).

Furthermore, this sequence of single-selection problems allows us to characterize the optimal competitive ratio in instances of size $n$ as the optimal value of an infinite-dimensional LP (see~\ref{form:LPCP}). In this formulation, we have two types of variables: (i) variables capturing the expected value from each single-selection problem, and (ii) variables modeling the inverse cumulative distribution function $F^{-1}$. The objective is the sum of the expected costs of the single-selection problems, while the constraints encode (a) the dynamic programming recursion relating the $i$-th and $(i-1)$-th problems, and (b) the normalization that the expected offline optimal cost equals $1$. In the quantile space, constraints (a) and (b) are indeed linear. Using LP duality, we construct matching primal and dual solutions, which characterize the optimal quantiles used in each single-selection problem and the worst-case distribution. The optimal quantiles also serve as the qunatile benchmarks for the optimal algorithm in OSDC. 
Our matching primal-dual solutions also characterize the optimal competitive ratio in instances of length $n$ as a quantity $\zeta_n$ satisfying a finite system of equations.

To provide a value independent of $n$ in $\zeta_n$, we perform an asymptotic analysis of the system describing $\zeta_n$. As a result, we obtain the following system of ordinary differential equations, 
\begin{align}
&y' = \left( \frac{1}{\ln(y)} - \frac{2}{(\ln(y))^2} - \frac{2(1-y)}{y (\ln(y))^3}\right)^{-1} \cdot \left( - \frac{2(1-y)}{\ln(y)} - y - \frac{x}{\zeta^*}\right), \label{eq:ode11} \\
&y(0) = 0, \quad \lim_{x \uparrow 1} y(x) = 1, \label{eq:ode22}
\end{align}
where $y=y(x)$. We show that $\zeta_n\to \zeta^*$ where $\zeta^* \approx 2.472$. All details appear in Section~\ref{section:lowerbound}.

\paragraph{New competitive ratios for OSCC.} For OSCC, there is no known reduction like the one applied in OSDC. Thus, to establish our new bound on the competitive ratio for OSCC, we instead introduce a parameterized family of quantile-based algorithms for which we can derive explicit upper bounds on the expected cost. We leverage these bound using LP techniques to optimize over the parameters of the family and provide the asymptotic competitive ratio of $4.179$.

{Briefly, our family of quantile-based algorithms is parameterized by: (1) contract lengths, (2) corresponding quantile benchmarks, and (3) search durations. Each quantile benchmark determines the acceptance threshold for selecting a contract of a given length, while the associated search duration specifies how long the algorithm waits for a value below that benchmark. After selecting a contract, the algorithm searches for better future values using lower quantile benchmarks. If the corresponding search duration expires without selecting a new contract, the algorithm relaxes the quantile benchmark to a higher one and restarts the corresponding search duration. For every $n$ and CDF $F$, we derive explicit upper bounds on the expected cost incurred by the algorithm through a linear system of recursive inequalities. We refer for now to this meta-algorithm via ALG.}


For the analysis of the competitive ratio of ALG, we employ an LP in the quantile space. The construction is more involved than in the OSDC model, so we divide the analysis into two parts. We first consider the case in which the input distribution is uniform over an interval $[a,b]$, where $0\leq a<b$. Although simpler, this setting already illustrates how to construct dual solutions to the underlying LP, which we later extend to the general case. In the general case, we extend the LP approach to arbitrary distributions by adding a constraint that normalizes the expected optimal offline cost. However, this new constraint---which involves inverse of CDFs as variables---makes the resulting LP infinite-dimensional, introducing challenges that we address through approximations.

\paragraph{Ratios for uniform distributions.} Since the recursion that upper bounds the cost of ALG is a linear system, we can naturally formulate a maximization LP whose variables are the terms in the linear system. The resulting LP upper bounds the expected cost of ALG (see \ref{form:LPunif}). We apply a series of relaxations to this LP to reach a formulation whose dual, which is a minimization LP, has an explicit optimal solution characterized by the parameters defining ALG. By tuning these parameters, we show that $\lim_{n\to\infty} R_n^{\text{C}}\left(\text{ALG}, F_{U[a,b]} \right) \leq 2.908$, where $F_{U[a,b]}$ denotes the CDF of the uniform distribution over $[a,b]$. Furthermore, using the same parameters together with careful approximations for small values of $n$, we establish that $R_n^{\text{C}}\left(\text{ALG}, F_{U[a,b]} \right) \leq 2.945$ for every $n\geq 1$. The gap between the asymptotic bound and the bound for every $n$ arises from approximation terms that vanish as $n$ grows. This result improves upon the previously known upper bound of $2.965$ obtained by~\cite{Disser2019}. We provide the details in Section~\ref{section:unif}.

\paragraph{General competitive ratios.} A key property of the LP used in the uniform distribution case is that the terms involving the distribution, namely those depending on $F_{U[a,b]}$, are decoupled from the variables tracking the algorithm costs. Consequently, we can treat $F$ itself as a variable and optimize over all CDFs while imposing that the expected offline optimal value equals $1$. As in the OSDC model, this normalization is without loss of generality, since any instance can be rescaled to have expected offline optimal value of $1$ without affecting the competitive ratio.

After reformulating the problem in the quantile space to linearize the terms involving the inverse CDF, we obtain an infinite-dimensional LP that upper bounds $R_n^{\text{C}}(\mathrm{ALG})$ (see \ref{form:LPUB}). Applying relaxations similar to those used in the uniform case yields another infinite-dimensional LP whose dual solution can be partially matched with the solution derived for the uniform distribution case. The remaining unmatched term, stemming from the new primal constraint, reduces to a piecewise-convex function over $[0,1]$. This allows us to derive an upper bound on $R_n^{\text{C}}(\mathrm{ALG})$ given by the maximum value of this piecewise function, which depends only on the parameters defining ALG and $n$. { Unfortunately, analyzing this function for finite $n$ is difficult, as both the number of pieces and the structure of each piece depend on $n$ and the parameters of ALG.} This motivates an asymptotic analysis, in which we show that several terms in the function become negligible as $n$ grows. This yields a simpler optimization problem, from which we finally obtain $\limsup_{n\to\infty} R_n^{\mathrm{C}}(\mathrm{ALG}) \leq 4.172$ upon optimizing the parameters of ALG. We provide all the details in Section~\ref{section:general}. 

\paragraph{Improved lower bound for OSCC via OSDC.} { A transformation shows that an algorithm for OSCC can be adapted to an algorithm for OSDC such that the cost of the new algorithm in OSDC is no worse than the cost of the original algorithm in OSCC.} This implies that the competitive ratio for OSCC is at least the competitive ratio for OSDC. Our exact characterization for the competive ratio of OSDC then implies that the competitive ratio of OSCC is at least $\zeta^* \approx 2.472$. We provide the details in Section~\ref{section:noniid}.

\paragraph{Beyond i.i.d.\ values.} Finally, we extend our results to the setting where values are independent but drawn from possibly different distributions. In this setting, we show that for OSDC, no online algorithm achieves a bounded competitive ratio. This impossibility extends to OSCC, since the competitive ratio of OSDC remains a lower bound on that of OSCC in the non-identically distributed setting, as discussed earlier. This establishes a sharp separation between the i.i.d.\ and non-identical value models. 
We provide the construction in Section~\ref{section:noniid}. 
    
\subsection{Related Work}

\textbf{Prophet inequalities.} Prophet inequality problem, introduced by~\cite{Krengel1977} in the maximization single-selection setting, introduced the idea of comparing the expected value of an online decision-maker against a ``prophet'' that knows all information upfront. A tight competitive ratio of $1/2$ is possible with a simple single threshold algorithm~\citep{SamuelCahn1984}. This competitive ratio improves to $0.745$ under the i.i.d.\ observed values~\citep{HillKertz1982iid, Kertz1986iid, Abolhassani2017beatingiid, correa2021posted}. Prophet inequality problems have regained popularity due to their close connection to posted-price mechanism~\citep{hajiaghayi2007automated}, giving rise to a vast literature of variations such as multi-selection~\citep{Alaei2014}, matroid constraints~\citep{hajiaghayi2007automated, Chawla2010postedprice, Kleinberg2012matroid}, knapsack~\citep{Dutting2020stochastic, Jiang2025}, matching~\citep{Alaei2012onlineprophet}, and other combinatorial constraints~\citep{Rubinstein2017, Ehsani2024combinatorialauction}. Another line of interest is prophet inequalities with unknown distributions~\citep{Correa2019unknown, Rubinstein2019OptimalSP, Correa2024sample}. 

Closer to our work is the cost-minimization prophet inequality problem. In contrast to the maximization case, results on minimization are sparse. In the i.i.d.\ single-selection setting, ~\cite{Livanos2024minimization} characterized tight competitive ratio for a large family of distributions. However, in general, there is no finite competitive ratio~\citep{Esfandiari2017prophetsecretary, Livanos2024minimization}. Recently, \cite{Qin2024} studied a minimization problem in which a fraction of a divisible resource must be procured at random online prices.

\textbf{Problems over time.} Our problems also relate to online selection over time problems, a class that has emerged more recently. The maximization objective has been studied under both the adversarial setting~\citep{Fiat2015temp, Kesselheim2016improvedtemp} and the stochastic setting~\citep{Faw2022reusable, Feng2024reusable}, as well as in terms of utility functions~\citep{berzack2025dynamic} from a game theory perspective. The problem introduced by~\cite{abels2025prophet} can be viewed as the maximization counterpart of our problems. They showed that the optimal algorithm via DP admits a simple structure and presented a single threshold algorithm that is $0.396$-competitive. Later,~\cite{PS2024} presented an algorithm attaining the optimal competitive ratio of $0.618$. The minimization case has only been studied by~\cite{Disser2019}, our results for OSCC improve upon their current bounds on the competitive ratio. Their analysis relies on an Markov chain analysis, while we derive our results with an LP approach.  

\textbf{Mathematical optimization for online selection.} 
LP forms the backbone of our analysis. The idea of solving sequential decision problems using LP dates back to~\cite{Manne1960}. This has become a powerful tool in designing and analyzing online algorithms for problems such as online matching~\citep{Mehta2007onlinematching, Goyal2023matching}, online knapsack~\citep{babaioff2007knapsack, kesselheim2014primal}, stochastic probing~\citep{Gupta2013stochastic, epstein2024selection}, secretary problem~\citep{Buchbinder2014secretarylp, Chan2015secretaryLP, perez2025robust}, prophet inequalities~\citep{Alaei2012onlineprophet, Jiang2025}, and competition complexity~\citep{Brustle2024complexity}. Both~\cite{perez2025iid} and~\cite{Brustle2025splitting} presented frameworks of dual fitting for quantile-based algorithms of prophet inequalities (see also~\citep{Jiang2025}). In Section~\ref{section:lowerbound}, our dual solution follows a similar construction. However, in our case, we are to construct a matching primal solution to prove optimality.

\section{Preliminaries} \label{section:assumption}

{  We focus on quantile-based algorithms, where, given a threshold $\tau$ and a CDF $F$, the corresponding quantile benchmark $q$ is defined as $q = F(\tau)$. Our algorithms require that we can compute $\tau$ for each quantile benchmark $q$, while our analysis requires that $F^{-1}$ be strictly increasing and differentiable. In this section, we justify that we may assume without loss of generality that all CDFs considered in the remainder of the work satisfy these properties. } 

\begin{proposition} \label{prop:assumption}
Given a model, suppose there exists an algorithm $\text{ALG}$ such that it guarantees $\ALG_n(\hat{F}) \leq \beta \cdot \OPT_n(\hat{F})$ where $\beta \geq 1$, for any $\hat{F}$ strictly increasing and continuously differentiable. Then, the worst-case competitive ratio in instances of size $n$ holds $R_n(\text{ALG}) \leq \beta$.
\end{proposition}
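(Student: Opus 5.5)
Fix an arbitrary $F \in \mathcal{F}$. The plan is to show $\ALG_n(F) \le \beta\,\OPT_n(F)$ by approximating $F$ with strictly increasing, continuously differentiable CDFs $\hat F_k$ and passing to the limit. The approximation is done at the level of the random variables. Let $X \sim F$ and let $Z$ be an independent random variable with a smooth, strictly positive density on $(0,\infty)$, for instance $Z \sim \mathrm{Exp}(1)$. Set $X^{(k)} = X + Z/k$. Its CDF $\hat F_k$ is the convolution of $F$ with a smooth positive density. It is therefore strictly increasing on $[0,\infty)$, continuously differentiable, and supported on the nonnegative reals, so $\hat F_k \in \mathcal{F}$ satisfies the hypothesis. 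We may couple i.i.d.\ copies, $X_i^{(k)} = X_i + Z_i/k$, so that $X_i \le X_i^{(k)} \le X_i + Z_i/k$ pointwise.

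\textbf{Step 1 (OPT converges).} Since $\min$ is monotone and $1$-Lipschitz in each coordinate, we get $\OPT_n(F) \le \OPT_n(\hat F_k) \le \OPT_n(F) + n^2 \E[Z]/k$. Hence $\OPT_n(\hat F_k) \to \OPT_n(F)$. If $\OPT_n(F) = \infty$ there is nothing to prove. If $\OPT_n(F) = 0$, then $X_1 = 0$ almost surely on the relevant event. This degenerate case is handled by noting that a sensible algorithm pays $0$ as well; otherwise it is excluded by convention, since the ratio is undefined.

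\textbf{Step 2 (running ALG on $F$).} This is the main obstacle, because the statement is about a given algorithm $\text{ALG}$, which may be defined only through $F$. I would interpret $\text{ALG}$ as quantile-based, as the paper does. On input $F$, we simulate it on the perturbed sequence: observe $X_i$, privately draw $Z_i$, and feed $X_i + Z_i/k$ to $\text{ALG}$ run with $\hat F_k$. We then output the same contract lengths $T_i$. Feasibility (coverage) depends only on the $T_i$, so it is preserved. The realized cost satisfies $\sum_i T_i X_i \le \sum_i T_i X^{(k)}_i$, so this randomized procedure has expected cost at most $\ALG_n(\hat F_k) \le \beta\,\OPT_n(\hat F_k) \le \beta\bigl(\OPT_n(F) + n^2\E[Z]/k\bigr)$.

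\textbf{Step 3 (taking the limit).} Letting $k \to \infty$, and taking the randomized quantile version as the definition of $\text{ALG}$ on non-regular $F$, gives $\ALG_n(F) \le \beta\,\OPT_n(F)$. Taking the supremum over $F$ then yields $R_n(\text{ALG}) \le \beta$.

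The delicate point is Step 2. One must argue that the algorithm's behaviour on a general $F$ is exactly this smoothed or randomized-quantile simulation. This is equivalent to using $F^{-1}$ applied to a uniform quantile with random tie-breaking at atoms, and it preserves the quantile law. So it is a legitimate realization of the quantile algorithm. It also requires the bound $T_i \le n$, which gives domination, together with integrability to justify the limits.
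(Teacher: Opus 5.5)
Your monotone coupling is a genuinely different device from the paper's, and it is cleaner where it applies. You feed $\text{ALG}$ values $X_i^{(k)}\ge X_i$ and execute its contract lengths at the true prices, so the cost comparison in Step~2 is pointwise and needs no fallback rule; Step~1 is then a one-line Lipschitz bound.

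The gap is the claim that $\hat F_k$ satisfies the hypothesis of the proposition. There are two problems.

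\textbf{Continuous differentiability.} The $\mathrm{Exp}(1)$ density, viewed as a function on $\mathbb{R}$, jumps at $0$. So whenever $F$ has an atom at some $x_0>0$, the density of $X+Z/k$ jumps by $k\Pr[X=x_0]$ at $x_0$. For example, $F=\tfrac12\delta_1+\tfrac12\delta_2$ gives a kink of $\hat F_k$ at $2$, so $\hat F_k$ need not be continuously differentiable. This is easy to repair. Take $Z$ with a density $\phi$ that is continuous and bounded on all of $\mathbb{R}$, e.g.\ $\phi(z)=ze^{-z}$ for $z>0$ and $0$ otherwise. Then the density $y\mapsto\E[k\phi(k(y-X))]$ of $X+Z/k$ is continuous by bounded convergence.

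\textbf{Strict monotonicity on $[0,\infty)$.} This is the more serious problem. Nonnegative additive noise never creates mass below $c=\inf\operatorname{supp}F$. If $c>0$, then $\hat F_k\equiv 0$ on $[0,c]$, so $\hat F_k$ is not strictly increasing on $[0,\infty)$. That is the class the paper works in: its construction guarantees $\hat F'(x)>0$ for every $x\ge 0$, and Lemma~\ref{lm:opt} and everything after it write $F^{-1}(u)=\int_0^u r(v)\,\mathrm{d}v$, which forces $F^{-1}(0)=0$. You cannot fix this inside your framework. Putting mass near $0$ means sometimes feeding $\text{ALG}$ a value below the true price, and that destroys the domination $\sum_i T_iX_i\le\sum_i T_iX_i^{(k)}$.

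The missing ingredient is essentially the paper's construction. Mix in a tiny-weight component with positive density on all of $[0,\infty)$; the paper uses weight about $\varepsilon/n^3$ on $\mathrm{Exp}(1)$ after normalizing $\OPT_n(F)=1$. The first time the sample fed to $\text{ALG}$ is a fake one, stop following $\text{ALG}$ and cover the rest of the horizon with a contract at the true current price $X_i$. The real/fake coins are independent of the prices, so this has three consequences:
\begin{itemize}
\item the fallback costs at most $n\,\E[X_1]\le n\,\OPT_n(F)$ in expectation;
\item it is triggered with probability at most $\varepsilon/n^2$;
\item the mixture raises $\OPT_n$ only by a term that vanishes with $\varepsilon$.
\end{itemize}
Your smoothed convolution can then serve as the continuous-differentiability step that the paper only cites as standard, with the mixture-plus-fallback on top supplying strict monotonicity near $0$.

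Two smaller remarks. First, your Step~3 builds a single limiting ``randomized-quantile'' algorithm. That is only plausible for quantile-based $\text{ALG}$, and it is not what the paper does. The paper exhibits, for each $\varepsilon>0$, an algorithm with cost at most $\beta(1+2\varepsilon/n)\OPT_n(F)$ and lets $\varepsilon\to 0$. Second, if you conclude that way with a fixed $k$, your additive error $n^2\E[Z]/k$ is not scale-invariant. To get a guarantee uniform over $F$, scale the noise by $\OPT_n(F)$, which is known since $F$ is given; the paper's normalization does this implicitly.
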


The proof can be found in Appendix~\ref{app:assumption}. In short it follows by adding small noise to the observed values similar to~\cite{PS2024}. 
From now on, we thus assume all CDFs are strictly increasing and continuously differentiable.

We now present expressions for $\OPT_n(F)$ that will appear later in the work. The following lemma expresses $\OPT_n(F)$ in terms of the inverse $F^{-1}$ with a change of variable, which becomes handy when working with quantiles. The proof can also be found in Appendix~\ref{app:assumption}. 

\begin{lemma} \label{lm:opt}
For all positive integer $n$ and $F \in \mathcal{F}$, $\OPT_n(F) = \int_0^1 F^{-1}(u) \cdot \sum_{i=1}^n i (1-u)^{i-1} \, \mathrm{d}u$. With the change of variable $F^{-1}(u) = \int_0^u r(v) \, \mathrm{d}v$, $\OPT_n(F) = \int_0^1 r(v) \cdot \sum_{i=1}^n (1-v)^i \, \mathrm{d}v$. 
\end{lemma}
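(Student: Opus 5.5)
We write each term of $\OPT_n(F)=\E\bigl[\sum_{i=1}^n \min\{X_1,\ldots,X_i\}\bigr]$ in quantile space and then sum. By linearity of expectation, $\OPT_n(F)=\sum_{i=1}^n \E[M_i]$ with $M_i=\min\{X_1,\ldots,X_i\}$. Using the standing assumption after Proposition~\ref{prop:assumption} that $F$ is strictly increasing and continuously differentiable, we represent $X_j=F^{-1}(U_j)$ with $U_1,\ldots,U_n$ i.i.d.\ uniform on $[0,1]$. Since $F^{-1}$ is increasing, $M_i=F^{-1}(\min_j U_j)$. The minimum of $i$ i.i.d.\ uniforms has density $i(1-u)^{i-1}$ on $[0,1]$, so
\[
\E[M_i]=\int_0^1 F^{-1}(u)\, i(1-u)^{i-1}\,\mathrm{d}u .
\]
Summing over $i$ and exchanging the finite sum with the integral gives the first formula.

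For the second formula we substitute $F^{-1}(u)=\int_0^u r(v)\,\mathrm{d}v$. Here $r=(F^{-1})'\ge 0$, and the representation holds with $F^{-1}(0)=0$ because the support is nonnegative. If the support starts at $a>0$, one should add a constant term $a\cdot n$, or absorb $F^{-1}(0)$ by allowing a point mass in $r$ at $0$. We will note this and interpret the statement with $F^{-1}(0)$ absorbed, or assume infimum of support $0$ as in the paper's normalization. Then
\[
\int_0^1\!\int_0^u r(v)\, i(1-u)^{i-1}\,\mathrm{d}v\,\mathrm{d}u=\int_0^1 r(v)\int_v^1 i(1-u)^{i-1}\,\mathrm{d}u\,\mathrm{d}v=\int_0^1 r(v)(1-v)^i\,\mathrm{d}v .
\]
The exchange of integration order is justified by Tonelli, since the integrand is nonnegative. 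Summing over $i$ yields $\int_0^1 r(v)\sum_{i=1}^n(1-v)^i\,\mathrm{d}v$.

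Equivalently, each step is the tail formula $\E[M_i]=\int_0^\infty \Pr[M_i>x]\,\mathrm{d}x=\int_0^\infty(1-F(x))^i\,\mathrm{d}x$, followed by the change of variable $x=F^{-1}(v)$. This route gives the second formula directly and handles nonnegativity cleanly.

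The main subtlety is not computational but concerns integrability and boundary terms. If $\E[X]=\infty$, both sides may be infinite; they are still equal in $[0,\infty]$ by Tonelli, and the representation identity handles this. The other boundary issue is $F^{-1}(0)$. We will handle both by working with nonnegative integrands throughout and noting that $F^{-1}(0)$ is the infimum of the support, which is $0$ under the stated change of variable.
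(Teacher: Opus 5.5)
Your proof is correct and takes essentially the paper's route: the density $i(1-u)^{i-1}$ of the minimum in quantile space, followed by swapping the order of integration. The coupling $X_j=F^{-1}(U_j)$ is just a cleaner version of the paper's substitution $u=F(x)$. Your caveat about $F^{-1}(0)$ is also apt: the representation $F^{-1}(u)=\int_0^u r(v)\,\mathrm{d}v$ tacitly assumes the support starts at $0$, and otherwise an extra $n\,F^{-1}(0)$ term appears, which the paper does not mention.
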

\begin{remark} \label{lm:uniformopt}
For a problem with $X_i \sim \text{Unif}[0, 1]$, $\OPT_n = \sum_{i=1}^n {1}/{(i+1)}$. Indeed, $F^{-1}(u) = u$ for $\text{Unif}[0, 1]$, thus $r(v) = 1$ and the result follows by integrating $\int_0^1 (1-v)^i \, \mathrm{d}v= {1}/{(i+1)}$ in Lemma~\ref{lm:opt}. 
\end{remark}

\section{Optimal Competitive Ratio for OSDC} \label{section:lowerbound}

In this section, we characterize the optimal algorithm $\text{ALG}^*$ for OSDC and its optimal competitive ratio on instances of size $n$. For simplicity, we denote ${R}_n^*=\inf_{\text{ALG}} R_n^D(\text{ALG})$.  

The first result of this section states that $R_n^*$ can be computed via a recursive system. For $n\geq 1$, let $P_n(t) = \sum_{i=1}^n t^i$ and $P_n'(t) = \sum_{i=1}^n i \cdot t^{i-1}$ its first derivative. Consider the following system, which is to hold for all $j = 1, \ldots, n-1$: 
\begin{align}
P_n'(1-\varepsilon_{j+1}) - P_n'(1-\varepsilon_{j}) = \varepsilon_{j+1}P_n'(1-\varepsilon_{j+1}) + P_n(1-\varepsilon_{j+1}) - \frac{j}{\zeta}. \label{eq:lbrecurrence}
\end{align}
\begin{theorem} \label{thm:lowerbound}
For any $n\geq 1$, $R_n^*=\zeta_n$, where $\zeta_n$ is the unique $\zeta$ such that the solution $\{\varepsilon_j\}_{j=1}^n$ to the recurrence relation \eqref{eq:lbrecurrence} satisfies $0 \leq \varepsilon_n < \varepsilon_{n-1} < \ldots < \varepsilon_1 = 1$. Furthermore, we have $\zeta_n = {n}/({\varepsilon_n P_n'(1-\varepsilon_n) + P_n(1-\varepsilon_n)})$. 
\end{theorem}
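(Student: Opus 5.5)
We prove Theorem~\ref{thm:lowerbound} in three stages: a structural reduction of OSDC, an LP formulation in quantile space, and a matching primal--dual pair whose complementary slackness conditions are exactly the recurrence \eqref{eq:lbrecurrence}.

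\textbf{Stage 1: reduction to single-selection problems.} In OSDC, time $t$ is paid for by the contract that is active at $t$. Because contracts are queued, that contract was chosen among $X_1,\ldots,X_{t}$ at some online decision. An exchange argument shows that an optimal algorithm behaves as follows: at each time $i$, it buys coverage for exactly those future slots $t\ge i$ for which accepting $X_i$ now beats the continuation value of waiting. The relevant continuation value is that of the online single-selection problem of minimizing the selected value among the remaining observations before slot $t$ must be covered. Let $V_k$ be the optimal expected cost of selecting one value from $k$ i.i.d.\ draws, where the last draw must be taken if nothing was selected earlier. Then
\[
V_1=\E[X],\qquad V_{k+1}=\E\bigl[\min\{X,V_k\}\bigr],
\qquad\text{and}\qquad
\ALG^*_n(F)=\sum_{k=1}^n V_k .
\]
Slot $t$ effectively faces a $t$-step problem, and the queue structure lets these problems decouple. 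In quantile form, $V_{k+1}=\int_0^{q_k}F^{-1}(u)\,\mathrm{d}u+(1-q_k)V_k$ with the threshold quantile $q_k=F(V_k)$. Keeping the $q_k$ free yields an upper bound $V_{k+1}\le$ the same expression for any $q_k$, and this is tight at the optimal choice.

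\textbf{Stage 2: the LP.} The worst-case ratio is
\[
R_n^*=\sup_F \frac{\sum_k V_k}{\OPT_n(F)}.
\]
We normalize $\OPT_n(F)=1$ and write it in $r$-form via Lemma~\ref{lm:opt}. This gives an infinite-dimensional program:
\begin{itemize}[label={}]
\item maximize $\sum_k V_k$;
\item subject to $V_{k+1}\le \int_0^{q}F^{-1}+(1-q)V_k$ for every $q\in[0,1]$, written with $F^{-1}(u)=\int_0^u r$;
\item and $V_1\le\int r(v)(1-v)\,\mathrm{d}v$, $\int r\,P_n(1-v)\,\mathrm{d}v=1$, $r\ge0$.
\end{itemize}
For each fixed $q$ this is linear in $(V,r)$, so it is an LP.

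\textbf{Stage 3: primal and dual certificates.} For the dual, we place mass on a single quantile $q=\varepsilon$-type per constraint $k$ and attach a multiplier $\zeta$ to the normalization. Dual feasibility requires that, for every $v$, the coefficient of $r(v)$ be nonpositive. That coefficient is a piecewise polynomial in $1-v$ built from the $P_n$, $P_n'$ terms. Tightness at the breakpoints $\varepsilon_j$ and stationarity between consecutive ones produce exactly \eqref{eq:lbrecurrence}. In that relation, $j/\zeta$ is the dual weight accumulated from the first $j$ constraints, and the condition $\varepsilon_1=1$ encodes $V_1=\E[X]$. For the primal, we take $r$ to be a sum of point masses at the $\varepsilon_j$, i.e.\ a worst-case distribution that is a discrete mixture with atoms at those quantiles. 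We then verify that this $r$ makes each single-selection threshold hit quantile $\varepsilon_j$, so complementary slackness holds. Equating the primal value with the dual value gives $\zeta=n/(\varepsilon_nP_n'(1-\varepsilon_n)+P_n(1-\varepsilon_n))$: the normalization evaluated at the last breakpoint, where the total dual weight is $n/\zeta$.

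\textbf{Main obstacle.} The hardest part is existence and uniqueness of $\zeta_n$ together with the monotonicity $0\le\varepsilon_n<\dots<\varepsilon_1=1$. We plan to argue by monotonicity in $\zeta$. Starting from $\varepsilon_1=1$, the recurrence solves implicitly for $\varepsilon_{j+1}$, because the map $\varepsilon\mapsto P_n'(1-\varepsilon)(1-\varepsilon)-P_n(1-\varepsilon)$ is monotone. Increasing $\zeta$ shifts every $\varepsilon_j$ monotonically, so $\varepsilon_n$ is a continuous monotone function of $\zeta$. An intermediate-value argument then pins down the unique $\zeta$ at which the sequence stays decreasing and $\varepsilon_n\ge0$, with $\varepsilon_n$ on the boundary of admissibility. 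Checking dual feasibility for all $v$, and not only at the breakpoints, will use convexity of $P_n$ on each interval between breakpoints.
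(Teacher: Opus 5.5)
Your architecture is the paper's: reduction to $\sum_k V_k$, a quantile LP normalized by $\OPT_n=1$, then an explicit dual and a matching primal. Parametrizing by $r$ instead of $h$ is fine, since it simply absorbs the paper's $\eta$.

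\textbf{The gap is in the step you call the main obstacle.} You locate $\zeta_n$ by an intermediate-value argument that puts $\varepsilon_n$ on the boundary of admissibility, i.e.\ $\varepsilon_n=0$. Admissibility alone does not single out $\zeta$.
\begin{itemize}
\item The condition $0\le\varepsilon_n<\dots<\varepsilon_1=1$ holds on a whole interval of $\zeta$. The endpoint $\varepsilon_n=0$ is only the auxiliary value of Lemma~\ref{lm:lbgammaunique}.
\item At that endpoint, $\varepsilon_nP_n'(1-\varepsilon_n)+P_n(1-\varepsilon_n)=P_n(1)=n$, so the theorem's formula would return $1$. The corresponding dual is feasible but not tight.
\item For $n=2$, \eqref{eq:lbrecurrence} reads $(1-\varepsilon_2)^2=1-1/\zeta$. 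This is admissible for every $\zeta>1$, and $\varepsilon_2\to0$ only as $\zeta\to\infty$. The true value is $\zeta_2=1/(2\sqrt2-2)\approx1.207$, attained at $\varepsilon_2=2-\sqrt2$.
\end{itemize}
What actually pins down $\zeta_n$ is complementary slackness for $V_n>0$. The last flow constraint \eqref{constraint:LBD3} must be tight, $\zeta\bigl(P_n'(1-\varepsilon_n)-P_n'(1-\varepsilon_{n-1})\bigr)=1$. By \eqref{eq:lbrecurrence} at $j=n-1$ this is exactly $\zeta\bigl(\varepsilon_nP_n'(1-\varepsilon_n)+P_n(1-\varepsilon_n)\bigr)=n$; compare Proposition~\ref{prop:lbgammaunique}. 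Uniqueness then holds because $\varepsilon_n(\zeta)$ decreases in $\zeta$ and $\varepsilon\mapsto\varepsilon P_n'(1-\varepsilon)+P_n(1-\varepsilon)$ is decreasing, so the left side is strictly increasing in $\zeta$. Two further checks are not supplied by monotonicity of the implicit map:
\begin{itemize}
\item strict decrease of the sequence, which requires $\varepsilon_jP_n'(1-\varepsilon_j)+P_n(1-\varepsilon_j)>j/\zeta$ at each step;
\item existence of each $\varepsilon_{j+1}$ in $[0,1]$.
\end{itemize}
Your Stage~3 remark that the formula comes from equating primal and dual values is the right instinct, but it contradicts the boundary condition you then use.

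\textbf{Your certificates also need repair.}

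\emph{Dual.} One point mass per constraint can work, but not at the breakpoints. With $\mu_k=m_k\delta_{q_k}$, the function $G(v)=\alpha_1(1-v)+\sum_k m_k(q_k-v)_+$ is convex and piecewise linear. It cannot touch the smooth convex $\zeta P_n(1-v)$ from below at a kink, yet it must touch exactly where your primal $r$ has atoms. The fix is:
\begin{itemize}
\item place $q_k$ at the intersection of the tangent lines to $\zeta P_n(1-v)$ at $\varepsilon_k$ and $\varepsilon_{k-1}$;
\item take $m_k=\zeta\bigl(P_n'(1-\varepsilon_k)-P_n'(1-\varepsilon_{k-1})\bigr)$ and $\alpha_1=\zeta$.
\end{itemize}
That intersection is the $P_n''(1-q)$-barycenter of $[\varepsilon_k,\varepsilon_{k-1}]$. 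Hence $(1-q_k)m_k=\zeta\int_{\varepsilon_k}^{\varepsilon_{k-1}}(1-q)P_n''(1-q)\,\mathrm{d}q$, and your flow constraints coincide with \eqref{constraint:LBD1}--\eqref{constraint:LBD3} for the paper's density $\alpha_k(q)=\zeta P_n''(1-q)\mathbbm{1}_{[\varepsilon_k,\varepsilon_{k-1})}(q)$. By Jensen, the point mass only lowers $G$, so this is a valid variant of the paper's dual. Note that \eqref{eq:lbrecurrence} comes from summing these tightened flow constraints, not from the $r$-coefficient alone.

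\emph{Primal.} An atom of $r$ at $\varepsilon_1=1$ does nothing. Every coefficient of $r(v)$ in your LP ($1-v$, $(q-v)_+$, $P_n(1-v)$) vanishes at $v=1$. With jumps only at $\varepsilon_2,\dots,\varepsilon_n$, $V_1=\int_0^1F^{-1}$ lies strictly below the level of $F^{-1}$ on $[\varepsilon_2,1)$. Then the $V_2$ constraint is not tight where the dual puts mass, and complementary slackness fails. You need the paper's term $\Delta_2\delta_{\{1\}}$ in $h$; in $r$-form this is mass $\Delta_2/\xi$ at $1-\xi$ with $\xi\downarrow0$. So the value $\zeta_n$ is approached rather than attained. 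You also still have to solve for the jump sizes (the $S_1,S_2$ system and the identity $d_n=(1-\varepsilon_n)d_{n-1}$) and verify the normalization and tightness.
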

In addition, the following asymptotic result can be obtained from Theorem~\ref{thm:lowerbound}, in which the limit behavior of $\{\varepsilon_j\}_{j=1}^n$ as $n \rightarrow \infty$ is captured by a ordinary differential equation system. 
\begin{proposition} \label{prop:lowerbound}
We have $R_n^* \rightarrow \zeta^*$ as $n\to \infty$, where $\zeta^* \approx 2.472$ is the unique $\zeta$ such that the system \eqref{eq:ode11}-\eqref{eq:ode22} has a solution.  
\end{proposition}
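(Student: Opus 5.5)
Taking Theorem~\ref{thm:lowerbound} as given, we have $R_n^*=\zeta_n$. It therefore suffices to show that $\zeta_n$ converges and to identify the limit as the unique $\zeta$ for which \eqref{eq:ode11}--\eqref{eq:ode22} has a solution. The plan is to rescale the recursion \eqref{eq:lbrecurrence} so that it becomes a discretization of the ODE, with $x=j/n$ as the continuous time variable and $y$ encoding $\varepsilon_j$ on the natural scale. Since $P_n(1-\varepsilon)$ and $P_n'(1-\varepsilon)$ have nontrivial limits only when $\varepsilon\sim c/n$, we write $\varepsilon_j = -\ln(y_j)/n$. This is suggested by the $\ln(y)$ terms in \eqref{eq:ode11}: $(1-\varepsilon)^n\approx e^{-n\varepsilon}=y$. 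Under this substitution,
\[
P_n(1-\varepsilon)\approx n\,\frac{1-y}{-\ln y},\qquad P_n'(1-\varepsilon)\approx n^2\Big(\frac{1-y}{(\ln y)^2}+\frac{y}{\ln y}\Big),
\]
with explicit $O(1)$ relative corrections that follow from the closed forms $P_n(t)=t(1-t^n)/(1-t)$ and its derivative.

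\textbf{Step 1: derive the ODE.} Divide \eqref{eq:lbrecurrence} by $n$. The right-hand side becomes $\varepsilon P_n'/n + P_n/n - (j/n)/\zeta$, which after substitution is $-\frac{2(1-y)}{\ln y}-y-\frac{x}{\zeta}$ plus an $O(1/n)$ error; the resulting sign conventions should be checked against the right factor of \eqref{eq:ode11}. The left-hand side is a difference $G(y_{j+1})-G(y_j)$, where $G(y)=n^{-1}P_n'(1-\varepsilon)\approx n\,g(y)$ and $g(y)=\frac{1-y}{(\ln y)^2}+\frac{y}{\ln y}$. If $y$ is treated as a smooth function of $x$ with step $1/n$, this difference is approximately $g'(y)\,y'$. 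Computing $g'$ should recover the bracket $\frac{1}{\ln y}-\frac{2}{(\ln y)^2}-\frac{2(1-y)}{y(\ln y)^3}$, up to sign. The boundary conditions then fall out directly. The condition $\varepsilon_1=1$ means $n\varepsilon_1=n\to\infty$, so $y(0)=0$. The endpoint $\varepsilon_n$ has $n\varepsilon_n$ bounded, and the formula $\zeta_n=n/(\varepsilon_nP_n'+P_n)$ gives $1/\zeta = -\frac{2(1-y)}{\ln y}-y$ at $x=1$. This expression is $1$ at $y=1$ and otherwise less than $1$, so matching it against $\zeta^*>1$ requires care. I expect the correct endpoint condition is $y\to1$, i.e.\ $\varepsilon_n\to 0$ on the $1/n$ scale; either way it has to be consistent with \eqref{eq:ode22}.

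\textbf{Step 2: existence and uniqueness of $\zeta^*$.} Use a shooting argument. For each $\zeta$, solve the ODE from a neighborhood of $y=0$; a local expansion is needed there, since the coefficient blows up as $\ln y\to-\infty$. Then show that $\lim_{x\uparrow1}y(x)$ depends monotonically and continuously on $\zeta$, and take $\zeta^*$ to be the unique value at which this limit equals $1$. Monotonicity should mirror the discrete fact, used in Theorem~\ref{thm:lowerbound}, that the solution of \eqref{eq:lbrecurrence} is monotone in $\zeta$. A comparison principle for the scalar ODE gives this directly, because $y'$ depends monotonically on $x/\zeta$.

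\textbf{Step 3: convergence $\zeta_n\to\zeta^*$.} This is the main obstacle. The early steps $j=O(1)$ have $\varepsilon_j$ of order one rather than $1/n$, so the approximation breaks down, and the ODE is singular at $y=0$. The plan is a sandwich argument. Fix $\delta>0$ and run the discrete scheme with $\zeta=\zeta^*\pm\delta$. Show that the discrete iterates stay within $o(1)$ of the corresponding ODE solutions on $[\eta,1-\eta]$; this is the standard consistency-plus-stability estimate for an Euler-type scheme whose local truncation error is $O(1/n^2)$. Then handle the initial layer $[0,\eta]$ separately by showing $y_j$ is small for $j\le\eta n$, using crude bounds on the recursion. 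By monotonicity, for large $n$ the discrete solution with $\zeta^*+\delta$ overshoots the terminal constraint $\varepsilon_n\ge0$, or violates strict decrease, while the solution with $\zeta^*-\delta$ undershoots it. The uniqueness part of Theorem~\ref{thm:lowerbound} then forces $\zeta_n\in[\zeta^*-\delta,\zeta^*+\delta]$. Finally, compute $\zeta^*\approx2.472$ numerically by shooting.
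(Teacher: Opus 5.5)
Your outline follows the paper's route: the substitution $\varepsilon_j=-\ln(y_j)/n$ (the paper's $\lambda_j/n$ with $y=e^{-\lambda}$), the same formal ODE, and monotonicity in $\zeta$. The gap is the step you left open at $x=1$, and your proposed resolution of it is wrong. Your own computation shows that the closing equation of Theorem~\ref{thm:lowerbound}, $\zeta_n=n/(\varepsilon_nP_n'(1-\varepsilon_n)+P_n(1-\varepsilon_n))$, passes to the limit as $\frac{1}{\zeta}=-\frac{2(1-y(1))}{\ln y(1)}-y(1)$. In other words, the right-hand factor of \eqref{eq:ode11} must vanish at $x=1$. The map $y\mapsto-\frac{2(1-y)}{\ln y}-y$ increases from $0$ to $1$ on $(0,1)$. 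Hence if $\zeta_n$ converges to any value in $(1,\infty)$, then $n\varepsilon_n\to\mu>0$ and $y_n\to e^{-\mu}<1$; for the value $2.472$ one gets $\mu\approx4.8$. Conversely, $y(1)=1$ forces $\zeta=1$, as you noticed. No boundary layer can reconcile this with \eqref{eq:ode22}. The right-hand side of \eqref{eq:lbrecurrence} is at most $n$ in absolute value, so $P_n'(1-\varepsilon_j)/n^2\approx g(y_j)$ moves by at most $1/n$ per step, and therefore $\lim_{x\uparrow1}y(x)=\lim_n y_n$. So ``I expect $y\to1$'' contradicts what you derived one sentence earlier. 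The step ``either way it has to be consistent with \eqref{eq:ode22}'' is exactly the one that cannot be carried out.

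The same problem breaks Step~3. Your test at the end is whether the sequence satisfies $\varepsilon_n\ge0$ with strict decrease. Overshooting or undershooting that test characterizes the $\zeta$ of Lemma~\ref{lm:lbgammaunique} (the one with $\varepsilon_n=0$), not $\zeta_n$. The ordering constraints alone hold on a whole interval of $\zeta$. For $n=2$, every $\zeta>1$ gives the admissible value $\varepsilon_2=1-\sqrt{1-1/\zeta}$, and $\zeta_2=(1+\sqrt2)/2$ is singled out only by the closing equation. The proof of Proposition~\ref{prop:lbgammaunique} also places $\zeta_n$ strictly below the $\varepsilon_n=0$ value. A correct sandwich must use the closing equation as its terminal test. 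You cannot import this step from the paper either. The paper imposes $\lambda(1)=0$ on the strength of Lemma~\ref{lm:lbasymp}, which only proves $\varepsilon_n\to0$. But $y(1)=1$ requires $n\varepsilon_n\to0$, and by your computation that would force $\zeta_n\to1$. So the missing piece is precisely the link between $\lim_n\zeta_n$ and the boundary value problem \eqref{eq:ode11}--\eqref{eq:ode22}. Finally, your shooting step must also handle non-uniqueness at $x=0$. With $h=g(y)\approx(\ln y)^{-2}$, the equation reads $h'\approx2\sqrt h-x/\zeta$. Its solutions $h\sim cx^2$ require $2c-2\sqrt c+1/\zeta=0$, which has real roots only for $\zeta\ge2$, and then two of them. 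So $y(0)=0$ does not determine a trajectory. You must show which branch the discrete recursion selects; for $j\ll n$ it reduces to $u_{j+1}^2=(1+u_j)^2-j/\zeta$ with $u_j=(1-\varepsilon_j)/\varepsilon_j$.
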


The remainder of this section focuses on the proof of Theorem~\ref{thm:lowerbound}. We first describe the optimal algorithm $\text{ALG}^*$ for OSDC in Section~\ref{subsec:lb1}. Then, we proceed to formulate an LP for its competitive ratio and study a feasible solution of the dual LP which will yield the recurrence \eqref{eq:lbrecurrence} (Section~\ref{subsec:lb2}). In Section~\ref{subsec:lb3}, we provide a matching primal solution to the dual solution, which certifies the optimality $R_n^*=\zeta_n$. We defer the proof of Proposition~\ref{prop:lowerbound} to Appendix~\ref{app:asymptotic}.

\subsection{Optimal Algorithm for OSDC} \label{subsec:lb1}


We begin by describing the reduction that allows us to characterize $\text{ALG}^*$ via a sequence of single-selection problems. {The reduction idea is simple: for time $i$, it must be covered by one contract started at one of the times $1,\ldots, i$. Then, from the point of view of time $i$, we are solving a cost minimization single-selection problem in $X_1,\ldots,X_i$. Let $d_i$ be the optimal expected cost of a minimum cost single-selection problem over the sequence $X_1,\ldots,X_i$ where one value must be selected. Then, we have the following result:}
\begin{lemma} \label{lm:lbobj}
For any $n\geq 1$ and CDF $F$, the optimal algorithm $\text{ALG}^*$ for OSDC incurs an expected cost of $\ALG^*_n(F) = \sum_{i=1}^n d_i$. 
\end{lemma}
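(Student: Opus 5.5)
We prove the two inequalities $\ALG^*_n(F) \geq \sum_{i=1}^n d_i$ and $\ALG^*_n(F) \leq \sum_{i=1}^n d_i$ separately. Here $d_i$ is the optimal expected value of the online minimum-cost single-selection problem over $X_1,\ldots,X_i$ in which one value must be selected, given by the backward recursion $d_1 = \E[X_1]$ and $d_{i} = \E[\min\{X_1, d_{i-1}\}]$. Since the variables are i.i.d., the continuation value after rejecting $X_1$ with $i-1$ steps remaining is $d_{i-1}$.

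\textbf{Lower bound.} Fix any online algorithm for OSDC. Because contracts are queued back-to-back, each time period $t \in \{1,\ldots,n\}$ is covered by a unique contract unit. Let $\sigma(t)$ denote the start time of the contract covering $t$; then $\sigma(t) \leq t$, since time $t$ must be covered when it is observed. The total cost is $\sum_{t=1}^n X_{\sigma(t)}$, with any excess coverage beyond $n$ only adding nonnegative cost. For a fixed $t$, the rule $i \mapsto$ ``select $X_i$ for slot $t$'' is a stopping rule on $X_1,\ldots,X_t$. This holds because whether the contract started at time $i$ covers slot $t$ is determined at time $i$: it depends only on $\ell$ and $T_i$, both functions of $X_1,\ldots,X_i$. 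Moreover, exactly one index $i \leq t$ is selected. Hence $\E[X_{\sigma(t)}] \geq d_t$ by the optimality of $d_t$ among stopping rules that must select. Summing over $t$ gives $\ALG^*_n(F)\geq \sum_t d_t$. The main care needed is the measurability and stopping-time argument: we must check that the induced single-selection rule is non-anticipating and always selects by time $t$. Both follow from the coverage constraint $T_1+\cdots+T_i\geq i$.

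\textbf{Upper bound.} We construct an algorithm achieving $d_t$ simultaneously for every slot. The optimal single-selection rule over $X_1,\ldots,X_t$ accepts $X_i$ iff $X_i \leq d_{t-i}$ (with $d_0=\infty$), using thresholds $d_{t-i}$ that are decreasing in $t$ for fixed $i$, since $d_k$ is nonincreasing in $k$. We run all $n$ of these rules in parallel. At time $i$, let $\ell\geq i-1$ be the current last covered time. Every slot $t>\ell$ is still unassigned, and its rule has not yet stopped. This consistency point is the key step: slot $t$'s rule must not have stopped earlier, and it has not, because if slot $t$'s rule had accepted some earlier $X_{i'}$, then so would every slot $t'<t$ still pending at $i'$, by monotonicity of thresholds, and the queue would have extended past $t$. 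Then the set of slots $t>\ell$ whose rule accepts $X_i$, namely $\{t : X_i \leq d_{t-i}\}$, is an initial segment $\{\ell+1,\ldots,\ell+T_i\}$, again by monotonicity. We set $T_i$ to its length, which is at least $1$ when $\ell=i-1$ because $d_0=\infty$. So this is a valid OSDC algorithm in which each slot $t$ receives exactly the value chosen by its optimal stopping rule, at expected cost $d_t$. Summing gives $\sum_t d_t$, matching the lower bound. This also yields the quantile description of $\ALG^*$ used later.
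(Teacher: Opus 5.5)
Your proposal is correct and follows essentially the same route as the paper. The lower bound treats the contract covering slot $t$ as a stopping rule on $X_1,\ldots,X_t$. The upper bound runs the $n$ optimal single-selection rules in parallel, which is exactly the paper's construction with quantile thresholds $q_{t-i+1}=F(d_{t-i})$; your explicit stopping-time argument and the check that each slot receives exactly the value its own rule selects supply details the paper leaves implicit.
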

\begin{proof}
{  For $j=1,\ldots,n$, let $Y_j$ denote the value of the contract assigned by ALG$^*$ that covers time $j$. Then $\ALG_n^*(F)=\sum_{j=1}^n \mathbb{E}[Y_j]$. By the discussion above, $\mathbb{E}[Y_j]\geq d_j$, which implies that $\ALG_n^*(F)\geq \sum_{j=1}^n d_j$. For the other inequality, we will show below that the algorithms for the $n$ single-selection problems induce an algorithm ALG for OSDC such that $\ALG_n(F)=\sum_{j=1}^n d_j$. This completes the proof since $\ALG_n^*(F)\leq \ALG_n(F)$.}
\end{proof}
Each value $d_i$ can be computed via DP (see, e.g.,~\citep{ Jiang2025,perez2025iid} for the maximization case) via the following system:
\begin{align}
d_1 &= \int_0^1 F^{-1}(u) \, \mathrm{d}u, \label{eq:lbdp1} \\
d_i &= \min_{q \in [0, 1]} \left \{\int_{0}^{q} F^{-1}(u) \, \mathrm{d}u + (1 - q)d_{i-1} \right\}, \quad i = 2, \ldots, n. \label{eq:lbdp2}
\end{align}

{  A byproduct of Lemma~\ref{lm:lbobj} is a characterization of the optimal algorithm ALG$^*$ for OSDC {  by running the optimal algorithm for all $n$ single-selection problem simultaneously, which achieves $\ALG^*_n(F) = \sum_{i=1}^n d_j$}. Indeed, for each $i=2,\ldots,n$, let $q_i$ be the solution to the minimization problem in \eqref{eq:lbdp2}, and set $q_1=1$. Note that $0<q_n<\cdots<q_1=1$. Now, at time $i$, upon observing $X_i$, let $\ell\geq i-1$ denote the last time covered by a contract. If $\ell=n$, then, there is nothing to do and the algorithm terminates. Define $\ell+1 \leq j^* \leq n$ as the largest $\ell+1\leq j\leq n$ such that $F(X_i)\leq q_{j-i+1}$. If no such $j$ exists, then none of the corresponding single-selection problems for $j\geq \ell+1$ would accept $X_i$, and we set the contract length to $T_i=0$. Otherwise, we set $T_i=j^*-\ell$. Note that this ensure that we now have covered up to time $j^*$.}


Lemma~\ref{lm:lbobj} will allows us to formulate the competitive ratio of $\text{ALG}^*$ by linearizing~\eqref{eq:lbdp1}-\eqref{eq:lbdp2} and imposing one additional constraint.

\subsection{LP Formulation and Dual Solution} \label{subsec:lb2}
We reformulate $R_n^*$ as an LP using \eqref{eq:lbdp1}-\eqref{eq:lbdp2} by introducing variables $\hat{d}_i$ for $d_i$ and $h$ for $F^{-1}$, and imposing a normalizing constraint $\OPT_n(F) = 1$ with the expression in Lemma~\ref{lm:opt}. This yields the following linear program: 
\begin{alignat}{3}
& \sup_{\substack{\bm{d} \in \mathbb{R}_+^n \\ h: [0, 1] \rightarrow \mathbb{R}_+}} \quad && \sum_{i=1}^n \hat{d}_i \tag*{$(LP)_{\text{DC}}$}\label{form:LPCP} \\
& \text{s.t.} \quad && \hat{d}_1 \leq \int_0^1 h(u) \, \mathrm{d}u, \label{constraint:LBP1} \\ 
& && \hat{d}_i \leq \int_0^q h(u) \, \mathrm{d}u + (1-q)\cdot \hat{d}_{i-1}, \quad && i = 2, \ldots, n, \quad q \in [0, 1], \label{constraint:LBP2}\\
& && \int_0^1 h(u) \cdot P_n'(1-u) \, \mathrm{d}u = 1, \label{constraint:LBP3}\\
& && h(u) \leq h(u'), && \forall u \leq u', \quad u, u' \in [0, 1]. \label{constraint:LBP4} 
\end{alignat}
\begin{proposition} \label{prop:LBLP}
Let { $(\{\hat{d}_i^*\}_{i=1}^n,h)$} be the optimal solution to \ref{form:LPCP}. Then, $R_n^* = \sum_{i=1}^n \hat{d}_i^*$. 
\end{proposition}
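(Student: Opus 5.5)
We show that the value of \ref{form:LPCP} equals $R_n^*=\sup_F \ALG^*_n(F)/\OPT_n(F)$ by proving the inequality in both directions. By Lemma~\ref{lm:lbobj}, $R_n^*=\sup_{F}\sum_{i=1}^n d_i(F)/\OPT_n(F)$, where the $d_i(F)$ are given by \eqref{eq:lbdp1}--\eqref{eq:lbdp2}. By Proposition~\ref{prop:assumption} we may restrict to $F$ strictly increasing and continuously differentiable. Both the ratio and the DP are invariant under positive rescaling: replacing $X$ by $cX$ scales $F^{-1}$, every $d_i$, and $\OPT_n$ by $c$. So we may normalize $\OPT_n(F)=1$, which by Lemma~\ref{lm:opt} is exactly constraint \eqref{constraint:LBP3} with $h=F^{-1}$.

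\emph{Step 1 (value of the LP is at least $R_n^*$).} Given such an $F$, set $h=F^{-1}$. It is nonnegative and nondecreasing, so \eqref{constraint:LBP4} holds, and \eqref{constraint:LBP3} holds by the normalization. Set $\hat d_i=d_i(F)$. Then \eqref{constraint:LBP1} holds with equality. Constraint \eqref{constraint:LBP2} holds for every $q$, since $d_i$ is the minimum over $q$ in \eqref{eq:lbdp2}. The objective equals $\sum_i d_i=\ALG_n^*(F)/\OPT_n(F)$. Taking the supremum over $F$ gives the bound.

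\emph{Step 2 (value of the LP is at most $R_n^*$).} Take any feasible $(\hat d,h)$. Imposing \eqref{constraint:LBP2} for all $q$ is the same as $\hat d_i\le \min_q\{\int_0^q h+(1-q)\hat d_{i-1}\}$, and the right-hand side is nondecreasing in $\hat d_{i-1}$. Induction on $i$ then gives $\hat d_i\le d_i^h$, where $d^h_i$ is the DP \eqref{eq:lbdp1}--\eqref{eq:lbdp2} with $F^{-1}$ replaced by $h$. Next, regard $h$ as a quantile function: let $X=h(U)$ with $U$ uniform on $[0,1]$. Because $h$ is nondecreasing and nonnegative, $X$ is a nonnegative random variable whose generalized inverse CDF is $h$ almost everywhere. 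The DP for the single-selection problems and the formula in Lemma~\ref{lm:opt} both depend on $F$ only through $\int F^{-1}$-type integrals, so they are unchanged by this identification. Hence $\sum_i \hat d_i\le \sum_i d_i^h=\ALG_n^*(F_X)/\OPT_n(F_X)\le R_n^*$. This uses that $\OPT_n(F_X)=1$, and that Lemma~\ref{lm:lbobj} and Lemma~\ref{lm:opt} hold for general $F\in\mathcal F$, not only the smooth ones.

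\emph{Main obstacle.} The main obstacle is this last identification. We must make sure Lemmas~\ref{lm:opt} and \ref{lm:lbobj}, together with the DP \eqref{eq:lbdp2}, are valid for arbitrary (possibly discontinuous or flat) distributions, where thresholds correspond to quantiles only up to randomized tie-breaking. The first thing to try is to argue directly with the representation $X_i=h(U_i)$, $U_i$ i.i.d.\ uniform, and run the quantile algorithm on the $U_i$. This realizes each $d_i^h$ exactly, with no atoms to worry about. If that fails, we can instead approximate $h$ by strictly increasing smooth functions, for instance $h+\delta u$ mollified and then renormalized to satisfy \eqref{constraint:LBP3}, and pass to the limit. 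The objective is continuous in $h$ under this perturbation, since each $d_i^h$ is a minimum of functions that are Lipschitz in $h$ in $L^1$.

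Finally, the supremum is attained, which justifies calling $(\hat d^*,h)$ an optimal solution. This will come from the explicit primal solution constructed in Section~\ref{subsec:lb3}. For the present proposition it suffices to read $\sum_i\hat d_i^*$ as the optimal value.
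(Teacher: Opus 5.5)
Your proposal is correct and is essentially the paper's proof. The paper gets one direction by plugging $h=F^{-1}$, $\hat d_i=d_i(F)$ into \ref{form:LPCP} after normalizing $\OPT_n(F)=1$, and the other by the induction $\hat d_{k+1}\le\min_q\{\int_0^q h+(1-q)\hat d_k\}\le\min_q\{\int_0^q h+(1-q)d_k\}=d_{k+1}$, which uses without comment the monotonicity in $\hat d_k$ that you state explicitly. The only difference is technical: the paper invokes a standard argument to assume $h$ is continuous and strictly increasing, whereas you realize $h$ as the quantile function of $h(U)$ and use that Lemma~\ref{lm:opt}, Lemma~\ref{lm:lbobj} and \eqref{eq:lbdp2} hold for arbitrary $F$ (both routes work); one small caveat is that the primal solution of Section~\ref{subsec:lb3} has a Dirac mass at $u=1$, so it shows the supremum is approached rather than attained by a genuine function $h$, which is why reading $\sum_i\hat d_i^*$ as the optimal value is the right interpretation.
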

\begin{proof}
The proof consists of the following two directions: \\
1. For any $F \in \mathcal{F}$ such that $\OPT_n(F) = 1$, given $\{d_i\}_{i=1}^n$ from \eqref{eq:lbdp1}-\eqref{eq:lbdp2}, we show $\sum_{i=1}^n \hat{d}_i^* \geq \sum_{i=1}^n d_i$; \\
2. For any feasible solution $(\{\hat{d}_i\}_{i=1}^n, h)$ to \ref{form:LPCP}, we show that $\{d_i\}_{i=1}^n$ computed using $F^{-1}(u)$ constructed from $h$ satisfies $\hat{d}_i \leq d_i$, which then implies $\sum_{i=1}^n \hat{d}_i^* \leq \sum_{i=1}^n d_i$.  \\
Combining the two gives $\sum_{i=1}^n \hat{d}_i^* = \sum_{i=1}^n d_i$, so by Lemma~\ref{lm:lbobj}, the optimal value to \ref{form:LPCP} gives the expected costs of the optimal algorithm for $F$ satisfying $\OPT_n(F) = 1$. Since scaling does not impact the ratio, we can normalize any $F$ to get $\OPT_n(F) = 1$. It follows that $R_n^* = \sum_{i=1}^n \hat{d}_i^*$. 

For the first direction, we first show that $\{d_i\}_{i=1}^n$ is a feasible solution to \ref{form:LPCP}. Take $h = F^{-1}$, then by Lemma \ref{lm:opt}, $h$ satisfies Constraint~\eqref{constraint:LBP3}. The feasibility of Constraint~\eqref{constraint:LBP1} follows immediately from \eqref{eq:lbdp1}. For $i \geq 2$, from \eqref{eq:lbdp2} we have 
\begin{align*}
d_i &= \min_{q \in [0, 1]} \left \{\int_{0}^{q} F^{-1}(u) \, \mathrm{d}u + (1 - q)d_{i-1} \right\} \leq \int_{0}^{q} h(u) \, \mathrm{d}u + (1 - q)d_{i-1}
\end{align*}
for all $q \in [0, 1]$, therefore, $d_i$ is feasible to \eqref{constraint:LBP2} for all $i = 2, \ldots, n$. The monotonicity and nonnegativity of $h$ follows from the definition of $F^{-1}$. Therefore, $\{d_i\}_{i=1}^n$ is feasible to \ref{form:LPCP}. The optimality of $\{\hat{d}_i^*\}_{i=1}^n$ implies $\sum_{i=1}^n \hat{d}_i^* \geq \sum_{i=1}^n d_i$. 

For the second direction, given any $(\{\hat{d}_i\}_{i=1}^n, h)$ feasible to \ref{form:LPCP}, with a standard argument we can assume $h$ is continuous and strictly increasing (see, e.g., \cite{perez2025iid}). As a result, $h$ defines an inverse CDF $F^{-1}$. Inductively, we show that $\hat{d}_i \leq d_i$ for $d_i$ computed using $F$ defined by $h$. Constraint~\eqref{constraint:LBP1} gives $\hat{d}_1 \leq \int_0^1 F^{-1}(u) \, \mathrm{d}u = d_1$ where the equality follows from \eqref{eq:lbdp1}. Now, suppose $\hat{d}_i \leq d_i$ for $i$ up to some $k < n$. For $i = k+1$, since Constraint~\eqref{constraint:LBP2} has to hold for all $q \in [0, 1]$, it holds for the $q$ that minimizes its RHS, therefore, 
\begin{align*}
\hat{d}_{k+1} \leq \min_{q \in [0, 1]} \left \{\int_{0}^{q} h(u) \, \mathrm{d}u + (1 - q)\hat{d}_{k} \right\} \leq \min_{q \in [0, 1]} \left \{\int_{0}^{q} F^{-1}(u) \, \mathrm{d}u + (1 - q)d_{k} \right\} = d_{k+1},
\end{align*}
where the second inequality replaced $h$ with $F^{-1}$ and used the induction hypothesis that $\hat{d}_k \leq d_k$. Therefore, we have $\hat{d}_i \leq d_i$ for all $i = 1, \ldots, n$. It follows that $\sum_{i=1}^n \hat{d}_i \leq \sum_{i=1}^n d_i$. 
\end{proof}

To solve~\ref{form:LPCP}, we use duality. Consider dual variables $\alpha_1$ for Constraint~\eqref{constraint:LBP1}, $\alpha_i(q)$ for Constraint~\eqref{constraint:LBP2} for $i = 2, \ldots, n$, $\zeta$ for Constraint~\eqref{constraint:LBP3}, and $\eta(u)$ for Constraint~\eqref{constraint:LBP4}, which results in the following LP:
\begin{alignat}{3}
& \inf_{\substack{\alpha_1 \in \mathbb{R}_+ \\ \alpha_i: [0, 1] \rightarrow \mathbb{R}_+ \\ \eta \in \mathcal{C}^1 [0, 1]}} \quad && \zeta \tag*{$(DLP)_{\text{DC}}$}\label{form:DLPCP} \\
& \text{s.t.} \quad && \alpha_1 \geq \int_0^1 (1-q) \alpha_{2}(q) \, \mathrm{d}q + 1, \label{constraint:LBD1} \\ 
& && \int_0^1 \alpha_{i}(q) \, \mathrm{d}q \geq \int_0^1 (1-q) \alpha_{i+1}(q) \, \mathrm{d}q + 1, \quad && i = 2, \ldots, n-1, \label{constraint:LBD2}\\
& && \int_0^1 \alpha_{n}(q) \, \mathrm{d}q \geq 1, \label{constraint:LBD3}\\
& && \zeta \cdot P_n'(1-u) + \frac{d \eta(u)}{du} \geq \alpha_1 + \sum_{i=2}^n \int_u^1 \alpha_{i}(q) \, \mathrm{d}q, \quad && u \in [0, 1], \label{constraint:LBD4} \\
& && \eta(1) = 0. \label{constraint:LBD5} 
\end{alignat}
The following result shows that the objective of \ref{form:DLPCP} upper bounds that of \ref{form:LPCP}, i.e., weak duality. The proof relies on manipulating the constraints, and we defer the details to Appendix~\ref{app:lowerbound}. 
\begin{lemma} \label{lm:lbweakdual}
\ref{form:DLPCP} and \ref{form:LPCP} admit weak duality, namely, for any feasible solutions $\zeta$ and $\{\hat{d}_i\}_{i=1}^n$ to \ref{form:DLPCP} and \ref{form:LPCP} respectively, $\zeta \geq \sum_{i=1}^n \hat{d}_i$. 
\end{lemma}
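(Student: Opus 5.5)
The plan is the standard weak-duality computation: multiply each primal constraint by its dual variable, integrate, and sum, so that the left-hand side collapses to $\sum_i \hat d_i$ and the right-hand side to $\zeta$. Fix a primal feasible $(\{\hat d_i\},h)$ and a dual feasible $(\alpha_1,\{\alpha_i\},\zeta,\eta)$. I will use only the first three dual constraints as coefficient bounds on the $\hat d_i$, and the fourth dual constraint together with $\eta(1)=0$ and the monotonicity of $h$ to control the terms involving $h$.

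\textbf{Step 1 (bounding $\sum_i \hat d_i$ through the $\alpha$'s).} Since $\hat d_i\ge 0$, constraints \eqref{constraint:LBD1}--\eqref{constraint:LBD3} give
\[
\sum_{i=1}^n \hat d_i \le \alpha_1\hat d_1 + \sum_{i=2}^n \Big(\int_0^1 \alpha_i(q)\,dq\Big)\hat d_i - \sum_{i=2}^{n}\Big(\int_0^1 (1-q)\alpha_i(q)\,dq\Big)\hat d_{i-1}.
\]
Next, multiply \eqref{constraint:LBP1} by $\alpha_1\ge0$. Multiply \eqref{constraint:LBP2} by $\alpha_i(q)\ge0$ and integrate over $q\in[0,1]$. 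Adding these gives
\[
\alpha_1\hat d_1+\sum_{i\ge2}\Big(\int\alpha_i\Big)\hat d_i \le \alpha_1\int_0^1 h + \sum_{i\ge2}\int_0^1\alpha_i(q)\int_0^q h(u)\,du\,dq + \sum_{i\ge2}\Big(\int(1-q)\alpha_i\Big)\hat d_{i-1}.
\]
The $\hat d_{i-1}$ terms cancel exactly against the negative terms above. Swapping the order of integration (Fubini) turns $\int_0^1\alpha_i(q)\int_0^q h\,du\,dq$ into $\int_0^1 h(u)\int_u^1\alpha_i(q)\,dq\,du$. Hence
\[
\sum_i\hat d_i\le\int_0^1 h(u)\Big(\alpha_1+\sum_{i\ge2}\int_u^1\alpha_i\Big)du.
\]

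\textbf{Step 2 (using constraint \eqref{constraint:LBD4}).} Because $h\ge0$, constraint \eqref{constraint:LBD4} bounds the last integral by
\[
\zeta\int_0^1 h(u)P_n'(1-u)\,du+\int_0^1 h(u)\eta'(u)\,du .
\]
By \eqref{constraint:LBP3} the first term equals $\zeta$. It therefore remains to show $\int_0^1 h\,\eta'\,du\le 0$.

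\textbf{Step 3 (the main obstacle).} This is where the monotonicity constraint \eqref{constraint:LBP4}, whose dual is $\eta$, must enter, and the sign conventions need care. The dual of $h(u)\le h(u')$ for all $u\le u'$ amounts to requiring $\eta\ge0$, i.e.\ a nonnegative cumulative measure of the monotonicity multipliers, with $\eta(1)=0$. I would make this implicit sign condition explicit; note $\eta(0)$ is free. Assume, as the paper allows, that $h$ is continuous and nondecreasing, approximating by smooth functions if needed. Integration by parts (Stieltjes) then gives
\[
\int_0^1 h\,d\eta=h(1)\eta(1)-h(0)\eta(0)-\int_0^1\eta\,dh=-h(0)\eta(0)-\int_0^1\eta\,dh\le0,
\]
since $\eta\ge0$, $h(0)\ge0$ and $dh\ge0$. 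If the intended sign convention is instead $\eta\le0$, the argument must be reorganized accordingly. My first attempt would be to fix the convention by rederiving the Lagrangian term $\int\!\!\int_{u\le u'}\mu(u,u')\,(h(u')-h(u))$ and checking that it yields $\eta'$ with the correct sign.

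Combining Steps 1--3 gives $\zeta\ge\sum_i\hat d_i$, which is the claimed weak duality.
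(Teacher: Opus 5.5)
Your proposal is correct and is essentially the paper's proof run in the opposite direction. The paper multiplies \eqref{constraint:LBD4} by $h$, integrates, applies \eqref{constraint:LBP3}, swaps the order of integration, uses \eqref{constraint:LBP1}--\eqref{constraint:LBP2} and then \eqref{constraint:LBD1}--\eqref{constraint:LBD3}, and disposes of $\int_0^1 h\,\eta'\,\mathrm{d}u$ by the same integration by parts, using $\eta(1)=0$, $h\ge 0$, $h'\ge 0$ and $\eta\ge 0$. The sign condition $\eta\ge 0$ that you flag is the correct one: $\eta$ is the cumulative multiplier of the monotonicity constraint, and the paper's proof relies on it without listing it in \ref{form:DLPCP}. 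So you can drop the hedge about $\eta\le 0$.
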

We now construct a feasible solution for \ref{form:DLPCP}. 
We focus on the feasible space with $\eta = 0$, and propose the following for $\alpha$: suppose there exists a sequence $0 \leq \varepsilon_n < \varepsilon_{n-1} < \ldots < \varepsilon_1 = 1$, let $\alpha_1 = \zeta$, and $\alpha_{i}(q) = \zeta \cdot P_n''(1-q) \cdot \mathbbm{1}_{[\varepsilon_i, \varepsilon_{i-1})} (q)$. Intuitively, $\varepsilon_i$ corresponds to the quantile benchmark for the $i$-th single-selection problem in the reduction of OSDC in Section~\ref{subsec:lb1}. It can be checked by direct substitution that the proposed solution satisfies Constraint~\eqref{constraint:LBD4} with $\frac{d \eta(u)}{du} = 0$. With this solution, we derive the recurrence relation \eqref{eq:lbrecurrence} that links consecutive $\varepsilon_j$ and $\zeta$ in the next result.
\begin{lemma} \label{lm:lbrecurrence}
Suppose $\alpha_1 = \zeta$, $\alpha_{i}(q) = \zeta \cdot P_n''(1-q) \cdot \mathbbm{1}_{[\varepsilon_i, \varepsilon_{i-1})} (q)$ for $i = 2, \ldots, n$ satisfy Constraint~\eqref{constraint:LBD1}-\eqref{constraint:LBD3} with equality. Then, given $\zeta$, $\{\varepsilon_j\}_{j=1}^n$ satisfies \eqref{eq:lbrecurrence}. 
\end{lemma}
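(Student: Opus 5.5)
The plan is a direct computation: evaluate the two integrals in Constraints~\eqref{constraint:LBD1}--\eqref{constraint:LBD2} in closed form for the proposed $\alpha_i$, then sum the resulting equalities so that they telescope into \eqref{eq:lbrecurrence}. Throughout, write $P=P_n$.

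\emph{Step 1: closed forms.} Since $\alpha_i$ is supported on $[\varepsilon_i,\varepsilon_{i-1})$, the fundamental theorem of calculus gives, for $i\ge 2$,
\[
\int_0^1\alpha_i(q)\,\mathrm{d}q=\zeta\int_{\varepsilon_i}^{\varepsilon_{i-1}}P''(1-q)\,\mathrm{d}q=\zeta\bigl[P'(1-\varepsilon_i)-P'(1-\varepsilon_{i-1})\bigr].
\]
For the weighted integral, substitute $t=1-q$ and integrate by parts, using $\int tP''(t)\,\mathrm{d}t=tP'(t)-P(t)$. Setting $A(\varepsilon):=(1-\varepsilon)P'(1-\varepsilon)-P(1-\varepsilon)$, this yields
\[
\int_0^1(1-q)\alpha_{i+1}(q)\,\mathrm{d}q=\zeta\bigl[A(\varepsilon_{i+1})-A(\varepsilon_i)\bigr].
\]

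\emph{Step 2: the equalities.} We use the boundary values $\varepsilon_1=1$, $P(0)=0$ and $P'(0)=1$. The first gives $A(\varepsilon_1)=A(1)=0$, so Constraint~\eqref{constraint:LBD1} with equality and $\alpha_1=\zeta$ reads
\[
\zeta=\zeta A(\varepsilon_2)+1 .
\]
For $i=2,\ldots,n-1$, Constraint~\eqref{constraint:LBD2} with equality reads
\[
\zeta\bigl[P'(1-\varepsilon_i)-P'(1-\varepsilon_{i-1})\bigr]=\zeta\bigl[A(\varepsilon_{i+1})-A(\varepsilon_i)\bigr]+1 .
\]

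\emph{Step 3: telescoping.} Fix $j\in\{1,\ldots,n-1\}$ and add the equalities for indices $1,\ldots,j$.
\begin{itemize}
\item The left-hand side telescopes to $\zeta+\zeta\bigl[P'(1-\varepsilon_j)-P'(0)\bigr]=\zeta P'(1-\varepsilon_j)$, using $P'(0)=1$.
\item The right-hand side telescopes to $\zeta A(\varepsilon_{j+1})-\zeta A(\varepsilon_1)+j=\zeta A(\varepsilon_{j+1})+j$.
\end{itemize}
Dividing by $\zeta$ and expanding $A$ gives
\[
P'(1-\varepsilon_j)=(1-\varepsilon_{j+1})P'(1-\varepsilon_{j+1})-P(1-\varepsilon_{j+1})+j/\zeta .
\]
Rearranging this is exactly \eqref{eq:lbrecurrence}. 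Constraint~\eqref{constraint:LBD3} is not needed here; it will later pin down $\zeta$, consistent with the formula for $\zeta_n$ in Theorem~\ref{thm:lowerbound}.

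The only delicate point is bookkeeping rather than a genuine obstacle. One must check the endpoint conventions: the half-open supports and the fact that $\alpha_{i+1}$ lives on $[\varepsilon_{i+1},\varepsilon_i)$, so the integration limits match. One must also notice that $A(1)=0$ and $P'(0)=1$ are precisely what make the $\alpha_1=\zeta$ term absorb cleanly into the telescoping sum.
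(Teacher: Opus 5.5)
Your proposal is correct and follows essentially the same route as the paper. Both arguments sum Constraints~\eqref{constraint:LBD1}--\eqref{constraint:LBD2} over $i=1,\ldots,j$ and evaluate the integrals via integration by parts, using $\varepsilon_1=1$, $P_n(0)=0$ and $P_n'(0)=1$. The only cosmetic difference is that you integrate each constraint first and then telescope, whereas the paper sums first and integrates once over the merged interval $[\varepsilon_{j+1},\varepsilon_1)$.
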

\begin{proof}
Suppose the proposed solution tightens \eqref{constraint:LBD1}-\eqref{constraint:LBD3}. Summing up Constraint~\eqref{constraint:LBD1}-\eqref{constraint:LBD2} across $i = 1, \ldots, j$ with the proposed solution, we obtain
\begin{align*}
\text{LHS} = \alpha_1 + \sum_{i=2}^j \int_0^1 \alpha_i(q) \, \mathrm{d}q &= \zeta + \zeta \int_{\varepsilon_j}^{\varepsilon_1} P_n''(1-q) \, \mathrm{d}q = \zeta + \zeta (P_n'(1-\varepsilon_j) - 1) = \zeta \cdot P_n'(1-\varepsilon_j),
\end{align*}
where the second last equality used $\varepsilon_1 = 1$ and evaluated $P_n'(0) = 1$. For the RHS, we have 
\begin{align*}
\text{RHS} = \sum_{i=1}^j \left(\int_0^1 (1-q) \alpha_{i+1}(q) \, \mathrm{d}q + 1 \right) &= \zeta \int_{\varepsilon_{j+1}}^{\varepsilon_1} (1-q) P_n''(1-q) \, \mathrm{d}q + j \\*
&= \zeta(P_n'(1-\varepsilon_{j+1}) - \varepsilon_{j+1}P_n'(1-\varepsilon_{j+1}) - P_n(1-\varepsilon_{j+1})) + j, 
\end{align*}
where the last equality evaluated the integral, which involves performing integration by parts on $\int_{\varepsilon_{j+1}}^{\varepsilon_1} -q P_n''(1-q) \, \mathrm{d}q$, and simplified using $\varepsilon_1 = 1$ and $P_n(0) = 0$. Equating both sides and rearranging, we obtain the required recurrence relation. 
\end{proof}
Given $\zeta$, the sequence $\{\varepsilon_j\}_{j=1}^n$ can be iteratively solved using recurrence \eqref{eq:lbrecurrence}. However, we require $\{\varepsilon_j\}_{j=1}^n$ to define a valid partition on $[\varepsilon_n, 1] \subseteq [0, 1]$ to ensure the feasibility of $\{\alpha_i\}_{i=1}^n$ constructed earlier, therefore $\varepsilon_j$ must be strictly decreasing in $j$, with $\varepsilon_n \geq 0$ and $\varepsilon_1 = 1$. As such, we need the next result, which shows that $\zeta_n$ defined in Theorem~\ref{thm:lowerbound} is unique and allows $\{\varepsilon_j\}_{j=1}^n$ to have the desired properties. Later in Section~\ref{subsec:lb3}, we will see how this particular definition of $\zeta_n$ gives us the matching primal objective. Moreover, $\zeta_n > 1$, which shows that $\zeta_n$ rightfully defines the competitive ratio for a minimization problem. 
\begin{proposition} \label{prop:lbgammaunique}
For every positive integer $n > 2$, there exists a unique $\zeta_n > 1$ such that the system formed by \eqref{eq:lbrecurrence} with the initial condition $\varepsilon_1 = 1$ has a solution $\{\varepsilon_j\}_{j=1}^n$ that satisfies $0 \leq \varepsilon_n < \ldots < \varepsilon_2 < \varepsilon_1 = 1$ and $\zeta_n = {n}/{(\varepsilon_n P_n'(1-\varepsilon_n) + P_n(1-\varepsilon_n))}$. 
\end{proposition}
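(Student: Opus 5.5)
We study the recurrence \eqref{eq:lbrecurrence} as a forward map driven by the parameter $\zeta$. Write $G(\varepsilon) = (1+\varepsilon)P_n'(1-\varepsilon) + P_n(1-\varepsilon)$ on $[0,1]$. Then \eqref{eq:lbrecurrence} reads
\[
P_n'(1-\varepsilon_{j+1})(1-\varepsilon_{j+1}) - P_n(1-\varepsilon_{j+1}) = P_n'(1-\varepsilon_j) - \frac{j}{\zeta}.
\]
Set $\phi(t) = tP_n'(t) - P_n(t) = \sum_{i=1}^n (i-1)t^i$ with $t = 1-\varepsilon$. The function $\phi$ is strictly increasing on $[0,1]$, with $\phi(0)=0$ and $\phi(1)=n(n-1)/2$. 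So, given $\varepsilon_j$, the value $\varepsilon_{j+1}$ is uniquely determined by $1-\varepsilon_{j+1} = \phi^{-1}\big(P_n'(1-\varepsilon_j) - j/\zeta\big)$, provided the argument lies in $[0,\phi(1)]$. We get $\varepsilon_{j+1} < \varepsilon_j$ exactly when $\phi(1-\varepsilon_j) < P_n'(1-\varepsilon_j) - j/\zeta$, i.e.\ when $j/\zeta < P_n'(t_j) - \phi(t_j) = P_n(t_j)+ (1-t_j)P_n'(t_j)$ with $t_j = 1-\varepsilon_j$. I would first record these monotonicity facts, then view each $\varepsilon_j = \varepsilon_j(\zeta)$ as a continuous function of $\zeta$ on the set where the iteration stays well defined.

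Next I would prove, by induction on $j$, that $\varepsilon_j(\zeta)$ is strictly decreasing in $\zeta$, i.e.\ $t_j(\zeta)$ is increasing. This holds because $P_n'$ is increasing in $t$, $-j/\zeta$ is increasing in $\zeta$, and $\phi^{-1}$ is increasing. For $\zeta$ near $1$ (or small), the term $j/\zeta$ is large, so the iteration hits the boundary $\varepsilon=1$ or loses monotonicity early. For instance, at $j=1$ we have $t_1=0$, and $\varepsilon_2<1$ requires $1/\zeta < P_n(0)+P_n'(0) = 1$, i.e.\ $\zeta>1$; this is where $\zeta_n>1$ comes from. For $\zeta\to\infty$ the iteration becomes $\phi(t_{j+1}) = P_n'(t_j)$, which pushes $t_j$ up past $1$ quickly, i.e.\ $\varepsilon_j$ would go negative before $j=n$ once $n>2$.

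Define the stopping functional $\Psi(\zeta) = n/\zeta - \big(\varepsilon_n P_n'(1-\varepsilon_n) + P_n(1-\varepsilon_n)\big)$. Note that the required identity is exactly \eqref{eq:lbrecurrence} at $j=n$ with $\varepsilon_{n+1}:=0$, since $\phi(1) = P_n'(1)-P_n(1)$. Hence it is equivalent to asking that the next iterate land exactly at $\varepsilon_{n+1}=0$, i.e.\ $t_{n+1}(\zeta)=1$. So uniqueness and existence reduce to showing that $t_{n+1}(\zeta)$ is continuous and strictly increasing in $\zeta$ on the interval $I=\{\zeta>1 : 0< \varepsilon_n<\dots<\varepsilon_1=1\text{ well defined}\}$. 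It must be below $1$ at the left end of the admissible range and reach $1$ before the right end, where $\varepsilon_n$ would hit $0$. The intermediate value theorem then gives existence, and strict monotonicity gives uniqueness. Boundary cases ($\varepsilon_n=0$ allowed) are handled by closedness.

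The main obstacle is showing that the admissible set $I$ is an interval on which the chain stays strictly decreasing. Concretely, once $t_j$ increases with $\zeta$, we need the monotonicity condition $j/\zeta < P_n(t_j)+(1-t_j)P_n'(t_j)$ to persist as $\zeta$ increases. The right-hand side has derivative $(1-t)P_n''(t)\ge 0$ in $t$, and the left-hand side decreases in $\zeta$, so the condition is monotone in $\zeta$. This shows that once admissible, larger $\zeta$ stays admissible until some $t_j$ exceeds $1$. I would verify the endpoint behaviour, namely failure near $\zeta=1$ and overshoot for large $\zeta$ when $n>2$, by explicit estimates on the first two or three steps.
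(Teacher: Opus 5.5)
Your reduction in the third paragraph is false, and the rest of your argument depends on it.

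With $t_{n+1}=1$, \eqref{eq:lbrecurrence} at $j=n$ reads $n/\zeta=P_n'(t_n)-\phi(1)$. The required identity, however, is $n/\zeta=(1-t_n)P_n'(t_n)+P_n(t_n)$. Subtracting, the two agree iff $\phi(t_n)=\phi(1)$, i.e.\ iff $\varepsilon_n=0$. In that case the identity forces $\zeta=n/P_n(1)=1$, which is excluded.

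Your first paragraph actually identifies the identity correctly. It is the equality case of your step criterion $j/\zeta<P_n(t_j)+(1-t_j)P_n'(t_j)$ at $j=n$. So it says $\phi(t_{n+1})=\phi(t_n)$, i.e.\ $\varepsilon_{n+1}=\varepsilon_n$: the chain would stall at the next step. It does not say $\varepsilon_{n+1}=0$.

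Consequently, any root of $t_{n+1}(\zeta)=1$ has $\varepsilon_{n+1}=0<\varepsilon_n$. By that same criterion, $n/\zeta<\varepsilon_nP_n'(1-\varepsilon_n)+P_n(1-\varepsilon_n)$ holds strictly there. Your intermediate-value argument therefore produces a different, strictly larger value than $\zeta_n$, at which the identity fails. For $n=3$, for instance, $\zeta_3\approx 1.29$, whereas $t_4(\zeta)=1$ only at $\zeta\approx 1.94$.

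What you need instead is to analyze $\Psi$ directly, which is what the paper does, and you already have the ingredients.

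\textbf{Uniqueness.} $\varepsilon_n(\zeta)$ is strictly decreasing in $\zeta$. The map $\varepsilon\mapsto\varepsilon P_n'(1-\varepsilon)+P_n(1-\varepsilon)$ has derivative $-\varepsilon P_n''(1-\varepsilon)\le 0$. Hence $\Psi$ is strictly decreasing in $\zeta$, which gives uniqueness.

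\textbf{Anchor.} At the value $\bar\zeta>1$ of Lemma~\ref{lm:lbgammaunique}, where $\varepsilon_n=0$, you get $\Psi(\bar\zeta)=n/\bar\zeta-n<0$. The paper obtains $\zeta_n$ by decreasing $\zeta$ from $\bar\zeta$ until $\Psi$ vanishes.

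\textbf{Existence, in your language.} Set $c_j=P_n(t_j)+(1-t_j)P_n'(t_j)-j/\zeta=\phi(t_{j+1})-\phi(t_j)$, so that $\Psi=-c_n$. Let $G(t)=P_n(t)+(1-t)P_n'(t)$, which is nondecreasing since $G'(t)=(1-t)P_n''(t)$. If $c_j\le 0$, then $t_{j+1}\le t_j$, so $c_{j+1}=c_j+G(t_{j+1})-G(t_j)-1/\zeta<c_j$. Thus once a step fails, every later step fails. It follows that while $\Psi<0$, all $c_1,\dots,c_n>0$, and the chain stays well defined in $[0,1]$.

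For $n>2$ the chain does break down as $\zeta\downarrow 1$: there $c_2\to G(0)-2=-1$, so $\phi(t_3)<0$. By continuity, $\Psi$ must therefore reach $0$ first. At that point $c_1,\dots,c_{n-1}>0$, which is exactly the strict chain $1=\varepsilon_1>\dots>\varepsilon_n$, with $\varepsilon_n>0$. Your endpoint discussion for $\zeta\to\infty$ is then unnecessary.
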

To prove Proposition \ref{prop:lbgammaunique}, we need the following intermediate result. 
\begin{lemma} \label{lm:lbgammaunique}
For every positive integer $n > 2$, there exists a unique $\zeta > 1$ such that the system formed by \eqref{eq:lbrecurrence} satisfies $0 = \varepsilon_n < \ldots < \varepsilon_2 < \varepsilon_1 = 1$. 
\end{lemma}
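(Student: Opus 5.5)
We treat the recurrence \eqref{eq:lbrecurrence} as a forward shooting scheme in the parameter $\zeta$ and prove existence and uniqueness of the shooting target by a monotonicity and continuity argument. Define $G(t)=tP_n'(1-t)+P_n(1-t)$ for $t\in[0,1]$. Then $G'(t)=P_n'(1-t)-tP_n''(1-t)-P_n'(1-t)=-tP_n''(1-t)\le 0$, strictly negative on $(0,1)$, so $G$ decreases from $G(0)=P_n(1)=n$ to $G(1)=P_n(0)=0$.

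The recurrence can be rewritten as
\[
H(\varepsilon_{j+1})=P_n'(1-\varepsilon_j)-\frac{j}{\zeta},\qquad H(t):=P_n'(1-t)-tP_n'(1-t)-P_n(1-t)=(1-t)P_n'(1-t)-P_n(1-t).
\]
Writing $s=1-t$, we have $H=\sum_i (i-1)s^i$. This is strictly increasing in $s$, hence strictly decreasing in $t$, with $H(1)=0$ and $H(0)=\sum_i (i-1)=n(n-1)/2$.

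Given $\zeta$ and $\varepsilon_j$, the next term $\varepsilon_{j+1}$ is therefore uniquely determined, as long as the right-hand side $P_n'(1-\varepsilon_j)-j/\zeta$ lies in $[0,n(n-1)/2]$. Moreover, $\varepsilon_{j+1}<\varepsilon_j$ if and only if $H(\varepsilon_{j+1})>H(\varepsilon_j)$. This is equivalent to $P_n'(1-\varepsilon_j)-H(\varepsilon_j)>j/\zeta$, that is, $G(\varepsilon_j)>j/\zeta$.

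The key steps, in order, are as follows.

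\textbf{Step 1 (monotonicity in $\zeta$).} Show by induction that, as long as the sequence is defined and decreasing, each $\varepsilon_j(\zeta)$ is continuous and strictly decreasing in $\zeta$. The term $-j/\zeta$ increases in $\zeta$. Also, $P_n'(1-\varepsilon_j)$ increases as $\varepsilon_j$ decreases, so by induction it increases in $\zeta$. Hence $H(\varepsilon_{j+1})$ increases in $\zeta$, and since $H$ is decreasing, $\varepsilon_{j+1}$ decreases in $\zeta$.

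\textbf{Step 2 (the ordering constraint).} The condition for $\varepsilon_{j+1}<\varepsilon_j$ is $\zeta G(\varepsilon_j)>j$. Because $G(\varepsilon_j(\zeta))$ is increasing in $\zeta$ by Step 1, the set of $\zeta$ for which the first $k$ terms are well defined and strictly decreasing is an interval $(\underline\zeta_k,\infty)$.

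\textbf{Step 3 (endpoint behavior).} As $\zeta\to\infty$, the recurrence tends to $H(\varepsilon_{j+1})=P_n'(1-\varepsilon_j)$. Starting from $\varepsilon_1=1$, this gives $H(\varepsilon_2)=1$. Iterating, $\varepsilon_n$ reaches $0$ (the right-hand side exceeds the range of $H$) before step $n$ when $n>2$. This is where $n>2$ is used, and it should be checked directly. For example, the ordering condition holds at each step, and for $n\geq3$ we have $H$ at $\varepsilon$ far below $P_n'(1-\varepsilon)$ near $0$, since $P_n'(1)=n(n+1)/2>n(n-1)/2=H(0)$.

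At the other end, as $\zeta\downarrow\underline\zeta$, some ordering condition $\zeta G(\varepsilon_j)>j$ becomes tight, so $\varepsilon_{j+1}\to\varepsilon_j$. In particular $\varepsilon_n$ stays bounded away from $0$, or the chain stalls at a positive value.

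By continuity and strict monotonicity of $\varepsilon_n(\zeta)$, there is a unique $\zeta$ with $\varepsilon_n(\zeta)=0$ and all earlier terms strictly decreasing. This gives both existence and uniqueness.

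\textbf{Step 4 ($\zeta>1$).} At the solution, the last ordering condition $\zeta G(\varepsilon_{n-1})>n-1$ holds. Evaluating the recurrence at $j=n-1$ with $\varepsilon_n=0$ gives
\[
n(n-1)/2=P_n'(1-\varepsilon_{n-1})-(n-1)/\zeta.
\]
Combining this with $P_n'(1-\varepsilon_{n-1})\le P_n'(1)=n(n+1)/2$ and the bound $G\le n$ rules out $\zeta\le1$: if $\zeta\le1$, then already $\zeta G(\varepsilon_1)=\zeta\cdot 0$ fails at $j=1$. Since $G(1)=0$, the first step needs special handling, because $\varepsilon_2<\varepsilon_1$ holds automatically via $H(\varepsilon_2)=1-1/\zeta>0=H(1)$, and this requires exactly $\zeta>1$.

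The main obstacle is Step 3: showing rigorously that $\varepsilon_n$ actually hits $0$ within the admissible interval rather than the recursion leaving the domain of $H$ first. I would handle this by extending $H^{-1}$ by $0$ beyond its range, so that $\varepsilon_n(\zeta)$ becomes a continuous nonincreasing function on $(\underline\zeta_{n},\infty)$. I would then take the infimum of $\zeta$ with $\varepsilon_n(\zeta)=0$ and argue that at this infimum the sequence is strictly decreasing with $\varepsilon_n=0$ exactly. Uniqueness follows because for larger $\zeta$ an earlier term already equals $0$, violating strictness.
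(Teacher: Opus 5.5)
Your Step 1 (each forward iterate $\varepsilon_j$ is continuous and strictly decreasing in $\zeta$) and your $j=1$ computation forcing $\zeta>1$ are correct and agree with the paper's monotonicity argument. One correction: $G(1)=1\cdot P_n'(0)+P_n(0)=1$, not $0$. So the ordering test $\zeta G(\varepsilon_1)>1$ is exactly $\zeta>1$, and no special handling is needed.

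The genuine gap is existence, in your Step 3. You never show that for some finite $\zeta$ the forward chain from $\varepsilon_1=1$ actually reaches $0$ within $n-1$ steps. The inequality $P_n'(1)=n(n+1)/2>H(0)$ only describes what happens after an iterate is already at $0$. The claim is genuinely quantitative: for $n=2$ it is false, since $H(\varepsilon_2)=1-1/\zeta<1=H(0)$ for every finite $\zeta$. Clamping $H^{-1}$ at $0$ and taking an infimum is only continuity bookkeeping and presupposes this fact.

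There are three related inaccuracies. First, Step 2 is wrong as stated: the set where the first $k$ terms are defined and decreasing is not $(\underline\zeta_k,\infty)$, because for large $\zeta$ the targets $P_n'(1-\varepsilon_j)-j/\zeta$ exceed $H(0)$. Second, the low-end claim that $\varepsilon_n$ stays away from $0$ as $\zeta\downarrow\underline\zeta$ is not argued. Third, uniqueness should come from strict monotonicity of the step-$n$ target, not of the clamped $\varepsilon_n$.

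The paper avoids your ordering bookkeeping by shooting backward from $\varepsilon_n=0$. Subtracting consecutive instances of the recurrence gives $P_n'(1-\varepsilon_k)-P_n'(1-\varepsilon_{k+1})=H(\varepsilon_{k+1})-H(\varepsilon_{k+2})-1/\zeta$, so strict decrease in $j$ is automatic. Read forward, the same identity shows that once $\varepsilon_k\ge\varepsilon_{k-1}$ the chain increases forever. Hence any forward chain that reaches $0$ is automatically strictly decreasing, which can replace your Step 2. However, the paper's final step (increase $\zeta$ until $\varepsilon_1=1$) implicitly relies on the same attainability you are missing.

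One way to supply it is as follows.
\begin{itemize}
\item Set $\Phi=G(G-1)-2H$. Then $\Phi(0)=\Phi(1)=0$.
\item From $G'(\varepsilon)=-\varepsilon P_n''(1-\varepsilon)$ and $H'(\varepsilon)=-(1-\varepsilon)P_n''(1-\varepsilon)$, one gets $\Phi'(\varepsilon)=-P_n''(1-\varepsilon)\,\chi(1-\varepsilon)$, where $\chi(s)=1+s-2(n+1)s^n+2(n-1)s^{n+1}$.
\item $\chi$ is concave on $[0,1]$ with $\chi(0)=1$ and $\chi(1)=-2$, so it changes sign once. Hence $\Phi>0$ on $(0,1)$.
\item At $\zeta=\infty$ the forward chain obeys $H(\varepsilon_{k+1})=H(\varepsilon_k)+G(\varepsilon_k)$. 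Using $\Phi\ge0$, induction gives $H(\varepsilon_k)\ge k(k-1)/2\Rightarrow G(\varepsilon_k)\ge k\Rightarrow H(\varepsilon_{k+1})\ge k(k+1)/2$.
\item When $n\ge3$ the inequality is strict from $k=2$ on, because then $0<\varepsilon_2<1$, where $\Phi>0$. So the step-$n$ target exceeds $H(0)=n(n-1)/2$, and by continuity this overshoot persists for all large finite $\zeta$.
\item As $\zeta\downarrow1$, the step-$3$ target tends to $P_n'(0)-2=-1<0$, so the chain exits above $1$ and does not reach $0$.
\item Monotonicity of the targets in $\zeta$ and continuity then give a unique $\zeta>1$ at which the chain lands exactly on $\varepsilon_n=0$. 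An earlier exact zero is impossible, since it would make the next target $P_n'(1)-j/\zeta>H(0)$.
\end{itemize}
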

Lemma \ref{lm:lbgammaunique} guarantees the existence of a unique $\zeta$ that yields the desirable $\{\varepsilon_j\}_{j=1}^n$ with $\varepsilon_n = 0$. The proof relies on the monotonicity of each $\varepsilon_j$ in $\zeta$, which can be proved inductively. We defer the complete proof to Appendix~\ref{app:lowerbound}, and proceed directly to prove Proposition \ref{prop:lbgammaunique}. 
\begin{proof}[Proof of Proposition \ref{prop:lbgammaunique}]
We prove this by showing how to find $\zeta_n$ using the $\zeta$ in Lemma \ref{lm:lbgammaunique}. First observe that we can solve for $\varepsilon_j$ iteratively using \eqref{eq:lbrecurrence} starting from $\varepsilon_1 = 1$. It can be shown using induction that each $\varepsilon_j$ obtained from this procedure is differentiable and strictly decreasing in $\zeta$. The proof of Lemma \ref{lm:lbgammaunique} demonstrates the same argument, so we skip the details here for brevity. 

Now suppose we have solved for $0 = \varepsilon_n < \ldots < \varepsilon_2 < \varepsilon_1 = 1$ given the $\zeta$ in Lemma \ref{lm:lbgammaunique}. Note that ${n}/{(\varepsilon P_n'(1-\varepsilon) + P_n(1-\varepsilon))}$ is increasing in nonnegative $\varepsilon$ and evaluates to $1$ at $\varepsilon = 0$, so $\zeta_n > 1$ only if $\varepsilon_n > 0$. But we also have in Lemma \ref{lm:lbgammaunique} that $\zeta > 1$ is necessary for $\{\varepsilon_j\}_{j=1}^n$ to have the desired properties. Recall that $\varepsilon_n$ is decreasing in $\zeta$ when the end $\varepsilon_1 = 1$ is fixed. Therefore, to obtain $\zeta_n$, we can decrease $\zeta$ until its value coincides with ${n}/{(\varepsilon P_n'(1-\varepsilon) + P_n(1-\varepsilon))}$, which also results in $\varepsilon_n > 0$. Lemma \ref{lm:lbgammaunique} and continuity ensures that $\zeta_n$ constructed here preserves the properties of $\{\varepsilon_j\}_{j=1}^n$. The uniqueness of $\zeta_n$ follows from the strict monotonicity of $\varepsilon_n$ in $\zeta$. 
\end{proof}
This concludes the dual analysis part in the proof of Theorem~\ref{thm:lowerbound}.

\subsection{A Matching Primal Solution} \label{subsec:lb3}
So far, we have only established weak duality. To demonstrate that $\zeta_n$ indeed gives the optimal competitive ratio to OSDC, we show that for any $n$, we can construct a primal feasible solution using $\{\varepsilon_j\}_{j=1}^n$ in the dual solution proposed earlier, such that the following result holds:
\begin{proposition} \label{prop:lbstrongdual}
For every integer $n > 2$ and $\zeta_n$ given in Proposition \ref{prop:lbgammaunique}, there exist $\{d_i\}_{i=1}^n$ feasible to \ref{form:LPCP} such that $\sum_{i=1}^n d_i = \zeta_n$. 
\end{proposition}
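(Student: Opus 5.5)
We will construct an explicit primal pair $(\{d_i\}_{i=1}^n,h)$ from the dual data $\{\varepsilon_j\}_{j=1}^n$ and $\zeta_n$, guided by complementary slackness. The dual solution has $\eta=0$, and $\alpha_i(q)$ is supported exactly on $[\varepsilon_i,\varepsilon_{i-1})$. So we want the constraint \eqref{constraint:LBP2} for index $i$ to be tight for every $q\in[\varepsilon_i,\varepsilon_{i-1})$. Since the right-hand side $\int_0^q h+(1-q)\hat d_{i-1}$ has derivative $h(q)-\hat d_{i-1}$ in $q$, tightness on a whole interval forces $h(q)=\hat d_{i-1}$ there. This suggests taking $h$ to be a step function:
\[
h(u)=0 \text{ on } [0,\varepsilon_n), \qquad h(u)=c\,\hat d_{i-1}\ \text{ on } [\varepsilon_i,\varepsilon_{i-1}),\ i=2,\ldots,n,
\]
with a common scale $c$ (take $c=1$) and with $\hat d_1=\int_0^1 h$. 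Monotonicity \eqref{constraint:LBP4} then amounts to $\hat d_{n-1}\le\cdots\le\hat d_1$, which is natural since single-selection values decrease in $i$.

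\textbf{Step 1 (define $\hat d_i$ recursively).} Set $\hat d_1=\int_0^1 h$ and $\hat d_i=\min_q\{\int_0^q h+(1-q)\hat d_{i-1}\}$. Because $h$ is nondecreasing, the minimizer is the quantile where $h$ crosses $\hat d_{i-1}$, so any $q\in[\varepsilon_i,\varepsilon_{i-1}]$ attains the minimum. Hence $\hat d_i=\int_0^{\varepsilon_i}h+(1-\varepsilon_i)\hat d_{i-1}$. With this definition, \eqref{constraint:LBP1}--\eqref{constraint:LBP2} hold automatically. Substituting the step form of $h$ turns these definitions into a finite linear system in the unknowns $\hat d_1,\dots,\hat d_{n-1}$: each $\hat d_i$ is a combination of $\hat d_1,\dots,\hat d_{i-1}$ weighted by the interval lengths $\varepsilon_{k-1}-\varepsilon_k$. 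Because of the self-referential choice $h=\hat d_{i-1}$ on the $i$-th piece, this system is homogeneous, and we fix its scale using \eqref{constraint:LBP3}.

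\textbf{Step 2 (normalization and objective).} We impose $\int_0^1 h(u)P_n'(1-u)\,du=1$, that is,
\[
\sum_{i=2}^n \hat d_{i-1}\bigl(P_n(1-\varepsilon_i)-P_n(1-\varepsilon_{i-1})\bigr)=1.
\]
We then verify $\sum_i\hat d_i=\zeta_n$. The cleanest route is to avoid solving the system directly and instead use the duality identity from the proof of Lemma~\ref{lm:lbweakdual}. That proof writes $\zeta-\sum_i\hat d_i$ as a sum of nonnegative slack terms: the dual slacks multiplied by primal quantities, plus the primal slacks integrated against $\alpha_i$. The dual constraints \eqref{constraint:LBD1}--\eqref{constraint:LBD3} are tight by Lemma~\ref{lm:lbrecurrence}. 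Constraint \eqref{constraint:LBD4} is tight for $u\ge\varepsilon_n$, and on $[0,\varepsilon_n)$ its slack is multiplied by $h=0$. The primal constraints are tight on the support of each $\alpha_i$ by Step 1. So every slack term vanishes and the gap is zero.

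The one exception is the last dual constraint \eqref{constraint:LBD3}, where $\alpha_n$ lives on $[\varepsilon_n,\varepsilon_{n-1})$. Its tightness, $\zeta\int_{\varepsilon_n}^{\varepsilon_{n-1}}P_n''(1-q)\,dq=1$, is not one of the recurrence equations \eqref{eq:lbrecurrence}. This is exactly where the extra condition $\zeta_n=n/(\varepsilon_nP_n'(1-\varepsilon_n)+P_n(1-\varepsilon_n))$ from Proposition~\ref{prop:lbgammaunique} must enter. We will check, by summing all the dual constraints as in Lemma~\ref{lm:lbrecurrence} up to $j=n$, that this condition is equivalent to the tightness of \eqref{constraint:LBD3}.

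\textbf{Main obstacle.} The hardest step is proving that the construction is genuinely feasible. First, the homogeneous system must admit a nonnegative, nondecreasing solution $\hat d_1\ge\cdots\ge\hat d_{n-1}\ge0$, so that $h$ is monotone and the minimizers really lie in the prescribed intervals. Second, $\varepsilon_n>0$ must be handled correctly, so that the piece $h=0$ on $[0,\varepsilon_n)$ is consistent with $\hat d_n=(1-\varepsilon_n)\hat d_{n-1}$. If the self-referential ansatz $h=\hat d_{i-1}$ proves inconsistent, the fallback is to set $h$ to the unknowns $\hat d_{i-1}$ on each piece and solve the resulting triangular linear recursion explicitly. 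Positivity and monotonicity would then be established by induction, using the ordering of the $\varepsilon_j$ and the recurrence \eqref{eq:lbrecurrence}.
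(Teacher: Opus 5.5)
\textbf{There is a genuine gap: the primal construction as you set it up is infeasible.} Your step function is exactly the paper's $h$ on $[0,1)$: $h=\hat d_{i-1}$ on $[\varepsilon_i,\varepsilon_{i-1})$ and $h=0$ on $[0,\varepsilon_n)$. However, the system you write down admits no solution compatible with \eqref{constraint:LBP3}.

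\emph{Why it fails.} Since $h=\hat d_1$ on $[\varepsilon_2,1)$, your two defining equations read $\hat d_1=\int_0^1 h=\int_0^{\varepsilon_2}h+(1-\varepsilon_2)\hat d_1$ and $\hat d_2=\int_0^{\varepsilon_2}h+(1-\varepsilon_2)\hat d_1$. Together they give $\hat d_2=\hat d_1$. Then $h=\hat d_1$ on $[\varepsilon_3,1)$, and the same computation gives $\hat d_3=\hat d_2$. Inductively $\hat d_1=\cdots=\hat d_n$ and $h\equiv\hat d_1$ on $[\varepsilon_n,1)$, so $\hat d_1=\int_0^1 h=(1-\varepsilon_n)\hat d_1$. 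Because $\zeta_n>1$ forces $\varepsilon_n>0$, this yields $\hat d_1=0$, contradicting the normalization. In the paper's notation, differencing the tight constraints gives $\Delta_{i+1}=(1-\varepsilon_i)\Delta_i$ and $d_n=(1-\varepsilon_n)d_{n-1}$. Your extra equation $\hat d_1=\int_0^1h$ amounts to $\Delta_2=0$, which kills everything.

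\emph{Why your fallbacks don't help.} The fallback you propose is literally the same linear system. You also cannot leave \eqref{constraint:LBP1} slack. Any nondecreasing function equal to $\hat d_1$ on $[\varepsilon_2,1)$ satisfies $\int_0^1h\le\hat d_1$, so \eqref{constraint:LBP1} forces equality.

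\textbf{The missing idea is extra mass of $h$ at the top quantile.} The paper adds an atom $\Delta_2\,\delta_{\{1\}}$ with $\Delta_2=d_1-d_2$. Probabilistically this is a vanishingly rare but huge price; strictly it is a measure rather than a function, and the paper smooths it over $(1-\xi_2,1]$. With the atom:
\begin{itemize}
\item $\int_0^1h=d_2+\Delta_2=d_1$ holds identically.
\item The $i=2$ constraint stays tight for $q\in[\varepsilon_2,1)$, though not at $q=1$.
\item The normalization acquires the extra term $\Delta_2P_n'(0)=\Delta_2$.
\item The remaining unknowns $(\Delta_2,d_{n-1})$ are fixed by $\Delta_2\prod_{k=2}^{n-1}(1-\varepsilon_k)=\varepsilon_n d_{n-1}$ together with the normalization. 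These are \eqref{eq:lbsimuleq1}--\eqref{eq:lbsimuleq2}, and they give $\Delta_2,d_{n-1}>0$, hence a positive nonincreasing $\{d_i\}$.
\end{itemize}

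\textbf{Your objective argument then works.} Once the atom is included, your complementary-slackness computation goes through. Constraint \eqref{constraint:LBD4} is also tight at $u=1$, since $\zeta P_n'(0)=\zeta=\alpha_1$. Your observation is also correct that the defining formula for $\zeta_n$ is equivalent, via \eqref{eq:lbrecurrence} at $j=n-1$, to tightness of \eqref{constraint:LBD3}. This would be a cleaner route than the paper's telescoping of \eqref{eq:lbrecurrence} multiplied by $\Delta_{j+1}$. Without the atom, however, your proposal never produces a feasible primal point.
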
 
This shows that the two LPs have matching objective values with our proposed solutions. {  A bonus from this primal-dual fitting approach is the worst case distribution for OSDC from our primal solution construction, where we define $h(u) = \sum_{i=1}^{n-1} d_i \cdot \mathbbm{1}_{[\varepsilon_{i+1}, \varepsilon_{i})}(u) + (d_1 - d_2) \cdot \delta_{\{1\}}$ with $\delta_{\{1\}}$ being the Dirac delta function centered at $1$, and $\{d_i\}_{i=1}^n$ computed using the dual solution. This can be smoothed to recover $F^{-1}$ for $F \in \mathcal{F}$ by considering $F^{-1}(u) = h(u) + \xi_1 \cdot u$ on $[0, 1-\xi_2]$ and $F^{-1}(u) = {(d_1 - d_2)}/{\xi_2} + \xi_1 \cdot u$ on $(1-\xi_2, 1]$ where $\xi_1$ and $\xi_2$ can be made arbitrarily small.}

Before proceeding to the proof of Proposition \ref{prop:lbstrongdual}, we first use it to prove Theorem~\ref{thm:lowerbound}.
\begin{proof}[Proof of Theorem~\ref{thm:lowerbound}]
Given $\zeta_n$ as defined in Theorem~\ref{thm:lowerbound}, Proposition \ref{prop:lbgammaunique} ensures that the associated $\{\varepsilon_j\}_{j=1}^n$ defines a valid partition of $[\varepsilon_n, 1] \subseteq [0, 1]$. By Lemma \ref{lm:lbrecurrence}, with our construction of $\alpha_1$ and $\{\alpha_i\}_{i=2}^n$, $(\alpha_1, \{\alpha_i\}_{i=2}^n, \zeta_n)$ is a feasible solution to \ref{form:DLPCP}. On the other hand, Lemma \ref{lm:lbprimalsol} later shows that $\{\varepsilon_j\}_{j=1}^n$ obtained using $\zeta_n$ defines a corresponding solution $(\{d_i\}_{i=1}^n, h)$ that is feasible to \ref{form:LPCP}. Let $\{d_i^*\}_{i=1}^n$ denote the optimal solution to \ref{form:LPCP}. Since the two sets of solutions are dual and primal feasible respectively, we now have
\begin{align*}
\sum_{i=1}^n d_i^* \leq \zeta_n = \sum_{i=1}^n d_i \leq \sum_{i=1}^n d_i^*, 
\end{align*}
where the first inequality follows from weak duality (Lemma \ref{lm:lbweakdual}), the equality follows from Proposition \ref{prop:lbstrongdual}, and the last inequality is due to the optimality of $z^*$. This forces the equality $\sum_{i=1}^n d_i = z^*$ to hold, implying the optimality of $\sum_{i=1}^n d_i$. It follows from Proposition \ref{prop:LBLP} that $\zeta_n = \sum_{i=1}^n d_i$ is the competitive ratio of the optimal algorithm to OSDC, i.e., $R_n^* = \zeta_n$.  
\end{proof}

We now show how to construct the said primal solution, followed by showing its feasibility in Lemma \ref{lm:lbprimalsol}. With this solution, we prove Proposition \ref{prop:lbstrongdual} at the end of this section.   

For notational simplicity, in what follows, let $\Delta_i = d_{i-1} - d_i$. The procedures to construct a solution for \ref{form:LPCP} given $\{\varepsilon_j\}_{j=1}^n$ from the dual solution are as follows: 
\begin{enumerate}
    \item Compute $S_1 = \sum_{i=2}^{n-1} P_n(1-\varepsilon_i) \prod_{k=2}^{i-1} (1-\varepsilon_k)$ and $S_2 = \prod_{k=2}^{n-1} (1-\varepsilon_k)$ ;
    \item Solve the following simultaneous equations to obtain $\Delta_2$ and $d_{n-1}$;
    \begin{align}
    (1 + S_1)\Delta_2 + P_n(1-\varepsilon_n) \cdot d_{n-1} &= 1, \label{eq:lbsimuleq1} \\
    S_2 \Delta_2 - \varepsilon_n \cdot d_{n-1} &= 0. \label{eq:lbsimuleq2}
    \end{align}
    \item Compute $d_1 = d_{n-1} + \Delta_2 \sum_{i=1}^{n-2}\prod_{k=2}^i (1 - \varepsilon_k)$ and $d_2 = d_1 - \Delta_2$;
    \item Iteratively compute $d_{i+1} = d_i - \Delta_2 \prod_{k=2}^i (1 - \varepsilon_k)$ to construct a non-increasing sequence $\{d_i\}_{i=1}^{n}$, which satisfies ${\Delta_{i+1}}/{\Delta_i} = 1 - \varepsilon_i$ for $i = 2, \ldots, n-1$;
    \item Set $h(u) = \sum_{i=1}^{n-1} d_i \cdot \mathbbm{1}_{[\varepsilon_{i+1}, \varepsilon_{i})}(u) + \Delta_2 \cdot \delta_{\{1\}}$, where $\delta_{\{1\}}$ is the Dirac delta centered at $1$. 
\end{enumerate}
Observe that since $d_i$ is non-increasing in $i$, $h$ defined this way is non-decreasing. In addition, observe that the dual Constraint~\eqref{constraint:LBD4} corresponding to $h$ is not tight on $[0, \varepsilon_n)$ at the dual solution in Section~\ref{subsec:lb2}, while $h(u) = 0$ for $u \in [0, \varepsilon_n)$, so the proposed $h$ respects complementary slackness. The following result shows that our proposed solution $(\{d_i\}_{i=1}^n, h)$ tightens Constraint~\eqref{constraint:LBP1}-\eqref{constraint:LBP3}. The proof mainly consists of routine calculations, therefore, we defer it to Appendix~\ref{app:lowerbound}.
\begin{lemma} \label{lm:lbprimalsol}
Given $\{d_i\}_{i=1}^n$ as constructed above, the solution $h(u) = \sum_{i=1}^{n-1} d_i \cdot \mathbbm{1}_{[\varepsilon_{i+1}, \varepsilon_{i})}(u) + \Delta_2 \cdot \delta_{\{1\}}$ and $\hat{d}_i = d_i$ for $i = 1, \ldots, n$ is feasible to \ref{form:LPCP}, and satisfies \eqref{constraint:LBP1}-\eqref{constraint:LBP2} with equality.  
\end{lemma}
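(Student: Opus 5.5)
We verify the constraints \eqref{constraint:LBP1}, \eqref{constraint:LBP2} (for all $q$, tightly at $q=\varepsilon_i$), \eqref{constraint:LBP3}, and \eqref{constraint:LBP4} directly for the constructed $h$. The mass of $h$ on $[0,\varepsilon_n)$ is zero. Take $d_n:=d_{n-1}$ is not quite right; instead note $h$ is piecewise constant with value $d_{i}$ on $[\varepsilon_{i+1},\varepsilon_i)$ for $i=1,\dots,n-1$, plus an atom $\Delta_2$ at $1$.

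\textbf{Monotonicity and nonnegativity.} Step~4 gives $d_{i+1}=d_i-\Delta_{i+1}$ with $\Delta_{i+1}=\Delta_2\prod_{k=2}^i(1-\varepsilon_k)\ge 0$. Solving \eqref{eq:lbsimuleq1}--\eqref{eq:lbsimuleq2}, the determinant is $-(1+S_1)\varepsilon_n - S_2P_n(1-\varepsilon_n)<0$, so $\Delta_2,d_{n-1}>0$ (if $\varepsilon_n=0$, $\Delta_2=0$ degenerately; we handle it as a limit). Hence $d_i$ is nonincreasing and positive, and $h$ is nondecreasing and nonnegative, giving \eqref{constraint:LBP4}. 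We also need to confirm that step~3 is consistent with step~4 ending at $d_{n-1}$, which is a telescoping identity.

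\textbf{Constraints \eqref{constraint:LBP1}--\eqref{constraint:LBP2}.} Define $G_i(q)=\int_0^q h+(1-q)d_{i-1}$. This function is convex in $q$ with derivative $h(q)-d_{i-1}$. It is therefore minimized at the point where $h$ crosses $d_{i-1}$, namely at $q=\varepsilon_i$, since $h<d_{i-1}$ on $[0,\varepsilon_i)$ and $h\ge d_{i-1}$ beyond. So it suffices to prove the equality $d_i=G_i(\varepsilon_i)$, i.e.
\[
d_i=\sum_{k=i}^{n-1} d_k(\varepsilon_k-\varepsilon_{k+1})+(1-\varepsilon_i)d_{i-1}.
\]
Subtracting the identity for $i$ from that for $i+1$ gives $d_{i+1}-d_i = d_i(\varepsilon_{i+1}-\varepsilon_i)\cdot(-1)+\dots$, which simplifies to $\Delta_{i+1}=(1-\varepsilon_i)\Delta_i$, exactly step~4. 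We therefore only need one base case. For $i=n$, the identity reads $d_n=(1-\varepsilon_n)d_{n-1}$; with $d_n=d_{n-1}-\Delta_n$ this becomes $\varepsilon_n d_{n-1}=\Delta_n=S_2\Delta_2$, which is \eqref{eq:lbsimuleq2}. Constraint \eqref{constraint:LBP1} reads $d_1=\int_0^1h=\sum_{k=1}^{n-1}d_k(\varepsilon_k-\varepsilon_{k+1})+\Delta_2$, which is the $i=1$ instance with $(1-\varepsilon_1)d_0=0$ replaced by the atom. Both this and the $i=2$ instance reduce to $d_2=d_1-\Delta_2$ via the same differencing.

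\textbf{Constraint \eqref{constraint:LBP3}.} Compute $\int_0^1 h(u)P_n'(1-u)\,du=\sum_{i=1}^{n-1}d_i\bigl(P_n(1-\varepsilon_{i+1})-P_n(1-\varepsilon_i)\bigr)+\Delta_2P_n'(0)$, where $P_n'(0)=1$. Summation by parts turns this into $d_{n-1}P_n(1-\varepsilon_n)+\sum_{i=2}^{n-1}\Delta_iP_n(1-\varepsilon_i)+\Delta_2$. Substituting $\Delta_i=\Delta_2\prod_{k=2}^{i-1}(1-\varepsilon_k)$ gives $(1+S_1)\Delta_2+P_n(1-\varepsilon_n)d_{n-1}$, which equals $1$ by \eqref{eq:lbsimuleq1}.

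\textbf{Main obstacle.} The Dirac atom makes $h$ not a genuine function. We justify it by a smoothing argument, as described after Proposition~\ref{prop:lbstrongdual}: the LP constraints are continuous under such perturbations, or alternatively the LP is stated over measures. Everything else is bookkeeping of the telescoping sums.
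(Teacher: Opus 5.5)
Your proof is correct and follows essentially the same route as the paper. The shared steps are: nonnegativity from solving \eqref{eq:lbsimuleq1}--\eqref{eq:lbsimuleq2}; tightness of \eqref{constraint:LBP1}--\eqref{constraint:LBP2} at a minimizer $q\in[\varepsilon_i,\varepsilon_{i-1})$, using $\Delta_{i+1}=(1-\varepsilon_i)\Delta_i$ and $d_n=(1-\varepsilon_n)d_{n-1}$; and \eqref{constraint:LBP3} via summation by parts reducing to \eqref{eq:lbsimuleq1}.

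Your backward differencing of consecutive tightness identities down to the base case $i=n$ is only a repackaging of the paper's forward telescoping $\sum_{i=k+1}^{n-1}\Delta_i\varepsilon_i=\Delta_{k+1}-\Delta_n$. One small slip: the intermediate term in your difference should read $+d_i(\varepsilon_{i+1}-\varepsilon_i)$, though the relation you conclude from it is right.
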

We note that the proof of Lemma \ref{lm:lbprimalsol} also reveals an important identity $d_n = (1-\varepsilon_n) d_{n-1}$. With this, we now prove Proposition \ref{prop:lbstrongdual}.
\begin{proof}[Proof of Proposition \ref{prop:lbstrongdual}]
From the proof of Lemma \ref{lm:lbprimalsol}, Constraint~\eqref{constraint:LBP3} becomes 
\begin{align}
\sum_{i=2}^{n-1} \Delta_i P_n(1-\varepsilon_{i}) + d_{n-1} \cdot P_n(1 - \varepsilon_n) + \Delta_2 = 1 \label{eq:lbnormal}
\end{align}
when evaluated using our proposed $h$. To link this with $\zeta$, we rearrange \eqref{eq:lbrecurrence} to obtain an expression for $\sum_{i=2}^n \Delta_i P_n(1-\varepsilon_{i})$ in terms of $\zeta$. Multiplying the recurrence by $\Delta_{j+1}$ and simplifying using $\Delta_{j+1} (1-\varepsilon_{j+1}) = \Delta_{j+2}$, we obtain 
\begin{align*}
\Delta_{j+2} P_n'(1-\varepsilon_{j+1}) - \Delta_{j+1} P_n'(1-\varepsilon_{j}) - \Delta_{j+1} P_n(1-\varepsilon_{j+1}) = -\Delta_{j+1} \cdot \frac{j}{\zeta},
\end{align*}
Summing the LHS from $j = 1$ to $n-2$, we obtain a telescopic sum that after re-indexing gives us:
\begin{align*}
\text{sum of LHS} &= \Delta_{n}P_n'(1-\varepsilon_{n-1}) - \Delta_2 - \sum_{j=2}^{n-1} \Delta_{j}P_n(1-\varepsilon_{j}).
\end{align*}
We also sum the RHS above and obtain 
\begin{align*}
\text{sum of RHS} &= -\frac{1}{\zeta} \sum_{j=1}^{n-2} j \cdot \Delta_{j+1} = -\frac{1}{\zeta} \left(\sum_{j=1}^{n} d_j - (n-1)d_{n-1} - d_n \right) = -\frac{1}{\zeta} \left(\sum_{j=1}^{n} d_j - (n - \varepsilon_n)d_{n-1} \right),
\end{align*}
where it is straight forward to verify the second equality by expanding the sum $\sum_{j=1}^{n-2} j \cdot \Delta_{j+1}$ then adding and subtracting $d_n$, and the last equality applied $d_n = (1-\varepsilon_n)d_{n-1}$. Since sum of LHS $=$ sum of RHS,
\begin{align*}
\sum_{j=2}^{n-1} \Delta_{j}P_n(1-\varepsilon_{j}) = \frac{1}{\zeta} \left(\sum_{j=1}^n d_j - (n - \varepsilon_n)d_{n-1} \right) + \Delta_{n}P_n'(1-\varepsilon_{n-1}) - \Delta_2.
\end{align*}
Renaming the index to $i$ in the above expression and substituting it into \eqref{eq:lbnormal}, 
\begin{align}
\frac{1}{\zeta} \cdot \sum_{j=1}^n d_j - \frac{n-\varepsilon_n}{\zeta} \cdot d_{n-1} + \Delta_{n}P_n'(1-\varepsilon_{n-1}) + d_{n-1} \cdot P_n(1 - \varepsilon_n) = 1. \label{eq:lbfinal}
\end{align}
The identity $d_n = (1-\varepsilon_n)d_{n-1}$ gives $\Delta_n = \varepsilon_n \cdot d_{n-1}$, so the last three terms in~\eqref{eq:lbfinal} becomes $ \left(-{(n-\varepsilon_n)}/{\zeta} + \varepsilon_n P_n'(1-\varepsilon_{n-1}) + P_n(1 - \varepsilon_n) \right) d_{n-1}$. Now using \eqref{eq:lbrecurrence}, we can rewrite the expression for $\zeta_n$ as $\zeta_n = {1}/{(P_n'(1-\varepsilon_n) - P_n'(1-\varepsilon_{n-1}))}$. Substituting in this expression for $\zeta_n$, 
\begin{align*}
&\quad -\frac{n-\varepsilon_n}{\zeta_n} + \varepsilon_n P_n'(1-\varepsilon_{n-1}) + P_n(1 - \varepsilon_n) \\
&= (\varepsilon_n - n)(P_n'(1-\varepsilon_n) - P_n'(1-\varepsilon_{n-1})) + \varepsilon_n P_n'(1-\varepsilon_{n-1}) + P_n(1 - \varepsilon_n) \\
&= \varepsilon_n P_n'(1-\varepsilon_n) + P_n(1 - \varepsilon_n) - \frac{n-1}{\zeta_n} - (P_n'(1-\varepsilon_n) - P_n'(1-\varepsilon_{n-1})) = 0,
\end{align*}
where the last line is obtained by adding and subtracting $P_n'(1-\varepsilon_n) - P_n'(1-\varepsilon_{n-1})$ and applying the definition of $\zeta_n$, and the equality to $0$ follows from \eqref{eq:lbrecurrence} at $j = n-1$. Therefore, with $\zeta = \zeta_n$, the last three terms in~\eqref{eq:lbfinal} evaluates to $0$, and we are left with ${1}/{\zeta_n} \cdot \sum_{j=1}^n d_j = 1$, thus~$\sum_{j=1}^n d_j = \zeta_n$. 
\end{proof}

\section{Algorithm for OSCC} \label{section:alg}

We now start the study of the OSCC model. For simplicity, we drop the superscript and denote an online algorithm on OSCC simply as $\text{ALG}$. Similar to OSDC, the main challenge in analyzing the competitive ratio of OSCC is that the optimal algorithm, in principle, outputs contract lengths  based on the distribution, the observed value, the current time, and times covered by active contracts. Potentially, a reduction similar to the single-selection problems for OSDC may exist for OSCC, which would greatly simplify the analysis; however, the existence of such a reduction remains open. Instead, in this section, we propose a family of algorithms and show how to upper bound its performance. We formally present the meta-algorithm in Algorithm~\ref{alg}. 

\begin{algorithm} [H]
\caption{Meta-Algorithm} \label{alg}
Input $F \in \mathcal{F}$, $1 \geq q_0 > \ldots > q_j \geq 0$, $0 \leq s_0 \leq \ldots \leq s_j$, $0 \leq t_0 \leq \ldots \leq t_j$ with $t_j \geq n$ and {$\sum_{l=0}^{k+1} s_l \leq t_k$ for $k=0,\ldots,{j-1}$} \;
$k = 0, c = s_0$ \;
\For{$i = 1, \ldots, n$}{
$c = c - 1$ \;
\uIf{$F(X_i) \leq q_k$}{
Find $l$ such that $q_{l+1} < F(X_i) \leq q_l$ \;
{Contract $X_i$} for $\min\{t_l, n-i+1\}$ time steps \;
$k = l+1$ \;
$c = s_k$ \;
}
\ElseIf{$c = 0$}{
\uIf{$k = 0$}{
{Contract $X_i$} for $1$ time step \;
$c = s_0$ \;
}
\Else{
$k = k - 1$ \;
$c = s_k$.
}
}
}
\end{algorithm}

{  The meta-algorithm is inspired by the structured exhibited by the optimal algorithm obtained from the dynamic program: At time $i$, if $\ell \geq i-1$ is the last time covered by a contract, then there exists thresholds $0=\tau_{i,\ell,n-i} \leq \tau_{i,\ell,n-i+1}\leq \cdots \leq \tau_{i,\ell,1}\leq +\infty$, with $\tau_{\tau_{i,\ell,1}}=+\infty$ if $\ell=i-1$. Then, upon observing $X_i$, the optimal algorithm outputs $T_i=j$ if $X_i\in ( \tau_{i,\ell,j-1},\tau_{i,\ell,j}]$. Our approach instead, relaxes the thresholds, the possible length of contracts and introduces search durations. Besides the CDF $F \in \mathcal{F}$, the algorithm takes in three sets of inputs, quantiles $\{q_0, \ldots, q_j\}$, search durations $\{s_0, \ldots, s_j\}$ and contract durations $\{t_0, \ldots, t_j\}$, where the quantiles partition $[0, 1]$ and durations are positive integers. Upon observing $X_i$ such that $F(X_i) \in \left(q_{l+1}, q_l\right]$, Algorithm~\ref{alg} contracts $X_i$ for $t_l$ times, and spends the next $s_{l+1}$ times searching for a value better than $q_{l+1}$. Note that the thresholds used by the algorithm correspond to $\tau_j=F^{-1}(q_j)$. Algorithm~\ref{alg} restricts the contract lengths to a set of prespecified durations $t_0 \leq \ldots \leq t_j$ and fixes the benchmark quantile for committing to each duration upfront instead of being fully adaptive at each time. Given past decisions, Algorithm~\ref{alg} searches for better values to secure longer contracts, thereby mimicking the structure of the optimal algorithm. The search durations determine how long the algorithm waits for a better value and compensate for the loss of adaptivity induced by the prespecified durations and quantiles, while maintaining feasibility. Specifically, the durations $s_k$ and $t_k$ are required to satisfy $\sum_{l=0}^{k+1}s_l\leq t_k$, so that after committing to a contract associated with $q_k$, the remaining covered periods exceed the total future search durations associated with $q_{k+1},\ldots,q_0$. This guarantees that every time period is covered by at least one active contract while running Algorithm~\ref{alg}.}

Algorithm~\ref{alg} terminates exactly when all times are covered, so the first obstacle to computing its expected cost is characterizing the exact durations of contracts for every $n$. In the remainder of this subsection, we show a workaround by presenting a system that computes an upper bound on the expected costs without knowing the exact contract durations. We say that the algorithm is \emph{in state $k$} when it is currently using the quantile $q_k$, and let state $j+1$ be the terminal state which the algorithm transits to after initiating a contract in state $j$. Denote Algorithm~\ref{alg} by $\text{ALG}$. We consider an algorithm $\text{ALG}'$ that has the same inputs and procedures as $\text{ALG}$, but terminates only when state $j+1$ is reached, and let $\ALG'(F)$ be the expected cost incurred by $\text{ALG}'$. Since for any $n$, the inputs satisfy $t_j \geq n$, $\text{ALG}'$ is guaranteed to cover all times when reaching state $j+1$. As such, $\ALG_n(F) \leq \ALG'(F)$. Let $d_k$ be the expected cost incurred by $\text{ALG}'$ in state $k$ for $k = 0, \ldots, j+1$. We will find an explicit that allows us to to compute $d_k$ and such that $\ALG_n(F)\leq \ALG'(F)=d_0$.

Let $C(k)$ be the expected cost of a contract initiated in state $k$ by $\text{ALG}'$. In what follows, we abuse notation and take $q_{j+1} = 0$. The following lemma provides a closed-form expression for $C(k)$.

\begin{lemma} \label{lm:hirecost}
Given inputs $(q_0, \ldots, q_j)$, $(t_0, \ldots, t_j)$ and CDF $F$, define the change of variable $F^{-1}(u) = \int_0^u r(v) \, \mathrm{d}v$. A contract initiated in state $k$ by $\text{ALG}'$ has an expected cost of $C(k) =  {1}/{q_k} \cdot \sum_{l=k}^j \int_0^{q_l} r(v) \cdot \min\{q_l - q_{l+1}, q_l - v\} \, \mathrm{d}v \cdot t_l$. 
\end{lemma}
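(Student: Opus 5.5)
A contract in state $k$ is initiated exactly when the observed value satisfies $F(X_i)\le q_k$. Conditionally on this event, and since $F$ is continuous and strictly increasing, $U=F(X_i)$ is uniform on $[0,q_k]$. Given $U\in(q_{l+1},q_l]$ with $l\ge k$, the algorithm $\text{ALG}'$ contracts for $t_l$ periods; because $\text{ALG}'$ does not truncate at $n$, there is no $\min$ with $n-i+1$. The expected cost is therefore
\[
C(k)=\frac{1}{q_k}\sum_{l=k}^j t_l\int_{q_{l+1}}^{q_l}F^{-1}(u)\,\mathrm{d}u,
\]
using the convention $q_{j+1}=0$. The rest of the plan is to rewrite each integral through $F^{-1}(u)=\int_0^u r(v)\,\mathrm{d}v$ and exchange the order of integration (Fubini/Tonelli, since $r\ge 0$).

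For the exchange, I write
\[
\int_{q_{l+1}}^{q_l}\int_0^u r(v)\,\mathrm{d}v\,\mathrm{d}u=\int_0^{q_l} r(v)\,\bigl|\{u\in[q_{l+1},q_l]: u\ge v\}\bigr|\,\mathrm{d}v .
\]
The measure of this set is $q_l-\max\{q_{l+1},v\}=\min\{q_l-q_{l+1},\,q_l-v\}$ for $v\le q_l$. This yields exactly the stated summand, and multiplying by $t_l/q_k$ and summing over $l$ gives the claim.

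The only real subtlety is justifying the conditional distribution of $U$. The decision to contract in state $k$ depends only on the past and on the event $\{F(X_i)\le q_k\}$, and $X_i$ is independent of the past. Hence conditioning on "a contract is initiated in state $k$ at time $i$" is the same as conditioning on $F(X_i)\le q_k$, for which $U$ is uniform on $[0,q_k]$. This conditional distribution does not depend on $i$, so $C(k)$ is well defined. The case $q_k=0$ can be excluded, since then contracting in that state has probability zero.
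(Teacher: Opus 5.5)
Your proposal is correct and follows the paper's proof: the paper also splits $C(k)$ over the bands $(q_{l+1},q_l]$ with conditional probabilities $(q_l-q_{l+1})/q_k$, and it proves the conditional-expectation formula (Claim~\ref{claim:kexpcost}) by substituting $F^{-1}(u)=\int_0^u r(v)\,\mathrm{d}v$ and exchanging the order of integration. Your remarks on the conditional distribution of $F(X_i)$ and on $\text{ALG}'$ using the untruncated lengths $t_l$ only make explicit what the paper leaves implicit.
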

\begin{proof}
Following the algorithm, the expected cost of a contract in state $k$ can be computed by 
\begin{align*}
C(k) &= \sum_{l=k}^{j} \mathbb{P}\left[ F(X_i) \in \left( q_{l+1}, q_l\right] \,\middle|\, F(X_i) \leq q_k\right] \cdot \mathbb{E}\left[ X_i \,\middle|\, F(X_i) \in \left( q_{l+1}, q_l\right]\right] \cdot t_l.
\end{align*}
The result follows from $\mathbb{P}\left[ F(X_i) \in \left( q_{l+1}, q_l\right] \,\middle|\, F(X_i) \leq q_k\right] = {(q_l - q_{l+1})}/{q_k}$ and the expression of the expectation given in Claim \ref{claim:kexpcost}, for which we defer the proof to Appendix~\ref{appsub:alg}.

\begin{claim} \label{claim:kexpcost}
For any $F \in \mathcal{F}$ and sequence $1 \geq q_0 > q_1 > \ldots > q_j \geq 0$, $\mathbb{E}\left[ X_i \,\middle|\, F(X_i) \in \left( q_{l+1}, q_l\right]\right] = {1}/{(q_l - q_{l+1})} \cdot \int_0^{q_l} r(v) \cdot \min\{q_l - q_{l+1}, q_l - v\} \, \mathrm{d}v$. 
\end{claim}
\end{proof} 

The expected costs $d_0,\ldots,d_k$ satisfy the following system, where $p(k)=1-(1-q_k)^{s_k}$.
\begin{align}
&d_0 = \frac{1-p(0)}{p(0)(1 - q_0)} \int_0^{1} r(v) \cdot \min\{1 - q_{0}, 1 - v\} \, \mathrm{d}v + C(0) + \frac{1}{q_0} \left(\sum_{l=1}^{j} (q_{l-1} - q_{l})d_{l} + q_j \cdot d_{j+1} \right). \label{eq:dp1} \\
&d_k = (1 - p(k)) \cdot d_{k-1} + p(k) \cdot \left(C(k) + \frac{1}{q_k} \left(\sum_{l=1}^{j-k} (q_{k+l-1} - q_{k+l})d_{k+l} + q_j \cdot d_{j+1} \right)\right), \nonumber \\
&\quad\qquad\qquad\qquad\qquad\qquad\qquad\qquad\qquad\qquad\qquad\qquad\qquad\qquad\quad k = 1, \ldots, j-1, \label{eq:dp2} \\
&d_j = (1 - p(j)) \cdot d_{j-1} + p(j) \cdot \left( C(j) + d_{j+1} \right), \label{eq:dp3} \\
&d_{j+1} = 0. \label{eq:dp4}
\end{align}
\begin{lemma} \label{lm:dp}
Given inputs to Algorithm~\ref{alg} and $C(k)$ as defined in Lemma \ref{lm:hirecost}, let $p(k) = 1 - \left( 1 - q_k \right)^{s_k}$. The expected cost of Algorithm~\ref{alg} satisfies ${\ALG}_n(F) \leq d_0$, where $\{d_k\}_{k=0}^{j+1}$ solve \eqref{eq:dp1}-\eqref{eq:dp4}. 
\end{lemma}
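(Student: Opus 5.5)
The plan is to establish the two inequalities $\ALG_n(F)\le \ALG'(F)$ and $\ALG'(F)=d_0$, where $\text{ALG}'$ is the non-truncated version of Algorithm~\ref{alg} described above.

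\textbf{Step 1: $\ALG_n(F)\le\ALG'(F)$.} Couple $\text{ALG}$ and $\text{ALG}'$ on an infinite i.i.d.\ sequence $X_1,X_2,\ldots$. The two algorithms make identical state transitions and select the same values at the same times. The only differences are that $\text{ALG}$ stops at time $n$ and truncates each contract to $\min\{t_l,n-i+1\}\le t_l$. So pathwise, every term of $\text{ALG}$'s cost is dominated by the corresponding nonnegative term of $\text{ALG}'$'s cost. Taking expectations gives the inequality.

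To make sense of $\ALG'(F)$, I will check that $\text{ALG}'$ reaches state $j+1$ almost surely and has finite expected cost. From any state, the probability of selecting within $s_k$ steps is $p(k)>0$, and state $0$ is revisited geometrically. Hence the number of steps until absorption has geometric tails, and the total cost is integrable.

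\textbf{Step 2: derive the system \eqref{eq:dp1}--\eqref{eq:dp4}.} Define $d_k$ as the expected future cost of $\text{ALG}'$ when it has just entered state $k$ with a fresh counter $c=s_k$. By the i.i.d.\ assumption and the strong Markov property, this depends only on $k$. Trivially $d_{j+1}=0$. I then do a first-step analysis over the $s_k$-step search window, treating states $k\ge1$ and state $0$ separately.

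\emph{States $k\ge1$.} Either no value has $F(X_i)\le q_k$ during the window, which happens with probability $1-p(k)$ and costs nothing, after which the algorithm moves to state $k-1$. Or the first success happens at some step. Conditionally on success, the accepted value is distributed as $X\mid F(X)\le q_k$, independent of the time of success. The contract then costs $C(k)$ in expectation by Lemma~\ref{lm:hirecost}. The new state is $l+1$ with probability $(q_l-q_{l+1})/q_k$ for $l=k,\dots,j$, which gives the bracketed term after reindexing $l\to k+l-1$.

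\emph{State $0$.} Failures (values with $F(X)>q_0$) cost nothing until $c$ hits $0$. At that point a one-step contract is bought at the price $X$ conditioned on $F(X)>q_0$, and the counter resets. The number of such forced purchases before the first success is geometric, with mean $(1-p(0))/p(0)$. Each forced purchase has expected cost $\mathbb{E}[X\mid F(X)>q_0]$, which equals $\frac{1}{1-q_0}\int_0^1 r(v)\min\{1-q_0,1-v\}\,\mathrm{d}v$ by Claim~\ref{claim:kexpcost} applied with the pair $(1,q_0)$. By Wald's identity, independence of the forced-purchase values from the count holds because the purchase times are determined only by the failure events. This yields the first term of \eqref{eq:dp1}. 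The success part is handled as in the $k\ge1$ case with probability $1$.

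\textbf{Main obstacle.} The delicate part is bookkeeping rather than ideas. First, I must confirm that the counter logic in Algorithm~\ref{alg} actually gives windows of exactly $s_k$ observations: the decrement at the top of the loop, together with the reset $c=s_k$, means the window length should equal $s_k$, possibly off by one at the initial state. Second, I must verify that in state $0$ the forced one-step purchase occurs precisely at the last observation of each failed window, so that the conditional law is $X\mid F(X)>q_0$.

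Finally, solving the linear system gives the actual $d_k$, since it is the unique solution by the triangular/absorbing structure. Combining with Step 1 then gives $\ALG_n(F)\le d_0$.
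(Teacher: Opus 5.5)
Your proposal is correct and follows the paper's argument. You bound $\ALG_n(F)$ by the untruncated $\text{ALG}'$ and then do a first-step analysis over search windows to obtain \eqref{eq:dp1}--\eqref{eq:dp4}. The only cosmetic difference is in state $0$: you sum a geometric number of forced one-step purchases via Wald's identity, whereas the paper writes the renewal equation $d_0=(1-p(0))\bigl(\mathbb{E}[X\mid F(X)>q_0]+d_0\bigr)+p(0)\bigl(C(0)+\cdots\bigr)$ and rearranges.

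Your extra details fill in points the paper leaves implicit: the explicit coupling, integrability, and the exact window length $s_k$, which holds with no off-by-one even for the initial window. You also claim uniqueness of the linear solution; that holds, but the reason is that the coefficient matrix on $d_0,\dots,d_j$ has all row sums strictly below $1$ (given $q_j>0$ and each $p(k)>0$), not any triangular structure.
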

\begin{proof}
Since $\ALG_n(F) \leq \ALG'(F)$, it suffices to show \eqref{eq:dp1}-\eqref{eq:dp4} correctly computes the expected cost incurred in each state by $\text{ALG}'$. The overall expected cost $\ALG'(F)$ is then given by $d_0$, the cost in the initial state. For transition probabilities, observe that for $k = 1, \ldots, j$, $\text{ALG}'$ goes back to state $k-1$ from state $k$ if every costs $x$ observed within the $s_k$ expanse gives $F(x) > q_k$. As such, 
$$\mathbb{P}[\text{visiting state } k-1 \text{ from state } k] = \left( 1 - q_k \right)^{s_k} = 1 - p(k).$$ 
With the remaining probability of $p(k)$, $\text{ALG}'$ makes a contract of expected cost $C(k)$ and transits to a higher state $k+l$ when it observes $F(x) \in (q_{k+l}, q_{k+l-1}]$, which has the conditional probability
$$\mathbb{P}[\text{visiting state } k+l \text{ from state } k \mid \text{state } k] = \frac{q_{k+l-1} - q_{k+l}}{q_k}.$$
$\text{ALG}'$ visits the final state $j+1$ only after it initiates a contract for a value below $q_j$, which has a conditional probability of ${q_j}/{q_k}$ in state $k$. When $\text{ALG}'$ initiates a contract in state $j$, the only state it transits to is the terminal state $j+1$. Combining the transition probabilities, we can write down the expressions \eqref{eq:dp2}-\eqref{eq:dp4} for $d_1, \ldots, d_j$, with the terminal condition $d_{j+1} = 0$.

State $0$ requires a different approach since there is no lower states to transit to. With probability $1 - p(0)$, $\text{ALG}'$ stays in state $0$ and contracts a cost $x$ with $F(x) \in (q_0, 1]$ for one time. As such, 
\begin{align*}
d_0 &= (1-p(0))\left(\frac{1}{1 - q_0} \int_0^{1} r(v) \cdot \min\{1 - q_{0}, 1 - v\} \, \mathrm{d}v + d_0\right) \\
&\quad + p(0) \cdot \left(C(0) + \frac{1}{q_0} \left(\sum_{l=1}^{j} (q_{l-1} - q_{l})d_{l} + q_j \cdot d_{j+1} \right)\right),
\end{align*}
where ${1}/{(1 - q_0)} \cdot \int_0^{1} r(v) \cdot \min\{1 - q_{0}, 1 - v\} \, \mathrm{d}v$ corresponds to $\mathbb{E}\left[ X_i \,\middle|\, F(X_i) \in (q_0, 1]\right]$. Upon rearranging, this becomes \eqref{eq:dp1}. This completes the set up of the recursion.
\end{proof}

\begin{remark} \label{remark:alg}
For any positive integer $n$, for $k = 0, \ldots, j$, where $j = \left\lceil \log_2 \left( {n}/{4}\right)\right\rceil$, with $q_k = {1}/{2^k}$, $s_k = 2^k$, $t_k = 4 \cdot 2^k$, Algorithm~\ref{alg} recovers Algorithm 3 in \cite{Disser2019}, which guarantees a competitive ratio of $6.052$ for general distribution. 
\end{remark}

\section{Upper Bound for OSCC on Uniform Distribution} \label{section:unif}
In this section, as a warm up to apply our LP framework for bounding the competitive ratio of online algorithms, we study Algorithm~\ref{alg} with a specific family of inputs and values uniformly distributed over a closed interval $I$. Since scaling all values by the same factor does not change the competitive ratio, we may assume $I=[0,1]$, and denote its CDF by $F_{U[0,1]}$. In what follows, we use $\text{ALG}$ to denote Algorithm~\ref{alg} with the inputs specified in Section~\ref{subsec:unif1}; ${\ALG}_n\left(F_{U[0,1]} \right)$ to denote its expected costs on an instance of size $n$ with the uniform distribution; and $R_n^{\text{C}}\left(\text{ALG}, F_{U[0,1]} \right)$ to denote its competitive ratio.

We first present our main result of this section, which concludes that our algorithm has a constant competitive ratio that is lower than the existing $2.965$ bound by~\cite{Disser2019}. 

\begin{theorem} \label{thm:uniform}
There exists an algorithm $\text{ALG}$ with $R_n^{\text{C}}\left(\text{ALG}, F_{U[0,1]} \right) \leq 2.945$ for all $n\geq 1$ and $\lim_{n \to \infty} R_n^{\text{C}}\left(\text{ALG}, F_{U[0,1]} \right) = 2.908$. 
\end{theorem}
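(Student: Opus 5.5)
We would prove Theorem~\ref{thm:uniform} by instantiating Algorithm~\ref{alg} with a geometric family of inputs, bounding its cost through the system \eqref{eq:dp1}--\eqref{eq:dp4} of Lemma~\ref{lm:dp}, and certifying the bound by LP duality. Concretely, we fix constants $\rho\in(0,1)$, $\sigma,\theta>0$ and take $j=\lceil \log_{1/\rho}(cn)\rceil$, $q_k=\rho^k$, $s_k\approx \sigma\rho^{-k}$ and $t_k\approx\theta\rho^{-k}$. The rounding is chosen so that the feasibility condition $\sum_{l\le k+1}s_l\le t_k$ and the condition $t_j\ge n$ hold. For $F_{U[0,1]}$ we have $r\equiv 1$. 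Lemma~\ref{lm:hirecost} then gives $C(k)$ explicitly:
\[
\int_0^{q_l}\min\{q_l-q_{l+1},q_l-v\}\,\mathrm{d}v=\tfrac{1}{2}(q_l^2-q_{l+1}^2),
\]
so $C(k)=\frac{1}{q_k}\sum_{l\ge k}\tfrac12(q_l^2-q_{l+1}^2)t_l$. The denominator is given by Remark~\ref{lm:uniformopt}: $\OPT_n=\sum_{i=1}^n 1/(i+1)=H_{n+1}-1$.

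Since \eqref{eq:dp1}--\eqref{eq:dp4} is linear in $(d_0,\dots,d_{j+1})$, we relax the equalities to inequalities $d_k\le(\cdots)$ and write $d_0$ as the value of a maximization LP with these constraints. Its dual is a minimization over multipliers $\lambda_0,\dots,\lambda_j\ge 0$, one per state. These multipliers measure the expected number of visits to each state weighted by the transition structure. The dual objective is $\sum_k\lambda_k\,p(k)C(k)$ plus the state-$0$ one-step cost. We then relax once more by dropping the downward transitions into states beyond $k+1$, which only increases cost since costs decrease in $k$. After this, the dual constraints form a two-term recursion between $\lambda_{k-1},\lambda_k,\lambda_{k+1}$ with coefficients $1-p(k)$ and $q_{k}/q_{k-1}$. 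With the geometric choice we have $p(k)=1-(1-\rho^k)^{\sigma\rho^{-k}}\to 1-e^{-\sigma}$. We therefore guess a dual solution of geometric form $\lambda_k=\lambda\,\gamma^k$ and choose $\gamma,\lambda$ to make the recursion tight in the limit. Plugging in, $d_0\le \sum_k \lambda\gamma^k p(k)C(k)$, where $C(k)\approx \tfrac{\theta}{2}\rho^{k}\cdot\frac{1-\rho^2}{1-\rho}\cdot(\text{count of levels})$. Each level then contributes a constant $K(\rho,\sigma,\theta)$, and there are $j\approx\log_{1/\rho}n$ levels.

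Thus $d_0\le K\log_{1/\rho} n+O(1)$, while $\OPT_n=\ln n+O(1)$. This gives
\[
\lim_{n\to\infty} R^{\text{C}}_n\le \frac{K(\rho,\sigma,\theta)}{\ln(1/\rho)}.
\]
We minimize this numerically over the parameters to reach $2.908$. The stated limit is an equality because the relaxations become tight asymptotically: the tail level $j$ and the state-$0$ terms are $O(1)$. If equality proves delicate, we would at least match the upper bound by computing $d_0$ exactly from the recursion. For the uniform bound over all $n$, we keep the same parameters and track the error terms explicitly: the discrepancy $p(k)$ versus $1-e^{-\sigma}$ (monotone in $k$), the rounding of $s_k,t_k$, truncation at $t_j$ (we use $\min\{t_l,n-i+1\}$), and $H_{n+1}-1$ versus $\ln n$. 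This yields $R_n\le 2.908+E(n)$ with an explicit decreasing $E$. For $n\ge N_0$ this is at most $2.945$, and the finitely many $n<N_0$ are checked by solving \eqref{eq:dp1}--\eqref{eq:dp4} exactly.

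The main obstacle is the dual construction. We must find multipliers that are exactly feasible for every finite $j$, not only in the limit, including the boundary constraints at states $0$ and $j$. Our first attempt is the geometric ansatz corrected by a finite boundary layer near $k=j$. The correction's contribution would be bounded by $O(1)$ using $C(k)=O(1)$ for the deep states, which have $q_k t_k=O(1)$.
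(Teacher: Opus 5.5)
\textbf{The collapse relaxation.} The step that fails is your second relaxation: redirecting every post-contract transition out of state $k$ to state $k+1$.

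It is a valid upper bound. The weights $\frac{q-1}{q^l}$, $l=1,\dots,j-k$, and $q^{-(j-k)}$ (with $q=1/\rho$) form a convex combination of $d_{k+1},\dots,d_{j+1}$, and the optimal $d_k$ are nonincreasing by Claim~\ref{claim:dmonotone}. If you instead meant deleting those terms, the LP no longer upper-bounds $d_0$ at all.

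The problem is that it discards exactly what makes the algorithm good, and $2.908$ is unreachable afterwards.
\begin{itemize}
\item In the true chain, a contract in state $k$ advances a geometric number of levels with mean $q/(q-1)$. The drift per visit is $\tilde p\,q/(q-1)-(1-\tilde p)$, so the bulk number of visits per level is the paper's $\theta=(q-1)/((2q-1)\tilde p-(q-1))$, the limit of $\alpha_k^*$ in Lemma~\ref{lm:uniformdualsol}.
\item After your collapse, the drift is $2\tilde p-1$ and visits per level become $1/(2\tilde p-1)$, which even requires $\tilde p>1/2$. Meanwhile $C(k)\approx a\beta(q+1)/(2q)$ still pays for the long contracts taken on deep values.
\item The asymptotic ratio becomes $\frac{\tilde p}{2\tilde p-1}\cdot\frac{a\beta(q+1)}{2q\ln q}$ with $\tilde p=1-e^{-\sigma\beta}$. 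Feasibility of $\sum_{l\le k+1}s_l\le t_k$ forces $\sigma\beta\le x:=a\beta(q-1)/q^2$.
\item Hence the ratio is at least $g(x)h(q)$, where $g(x)=\frac{x(1-e^{-x})}{1-2e^{-x}}$ and $h(q)=\frac{q(q+1)}{2(q-1)\ln q}$. Since $\min g\approx 2.08$ and $\min h\approx 2.33$, the best your bound can give is about $4.8$.
\end{itemize}
At the paper's parameters, $\theta\approx 1.6$ while $1/(2\tilde p-1)\approx 5.9$. No tuning of your parameters gets you to $2.908$.

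\textbf{What the paper does instead.} The paper never relaxes the jump structure. Its only relaxations are $C(k)\le\tilde C(k)$ (Lemma~\ref{lm:costub}) and $p(k)\ge\tilde p=1-e^{-a\beta(q-1)/q^2}$ (Lemma~\ref{lm:uniformrelax}, which is where monotonicity of $d_k^*$ is used). Both become tight as $k\to\infty$.

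The history sum in the dual constraint \eqref{constraint:alg1D2} is removed exactly. Subtracting the tight $k$-th constraint from $q$ times the tight $(k+1)$-st gives \eqref{eq:dualrecursion}. Its general solution is $A+B\lambda^k$ with $\lambda=(1+(q-1)\tilde p)/(q(1-\tilde p))>1$. The boundary relations near $k=1$ and $k=j$ then give $\alpha_k^*=\theta(1-\lambda^{k-(j+1)})$, which is feasible for every finite $j$.

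So the bulk dual is constant, not $\lambda\gamma^k$ with $\gamma\neq 1$. This matches $C(k)$ being constant in $k$. Your expression for $C(k)$, with a factor $\rho^k$ times a count of levels, is incorrect: with $\beta=1$ and contract scale $a$, one gets $C(k)\approx a(1+\rho)/2$.

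The dual value is $G_1 j+O(1)$ with $G_1=\tilde p\theta\,a\beta(q+1)/(2q)$. Claim~\ref{claim:jOPT} converts $j$ into $\OPT_n/\ln q+\kappa$, giving $G_1/\ln q\approx 2.908$ at $a=4$, $\beta=0.89$, $q=2.27$.

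\textbf{Remaining points.}
\begin{itemize}
\item Your choice $q_0=1$ (i.e.\ $\beta=1$) is harmless in the limit, because $G_1$ depends on $a,\beta$ only through $a\beta$. However, the paper's $2.945$ certificate uses $\beta=0.89$: explicit $O(1)$ terms for $j\ge 66$ and direct computation for $j\le 65$. You would have to redo those numerics.
\item Your claim that the limit \emph{equals} $2.908$ because the relaxations become tight is unsupported, and it is false for your collapse relaxation. Computing $d_0$ exactly from the recursion only gives an upper bound. The paper's argument itself only establishes $\lim_{n\to\infty}R_n^{\text{C}}(\text{ALG},F_{U[0,1]})\le 2.908$.
\end{itemize}
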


To prove Theorem~\ref{thm:uniform}, we start by setting up our choice of algorithm inputs and compute the associated costs specific to $F_{U[0,1]}$ in Section~\ref{subsec:unif1}. In Section~\ref{subsec:unif2}, we show that ${\ALG}_n\left(F_{U[0,1]} \right)$ can be upper bounded by an LP based on System~\eqref{eq:dp1}-\eqref{eq:dp4}. The main obstacle in analyzing this LP is the complexity of terms that represent contract costs and probabilities. By bounding these terms with simpler expressions, we obtain another LP that is an upper bound to the former and simple enough to characterize its optimal dual solution. At the end of this section, we present the proof of Theorem~\ref{thm:uniform}. Since the proof contains numerical results on the upper bound, we devote a separate subsection (Section~\ref{subsec:unif3}) to it. All proofs to the claims in this section are deferred to Appendix~\ref{appsub:unif}. 

\subsection{Algorithm and Expected Costs} \label{subsec:unif1}

Moving forward, we study $\text{ALG}$ with the following family of inputs: introduce parameters $q, a, \beta$, with $a > 0$, $q > 1$, ${q^2}/{(a(q-1))} \cdot \ln\left( {q}/{(2q-1)}\right) < \beta \leq 1$, and define $b = {a(q-1)}/{q^2}$. Set inputs
\begin{itemize}
    \item $q_k = {\beta}/{q^k}$ for all $k = 0, \ldots, j$; 
    \item $t_k = \left\lfloor a \cdot q^k + k + (2 - {a}/{q}) \right\rfloor$ for all $k = 0, \ldots, j$;
    \item and $s_0 = 1$, $s_k = \lceil b \cdot q^{k} \rceil$ for $k = 1, \ldots, j$. 
\end{itemize}
Note that when $k = \left\lceil \log_q ({n}/{a}) \right\rceil$, $\text{ALG}$ will contract a value below ${\beta}/{q^k}$ for $t_k \geq \lfloor a \cdot q^k \rfloor \geq n$ times since $a \cdot q^k \geq n$ and $n$ is an integer. Therefore, the largest possible value of $k$ is $j = \left\lceil \log_q ({n}/{a}) \right\rceil$. 

Before proceeding, we first show that our choice of $s_k$ and $t_k$ guarantees the coverage of all time steps. When the algorithm contracts with $k$ and enters state $k+1$, the worst case searching time is 
\begin{align*}
\sum_{l=0}^{k+1} s_k = \sum_{l=1}^{k+1} \lceil b \cdot q^{l} \rceil + 1 \leq b \sum_{l=1}^{k+1} q^l + (k+1) + 1 = \frac{bq^2}{q-1} \cdot q^k - \frac{bq}{q-1} + k + 2,
\end{align*}
where in the first inequality we applied $\lceil b \cdot q^{l} \rceil \leq b \cdot q^{l} + 1$. Then, by choosing $b = {a(q-1)}/{q^2}$, we can ensure $\sum_{l=1}^{k+1} \lceil b \cdot q^{l} \rceil + 1 \leq a \cdot q^k + k + (2 - {a}/{q})$. Since the $\sum_{l=1}^{k+1} \lceil b \cdot q^{l} \rceil + 1$ is an integer, taking the integer part of the RHS retains the inequality, which allows us to conclude $\sum_{l=0}^{k+1} s_l \leq t_k$. 

Now in System~\eqref{eq:dp1}-\eqref{eq:dp4}, the transition probabilities become $p(k) = 1 - \left( 1 - {\beta}/{q^k} \right)^{\lceil b \cdot q^{k} \rceil}$ for $k = 1, \ldots, j$ and $p(0) = \beta$; from state $k$, the conditional probability of advancing to a higher state $k+l$ is ${1}/{q_k} \cdot (q_{k+l-1} - q_{k+l}) = {(q-1)}/{q^{l}}$, and to state $j+1$ is ${1}/{q^{j-k}}$. For $d_0$, with $q_0 = \beta$, the first term in \eqref{eq:dp1} simplifies to ${(1-\beta^2)}/{(2\beta)}$. In the next result, we provide expressions of $C(k)$ under $F_{U[0,1]}$, obtained by taking $r(v) = 1$ in Lemma~\ref{lm:hirecost}. The details are deferred to Appendix~\ref{app:unif}. 
\begin{lemma} \label{lm:uniformkcost}
For uniformly distributed costs, a contract made by $\text{ALG}$ in state $k$ has an expected cost at most $C(k) = {\beta(q^2-1)}/{(2q^2)} \cdot \sum_{l=k}^j  {1}/{q^{2l-k}}\cdot \left\lfloor a \cdot q^l + l + \left(2 - {a}/{q}\right)\right\rfloor$ for $k = 0, \ldots, j-1$, and $C(j) = {\beta}/{(2q^{j})} \cdot \left\lfloor a \cdot q^j + j + \left(2 - {a}/{q}\right)\right\rfloor.$
\end{lemma}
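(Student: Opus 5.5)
The plan is to specialize Lemma~\ref{lm:hirecost} to $F_{U[0,1]}$, where $F^{-1}(u)=u$, so $r(v)=1$ on $[0,1]$. Then the expected cost of a contract initiated in state $k$ is
\[
C(k)=\frac{1}{q_k}\sum_{l=k}^{j}t_l\int_0^{q_l}\min\{q_l-q_{l+1},\,q_l-v\}\,\mathrm{d}v ,
\]
with the convention $q_{j+1}=0$. The first step is to evaluate the inner integral in closed form. For $v\in[0,q_{l+1}]$ the minimum equals $q_l-q_{l+1}$, and for $v\in[q_{l+1},q_l]$ it equals $q_l-v$. Hence
\[
\int_0^{q_l}\min\{q_l-q_{l+1},q_l-v\}\,\mathrm{d}v=q_{l+1}(q_l-q_{l+1})+\tfrac12(q_l-q_{l+1})^2=\tfrac12\,(q_l^2-q_{l+1}^2).
\]
This is just $(q_l-q_{l+1})$ times the conditional mean $(q_l+q_{l+1})/2$ of a uniform variable on $(q_{l+1},q_l]$, which agrees with Claim~\ref{claim:kexpcost}.

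Next I substitute $q_l=\beta/q^l$. For $l\le j-1$, we have $q_{l+1}=\beta/q^{l+1}$, so
\[
\tfrac12(q_l^2-q_{l+1}^2)=\frac{\beta^2}{2q^{2l}}\Big(1-\frac1{q^2}\Big)=\frac{\beta^2(q^2-1)}{2q^{2l+2}}.
\]
Dividing by $q_k=\beta/q^k$ gives the coefficient $\dfrac{\beta(q^2-1)}{2q^2}\cdot\dfrac{1}{q^{2l-k}}$. Multiplying by $t_l=\lfloor a q^l+l+(2-a/q)\rfloor$ reproduces the summand stated in the lemma.

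The only care needed is the last index $l=j$, because there $q_{j+1}=0$ rather than $\beta/q^{j+1}$. In that case the integral equals $\tfrac12 q_j^2=\beta^2/(2q^{2j})$. This is larger than the value $\beta^2(q^2-1)/(2q^{2j+2})$ that the general formula would give at $l=j$. The two cases then go as follows.

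For $k=j$, the sum has only the term $l=j$, and we get
\[
C(j)=\frac{q^j}{\beta}\cdot\frac{\beta^2}{2q^{2j}}\,t_j=\frac{\beta}{2q^j}\,t_j,
\]
which is exactly the stated expression.

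For $k\le j-1$, the term $l=j$ contributes $\beta\,t_j/(2q^{2j-k})$. The summation formula in the lemma instead assigns it the smaller weight $\beta(q^2-1)t_j/(2q^{2j-k+2})$. The difference is $\beta\,t_j/(2q^{2j-k+2})$, so the stated formula for $k<j$ undercounts that term; the "at most" cannot rest on dropping it. I would handle this in one of two ways. The first is to argue that, inside the recursion \eqref{eq:dp1}--\eqref{eq:dp4}, the extra probability mass $q_j/q_k$ directed to the terminal state $d_{j+1}=0$ can be reinterpreted so that the overcharge is absorbed. The second, more likely, is to observe that contracts with $l=j$ have length $\min\{t_j,n-i+1\}\le n$. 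Since $t_j\ge n$ may greatly exceed $n$, the truncation in Algorithm~\ref{alg} lets this term be reassigned.

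I expect this boundary term at $l=j$ for $k<j$ to be the main obstacle. The remaining steps are routine algebra.
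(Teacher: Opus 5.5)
Your route is the paper's route: set $r\equiv 1$ in Lemma~\ref{lm:hirecost} and integrate. Your closed form $\tfrac12(q_l^2-q_{l+1}^2)$, the summands for $l\le j-1$, and $C(j)=\beta t_j/(2q^j)$ are all correct. Your diagnosis of the $l=j$ summand when $k<j$ is also correct.

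The paper's proof does not address that point. It integrates $\min\{\beta(q-1)/q^{l+1},\beta/q^l-v\}$ for every $l$ up to $j$, which silently sets $q_{j+1}=\beta/q^{j+1}$ in the last term. The same proof computes $C(j)$ with $\beta/q^j-v$ (i.e.\ $q_{j+1}=0$), and $\psi_j$ in Section~\ref{section:general} does the same. Every value with $F(X_i)\le q_j$ triggers a $t_j$-contract, with probability $q_j/q_k$ and conditional mean $q_j/2$. So the true $l=j$ contribution is $\beta t_j/(2q^{2j-k})$, and the displayed formula for $k<j$ falls short by exactly the $\beta t_j/(2q^{2j-k+2})$ you found.

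The gap is that neither repair you sketch works, and none can, because for $k<j$ the statement is false as written.
\begin{itemize}
\item The $d_{j+1}=0$ idea is beside the point. $C(k)$ is the cost of the contract itself and enters \eqref{eq:dp1}--\eqref{eq:dp2} additively, whatever state follows. For $\text{ALG}'$, with untruncated lengths $t_l$, Lemma~\ref{lm:hirecost} gives the larger value with equality.
\item Truncation to $n-i+1\le n$ would need roughly $n\le(1-q^{-2})t_j$, which fails when $n$ is close to $aq^j$. Take $n=\lfloor aq^j\rfloor$, $k=0$, $i=1$. For large $n$, every $t_l$ with $l<j$ is below $n$, so only the $t_j$-contract is truncated, to length $n$. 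Using $t_j\le n+j+3$, the true expected cost then exceeds the displayed $C(0)$ by $\frac{\beta}{2q^{2j}}\bigl(n-(1-q^{-2})t_j\bigr)\ge\frac{\beta}{2q^{2j}}\bigl(n/q^2-(j+3)\bigr)>0$, since $j=O(\log n)$.
\end{itemize}

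What your computation does prove is the corrected formula
\[
C(k)=\frac{\beta(q^2-1)}{2q^2}\sum_{l=k}^{j-1}\frac{t_l}{q^{2l-k}}+\frac{\beta}{2q^{2j-k}}\,t_j ,\qquad k<j,
\]
which is exact for $\text{ALG}'$ and an upper bound for $\text{ALG}$. This is all the asymptotic analysis needs. Since $\frac{\beta}{2q^{2j-k}}=\frac{\beta(q^2-1)}{2q^2}\sum_{l\ge j}q^{-(2l-k)}$ and $t_j\le aq^l+l+2-a/q$ for every $l\ge j$, the corrected $C(k)$ is still at most $\tilde C(k)$ from Lemma~\ref{lm:costub}, whose proof extends the sum to infinity. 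Only the small-$j$ numerics that plug in the displayed formula would need the corrected last term.
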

We will use Lemma~\ref{lm:uniformkcost} for numerical guarantees when $n$ is small. For large $n$, using $p(k)$ and  $C(k)$ as they are becomes challenging to use for analysis. First, we simplify $C$ with an upper bound in the following lemma, and later we will show how to simplify $p(k)$ once we have introduced our LP. The proof of the next lemma is a simple calculation deferred to Appendix~\ref{app:unif}. 
\begin{lemma} \label{lm:costub}
Define $w(k) = {\beta}/{(2q^k)} \cdot \left( k + (2 - {a}/{q}) + {1}/{(q^2-1)}\right)$, $\tilde{C}(k) = {a\beta(q+1)}/{(2q)} + w(k)$, and $\tilde{C}(j) = {a\beta}/{2} + {\beta}/{(2q^j)} \cdot \left( j + \left(2 - {a}/{q} \right)\right)$. For $k = 0, \ldots, j-1$, $C(k) \leq \tilde{C}(k)$, and $C(j) \leq \tilde{C}(j)$. 
\end{lemma}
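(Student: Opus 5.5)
We drop the floor in $t_l$ and sum the resulting geometric series term by term. By Lemma~\ref{lm:uniformkcost}, for $k\le j-1$ we have
\[
C(k)=\frac{\beta(q^2-1)}{2q^2}\sum_{l=k}^j q^{k-2l}\,t_l ,\qquad t_l\le a q^l + l + c,\quad c:=2-\frac{a}{q}.
\]
Substituting this bound splits $C(k)$ into two pieces. The first comes from $aq^l$ and the second from $l+c$.

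For the first piece, $\sum_{l=k}^j q^{k-2l}\,aq^l = a q^k\sum_{l=k}^j q^{-l}$. We extend the sum to $l=\infty$, which is valid since every term is positive and $q>1$. This gives $a q^k\cdot q^{-k}\cdot q/(q-1) = aq/(q-1)$. Multiplying by the prefactor yields
\[
\frac{\beta(q^2-1)}{2q^2}\cdot\frac{aq}{q-1}=\frac{a\beta(q+1)}{2q},
\]
which is the constant part of $\tilde C(k)$.

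For the second piece, we write $l=k+m$. Then $q^{k-2l}=q^{-k}q^{-2m}$, so the piece equals $q^{-k}\sum_{m\ge0}(k+m+c)\,q^{-2m}$, again extended to infinity. This requires $k+m+c\ge0$, which holds if $c\ge0$, i.e. $a\le 2q$. If $c<0$, we instead bound the floor by its positive part, since the first terms are then still nonnegative.

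Let $x=q^{-2}$. We use $\sum_m x^m = q^2/(q^2-1)$ and $\sum_m m x^m = x/(1-x)^2 = q^2/(q^2-1)^2$. The second piece is therefore
\[
q^{-k}\left[(k+c)\frac{q^2}{q^2-1}+\frac{q^2}{(q^2-1)^2}\right].
\]
Multiplying by $\beta(q^2-1)/(2q^2)$ gives
\[
\frac{\beta}{2q^k}\left(k+c+\frac{1}{q^2-1}\right)=w(k).
\]
Hence $C(k)\le \tilde C(k)$ for $k\le j-1$.

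For $k=j$ the claim is immediate. Lemma~\ref{lm:uniformkcost} gives $C(j)=\beta t_j/(2q^j)$, and $t_j\le aq^j+j+c$, so
\[
C(j)\le \frac{a\beta}{2}+\frac{\beta}{2q^j}\,(j+c)=\tilde C(j).
\]

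The only delicate point is the nonnegativity of the tail terms needed to extend the finite sums to infinity, that is, the sign of $l+c$. This is the step to check against the parameter range. The rest is routine geometric-series algebra.
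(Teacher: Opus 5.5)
Your proof is correct and follows the paper's argument: substitute the formula from Lemma~\ref{lm:uniformkcost}, drop the floor, extend the sum to $l=\infty$, and evaluate the two geometric series, with the $k=j$ case handled by dropping the floor alone. The sign issue you flag is not actually delicate, so you need neither $a\le 2q$ nor the positive-part fallback: extend the \emph{unsplit} sum first, since every added tail term $q^{k-2l}\,(aq^l+l+2-a/q)$ is positive ($aq^l>a/q$ for $q>1$), and only then separate the two pieces.
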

Similar to what we have done in Section~\ref{section:lowerbound}, we will turn System~\eqref{eq:dp1}-\eqref{eq:dp4} into an LP that allows us to derive a competitive ratio upper bound for $\text{ALG}$. 

\subsection{Upper Bound on Performance Guarantee via Linear Programming} \label{subsec:unif2}
In this subsection, we formulate an LP to upper bound ${\ALG}_n\left(F_{U[0,1]} \right)$. We obtain the constraints by substituting the conditional probabilities computed in Section~\ref{subsec:unif1} into System~\eqref{eq:dp1}-\eqref{eq:dp4}, treating $\{d_k\}_{k=0}^{j+1}$ as variables, and using placeholders $C_k$ and $p_k$ for costs and probabilities respectively. The placeholders make the LP versatile for studying the effect of changing contract costs and probabilities. We formally present this in the meta-LP \ref{form:LPunif}$\left(\bm{p}, \bm{C}\right)$, which takes a vector $\bm{p} = (p_1, \ldots, p_j) \in \mathbb{R}^j$ of probabilities and a vector $\bm{C} = (C_0, C_1, \ldots, C_j) \in \mathbb{R}^{j+1}$ of costs as inputs. 
\begin{alignat}{3}
& \max \quad && d_0 \tag*{$(LP)_{\text{Unif}}$}\label{form:LPunif} \\
& \text{s.t.} \quad && d_0 \leq \frac{1-\beta^2}{2\beta} + C_0 + \sum_{l=1}^j \frac{q-1}{q^l}d_l + \frac{1}{q^j} d_{j+1}, \label{constraint:alg1P1} \\ 
& && d_k \leq (1 - p_k) \cdot d_{k-1} + p_k \cdot \left( C_k + \sum_{l=1}^{j-k} \frac{q-1}{q^l}d_{k+l} + \frac{1}{q^{j-k}} d_{j+1} \right), \quad && k = 1, \ldots, j-1, \label{constraint:alg1P2} \\
& && d_j \leq (1 - p_j) \cdot d_{j-1} + p_j \cdot \left( C_j + d_{j+1} \right), \label{constraint:alg1P3}\\
& && d_{j+1} \leq 0, \label{constraint:alg1P4} \\
& && d_k \geq 0, && k = 0, \ldots, j+1. \nonumber
\end{alignat}
Now the solution to \eqref{eq:dp1}-\eqref{eq:dp4} is a feasible solution to \ref{form:LPunif} with appropriate inputs, so we have the following result stating that ${\ALG}_n\left(F_{U[0,1]} \right)$ can be upper-bounded by the objective of this LP. 
\begin{proposition} \label{prop:uniformlp}
Let $p(k) = 1 - \left( 1 - {\beta}/{q^k} \right)^{\lceil b \cdot q^{k} \rceil}$ and $\tilde{C}(k)$ in Lemma \ref{lm:costub}, let $\bm{p} = (p(1), \ldots, p(j))$, $\bm{\tilde{C}} = (\tilde{C}(0), \ldots, \tilde{C}(j))$, and $d_0^*$ be the optimal value of \ref{form:LPunif}$\left(\bm{p}, \bm{\tilde{C}} \right)$. Then,~${{\ALG}_n\left(F_{U[0,1]} \right) \leq d_0^*}$.
\end{proposition}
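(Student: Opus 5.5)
Our plan is to produce a feasible point of \ref{form:LPunif}$(\bm{p},\bm{\tilde C})$ whose objective value is at least ${\ALG}_n(F_{U[0,1]})$. The natural candidate is the solution $\{d_k\}_{k=0}^{j+1}$ of System~\eqref{eq:dp1}-\eqref{eq:dp4} with $r\equiv 1$, which satisfies ${\ALG}_n(F_{U[0,1]})\le d_0$ by Lemma~\ref{lm:dp}. Hence it suffices to show that this vector is feasible for the LP with the \emph{upper-bounded} costs $\tilde C(k)$; then $d_0\le d_0^*$ follows.

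\textbf{Step 1: specialize the system.} Under $F_{U[0,1]}$ and the chosen inputs, we substitute the quantities computed in Section~\ref{subsec:unif1}. These are $p(0)=\beta$, the first term of \eqref{eq:dp1}, which is $(1-\beta^2)/(2\beta)$ (since $(1-p(0))/p(0)=(1-\beta)/\beta$ multiplies the conditional mean $(1+\beta)/2$), the transition weights $(q-1)/q^l$ and $1/q^{j-k}$, and $p(k)=1-(1-\beta/q^k)^{\lceil bq^k\rceil}$. With these substitutions, \eqref{eq:dp1}-\eqref{eq:dp4} become exactly the constraints \eqref{constraint:alg1P1}-\eqref{constraint:alg1P4}, holding with equality, with $C_k=C(k)$ and $p_k=p(k)$.

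\textbf{Step 2: replace $C(k)$ by $\tilde C(k)$.} By Lemma~\ref{lm:costub}, $C(k)\le\tilde C(k)$ for every $k$. In each constraint the cost enters with a nonnegative coefficient ($1$ in \eqref{constraint:alg1P1} and $p_k\in[0,1]$ otherwise). Increasing $C_k$ therefore only enlarges the right-hand sides, and the equalities become valid inequalities. We also have $d_{j+1}=0$, so \eqref{constraint:alg1P4} holds.

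\textbf{Step 3: nonnegativity.} We need $d_k\ge 0$. Each $d_k$ is the expected cost incurred by $\text{ALG}'$ from state $k$ onward, which is an expectation of nonnegative costs and hence nonnegative. One point deserves care: the system must have a unique finite solution that coincides with these expected costs. This holds because $\text{ALG}'$ reaches the terminal state almost surely, since $p(k)>0$ for all $k$ and $p(0)=\beta>0$.

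Combining the three steps, the vector $\{d_k\}$ is feasible for \ref{form:LPunif}$(\bm p,\bm{\tilde C})$. Therefore ${\ALG}_n(F_{U[0,1]})\le d_0\le d_0^*$. The only mildly delicate point is Step 3, namely identifying the solution of the linear system with the actual nonnegative expected costs; this follows from the Markov chain argument in Lemma~\ref{lm:dp}.
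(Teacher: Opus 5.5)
Your proposal is correct and follows the same route as the paper. You take the solution of \eqref{eq:dp1}--\eqref{eq:dp4} for $F_{U[0,1]}$, note that raising each $C(k)$ to $\tilde C(k)$ (which enters with coefficient $1$ or $p_k\ge 0$) turns the equalities into valid constraints of \ref{form:LPunif}$(\bm p,\bm{\tilde C})$, and conclude from Lemma~\ref{lm:dp} and the optimality of $d_0^*$; your Step~3, which argues nonnegativity and uniqueness via almost-sure absorption of $\text{ALG}'$ (every state jumps directly to state $j+1$ with positive probability), supplies a detail the paper leaves implicit.
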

\begin{proof}
Replacing $C(k)$ in \eqref{eq:dp1}-\eqref{eq:dp4} with the respective upper bound $\tilde{C}(k)$ transforms \eqref{eq:dp1}-\eqref{eq:dp4} into the constraints of \ref{form:LPunif}$\left(\bm{p}, \bm{\tilde{C}} \right)$, so the solution $\{d_k\}_{k=0}^{j+1}$ to System~\eqref{eq:dp1}-\eqref{eq:dp4} with probabilities and costs defined in Section~\ref{subsec:unif1} is a feasible solution to \ref{form:LPunif}$\left(\bm{p}, \bm{\tilde{C}} \right)$. The conclusion follows from $d_0 \geq {\ALG}_n\left(F_{U[0,1]} \right)$ by Lemma~\ref{lm:dp} and the optimality of $d_0^*$. 
\end{proof}
For general $p(k)$, analyzing~\ref{form:LPunif} is non-trivial. However, our next result shows that replacing $p(k)$ by a uniform lower bound can only make the value of~\ref{form:LPunif} larger. We later leverage this to provide a closed-form dual optimal solution.
\begin{lemma} \label{lm:uniformrelax}
For $\tilde{p} \in [0, 1]$ such that $\tilde{p} \leq p(k)$ for all $k$. Let $\bm{\tilde{p}} = (\tilde{p}, \ldots, \tilde{p})\in \mathbb{R}^j$ and let $\Tilde{d}_0^*$ and $d_0^*$ be the optimal value of \ref{form:LPunif}$\left(\bm{\tilde{p}}, \bm{\tilde{C}} \right)$ and \ref{form:LPunif}$\left(\bm{p}, \bm{\tilde{C}} \right)$, respectively. Then, $d_0^* \leq \Tilde{d}_0^*$. 
\end{lemma}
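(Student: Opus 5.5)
The plan is to show that the optimal solution of \ref{form:LPunif}$\left(\bm{p}, \bm{\tilde{C}} \right)$, call it $\bm{d}^* = (d_0^*,\ldots,d_{j+1}^*)$, stays feasible for \ref{form:LPunif}$\left(\bm{\tilde{p}}, \bm{\tilde{C}} \right)$. Optimality of $\Tilde{d}_0^*$ then gives $d_0^* \leq \Tilde{d}_0^*$ directly.

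Constraint~\eqref{constraint:alg1P1} and constraint~\eqref{constraint:alg1P4} do not involve $\bm{p}$, so they carry over unchanged. For $k=1,\ldots,j$, write the right-hand side of \eqref{constraint:alg1P2} (or of \eqref{constraint:alg1P3} when $k=j$) as a convex combination
\[
(1-p_k)\, d_{k-1} + p_k\, G_k, \qquad G_k = \tilde{C}(k) + \sum_{l=1}^{j-k} \frac{q-1}{q^l}d_{k+l} + \frac{1}{q^{j-k}} d_{j+1},
\]
with $G_j = \tilde{C}(j) + d_{j+1}$. This expression is nonincreasing in $p_k$ exactly when $G_k \leq d_{k-1}$. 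In that case, lowering $p(k)$ to $\tilde{p}$ only enlarges the right-hand side, and feasibility is preserved.

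The main step is therefore to establish $G_k^* \leq d_{k-1}^*$ at a suitable optimal solution. I would argue as follows.

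First, the optimal $\bm{d}^*$ can be taken to make every constraint tight. The LP is a maximization whose constraint matrix has nonnegative coefficients on the "future" variables, and it mirrors the system \eqref{eq:dp1}--\eqref{eq:dp4}. One can repeatedly raise $d_k$ to its upper bound, which only loosens other constraints, until all are equalities. The tight system is then the cost system of a Markov chain: $d_k$ is the expected remaining cost starting from state $k$.

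Second, I need to compare $G_k$ with $d_{k-1}$. Here $G_k$ is the cost of contracting now in state $k$ and continuing, while $d_{k-1}$ is the cost of falling back to state $k-1$. The intuition that contracting in the lower-quantile state is cheaper suggests $G_k \leq d_{k-1}$.

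I expect proving this inequality to be the main obstacle. I would try two routes.

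The first route uses the tight equation for $d_{k-1}$: $d_{k-1} = (1-p_{k-1})d_{k-2} + p_{k-1}G_{k-1}$. It would then suffice to show both $G_k \leq G_{k-1}$ and $G_k \leq d_{k-2}$, by induction on $k$, with the base case coming from the expression for $d_0$ in \eqref{constraint:alg1P1}. For $G_k \leq G_{k-1}$, I would use the explicit monotonicity $\tilde{C}(k) \leq \tilde{C}(k-1)$, which follows from $w(k)$ being decreasing in $k$ (and this needs checking at $k=j$). I would also use the fact that $G_{k-1} - G_k$ telescopes into $\tilde{C}(k-1)-\tilde{C}(k)$ plus nonnegative multiples of the differences $d_{k+l-1} - d_{k+l}$. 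This makes it natural to also prove $d_k$ nonincreasing in $k$ inside the same induction.

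If this direct induction fails, the second route is an LP-duality argument. The dual variables of \ref{form:LPunif} form products of the $p_k$. Perturbing $p_k$ changes the optimum at rate $y_k(G_k - d_{k-1})$, with $y_k \geq 0$ the dual multiplier. So a sign argument on $G_k - d_{k-1}$ along the path from $\bm{p}$ to $\bm{\tilde{p}}$, again via monotonicity of $d$ and $\tilde{C}$, would give that the optimum is nonincreasing in each $p_k$.
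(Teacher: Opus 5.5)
Your reduction is the same as the paper's. You keep the optimal $\bm d^*$ of \ref{form:LPunif}$(\bm p,\bm{\tilde C})$, note that only \eqref{constraint:alg1P2}--\eqref{constraint:alg1P3} involve $\bm p$, and observe that lowering $p(k)$ to $\tilde p$ can only enlarge $(1-p_k)d_{k-1}^*+p_kG_k^*$ when $G_k^*\le d_{k-1}^*$. But that inequality is the real content of the lemma, and neither of your routes establishes it.

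\textbf{Route one.} It rests on $\tilde C(k)\le\tilde C(k-1)$ ``because $w$ is decreasing'', which is false for the parameters actually used. We have $w(k+1)\le w(k)$ iff $(q-1)\bigl(k+2-a/q+1/(q^2-1)\bigr)\ge 1$. For $a=4$, $q=2.27$ this fails at $k=0$ ($w(0)\approx 0.24\beta<w(1)\approx 0.33\beta$), so $\tilde C(1)>\tilde C(0)$. In the reuse for general distributions (Proposition~\ref{prop:generallp}), $\tilde C(k)$ may increase in $k$ as well, e.g.\ when $r$ concentrates near $0$. In addition, your telescoped expression for $G_{k-1}-G_k$ needs $d_{m-1}^*\ge d_m^*$ for $m>k$. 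That is monotonicity at indices your forward induction has not reached yet, so the induction is circular as set up.

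\textbf{Route two.} This only relocates the same sign question on $G_k-d_{k-1}$; it does not resolve it.

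\textbf{Missing idea.} Prove directly, by contradiction, that $d^*$ is nonincreasing, using only the weaker hypothesis $q\tilde C(k)\ge\tilde C(k+1)$, which does hold. The relevant identity is
\[
G_h=\tilde C(h)-\tfrac1q\,\tilde C(h+1)+\tfrac{q-1}{q}\,d_{h+1}+\tfrac1q\,G_{h+1},\qquad h=0,\ldots,j-1,
\]
where $G_0$ denotes the corresponding expression in \eqref{constraint:alg1P1}. Suppose $d_{h-1}^*\ge d_h^*$ is already known and $d_h^*<d_{h+1}^*$.
\begin{itemize}
\item The tight $h$-th constraint, together with $d_{h-1}^*\ge d_h^*$ and $p(h)>0$, gives $d_h^*\ge G_h^*$. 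For $h=0$, use $d_0^*=\frac{1-\beta^2}{2\beta}+G_0^*$ instead.
\item The identity, with $q\tilde C(h)\ge\tilde C(h+1)$ and $d_{h+1}^*>d_h^*$, gives $G_h^*>\frac{q-1}{q}d_h^*+\frac1q G_{h+1}^*$.
\item Combining the two yields $d_h^*>G_{h+1}^*$. The tight $(h+1)$-th constraint then forces $d_{h+1}^*\le d_h^*$, a contradiction.
\end{itemize}
This is the paper's Claim~\ref{claim:dmonotone}. Once $d^*$ is monotone, your inequality follows in one line from $d_{k-1}^*-d_k^*=p(k)\,(d_{k-1}^*-G_k^*)$ with $p(k)>0$. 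You never need $G_k^*\le G_{k-1}^*$.
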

\begin{proof}
Given the optimal solution $\{d_k^*\}_{k = 0}^{j+1}$ to \ref{form:LPunif}$\left(\bm{p}, \bm{\tilde{C}} \right)$, we show that it is a feasible to \ref{form:LPunif}$\left(\bm{\tilde{p}}, \bm{\tilde{C}} \right)$. The result then follows immediately from the maximality of $\Tilde{d}_0^*$. We will need the following claim, which follows by induction. 
\begin{claim} \label{claim:dmonotone}
Let $\{d_k^*\}_{k = 0}^{j+1}$ be the set of optimal solution to \ref{form:LPunif} with arbitrary inputs $p_k \in (0, 1]$, $C_k \geq 0$. If $C_k$ satisfies $q\cdot C_k \geq C_{k+1}$ for all $k$, then $\{d_k^*\}_{k = 0}^{j+1}$ is non-increasing in $k$, i.e., $d_{k-1}^* \geq d_k^*$. 
\end{claim}
Since $q\cdot w(k) \geq r(k+1)$, we have $q \cdot \tilde{C}(k) \geq \tilde{C}(k+1)$. Thus by Claim \ref{claim:dmonotone}, $d_{k-1}^* \geq d_k^*$ in \ref{form:LPunif}$\left(\bm{p}, \bm{\tilde{C}} \right)$. We now check that subtracting the RHS of Constraints~\eqref{constraint:alg1P1}-\eqref{constraint:alg1P3} evaluated at $\{d_k^*\}_{k=0}^{j+1}$ in \ref{form:LPunif}$\left(\bm{p}, \bm{\tilde{C}} \right)$ from that in \ref{form:LPunif}$\left(\bm{\tilde{p}}, \bm{\tilde{C}} \right)$ gives a nonnegative result. Let $$S_k = \tilde{C}(k) + \sum_{l=1}^{j-k} \frac{q-1}{q^l}d_{k+l}^* + \frac{1}{q^{j-k}} d_{j+1}^*$$ and rearrange Constraint~\eqref{constraint:alg1P2} to obtain $d_{k-1}^* - d_k^* = p(k) \cdot (d_{k-1}^* - S_k)$. From here, $d_{k-1}^* \geq d_k^*$ implies $d_{k-1}^* - S_k \geq 0$. Therefore, the difference in the RHS is $(p(k) - \tilde{p}) \cdot \left(d_{k-1}^* - S_k \right) \geq 0$ for Constraint~\eqref{constraint:alg1P2}. The same applies to Constraint~\eqref{constraint:alg1P3}. The RHS of Constraint~\eqref{constraint:alg1P1} is the same in both LP. Since the RHS of constraints in \ref{form:LPunif}$\left(\bm{\tilde{p}}, \bm{\tilde{C}} \right)$ at $\{d_k^*\}_{k=0}^{j+1}$ is at least as large as the one in \ref{form:LPunif}$\left(\bm{p}, \bm{\tilde{C}} \right)$, $\{d_k^*\}_{k=0}^{j+1}$ is feasible to \ref{form:LPunif}$\left(\bm{\tilde{p}}, \bm{\tilde{C}} \right)$. This concludes the proof. 
\end{proof}

Following Lemma \ref{lm:uniformrelax}, since $1-e^{-b\beta} \leq 1-\left( 1 - {\beta}/{q^k} \right)^{b \cdot q^k} \leq p(k)$ for all $k$, we can replace all $p(k)$ with $\tilde{p} = 1-e^{-b\beta} = 1 - e^{-{a\beta(q-1)}/{q^2}}$. We now solve \ref{form:LPunif}$\left(\bm{\tilde{p}}, \bm{\tilde{C}} \right)$ using duality. Introducing dual variable $\alpha_k$ for the constraint starting with $d_k$, \ref{form:LPunif}$\left(\bm{\tilde{p}}, \bm{\tilde{C}} \right)$ has the following dual:
\begin{alignat}{3}
& \min \quad && \left(\frac{1-\beta^2}{2\beta} + \tilde{C}(0) \right) \cdot \alpha_0 + \tilde{p} \cdot \sum_{k=1}^{j-1} \tilde{C}(k) \cdot \alpha_k + \tilde{C}(j) \cdot \alpha_j \tag*{$(DLP)_{\text{Unif}}$}\label{form:DLPunif} \\
& \text{s.t.} \quad && \alpha_0 \geq (1-\tilde{p}) \cdot \alpha_1 + 1, \label{constraint:alg1D1} \\
& && \alpha_k \geq (1-\tilde{p}) \cdot \alpha_{k+1} + \frac{q-1}{q^k} \alpha_0 + \tilde{p} \cdot \sum_{l=1}^{k-1} \frac{q-1}{q^{k-l}} \alpha_l, \quad && k = 1, \ldots, j-1, \label{constraint:alg1D2}\\
& && \alpha_j \geq \frac{q-1}{q^j} \alpha_0 + \tilde{p} \cdot \sum_{l=1}^{j-1} \frac{q-1}{q^{j-l}} \alpha_l, \label{constraint:alg1D3}\\
& && \alpha_{j+1} \geq \frac{1}{q^j} \alpha_0 + \tilde{p} \cdot \sum_{l=1}^{j} \frac{1}{q^{j-l}} \alpha_l, \label{constraint:alg1D4} \\
& && \alpha_k \geq 0, && k = 0, \ldots, j+1. \nonumber 
\end{alignat}

Intuitively, the optimal dual solution $\alpha_k^*$ can be interpreted as the expected number of visits to state $k$ by $\text{ALG}'$. The constraints in~\ref{form:DLPunif} capture the flow equations between states in $\text{ALG}'$.

We now characterize the optimal solution to \ref{form:DLPunif}. The main idea is that Constraint~\eqref{constraint:alg1D2}, when tightened, can be rearranged to give the following recurrence relation, 
\begin{align}
(1+(q-1)\tilde{p})\alpha_k = (1+q-\tilde{p})\alpha_{k+1} - q(1-\tilde{p})\alpha_{k+2}, \label{eq:dualrecursion}
\end{align}
which can then be solved explicitly. The solution tightens all constraints, hence is optimal to \ref{form:DLPunif}. This recurrence relation also yields a simpler feasible solution that becomes handy when upper bounding the competitive ratio for general distributions in Section~\ref{section:general}. Since the derivation is calculation-heavy, we present the solution in Lemma~\ref{lm:uniformdualsol} and defer the proof to Appendix~\ref{app:unif}. 

\begin{lemma} \label{lm:uniformdualsol}
Let $E = (2q-1)\tilde{p}-(q-1)$, $\theta = {(q-1)}/{E}$, $\lambda = {(1+(q-1)\tilde{p})}/{(q(1-\tilde{p}))}$. Then, $\alpha_0^* = \theta \left( {q\tilde{p}}/{(q-1)} - (1-\tilde{p})\lambda^{-j}\right)$, $\alpha_k^* = \theta (1 - \lambda^{k-(j+1)})$ for $k = 1, \ldots, j$ is optimal to \ref{form:DLPunif}.
\end{lemma}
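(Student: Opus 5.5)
The plan is to prove optimality by exhibiting a primal solution of \ref{form:LPunif}$(\bm{\tilde p},\bm{\tilde C})$ whose value equals the dual objective at $\{\alpha_k^*\}$. This rests on two facts: $\{\alpha_k^*\}$ is feasible and makes every dual constraint \eqref{constraint:alg1D1}--\eqref{constraint:alg1D4} tight, and there is a primal solution that makes every primal constraint tight. Weak duality for this finite LP is standard, so matching values certify optimality of both.

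\textbf{Step 1 (deriving the recurrence).} For $k=1,\ldots,j-1$, write the tightened \eqref{constraint:alg1D2} as
\[
\alpha_k-(1-\tilde p)\alpha_{k+1}=\frac{q-1}{q^k}\alpha_0+\tilde p\sum_{l=1}^{k-1}\frac{q-1}{q^{k-l}}\alpha_l .
\]
Call the right-hand side $G_k$. It satisfies $G_{k+1}=G_k/q+\tilde p\,\frac{q-1}{q}\alpha_k$. Substituting this identity for the tight constraint at $k+1$ eliminates the partial sums and yields \eqref{eq:dualrecursion}. The same manipulation links the tight \eqref{constraint:alg1D2} at $k=j-1$ to the tight \eqref{constraint:alg1D3}. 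That constraint has the same form as \eqref{constraint:alg1D2} with the $(1-\tilde p)\alpha_{j+1}$ term absent, so it acts as the boundary condition ``$\alpha_{j+1}=0$'' in the recurrence. Likewise \eqref{constraint:alg1D1}, combined with the tight constraint at $k=1$, fixes $\alpha_0$.

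\textbf{Step 2 (solving the recurrence and checking the boundaries).} The characteristic polynomial of \eqref{eq:dualrecursion} is $q(1-\tilde p)x^2-(1+q-\tilde p)x+(1+(q-1)\tilde p)$. The root $x=1$ is immediate, and by Vieta the other root is $\lambda$. Hence $\alpha_k=A+B\lambda^{k}$. The boundary condition from Step 1 gives $A=-B\lambda^{j+1}$, which produces the form $\theta(1-\lambda^{k-(j+1)})$. The scale $\theta$ is then pinned down by \eqref{constraint:alg1D1} together with \eqref{constraint:alg1D2} at $k=1$. Solving these two equations gives $\alpha_0^*$ as stated, and \eqref{constraint:alg1D4} is made tight by defining $\alpha_{j+1}$ through it. I expect this bookkeeping, in particular checking that the first and last constraints (where the sums degenerate) are exactly consistent with the closed form, to be the main obstacle. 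It is a careful but routine geometric-series computation.

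\textbf{Step 3 (nonnegativity).} We need $\alpha_k^*\ge 0$. This holds if $\lambda>1$ and $\theta>0$, i.e.\ $E>0$. The condition $E>0$ is $(2q-1)\tilde p>q-1$, which is $e^{-b\beta}<q/(2q-1)$. Since $b=a(q-1)/q^2$, this is exactly the standing assumption $\beta>\frac{q^2}{a(q-1)}\ln\frac{q}{2q-1}$. The same inequality gives $\lambda>1$, because $1+(q-1)\tilde p>q(1-\tilde p)$ is equivalent to $E>0$. For $\alpha_0^*\ge 0$, use $\lambda^{-j}\le 1$ and $q\tilde p/(q-1)\ge 1-\tilde p$, which is again implied by $E>0$.

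\textbf{Step 4 (matching primal solution).} Take $\{d_k\}$ to be the solution of the primal constraints \eqref{constraint:alg1P1}--\eqref{constraint:alg1P4} with all of them holding as equalities. This is a triangular-type linear system, i.e.\ the recursion \eqref{eq:dp1}--\eqref{eq:dp4} with placeholders, and its solution is nonnegative by the argument behind Claim~\ref{claim:dmonotone}. Multiply the $k$-th primal equality by $\alpha_k^*$ and sum over $k$. Because every dual constraint is tight, the coefficient of each $d_k$ on the two sides cancels except for $d_0$. What remains is that $d_0$ equals the dual objective evaluated at $\{\alpha_k^*\}$. By weak duality, $\{\alpha_k^*\}$ is therefore optimal.
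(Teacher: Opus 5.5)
Your Steps 1--3 follow the paper's computation: the same recurrence \eqref{eq:dualrecursion}, the same two boundary conditions, and the same nonnegativity check via $E>0$. The condition from \eqref{constraint:alg1D1} with \eqref{constraint:alg1D2} at $k=1$ reduces to $EA=q-1$, since the $\lambda^k$ part cancels. The condition from \eqref{constraint:alg1D2} at $k=j-1$ with \eqref{constraint:alg1D3} reduces to $A+B\lambda^{j+1}=0$. You also identify the characteristic roots correctly as $1$ and $\lambda$; the paper's ``$\tilde\lambda_1=0$'' is a typo.

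Where you genuinely differ is the optimality certificate. The paper stays on the dual side. It asserts, without proof, that any optimal solution of \ref{form:DLPunif} tightens \eqref{constraint:alg1D1}--\eqref{constraint:alg1D3}, and then shows the tight system has $\alpha^*$ as its unique solution. The assertion has a one-line justification:
\begin{itemize}
\item the LP is feasible with objective bounded below, so an optimum exists;
\item each $\alpha_k$ with $k\le j$ has a positive objective coefficient;
\item each such $\alpha_k$ appears on the left of only its own constraint, and on the right of the others with nonnegative coefficients.
\end{itemize}
So a slack $\alpha_k$ could be lowered, strictly decreasing the objective. You instead build a primal solution that tightens every primal constraint and invoke complementary slackness. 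This costs you existence and nonnegativity of that primal solution. In return it gives $\tilde d_0^*=\tilde v^*$ directly, which the paper only asserts informally in Section~\ref{subsec:unif3}.

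The paper's route has its own advantage: it shows $\alpha^*$ is the unique optimum for every objective with positive coefficients on $\alpha_0,\dots,\alpha_j$. It is therefore unaffected by the printed objective of \ref{form:DLPunif} carrying $\tilde C(j)$ rather than $\tilde p\,\tilde C(j)$ on $\alpha_j$. Your Step~4 identity holds only for the correctly dualized coefficient $\tilde p\,\tilde C(j)$.

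Two small repairs are needed.

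First, the all-tight primal system is not triangular: the $k$-th equation involves $d_{k-1}$ as well as $d_{k+1},\dots,d_j$. So existence needs an argument. Write the system as $d=c+Md$ on $(d_0,\dots,d_j)$, with $d_{j+1}=0$. Here $c\ge 0$ and $M\ge 0$, with row sums
\begin{itemize}
\item $1-q^{-j}$ for row $0$,
\item $1-\tilde p\,q^{-(j-k)}$ for rows $1\le k\le j-1$,
\item $1-\tilde p$ for row $j$,
\end{itemize}
all strictly below $1$. Hence $d=\sum_{t\ge 0}M^t c$ exists and is nonnegative. Your appeal to Claim~\ref{claim:dmonotone} also works once existence is known. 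Its contradiction argument never uses $d\ge 0$, and once $d_{j-1}\ge d_j$ the last equation gives $d_j\ge\tilde C(j)\ge 0$.

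Second, $e^{-b\beta}<q/(2q-1)$ is equivalent to $\beta>\frac{q^2}{a(q-1)}\ln\frac{2q-1}{q}$. It is not equivalent to the main-text condition with $\ln\frac{q}{2q-1}$ that you quote. That condition has a sign typo and is vacuous as printed; the appendix states it correctly.
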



\subsection{Proof of Theorem~\ref{thm:uniform}} \label{subsec:unif3}
We first present the bound for large $n$ using the objective of \ref{form:DLPunif}. Since both \ref{form:LPunif}$\left(\bm{\tilde{p}}, \bm{\tilde{C}} \right)$ and \ref{form:DLPunif} can be solved by tightening all constraints and solving the resulting system of equations, both systems have respective unique solutions, therefore the pair of LP admits strong duality, i.e., $\tilde{d}_0^* = \tilde{v}^*$, where $\tilde{v}^*$ denotes the optimal value of \ref{form:DLPunif}. We thus have $\tilde{v}^* = \tilde{d}_0^* \geq d_0^*$, where the inequality is justified by Lemma \ref{lm:uniformrelax}. Then by Proposition \ref{prop:uniformlp}, we can get an upper bound on ${\ALG}_n\left(F_{U[0,1]} \right)$ by evaluating $\tilde{v}^*$ using the solution in Lemma \ref{lm:uniformdualsol}:
\begin{align*}
\tilde{v}^* &= \theta\left(\frac{1-\beta^2}{2\beta} + \tilde{C}(0) \right) \left(\frac{q\tilde{p}}{q-1} - (1-\tilde{p})\lambda^{-j}\right) + \tilde{p} \theta \sum_{k=1}^{j-1} \tilde{C}(k) \cdot (1 - \lambda^{k-(j+1)}) + \theta \cdot \tilde{C}(j) \cdot (1 - \lambda^{-1}).
\end{align*}
We start the simplification of $\tilde{v}^*$ by noting that $\theta \cdot {q\tilde{p}}/{(q-1)} = \tilde{p} \theta + {\tilde{p}}/{E}$, and thus writing $\tilde{v}^*$ in the form $G_1 \cdot j + G_2 + G_3 - G_4$, which can then be expanded by substituting in the expressions for $\tilde{C}(k)$. For brevity, we skip the details and give the simplified expressions for $G_1, G_2, G_3, G_4$ below, 
\begin{align*}
G_1 &= \tilde{p}\theta \cdot \frac{a\beta(q+1)}{2q}, \\
G_2 &= \frac{\tilde{p}}{2\beta q} \cdot \frac{q^2(1 - \beta^2) + \theta \beta^2(q^2+1)}{E}, \\
G_3 &\leq \frac{\tilde{p}\beta}{2E}\left[(q+1)\left(2 - \frac{a}{q} + \frac{1}{q-1}\right) - \frac{1}{q^{j}}\left(j + 2 - \frac{a}{q} + \frac{q^{2} + 2q}{q^{2} - 1}\right)\right] \leq \frac{\tilde{p}\beta}{2E}(q+1)\left(2 - \frac{a}{q} + \frac{1}{q-1}\right), \\
G_4 &= \theta \left[ (1-\tilde{p}) \left( \frac{1-\beta^2}{2\beta} - \frac{a\beta(q+1)}{2q} \cdot \theta(1-\tilde{p}) \right) \lambda^{-j} + \frac{a\beta \tilde{p}}{2\lambda} \cdot \frac{2(1-\tilde{p})+ \tilde{p}q}{E}\right]. 
\end{align*}
Here $G_3$ contains terms involving $\lambda$ which have negligible contributions to $\tilde{v}^*$ {when $n$ is large}. Before we provide a bound on $\tilde{v}^*$, we need the following claim, where recall $\OPT_n(F_{U[0,1]})=\sum_{i=1}^n 1/(i+1)$:
\begin{claim} \label{claim:jOPT}
Let $\kappa = {(1 - \gamma - \ln a)}/{(\ln q)} + 1$, where $\gamma$ is the Euler–Mascheroni constant. For any positive integer $n$, $j \leq {\OPT_n}/{(\ln q)} + \kappa - \text{error}_n/{(\ln q)}$, with $\text{error}_n \leq 1/(2n) \rightarrow 0$ as $n$ grows. 
\end{claim}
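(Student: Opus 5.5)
We need to bound $j=\lceil \log_q(n/a)\rceil$ in terms of $\OPT_n=\sum_{i=1}^n 1/(i+1)=H_{n+1}-1$. The plan is to use the ceiling bound on $j$, a standard harmonic-number estimate to express $\ln n$ through $\OPT_n$, and then read off $\kappa$ and the error term.

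First, since $\lceil x\rceil < x+1$,
\[
j \le \log_q(n/a)+1=\frac{\ln n-\ln a}{\ln q}+1 .
\]

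Second, I use the harmonic-number estimate $H_m=\ln m+\gamma+\tfrac{1}{2m}-\epsilon_m$ with $0<\epsilon_m<\tfrac{1}{8m^2}$. This gives
\[
\OPT_n = H_{n+1}-1 = \ln(n+1)+\gamma-1+\tfrac{1}{2(n+1)}-\epsilon_{n+1}.
\]
Rearranging, and using $\ln(n+1)\ge \ln n$, I obtain
\[
\ln n \;\le\; \OPT_n + 1-\gamma - \Big(\tfrac{1}{2(n+1)}-\epsilon_{n+1}\Big) - \ln\!\big(1+\tfrac1n\big).
\]
It is cleanest to define $\text{error}_n := \ln(n+1)-\ln n+\tfrac{1}{2(n+1)}-\epsilon_{n+1}$, so that the identity $\ln n = \OPT_n+1-\gamma-\text{error}_n$ holds exactly.

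Substituting this identity into the first inequality gives
\[
j\le \frac{\OPT_n}{\ln q}+\frac{1-\gamma-\ln a}{\ln q}+1-\frac{\text{error}_n}{\ln q},
\]
and the constant term is exactly $\kappa$.

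It remains to control $\text{error}_n$. The upper bound $\text{error}_n \le 1/(2n)$ does not hold for this definition: $\ln(1+1/n)\approx 1/n$ plus $1/(2(n+1))$ is about $3/(2n)$. This signals that the claim's error term should be read as $O(1/n)$ and vanishing, or else as an error with the correct sign that merely strengthens the inequality.

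Concretely, positivity of $\text{error}_n$ is the direction that matters. I have $\text{error}_n\ge 0$ since $\ln(1+1/n)>0$ and $\tfrac{1}{2(n+1)}>\tfrac{1}{8(n+1)^2}>\epsilon_{n+1}$. For the upper bound, $\ln(1+1/n)\le 1/n$ and $\tfrac{1}{2(n+1)}\le\tfrac{1}{2n}$ give $\text{error}_n\le 3/(2n)\to 0$. If the sharper constant $1/(2n)$ is really required, I would instead compare $\OPT_n$ with $\ln n$ through $H_n$: write $\OPT_n = H_n+\tfrac{1}{n+1}-1$ and use $H_n=\ln n+\gamma+\tfrac{1}{2n}-\epsilon_n$. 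This gives $\text{error}_n=\tfrac{1}{2n}+\tfrac{1}{n+1}-\epsilon_n$, which is still order $1/n$.

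The main obstacle is therefore only bookkeeping of this $O(1/n)$ constant; the asymptotic use of the claim, where $\text{error}_n\to0$, is unaffected.
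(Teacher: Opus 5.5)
Your route is the paper's: bound $j\le \log_q(n/a)+1$, write $\OPT_n=\mathcal{H}_{n+1}-1$, and use the asymptotic expansion of the harmonic numbers. Your derivation of $j\le \OPT_n/\ln q+\kappa-\text{error}_n/\ln q$ with $\text{error}_n\ge 0$ is correct.

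Your conclusion that the stated bound $\text{error}_n\le 1/(2n)$ ``does not hold'' is mistaken, however, and it needlessly leads you to reinterpret the claim. The $3/(2n)$ comes only from your choice to fold the slack $\ln(1+1/n)=\ln(n+1)-\ln n$ into $\text{error}_n$ so as to get an exact identity for $\ln n$. The paper instead discards that slack via $\ln n<\ln(n+1)$ and sets $\text{error}_n:=\mathcal{H}_{n+1}-\ln(n+1)-\gamma=\tfrac{1}{2(n+1)}-\epsilon_{n+1}$. This gives
\[
\ln n<\ln(n+1)=\OPT_n+1-\gamma-\text{error}_n,\qquad 0<\text{error}_n<\tfrac{1}{2(n+1)}<\tfrac{1}{2n},
\]
which is exactly the claim.

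You already had this in hand: in your displayed expression for $\ln n$, simply drop the nonpositive term $-\ln(1+1/n)$. Your fallback via $\mathcal{H}_n$ cannot reach $1/(2n)$ for the same reason, since it again keeps an extra $1/(n+1)$ of slack inside the error.

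More generally, lowering $\text{error}_n$ only weakens the inequality. So any nonnegative number below your $\text{error}_n$, for instance $\min\{\text{error}_n,1/(2n)\}$, is also admissible, and you do not need to read the claim as $O(1/n)$. As you correctly observe, the property actually used downstream is $\text{error}_n\ge 0$.
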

Let $M = G_2 + G_3 - G_4$. Using Claim \ref{claim:jOPT} to express $j$ in terms of $\OPT_n$, we have ${\tilde{v}^*} \leq G_1 \cdot {\OPT_n}/{\ln q} + (G_1 \cdot \kappa + M)$. {  For fixed $a,\beta,q$, when $n$ grows, we have $\lim_{n\to \infty } R_n^C (\text{ALG},F_{U[0,1]})\leq G_1/\ln q$. Note that $G_1/\ln q$ only depends on $a,\beta$ and $q$. By solving numerically, $\min_{a, \beta, q} {G_1}/{(\ln q)}$ is achieved at $a = 4, \beta = 0.89, q = 2.27$, we obtain $\lim_{n\to \infty} R_n^C(\text{ALG},F_{U[0,1]})\leq 2.908$.}

{  We now focus on $R_n^{\text{C}}\left(\text{ALG}, F_{U[0,1]} \right)$ for all $n\geq 1$. We use the same parameters $a,\beta$ and $q$ obtained above.} Note that the first term in $G_4$ is non-positive with $\lambda^{-j}$ decreasing in $j$, thus $M$ is decreasing in $j$, and we can upper bound $G_1 \cdot \kappa + M$ with $G_4$ evaluated at the smallest $j$. At the same time, $\OPT_n$ can also be lower bounded at the smallest $j$, and recall that $j = \left\lceil \log_q ({n}/{a}) \right\rceil$, for instances with $n \in [\lfloor q^{j-1} \cdot a + 1\rfloor, \lfloor q^{j} \cdot a \rfloor]$, $j$ stays the same, so the smallest $\OPT_n$ for each $j$ happens at $n = \lfloor q^{j-1} \cdot a + 1\rfloor$. Evaluating the above at $j = 66$, we get ${(G_1 \cdot \kappa + M)}/{\OPT_n} \leq 0.037$. Therefore, for $j \geq 66$, $R_n^{\text{C}}\left(\text{ALG}, F_{U[0,1]} \right) \leq 2.908+0.037 = 2.945$. 

For $j \leq 65$ we do the following: For $j \leq 40$, we can compute directly the exact solution to \eqref{eq:dp1}-\eqref{eq:dp4}. For $j \in [41, 65]$, for numerical stability we avoid computing $p(k)$ and solve \ref{form:LPunif}$\left(\bm{\tilde{p}}, \bm{C} \right)$ instead, where $\bm{C} = (C(0), \ldots, C(k))$. This is justified because the optimal solution to this LP is also an upper bound on ${\ALG}_n\left(F_{U[0,1]} \right)$, with the same proof as Lemma \ref{lm:uniformrelax}. We use Claim \ref{claim:jOPT} to approximate $\OPT_n$ for relatively large $n$ for computational efficiency. The largest competitive ratio upper bound for $j \leq 40$ from the exact solution to System~\eqref{eq:dp1}-\eqref{eq:dp4} is $2.902$ at $j = 40$, while the largest upper bound via \ref{form:LPunif}$\left(\bm{\tilde{p}}, \bm{C} \right)$ is $2.938$ at $j = 41$. 

Figure \ref{fig:unifdp} plots the worst case numerical upper bound (red solid line) computed using above-mentioned methods and the asymptotic ratio, with the existing bound $2.965$ for comparison. In the figure, the competitive ratio computed using the exact recursion is increasing in the size of the instance and approaching the asymptotic ratio of $2.908$, while the switch to \ref{form:LPunif}$\left(\bm{\tilde{p}}, \bm{C} \right)$ at $j=41$ is reflected by the jump in the plot. We note that compared to the exact recursion, \ref{form:LPunif}$\left(\bm{\tilde{p}}, \bm{C} \right)$ (orange dotted line) overestimates the competitive ratio for small instances, but this gaps disappears as $n$ grows. Table \ref{table:unifdp} reports the results for $j = 55, \ldots, 65$. Overall, we obtain $R_n^{\text{C}}\left(\text{ALG}, F_{U[0,1]} \right) \leq 2.945$ for all $n\geq 1$.
\begin{figure}[H]
\centering
\begin{minipage}{0.5\textwidth}
    \centering
    \includegraphics[width=\linewidth]{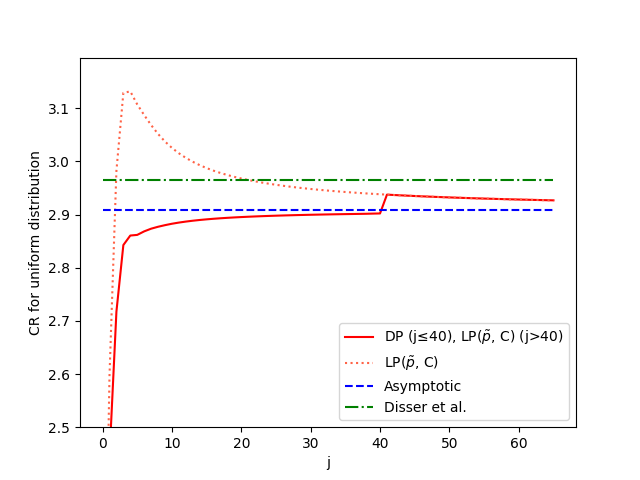}
    \caption{Numerical $R_n^{\text{C}}\left(\text{ALG}, F_{U[0,1]} \right)$ for $j \leq 65$}
    \label{fig:unifdp}
\end{minipage}
\begin{minipage}[t]{0.49\textwidth}\small
    \centering
\begin{tabular}{|c|c|}
\hline
$j$ & \ref{form:LPunif}$\left(\bm{\tilde{p}}, \bm{C} \right)$ objective \\
\hline
55 & 2.930066 \\
56 & 2.929678 \\
57 & 2.929305 \\
58 & 2.928944 \\
59 & 2.928595 \\
60 & 2.928258 \\
61 & 2.927932 \\
62 & 2.927616 \\
63 & 2.927310 \\
64 & 2.927014 \\
65 & 2.926727 \\
\hline
\end{tabular}
\caption{LP objective for $j = 55, \ldots, 65$}
\label{table:unifdp}
\end{minipage}
\end{figure}

\section{Competitive Ratio Upper Bound for OSCC} \label{section:general}
We now extend the analysis to any CDF $F$. For this, we consider $\text{ALG}$ with the same family of inputs described at the beginning of Section~\ref{subsec:unif1}. We require that $a$ and $q$ to satisfy ${a}/{q} \geq 2$ so that $t_0 \leq \left\lfloor a \cdot q^0 \right\rfloor$, which simplifies the expression of $t_0$ and thus part of the analysis (see details in Lemma~\ref{lm:betastar}). We then assume $n \rightarrow \infty$ and perform asymptotic analysis.

\begin{theorem} \label{thm:general}
There exists an algorithm $\text{ALG}$ which achieves $\lim_{n \rightarrow \infty} R_n^{\text{C}}(\text{ALG}) \leq 4.179$. 
\end{theorem}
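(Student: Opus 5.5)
We follow the same LP route as in Section~\ref{section:unif}, with the distribution itself treated as a variable. Fix $a,\beta,q$ with $a/q\ge 2$ and run Algorithm~\ref{alg} with the inputs of Section~\ref{subsec:unif1}. By Proposition~\ref{prop:assumption} and scale invariance we may restrict to strictly increasing, differentiable $F$ with $\OPT_n(F)=1$. Write $F^{-1}(u)=\int_0^u r(v)\,\mathrm{d}v$ with $r\ge 0$. Lemma~\ref{lm:opt} turns the normalization into the linear constraint $\int_0^1 r(v)P_n(1-v)\,\mathrm{d}v=1$. Lemma~\ref{lm:hirecost} writes each $C(k)$ as a linear functional of $r$, and the first term of \eqref{eq:dp1} is also linear in $r$.

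Substituting these into \eqref{eq:dp1}--\eqref{eq:dp4} gives an infinite-dimensional LP with variables $(d_k)_k$ and $r\ge 0$. It maximizes $d_0$ subject to the inequality versions of the recursion and the normalization, and its value upper bounds $R_n^{\mathrm C}(\mathrm{ALG})$ by Lemma~\ref{lm:dp}. Next we replace every $p(k)$ by $\tilde p=1-e^{-b\beta}$. Lemma~\ref{lm:uniformrelax} and Claim~\ref{claim:dmonotone} still apply pointwise in $r$, provided $qC(k)\ge C(k+1)$. We expect this condition to follow from $t_{k+1}\le q t_k+O(1)$ together with the nested structure of the quantiles, and we would check it directly from Lemma~\ref{lm:hirecost}.

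For the dual, we reuse the multipliers $\alpha_k^*$ of Lemma~\ref{lm:uniformdualsol} for the $d_k$-constraints, since those constraints coincide with \ref{form:DLPunif}. We add a multiplier $\zeta$ for the normalization. Dual feasibility then reduces to a single pointwise condition in $v\in[0,1]$:
\[
\zeta\, P_n(1-v)\;\ge\; \Phi_n(v),
\]
where $\Phi_n(v)$ is the coefficient of $r(v)$ in $\alpha_0^*(\text{first term of \eqref{eq:dp1}}+C(0))+\tilde p\sum_{k}\alpha_k^*C(k)+\alpha_j^*C(j)$. Each term of $\Phi_n$ is a sum over $l$ of $t_l\min\{q_l-q_{l+1},q_l-v\}/q_k$, so $\Phi_n$ is piecewise linear in $v$, with breakpoints at the $q_l=\beta q^{-l}$. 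Weak duality, argued as in Lemma~\ref{lm:lbweakdual}, gives $R_n^{\mathrm C}(\mathrm{ALG})\le \sup_{v}\Phi_n(v)/P_n(1-v)$.

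The main step is evaluating this supremum as $n\to\infty$. We rescale $v=x/n$, or equivalently work at the scale $v\approx \beta q^{-l}$. There we use $P_n(1-v)\approx (1-(1-v)^n)/v$ and replace $t_l$ by $aq^l$, dropping the $O(l)$ and floor corrections, which contribute $o(1)$ relative to $t_l$. We also replace $\alpha_k^*$ by its limit $\theta$, since $\lambda^{k-j-1}$ matters only for the few largest $k$, whose contribution vanishes. After these steps $\Phi_n(v)v$ becomes, in a log-periodic variable $s=\log_q(\beta/v)\bmod 1$, a geometric series in $q$ times a function of $s$. The ratio then becomes a function $g_{a,\beta,q}(s,x)$ involving $1-e^{-x}$. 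We expect the hardest part to be justifying uniformly that all neglected terms vanish, especially near $v$ of order $1/n$ where the truncation at $j$ interacts with $P_n$, and near $v$ of order $1$ where state $0$ dominates.

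Finally, we maximize $g$ over $(s,x)$ in closed form or numerically. Since $g$ is convex on each piece, its maximum is attained at breakpoints or endpoints. We then minimize over $(a,\beta,q)$ subject to $a/q\ge 2$ and the constraint on $\beta$, and we expect the optimum to give the value $4.179$.
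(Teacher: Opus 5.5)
Up to weak duality your plan matches the paper's: the LP in $(d_k)_k$ and $r$, the relaxation $p(k)\to\tilde p$, the multipliers from the uniform case, and the reduction to $\sup_v \Phi_n(v)/P_n(1-v)$. The gap is in the asymptotic evaluation of that supremum. Both of your simplifications are valid only as $v\to 0$, and they fail when $v=\Theta(1)$, i.e.\ for bounded $l$. You flag this regime yourself, but the neglected terms do not vanish there, and it is exactly where the paper's binding constraints come from.

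First, $P_n(1-v)=(1-v)(1-(1-v)^n)/v$, so dropping the factor $1-v$ underestimates the ratio by $1/(1-v)$, roughly a factor $20$ near $v=\beta\approx 0.95$. The paper handles this regime separately. It shows that $M_l(v)v/(1-v)$ is strictly concave with maximum $\phi_l=B_l(1-\sqrt{1-A_l/B_l})^2$ (Claim~\ref{claim:maxsmalll}), that $\phi_l$ is nonincreasing in $l$ (Claim~\ref{claim:abratiobound}), and that $\phi_0\le 1$ precisely when $\beta\le\beta^*$ (Lemma~\ref{lm:betastar}). This is a real restriction that your approach never sees. The factor $a\tilde p/(\beta E)$ is decreasing in $\beta$, so optimizing only the small-$v$ profile favours larger $\beta$, whereas $\phi_0\to q$ as $\beta\to 1$. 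The paper's choice $\beta=0.954$ sits just below $\beta^*\approx 0.955$.

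Second, the corrections $\rho_l=l+2-a/q$ are positive for $l>a/q-2$ (all $l\ge 1$ for the chosen parameters), so discarding them goes in the wrong direction. For bounded $l$ they are a fixed fraction of $aq^l$, not $o(1)$. Their aggregate effect is $\varepsilon_l(v)=\frac1a R_l(v)/M_l(v)$ (Lemma~\ref{lm:rhovsm}), which reaches $\varepsilon^*=0.171$ at $l=6$. In the final bound $\frac{a\tilde p}{\beta E}\max\{\varphi_0^*,1+\varepsilon^*\}=3.568\cdot 1.171$, it is this term, not $\varphi_0^*=1.165$ from the $v\sim 1/n$ regime, that yields $4.179$.

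Your last step has a further error: the pieces are not convex with maxima at breakpoints; they have interior maxima. $M_l(v)v/(1-v)$ is strictly concave, and $\varphi_0^*$ is attained near $x\approx 1.3$, inside $[0,\beta a/q]$, where the endpoint values are only about $0.98$ and $1.01$. Checking breakpoints and endpoints would therefore again undercount.

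Replacing $\alpha_k^*$ by $\theta$ is fine, but not because the $\lambda$-correction vanishes; for $v$ of order $1/n$ it changes $\Phi_n(v)$ by a constant factor. It is fine because $\alpha_k^*\le\theta$, $\alpha_0^*\le\theta q\tilde p/(q-1)$, and $\Phi_n$ is monotone in the multipliers. Equivalently, the simplified multipliers of Lemma~\ref{lm:generaldualsol} are themselves feasible.

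To close the gap you need the paper's split at $l=j/2$ (Lemma~\ref{lm:maxbound}). For $l\le j/2-1$, keep both $1-v$ and $\rho_l$, and bound the ratio by $\phi_l(1+\varepsilon_l)\le 1+\varepsilon^*$ using $\beta\le\beta^*$. Use the $x=vn$ rescaling only for $l\ge j/2$, where both corrections genuinely vanish.
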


The remainder of this section is dedicated to the proof of Theorem~\ref{thm:general}, which demonstrates that the LP framework built in previous sections can be naturally extended to general distributions. In what follows, for any CDF $F \in \mathcal{F}$, we assume the change of variable $F^{-1}(u) = \int_0^u r(v) \, \mathrm{d}v$. In addition, we would occasionally abuse notation and treat $\beta / q^{j+1}$ as $0$ when necessary.

The proof consists of three parts. In Section~\ref{subsec:general1}, we first formulate an infinite-dimensional LP based on System~\eqref{eq:dp1}-\eqref{eq:dp4}, which encodes the CDF $F$ as a variable so that we get the worst case over all $F \in \mathcal{F}$, and show that the LP gives an upper bound on the competitive ratio on instances of size $n$. We then present the corresponding dual, prove weak duality of the primal-dual pair which ensures the right upper bounding direction. In Section~\ref{subsec:general2}, we present a dual feasible solution and use it to establish that the dual objective is equivalent to the supremum of a piecewise function. Section~\ref{subsec:general3} focuses on studying this function and upper bounding its supremum. At the end of this section, we prove Theorem~\ref{thm:general}. Proofs of claims in this section are deferred to Appendix~\ref{app:general}. 

\subsection{Upper Bound on Competitive Ratio via LP} \label{subsec:general1}
Similar as in Section~\ref{section:unif}, we formulate an LP for which the solution to System \eqref{eq:dp1}-\eqref{eq:dp4} for $\text{ALG}$ is a feasible solution. We begin by upper bounding the cost $C(k)$ for $\text{ALG}$. For notational simplicity, we first define a set of functions that will appear in the bound for $C(k)$ throughout this section. For $l = 0, \ldots, j$, let $\rho_l = l + 2 - {a}/{q}$, and define the function $\psi_l: [0, 1] \rightarrow \mathbb{R}$ and $\psi_{\text{fail}}(v)$ as follow
\begin{align*}
\psi_l(v) &= 
\begin{cases}
\min \left\{\frac{\beta (q-1)}{q^{l+1}}, \frac{\beta}{q^l} - v \right\}, & l = 0, \ldots, j-1, \\
\frac{\beta}{q^j} - v, & l = j.
\end{cases}\qquad \text{and}\qquad  \psi_{\text{fail}}(v) = \min\{1 - \beta, 1 - v\}.
\end{align*}
\begin{lemma} \label{lm:generalkcost}
For any CDF $F \in \mathcal{F}$, there exists $r(v) \geq 0$  for $v \in [0, 1]$ such that a contract made by $\text{ALG}$ in state $k$ has an expected cost at most $\tilde{C}(k)$, where for $k = 0, \ldots, j$, $\tilde{C}(k) = {q^k}/{\beta} \cdot \left( \sum_{l=k}^{j} \left(aq^l + \rho_l \right) \int_0^{\beta / q^l} r(v) \cdot \psi_l(v) \, \mathrm{d}v \right)$.
\end{lemma}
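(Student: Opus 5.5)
The plan is to start from Lemma~\ref{lm:hirecost} and Claim~\ref{claim:kexpcost}, substitute the specific inputs of Section~\ref{subsec:unif1}, and bound each factor termwise. The function $r$ is the one given by the change of variable $F^{-1}(u)=\int_0^u r(v)\,\mathrm{d}v$. By the standing assumption following Proposition~\ref{prop:assumption}, $F^{-1}$ is strictly increasing and continuously differentiable, so $r=(F^{-1})'\ge 0$ exists on $[0,1]$. If $F^{-1}(0)>0$, I would absorb the atom at $0$ into $r$ as a point mass. This changes nothing in the argument below, because all the weights $\psi_l$ are nonnegative and every term is linear in $r$.

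With $q_l=\beta/q^l$ for $l=0,\ldots,j$ and $q_{j+1}=0$, Lemma~\ref{lm:hirecost} gives
\[
C(k)=\frac{q^k}{\beta}\sum_{l=k}^{j} t_l\int_0^{\beta/q^l} r(v)\min\{q_l-q_{l+1},\,q_l-v\}\,\mathrm{d}v .
\]
For $l\le j-1$ we have $q_l-q_{l+1}=\beta(q-1)/q^{l+1}$, so the minimum is exactly $\psi_l(v)$. For $l=j$ we have $q_j-q_{j+1}=q_j$, and since $v\ge 0$ the minimum is $q_j-v=\psi_j(v)$. The integrals therefore match the claimed expression exactly, and they are nonnegative because $r\ge0$ and $\psi_l\ge0$ on $[0,\beta/q^l]$.

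It remains to bound the contract lengths. By construction $t_l=\lfloor aq^l+l+2-a/q\rfloor\le aq^l+\rho_l$. Since each integral is nonnegative, replacing $t_l$ by $aq^l+\rho_l$ can only increase the sum, which yields $C(k)\le\tilde C(k)$ for every $k=0,\ldots,j$.

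One possible subtlety is that the actual algorithm contracts for $\min\{t_l,n-i+1\}$ steps rather than $t_l$. This only lowers the cost, and the analysis already works with $\text{ALG}'$ and $t_l$ as in Lemma~\ref{lm:hirecost}, so it is harmless. The main care point is the boundary case $l=j$, where $q_{j+1}=0$ must be handled correctly. The case $k=0$ is also covered by the same formula, because the requirement $a/q\ge2$ only matters for later simplifications. No genuinely hard step is expected; the argument is a direct substitution followed by the floor bound.
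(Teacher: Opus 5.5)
Your proof is correct and follows the paper's argument. Like the paper, you substitute $q_l=\beta/q^l$, $q_{j+1}=0$ and $t_l$ into Lemma~\ref{lm:hirecost}, identify the minimum with $\psi_l$ (with the $l=j$ case handled through $q_{j+1}=0$), and remove the floor in $t_l$ using the nonnegativity of each integral.

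Your point-mass caveat and your remark on truncated contract lengths are harmless refinements; they do not change the argument.
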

\begin{proof}
Substituting the inputs of $\text{ALG}$ into Lemma \ref{lm:hirecost} and removing the floor gives the result. Note that for $l = j$, $ \min \left\{{\beta}/{q^j} - 0, {\beta}/{q^j} - v \right\} = {\beta}/{q^j} - v$, which corresponds to $\psi_j(v)$. 
\end{proof}
Since we use the same inputs as in Section~\ref{subsec:unif1}, the probabilities $p(k)$ remain the same. Note that the first term in \eqref{eq:dp1}, which corresponds to the expected cost incurred when failing to contract and remaining in state $0$, simplifies to ${1}/{\beta} \cdot \int_0^{1} r(v) \cdot \psi_{\text{fail}}(v) \, \mathrm{d}v.$ We now transform \eqref{eq:dp1}-\eqref{eq:dp4} into LP constraints by treating $\{d_k\}_{k=0}^{j+1}$ as variables and replacing $C(k)$ with the upper bound $\tilde{C}(k)$ in Lemma \ref{lm:generalkcost}. Similar to Section~\ref{subsec:lb2}, we impose the constraint $\OPT_n(F) = 1$ by using Lemma~\ref{lm:opt} and treat $r$ as a variable, leading to the infinite-dimensional LP \ref{form:LPUB}$\left(\bm{p}\right)$:
\begin{alignat}{2}
& \sup_{\substack{\bm{d} \in \mathbb{R}_+^{j+2} \\ r: [0, 1] \rightarrow \mathbb{R}_+}} \quad d_0 \tag*{$(LP)_{\text{UB}}$}\label{form:LPUB} \\
\text{s.t.} \quad & d_0 \leq \frac{1}{\beta}\int_0^{1} r(v) \cdot \psi_{\text{fail}}(v) \, \mathrm{d}v + \tilde{C}(0) + \sum_{l=1}^j \frac{q-1}{q^l}d_l + \frac{1}{q^j} d_{j+1}, \label{constraint:alg2P1} \\ 
& d_k \leq (1 - p_k) \cdot d_{k-1} + p_k \cdot \left( \tilde{C}(k) + \sum_{l=1}^{j-k} \frac{q-1}{q^l}d_{k+l} + \frac{1}{q^{j-k}} d_{j+1} \right), k = 1, \ldots, j-1, \label{constraint:alg2P2}\\
& d_j \leq (1 - p_j) \cdot d_{j-1} + p_j \cdot \left( \tilde{C}(j) + d_{j+1} \right), \label{constraint:alg2P3}\\
& d_{j+1} \leq 0, \label{constraint:alg2P4}\\
& \int_0^1 r(v) \cdot \sum_{i=1}^n (1-v)^i \, \mathrm{d}v = 1. \label{constraint:alg2P5}
\end{alignat}
By construction, the solution to \eqref{eq:dp1}-\eqref{eq:dp4} with $F$ such that $\OPT_n(F) =\int_0^1 r(v) \cdot \sum_{i=1}^n (1-v)^i \, \mathrm{d}v = 1$ (see Lemma~\ref{lm:opt}) will be a feasible solution to \ref{form:LPUB} with the appropriate input probabilities. This leads to the following bound on the worst-case competitive ratio $R_n^{\text{C}}(\text{ALG}) = \sup_{F \in \mathcal{F}} R_n^{\text{C}}(\text{ALG},F)$. 
\begin{proposition} \label{prop:generallp}
Let $\tilde{p} = 1 - e^{-{a\beta(q-1)}/{q^2}}$ and $\bm{\tilde{p}} = (\tilde{p}, \ldots, \tilde{p}) \in \mathbb{R}^j$. Let $\Tilde{d}_0^*$ be the optimal value of \ref{form:LPUB}$\left(\bm{\tilde{p}}\right)$. Then for $n\geq 1$, the competitive ratio of $\text{ALG}$ satisfies ${R_n^{\text{C}}(\text{ALG}) \leq \Tilde{d}_0^*}$.
\end{proposition}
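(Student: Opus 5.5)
The plan is to show that for every $F \in \mathcal{F}$ (strictly increasing and continuously differentiable, by Proposition~\ref{prop:assumption}) we have $R_n^{\text{C}}(\text{ALG},F) \leq \tilde{d}_0^*$, and then take the supremum over $F$. Since scaling all values leaves the ratio unchanged, we may normalize $F$ so that $\OPT_n(F) = 1$. Write $F^{-1}(u) = \int_0^u r(v)\,\mathrm{d}v$ with $r \geq 0$. Lemma~\ref{lm:opt} then says exactly that $r$ satisfies Constraint~\eqref{constraint:alg2P5}. It therefore suffices to show $\ALG_n(F) \leq \tilde{d}_0^*$ for such normalized $F$.

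\textbf{Step 1: the true recursion is feasible for the LP with true probabilities.} Let $\{d_k\}_{k=0}^{j+1}$ solve System~\eqref{eq:dp1}-\eqref{eq:dp4} with $p(k) = 1-(1-\beta/q^k)^{\lceil bq^k\rceil}$. By Lemma~\ref{lm:dp}, $\ALG_n(F) \leq d_0$. We replace each $C(k)$ by the bound $\tilde{C}(k)$ from Lemma~\ref{lm:generalkcost}. The $d$-coefficients and the $\tilde{C}(k)$ terms enter the right-hand sides with nonnegative weights, and the failure term in \eqref{eq:dp1} equals $\frac{1}{\beta}\int_0^1 r\,\psi_{\text{fail}}$ because $p(0)=q_0=\beta$. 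Hence $(\{d_k\}, r)$ is feasible for \ref{form:LPUB}$(\bm{p})$ with $\bm{p} = (p(1),\ldots,p(j))$, and so $d_0 \leq$ the optimal value of \ref{form:LPUB}$(\bm{p})$.

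\textbf{Step 2: replacing $\bm p$ by the uniform lower bound $\bm{\tilde p}$.} We use $(1-\beta/q^k)^{bq^k} \leq e^{-b\beta}$ and $b = a(q-1)/q^2$, which give $\tilde{p} \leq p(k)$ for all $k$. We then mimic Lemma~\ref{lm:uniformrelax}. Fix $r$ and let $d^*$ be optimal for \ref{form:LPUB}$(\bm{p})$ with this $r$. This is the finite LP \ref{form:LPunif}$(\bm p, \bm C)$ with $C_k = \tilde C(k)$ and an extra constant added to the first constraint. It only remains to check the hypothesis of Claim~\ref{claim:dmonotone}, namely $q\,\tilde{C}(k) \geq \tilde{C}(k+1)$. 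From Lemma~\ref{lm:generalkcost},
\[
q\tilde C(k) - \tilde C(k+1) = \frac{q^{k+1}}{\beta}\left(aq^k+\rho_k\right)\int_0^{\beta/q^k} r\,\psi_k \geq 0,
\]
since $\psi_k \geq 0$ on $[0,\beta/q^k]$. Therefore $d^*_{k-1} \geq d^*_k$. Rewriting \eqref{constraint:alg2P2} at tightness as $d^*_{k-1}-d^*_k = p_k(d^*_{k-1}-S_k)$ gives $d^*_{k-1} \geq S_k$. Consequently, each right-hand side changes by $(p(k)-\tilde p)(d^*_{k-1}-S_k) \geq 0$ when $p(k)$ is lowered to $\tilde p$, and the same holds for \eqref{constraint:alg2P3}. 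So $(d^*, r)$ is feasible for \ref{form:LPUB}$(\bm{\tilde p})$, giving $\ALG_n(F) \leq d_0 \leq d_0^* \leq \tilde d_0^*$.

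\textbf{Main obstacle.} The delicate point is Step 2. The monotonicity argument needs an \emph{optimal} (tight) $d^*$ for each fixed $r$, whereas the LP is infinite-dimensional in $r$. I avoid this by fixing $r$ and working with the resulting finite LP in $d$. There, optimality and tightness of all constraints follow as in Section~\ref{section:unif}: tightening the triangular system gives the unique solution, and the feasible region is bounded because the $d$'s are dominated by that solution. An alternative is to apply the tightness argument directly to the true recursion solution $\{d_k\}$, which is tight by definition. This requires first proving $d_{k-1} \geq d_k$ for that solution by the same induction as in Claim~\ref{claim:dmonotone}. I would try this direct route first, since it sidesteps LP optimality altogether.
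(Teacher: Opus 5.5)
Your proposal is correct and follows the paper's proof. You first show that the recursion solution is feasible for \ref{form:LPUB}$(\bm p)$ after replacing $C(k)$ by $\tilde C(k)$, then apply the monotonicity argument of Claim~\ref{claim:dmonotone} and Lemma~\ref{lm:uniformrelax} using $\tilde p\le p(k)$ and $q\tilde C(k)\ge \tilde C(k+1)$; fixing $r$, or your direct route through the tight recursion solution, makes rigorous the paper's brief appeal to optimality in the infinite-dimensional LP.

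Two small touch-ups are needed. First, nonnegativity of $q\tilde C(k)-\tilde C(k+1)$ also requires $aq^k+\rho_k\ge a(1-1/q)+2>0$, since $\rho_0<0$ when $a/q>2$. Second, the tightened system is not triangular; its unique solvability, and the domination of every feasible $d$ by the tight solution, instead follow because each row of the coefficient matrix on $(d_0,\dots,d_j)$ sums to strictly less than $1$.
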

\begin{proof}
For any $F \in \mathcal{F}$, since $F^{-1}$ is nonnegative and strictly increasing, $r(v)$ in $F^{-1}(u) = \int_0^u r(v) \, \mathrm{d}v$ is positive for all $v \in [0, 1]$. Now suppose $F$ satisfies $\OPT_n(F) = 1$, by Lemma \ref{lm:opt}, its corresponding $r$ satisfies Constraint~\eqref{constraint:alg2P5}. Replacing all $C(k)$ in \eqref{eq:dp1}-\eqref{eq:dp4} with the upper bound $\tilde{C}(k)$ turns \eqref{eq:dp1}-\eqref{eq:dp4} into Constraint~\eqref{constraint:alg2P1}-\eqref{constraint:alg2P4}. Therefore, given $F$ and its corresponding $r$, the solution $\{d_k\}_{k=0}^{j+1}$ to System~\eqref{eq:dp1}-\eqref{eq:dp4} is feasible to \ref{form:LPUB}$\left(\bm{p}\right)$, with $\bm{p} = (p(1), \ldots, p(j))$ where $p(k) = 1 - \left( 1 - {\beta}/{q^k} \right)^{\lceil b \cdot q^{k} \rceil}$ and $p(0) = \beta$ same as in Section~\ref{subsec:unif1}. By Lemma \ref{lm:dp} and the optimality, the optimal solution to \ref{form:LPUB}$\left(\bm{p}\right)$ is an upper bound on $\ALG_n(F)$, for any $F$ satisfying $\OPT_n(F) = 1$. 

Now, note that $\tilde{p} \leq 1-\left( 1 - {\beta}/{q^k} \right)^{b \cdot q^k} \leq p(k)$, and clearly $q \cdot \tilde{C}(k) \geq \tilde{C}(k+1)$, therefore the proofs of Claim \ref{claim:dmonotone} and Lemma \ref{lm:uniformrelax} apply here, which justifies that the optimal solution to \ref{form:LPUB}$\left(\bm{\tilde{p}}\right)$ upper bounds the optimal solution to \ref{form:LPUB}$\left(\bm{p}\right)$. Since scaling does not impact the ratio, we can normalize any $F \in \mathcal{F}$ such that $\OPT_n(F) = 1$. Together with the objective being the supremum over all distributions, it follows that $R_n^{\text{C}}(\text{ALG}) = \sup_F {\ALG_n(F)}/{\OPT_n(F)} \leq \Tilde{d}_0^*$. 
\end{proof}
Following Proposition \ref{prop:generallp}, we will rely on weak duality to upper bound $R_n^{\text{C}}(\text{ALG},F)$. We formally present the dual of \ref{form:LPUB}$\left(\bm{\tilde{p}}\right)$ in \ref{form:DLPUB}, in which the variable $\alpha_k$ corresponds to Constraint~\eqref{constraint:alg2P1}-\eqref{constraint:alg2P4}, and $\zeta$ corresponds to Constraint~\eqref{constraint:alg2P5}.
\begin{alignat}{3}
& \inf_{\bm{\alpha} \in \mathbb{R}_+^{j+2}} \quad && \zeta \tag*{$(DLP)_{\text{UB}}$}\label{form:DLPUB} \\
& \text{s.t.} \quad && \alpha_0 \geq (1-\tilde{p}) \cdot \alpha_1 + 1, \label{constraint:alg2D1} \\ 
& && \alpha_k \geq (1-\tilde{p}) \cdot \alpha_{k+1} + \frac{q-1}{q^k} \alpha_0 + \tilde{p} \cdot \sum_{l=1}^{k-1} \frac{q-1}{q^{k-l}} \alpha_l, \quad && k = 1, \ldots, j-1, \label{constraint:alg2D2}\\
& && \alpha_j \geq \frac{q-1}{q^j} \alpha_0 + \tilde{p} \cdot \sum_{l=1}^{j-1} \frac{q-1}{q^{j-l}} \alpha_l, \label{constraint:alg2D3}\\
& && \alpha_{j+1} \geq \frac{1}{q^j} \alpha_0 + \tilde{p} \cdot \sum_{l=1}^{j} \frac{1}{q^{j-l}} \alpha_l, \label{constraint:alg2D4} \\
& && \zeta \sum_{i=1}^n (1-v)^i \geq N(\bm{\alpha},v), && v \in [0, 1]. \label{constraint:alg2D5} 
\end{alignat}

where $N(\bm{\alpha},v): \mathbb{R}_+^{j+1}\times [0, 1] \rightarrow \mathbb{R}$ equals
\begin{align*} 
\frac{\alpha_0}{\beta} \left( \psi_{\text{fail}}(v) + \sum_{l=0}^j (aq^l + \rho_l)\psi_{l}(v) \cdot \mathbbm{1}_{[0, \beta / q^l]}(v)\right) + \frac{\tilde{p}}{\beta} \sum_{k=1}^{j} \alpha_k q^k \left( \sum_{l=k}^j (aq^l + \rho_l)\psi_{l}(v) \cdot \mathbbm{1}_{[0, \beta / q^l]}(v)\right).
\end{align*}
{  The following result shows that \ref{form:DLPUB} is a weak dual to~\ref{form:LPUB}$\left(\bm{p}\right)$. The proof is in Appendix~\ref{app:general}.}
\begin{lemma} \label{lm:weakdual}
Formulation \ref{form:DLPUB} is a weak dual to \ref{form:LPUB}$\left(\bm{\tilde{p}}\right)$. Namely, for any feasible solution $\zeta$ and $d_0$ to \ref{form:DLPUB} and \ref{form:LPUB}$\left(\bm{\tilde{p}}\right)$ respectively, $\zeta \geq d_0$. 
\end{lemma}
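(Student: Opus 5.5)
The plan is the standard Lagrangian weak-duality argument. Fix a primal feasible solution $(\bm{d}, r)$ of \ref{form:LPUB}$(\bm{\tilde{p}})$ with $\bm{d}\ge 0$ and $r\ge 0$. Fix a dual feasible $(\bm{\alpha},\zeta)$ of \ref{form:DLPUB} with $\bm{\alpha}\ge 0$. Multiply primal Constraint~\eqref{constraint:alg2P1} by $\alpha_0$, Constraint~\eqref{constraint:alg2P2} for index $k$ by $\alpha_k$, Constraint~\eqref{constraint:alg2P3} by $\alpha_j$, and Constraint~\eqref{constraint:alg2P4} by $\alpha_{j+1}$. Since every multiplier is nonnegative, the inequalities are preserved. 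Summing gives
\begin{align*}
\sum_{k=0}^{j+1}\alpha_k d_k \le \;& \frac{\alpha_0}{\beta}\int_0^1 r\,\psi_{\text{fail}}\,\mathrm{d}v + \alpha_0\tilde{C}(0)+\tilde{p}\sum_{k=1}^{j}\alpha_k\tilde{C}(k) \\
& + (1-\tilde{p})\sum_{k=1}^{j}\alpha_k d_{k-1} + \text{(linear terms in } d_1,\dots,d_{j+1}\text{)}.
\end{align*}
The coefficient of $\tilde{C}(j)$ is $\tilde{p}\alpha_j$, since constraint \eqref{constraint:alg2P3} carries $p_j=\tilde{p}$; this matches the $k=j$ term in $N$.

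The second step is to regroup by variable. I would collect, for each $d_m$, its total coefficient on the right-hand side:
\begin{itemize}
\item $d_0$ receives $(1-\tilde{p})\alpha_1$;
\item $d_m$ for $1\le m\le j$ receives $(1-\tilde{p})\alpha_{m+1}$ (absent when $m=j$), plus $\frac{q-1}{q^m}\alpha_0$, plus $\tilde{p}\sum_{l=1}^{m-1}\frac{q-1}{q^{m-l}}\alpha_l$;
\item $d_{j+1}$ receives the right-hand side of \eqref{constraint:alg2D4}.
\end{itemize}
These are exactly the right-hand sides of the dual constraints \eqref{constraint:alg2D1}–\eqref{constraint:alg2D4}, minus the constant $1$ in \eqref{constraint:alg2D1}. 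Moving everything to one side, each $d_m\ge 0$ multiplies a nonnegative slack (dual LHS minus dual RHS). We therefore obtain $d_0 \le \alpha_0\,(\text{fail}+\tilde{C}(0)\text{ terms})+\tilde{p}\sum_k\alpha_k\tilde{C}(k)$.

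The third step handles the $r$-terms. Substitute the expression for $\tilde{C}(k)$ from Lemma~\ref{lm:generalkcost} and interchange the finite sums with the integral. Each $\int_0^{\beta/q^l}$ is written as $\int_0^1$ against $\mathbbm{1}_{[0,\beta/q^l]}$, and $q^k/\beta\cdot \tilde{p}\alpha_k$ matches the coefficient in $N$. The whole right-hand side then becomes $\int_0^1 r(v)\,N(\bm{\alpha},v)\,\mathrm{d}v$. This requires checking that the $k=0$ piece has factor $q^0/\beta$, consistent with the $\alpha_0/\beta$ term in $N$.

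The final step uses $r\ge 0$ together with Constraint~\eqref{constraint:alg2D5} to get $\int_0^1 rN\,\mathrm{d}v \le \zeta\int_0^1 r\sum_{i=1}^n(1-v)^i\,\mathrm{d}v$. This equals $\zeta$ by \eqref{constraint:alg2P5}, so $d_0\le\zeta$. The main obstacle is purely bookkeeping in the second step: the index shifts in $\sum_{l=1}^{j-k}\frac{q-1}{q^l}d_{k+l}$ must be aligned with the dual sums $\sum_{l=1}^{m-1}\frac{q-1}{q^{m-l}}\alpha_l$, and the boundary cases $k=j$ and $d_{j+1}$ must be treated correctly. Integrability is not an issue, since $r$ is integrable by \eqref{constraint:alg2P5} on $[0,1)$ and all other weights are bounded.
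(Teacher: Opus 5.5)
Your proposal is correct and is the same Lagrangian weak-duality argument as the paper's: the paper first integrates \eqref{constraint:alg2D5} against $r$ and uses \eqref{constraint:alg2P5} to get $\zeta \geq \int_0^1 r(v)\,N(\bm{\alpha},v)\,\mathrm{d}v$. It then substitutes \eqref{constraint:alg2P1}--\eqref{constraint:alg2P3} and regroups by $d_k$ using \eqref{constraint:alg2D1}--\eqref{constraint:alg2D4}, $d_k\geq 0$ and $d_{j+1}\leq 0$, whereas you perform these same steps in the reverse order. Your closing integrability remark is slightly off, but it is moot: \eqref{constraint:alg2P5} only controls $r$ against the weight $\sum_{i=1}^n(1-v)^i$, which vanishes at $v=1$, so boundedness of the other weights is not what makes the integrals finite; instead, $N(\bm{\alpha},\cdot)\geq 0$ together with \eqref{constraint:alg2D5} gives $0\leq rN\leq \zeta\, r\sum_{i=1}^n(1-v)^i$, which is integrable.
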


In the next subsection, we present a dual feasible solution, and show that the proposed solution leads to a convex problem that upper bounds $R_n^{\text{C}}(ALG)$. We later optimize this convex problem.

\subsection{A Dual Feasible Solution} \label{subsec:general2}

Note that Constraint~\eqref{constraint:alg2D1}-\eqref{constraint:alg2D4} admit the same structure as in \ref{form:DLPunif}, so the optimal solution in Lemma \ref{lm:uniformdualsol} applies to \ref{form:DLPUB}. To keep subsequent calculations cleaner, we build on Lemma \ref{lm:uniformdualsol} to give a simpler feasible solution in the next result. We point out that the solution proposed in this section is a relaxation of the optimal solution in Lemma \ref{lm:uniformdualsol} by omitting the correction term $\lambda$, which later simplifies $N(\bm{\alpha}, v)$ to a cleaner expression. 
\begin{lemma} \label{lm:generaldualsol}
Let $E = (2q-1)\tilde{p}-(q-1)$ and $\theta = {(q-1)}/{E}$. Then $\alpha_0 = \theta \cdot {q\tilde{p}}/{(q-1)}$, $\alpha_k = \theta$ for $k = 1, \ldots, j$ satisfy Constraint~\eqref{constraint:alg2D1}-\eqref{constraint:alg2D4}. 
\end{lemma}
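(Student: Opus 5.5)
The plan is to verify Constraints~\eqref{constraint:alg2D1}--\eqref{constraint:alg2D4} by direct substitution, using the closed forms $\alpha_0 = q\tilde{p}/E$ and $\alpha_k = \theta = (q-1)/E$. The only non-algebraic ingredient is the sign of $E$.

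\emph{Sign of $E$ (the main obstacle).} Nonnegativity of $\alpha$, and the direction of the inequalities after clearing denominators, both require $E>0$, i.e.\ $\tilde{p} > (q-1)/(2q-1)$. With $\tilde{p} = 1-e^{-a\beta(q-1)/q^2}$ this is equivalent to
\[
\frac{a\beta(q-1)}{q^2} > \ln\!\left(\frac{2q-1}{q}\right).
\]
This should be exactly what the lower bound on $\beta$ in Section~\ref{subsec:unif1} encodes. As printed, that bound has $\ln(q/(2q-1))$, which is negative for $q>1$ and so would not by itself force $E>0$. I would therefore read the parameter condition as guaranteeing $E>0$, presumably with the logarithm's argument inverted, and flag this. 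In the proof of Theorem~\ref{thm:general}, $E>0$ can in any case be checked directly at the chosen numerical parameters. Given $E>0$, we have $\theta>0$ and $\alpha_0>0$.

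\emph{Constraint~\eqref{constraint:alg2D1}.} This one should be tight. Multiplying through by $E$, it reads $q\tilde{p} \ge (1-\tilde{p})(q-1) + E$. The right-hand side expands to $(q-1) - \tilde{p}(q-1) + (2q-1)\tilde{p} - (q-1) = q\tilde{p}$, so equality holds.

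\emph{Constraint~\eqref{constraint:alg2D2}, for $1\le k\le j-1$.} This should also be tight. I would evaluate the three terms on the right-hand side separately:
\begin{itemize}
\item the first term is $(1-\tilde{p})\theta$;
\item the second term is $\frac{q-1}{q^k}\cdot\frac{q\tilde{p}}{E} = \tilde{p}\,\theta\, q^{1-k}$;
\item the third term uses the geometric sum $\sum_{l=1}^{k-1} q^{-(k-l)} = (1-q^{1-k})/(q-1)$, giving $\tilde{p}\,\theta\,(1-q^{1-k})$.
\end{itemize}
The $q^{1-k}$ terms cancel, leaving $(1-\tilde{p})\theta + \tilde{p}\theta = \theta = \alpha_k$. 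The edge case $k=1$, where the sum is empty, is consistent with this formula.

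\emph{Constraint~\eqref{constraint:alg2D3}.} The same computation applies with the $(1-\tilde{p})\alpha_{j+1}$ term absent. The right-hand side is therefore $\tilde{p}\theta \le \theta = \alpha_j$, since $\tilde{p}\le 1$.

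\emph{Constraint~\eqref{constraint:alg2D4}.} The variable $\alpha_{j+1}$ appears in neither $N(\bm{\alpha},v)$ nor the objective. I would simply define it equal to the right-hand side, $\alpha_{j+1} = \theta(\tilde{p}\,q^{1-j}/(q-1)) + \tilde{p}\theta\,(1-q^{-j})\,q/(q-1)$ or similar, which is nonnegative. This completes feasibility.
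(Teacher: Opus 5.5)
Your verification is correct and follows the same route as the paper: direct substitution into Constraints~\eqref{constraint:alg2D1}--\eqref{constraint:alg2D4}, with \eqref{constraint:alg2D1}--\eqref{constraint:alg2D2} tight and \eqref{constraint:alg2D3} slack because $\tilde{p}<1$. Your explicit geometric-sum check of \eqref{constraint:alg2D2} is slightly more complete than the paper's appeal to the recurrence~\eqref{eq:dualrecursion}, which also needs tightness at $k=1$. Your choice of $\alpha_{j+1}$ simplifies to $\alpha_0$. You are right that the condition needed is $E>0$; the proof of Lemma~\ref{lm:uniformdualsol} states it as $\beta > -\frac{q^2}{a(q-1)}\ln\left(\frac{q}{2q-1}\right)$, which is exactly $E>0$.
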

\begin{proof}
Clearly, $\alpha_1 = \ldots = \alpha_j$ satisfies Equation~\eqref{eq:dualrecursion} for $k = 1, \ldots, j-2$, so $\{\alpha_k\}_{k=0}^{j}$ is feasible to Constraint~\eqref{constraint:alg2D2}. It can be checked by routine calculation that $\alpha_0 = \theta \cdot {q\tilde{p}}/{(q-1)}$ and $\alpha_1 = \theta$ satisfies Constraint~\eqref{constraint:alg2D1}. We now check the feasibility for Constraint~\eqref{constraint:alg2D3}. Since $\alpha_j = \theta$ and $\tilde{p} < 1$, 
\begin{align*}
\text{RHS of \eqref{constraint:alg2D3}} &= \frac{q-1}{q^j} \cdot \theta \cdot \frac{q\tilde{p}}{q-1} + \tilde{p} \cdot \sum_{l=1}^{j-1} \frac{q-1}{q^{j-l}} \theta = \tilde{p}\cdot \frac{1}{q^{j-1}} \cdot \theta + \tilde{p} \left( 1 - \frac{1}{q^{j-1}}\right) \theta = \tilde{p}\theta < \text{LHS of \eqref{constraint:alg2D3}},
\end{align*}
Therefore, $\{\alpha_k\}_{k=0}^{j}$ is feasible to Constraint~\eqref{constraint:alg2D1}-\eqref{constraint:alg2D3}. $\alpha_{j+1}$ can be computed from Constraint~\eqref{constraint:alg2D4} and is not involved in future analysis, therefore we omit its derivation. This concludes the proof. 
\end{proof}
{ Moving forward, we refer to $N(\bm{\alpha}, v)$ simply as $N(v)$ when using $\bm{\alpha} = \{\alpha_k\}_{k=0}^j$ in Lemma~\ref{lm:generaldualsol}. Note that $N(v)$ depends on $a, \beta, q$. Let $W_n(a, \beta, q) = \sup_{v \in [0, 1]} {N(v)}/{\sum_{i=1}^n (1-v)^i}$. Then, a quick verification shows that $(\bm{\alpha}, \zeta=W_n(a,\beta,q))$ is feasible to~\ref{form:DLPUB}. Using the weak duality (Lemma~\ref{lm:weakdual}), we can conclude that $R_n^C(\text{ALG})\leq W_n(a,\beta,q)$. Our task then becomes finding parameters $a,b,q$ such that the RHS of this inequality is as small as possible. The main challenge is that $W_n$ is a complex expression that we need to bound carefully. We will perform an asymptotic analyses with fixed parameters $a,b,q$ and then optimize the resulting expression.}


We start by evaluating $N(v)$ explicitly. To avoid lengthy expressions, we break the hiring duration into the ``main factor" $aq^l$ and the ``adjusting factor" $\rho_l$, and evaluate terms in $N(v)$ involving these two terms separately. The following result gives closed form expressions for $N(v)$ at these two factors respectively, both being a piecewise function over $[0, 1]$, with each piece defined over the interval $\left({\beta}/{q^{l+1}}, {\beta}/{q^l}\right]$. For notational simplicity, we first define some shorthands of recurring expressions. For $l = 0, \ldots, j$, define the following:
\begin{align*}
A_l &= \beta q^l(1+q) - \left( \frac{q}{a}+1\right)\beta + \frac{q}{a}, \quad B_l = q^{2l+1}, \\
\tilde{A}_l &= \beta(q-1)\left( \left( 2-\frac{a}{q} \right)l + \frac{(l-1)l}{2}\right) + \beta q \left(l + 2-\frac{a}{q} \right), \quad \tilde{B}_l = q^{l+1} \left(l + 2-\frac{a}{q} \right). 
\end{align*}
\begin{lemma} \label{lm:Nclosedform}
Define $M_{\text{fail}}: (\beta, 1] \rightarrow \mathbb{R}$ as $M_{\text{fail}}(v) = {q}/{a} \cdot (1-v)$. For $l = 0, \ldots, j$, define $M_{l}: \left( {\beta}/{q^{l+1}}, {\beta}/{q^l}\right] \rightarrow \mathbb{R}$ as $M_l(v) = A_l - B_l \cdot v$, and $R_{l}: \left( {\beta}/{q^{l+1}}, {\beta}/{q^l}\right] \rightarrow \mathbb{R}$ as $R_l(v) = \tilde{A}_l - \tilde{B}_l \cdot v$. Then, $N(v) = {(a\tilde{p})}/{(\beta E)} \cdot \left( M_{\text{fail}}(v) + \sum_{l=0}^j \left( M_l(v) + {1}/{a} \cdot R_l(v)\right)\right)$. 
\end{lemma}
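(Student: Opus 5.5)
The plan is a direct computation. Substitute the dual solution of Lemma~\ref{lm:generaldualsol}, namely $\alpha_0=\theta q\tilde p/(q-1)$ and $\alpha_k=\theta$ for $k=1,\ldots,j$, into $N(\bm\alpha,v)$. Then evaluate the result piece by piece on the intervals $(\beta,1]$ and $(\beta/q^{l+1},\beta/q^l]$ for $l=0,\ldots,j$, using the convention $\beta/q^{j+1}=0$. The sum $\sum_l(\cdot)$ in the statement is read as a piecewise function: $M_l$ and $R_l$ are nonzero only on their own interval, and $M_{\text{fail}}$ only on $(\beta,1]$.

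\textbf{Step 1: collect coefficients.} First I would swap the order of summation in the second part of $N$. The term $(aq^l+\rho_l)\psi_l(v)\mathbbm{1}_{[0,\beta/q^l]}(v)$ then appears with total coefficient
\[
\frac{\alpha_0}{\beta}+\frac{\tilde p\theta}{\beta}\sum_{k=1}^l q^k=\frac{\theta\tilde p}{\beta}\left(\frac{q}{q-1}+\frac{q(q^l-1)}{q-1}\right)=\frac{\tilde p\,q^{l+1}}{\beta E},
\]
using $\theta/(q-1)=1/E$. The fail term has coefficient $\alpha_0/\beta=\tilde p q/(\beta E)$. Factoring out $a\tilde p/(\beta E)$ leaves two pieces. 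The first is $(q/a)\psi_{\text{fail}}(v)$. The second is $\sum_{l}q^{l+1}\bigl(q^l+\rho_l/a\bigr)\psi_l(v)\mathbbm{1}_{[0,\beta/q^l]}(v)$, which splits into a main part and an adjusting part.

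\textbf{Step 2: evaluate on each piece.} On $(\beta,1]$, every indicator vanishes and $\psi_{\text{fail}}=1-v$, so only $M_{\text{fail}}$ remains. Now fix $v\in(\beta/q^{l+1},\beta/q^l]$.
\begin{itemize}
\item The fail term is constant: $(q/a)(1-\beta)$.
\item The indicator is on exactly for $l'\le l$.
\item For $l'<l$ we have $v\le\beta/q^{l'+1}$, so $\psi_{l'}(v)=\beta(q-1)/q^{l'+1}$.
\item For $l'=l$ we have $v>\beta/q^{l+1}$, so $\psi_l(v)=\beta/q^l-v$. For $l=j$ this holds by definition.
\end{itemize}
The main part then becomes
\[
\sum_{l'<l}\beta(q-1)q^{l'}+q^{2l+1}(\beta/q^l-v)=\beta(q^l-1)+\beta q^{l+1}-B_lv.
\]
Adding the fail constant gives exactly $A_l-B_lv$. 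The adjusting part becomes
\[
\frac1a\Bigl[\beta(q-1)\sum_{l'=0}^{l-1}\rho_{l'}+\beta q\rho_l-q^{l+1}\rho_l v\Bigr].
\]
Since $\sum_{l'=0}^{l-1}(l'+2-a/q)=(2-a/q)l+l(l-1)/2$, this equals $\frac1a(\tilde A_l-\tilde B_lv)$.

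\textbf{Main obstacle.} The only delicate point is bookkeeping at the boundaries. I would check three things: that the case split for $\psi_{l'}$ (the minimum attained by its first versus second argument) is correct at the endpoints; that the constant $(q/a)(1-\beta)$ coming from $\psi_{\text{fail}}$ on $[0,\beta]$ is correctly absorbed into $A_l$; and that the last interval $l=j$, where $\psi_j$ has no minimum, still fits the same formula.
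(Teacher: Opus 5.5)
Your proposal is correct and matches the paper's proof: substitute the dual solution of Lemma~\ref{lm:generaldualsol}, swap the order of summation so that each term $(aq^l+\rho_l)\psi_l(v)\mathbbm{1}_{[0,\beta/q^l]}(v)$ carries the coefficient $\tilde p\,q^{l+1}/(\beta E)$, and then evaluate piecewise on $(\beta,1]$ and on $(\beta/q^{l+1},\beta/q^l]$. The differences are only cosmetic. You compute the combined coefficient in one step, whereas the paper splits $\alpha_0/\beta$ as $\theta\tilde p/\beta+\tilde p/(\beta E)$. You also write out the $R_l$ computation explicitly, which the paper omits.
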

The proof follows by simple calculations; we defer it to Appendix~\ref{app:general}. Lemma \ref{lm:Nclosedform} reveals that $N(v)$ takes on a piecewise structure, therefore to characterize $W_n(a, \beta, q)$, it suffices to study the maximum on each segment of the piecewise function $M_{\text{fail}}(v) + \sum_{l=0}^j \left( M_l(v) + {1}/{a} \cdot R_l(v)\right)$ divided by $\sum_{i=1}^n (1-v)^i$. Note that $\sum_{i=1}^n (1-v)^i={(1-v)(1-(1-v)^n)}/{v}$. Now ${M_{\text{fail}}(v)}/{\sum_{i=1}^n (1-v)^i}$ simplifies to ${qv}/{(a(1-(1-v)^n))}$, which is increasing in $v \in (\beta, 1]$, therefore, 
\begin{align*}
\sup_{v \in (\beta, 1]} \frac{M_{\text{fail}}(v)}{\sum_{i=1}^n (1-v)^i} &= \lim_{v \rightarrow 1} \frac{qv}{a(1-(1-v)^n)} = \frac{q}{a},
\end{align*}
which is at most ${1}/{2}$ by our restriction on the choice of $a$ and $q$. It remains to study the maximum of ${(M_l(v) + {1}/{a} \cdot R_l(v))}/{\sum_{i=1}^n (1-v)^i}$ on $v \in \left( {\beta}/{q^{l+1}}, {\beta}/{q^l}\right]$ for each $l$. As a result, define $\Gamma_l: \left( {\beta}/{q^{l+1}}, {\beta}/{q^l}\right] \rightarrow [0, \infty)$ as $\Gamma_l(v) = {\left( M_l(v) + {1}/{a} \cdot R_l(v) \right)v}/{((1-v)(1-(1-v)^n))}$, $W_n(a, \beta, q)$ is now equivalent to the following: 
\begin{align} \label{eq:gammal}
\sup_{l \in \{0, \ldots,j\}} \sup_{v \in (\beta/q^{l+1}, \beta/q^l]} \Gamma_l(v).
\end{align}
Even though the inner supremum of problem \eqref{eq:gammal} is a convex problem, it is not practical to solve via first order conditions because the exact expression of $\Gamma_l(v)$ cannot be simplified further. Another difficulty that precludes simple numerical analyses is that the number of optimization problems in~\eqref{eq:gammal} depends on $n$. Therefore, we perform further approximation and simplifications as $n \rightarrow \infty$.

\subsection{Upper Bounding the Dual Objective} \label{subsec:general3}
The main focus of this section is to obtain an upper bound on \eqref{eq:gammal}. We start by showing that we can impose $\Gamma_0(v) \leq 1$ with a careful choice of the parameter $\beta$. This is key in obtaining Lemma \ref{lm:maxbound} later, which saves us from the trouble of searching for the maximum over all $l$.  
\begin{lemma} \label{lm:betastar}
Let $\beta^* = {a}/{(q(a-1))} \cdot \left(2\sqrt{q} - 1 - {q}/{a}\right)$, then $\beta^*$ satisfies $0 < \beta^* < 1$. For $n \rightarrow \infty$, $\sup_{v \in \left( \beta / q, \beta\right]} \Gamma_0(v) \leq 1$ for all $0 < \beta \leq \beta^*$.
\end{lemma}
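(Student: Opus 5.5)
The plan is to pass to the limit $n\to\infty$ inside $\Gamma_0$, which reduces the claim to a single quadratic inequality in $v$; $\beta^*$ should be exactly the largest $\beta$ for which that quadratic has a nonpositive discriminant.

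\textbf{Step 1: range of $\beta^*$.} This part is elementary. The standing restriction $a/q\ge 2$ together with $q>1$ gives $a>2$, so $a-1>0$ and $q/a\le 1/2$. Since $2\sqrt q-1>1$, the factor $2\sqrt q-1-q/a$ is positive, hence $\beta^*>0$. For the upper bound, $\beta^*<1$ is equivalent to $2\sqrt q-1-q/a< q-q/a$, that is, to $(\sqrt q-1)^2>0$, which holds because $q>1$.

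\textbf{Step 2: the limiting form of $\Gamma_0$.} I would evaluate $M_0$ and $R_0$ from the definitions preceding Lemma~\ref{lm:Nclosedform}. For $M_0$ we have $A_0=\beta q(1-1/a)+q/a=:c(\beta)$ and $B_0=q$, so $M_0(v)=c(\beta)-qv$. For $R_0$ we have $\tilde A_0=\beta(2q-a)$ and $\tilde B_0=2q-a$, so $R_0(v)=(2q-a)(\beta-v)$. On the interval $(\beta/q,\beta]$ we have $\beta-v\ge 0$, and $2q-a\le 0$ by the assumption $a/q\ge2$. Hence $R_0\le 0$ there and can be dropped when upper bounding.

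For the factor involving $n$, note that on $(\beta/q,\beta]$ we have $1-(1-v)^n\ge 1-(1-\beta/q)^n$. This gives
\[
\sup_{v\in(\beta/q,\beta]}\Gamma_0(v)\le \frac{1}{1-(1-\beta/q)^n}\,\sup_{v\in(\beta/q,\beta]} \frac{(c(\beta)-qv)\,v}{1-v}.
\]
The prefactor tends to $1$ as $n\to\infty$. It therefore suffices to show $(c(\beta)-qv)v\le 1-v$ for all $v\in(0,1)$.

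\textbf{Step 3: the quadratic inequality.} The required inequality $(c(\beta)-qv)v\le 1-v$ is equivalent to
\[
q v^2-(c(\beta)+1)v+1\ge 0 .
\]
This holds for every real $v$ as soon as the discriminant is nonpositive, that is, $(c(\beta)+1)^2\le 4q$, i.e.\ $c(\beta)\le 2\sqrt q-1$ (note $c(\beta)+1>0$). Solving for $\beta$:
\[
c(\beta)\le 2\sqrt q-1 \iff \beta\le \frac{2\sqrt q-1-q/a}{q(1-1/a)}=\beta^*.
\]
Since $c(\beta)$ is increasing in $\beta$, the inequality holds for every $0<\beta\le\beta^*$, which proves the lemma.

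\textbf{Main obstacle.} The only delicate point is the order of limit and supremum. It is handled by the uniform lower bound $1-(1-v)^n\ge 1-(1-\beta/q)^n$, which is available because $v$ stays bounded away from $0$ on this piece. A second point to check is the sign of the $R_0$ term, which needs both $a\ge 2q$ and $v\le\beta$. If the intended reading instead keeps $R_0$ exactly, the same discriminant argument would apply with $c$ and the linear coefficient shifted by the corresponding $R_0/a$ terms. Dropping $R_0$ is simpler and gives exactly $\beta^*$, which suggests it is the intended route.
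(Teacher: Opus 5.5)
Your proposal is correct and follows the paper's proof. You verify $0<\beta^*<1$ directly, discard $R_0$ using $a/q\ge 2$ (your $R_0(v)=(2q-a)(\beta-v)\le 0$ is slightly more careful than the paper's strict inequality), let the $n$-dependent factor tend to $1$ on $(\beta/q,\beta]$, and bound $\sup_v M_0(v)v/(1-v)$ by $1$. The only difference is that the paper computes this maximum explicitly as $\phi_0=q\bigl(1-\sqrt{(1-\beta)(a-1)/a}\bigr)^2$ via Claim~\ref{claim:maxsmalll}, a closed form it reuses in Lemma~\ref{lm:maxbound}; your discriminant condition $c(\beta)\le 2\sqrt q-1$ is the same inequality, reached without locating the maximizer.
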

The proof of the lemma follows by simple verification deferred to Appendix~\ref{app:general}. We now simplify $\Gamma_l(v)$ for $l \geq 1$. A key observation is that as $l$ gets large, the term ${1}/{a} \cdot R_l(v)$ becomes negligible compared to $M_l(v)$. We formalize this in the next result, which upper bounds the ratio between ${1}/{a} \cdot R_l(v)$ and $M_l(v)$ with a sequence indexed in $l$ that converges to $0$ as $l \rightarrow \infty$. As a result, we can omit the more cumbersome term $R_l(v)$ in the limit. 
\begin{lemma} \label{lm:rhovsm}
Let $\varepsilon_l(v) = {1}/{a} \cdot R_l(v) / M_l(v)$. For $l = 1, \ldots, j$ and any positive integer $j$, $\varepsilon_l(v) \leq \tilde{\varepsilon}_l$, where $\tilde{\varepsilon}_l = \max \left\{{(l-1)l}/{(2aq^{l-1})}, {l(l+1)}/{(2aq^{l})}\right\}$. 
\end{lemma}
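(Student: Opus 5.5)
Since $R_l$ and $M_l$ are both affine in $v$, I would treat $\varepsilon_l(v) = R_l(v)/(aM_l(v))$ as a linear-fractional function on $\left(\beta/q^{l+1}, \beta/q^l\right]$. If $M_l$ stays positive on the closure of that interval, $\varepsilon_l$ is continuous and monotone there. Its supremum is then attained in the limit at one of the two endpoints, so it suffices to evaluate at $v=\beta/q^l$ and $v=\beta/q^{l+1}$ and bound each value by one of the two terms in $\tilde\varepsilon_l$.

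\textbf{Positivity of $M_l$.} $M_l$ is decreasing, so it suffices to check the right endpoint. Substituting $B_l\cdot\beta/q^l=\beta q^{l+1}$ gives
\[
M_l(\beta/q^l)=\beta q^l-\beta+\tfrac{q}{a}(1-\beta)\ \ge\ \beta(q^l-1)>0 \qquad (l\ge 1,\ \beta\le 1).
\]
At the left endpoint, similarly, $M_l(\beta/q^{l+1})=\beta q^{l+1}-\beta+\tfrac{q}{a}(1-\beta)\ge \beta(q^{l+1}-1)$. So $M_l>0$ on the whole closed interval, and the linear-fractional reduction is justified.

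\textbf{Right endpoint $v=\beta/q^l$.} Here $\tilde B_l\beta/q^l=\beta q(l+2-a/q)$ cancels the last term of $\tilde A_l$, leaving
\[
R_l(\beta/q^l)=\beta(q-1)\Bigl(\bigl(2-\tfrac{a}{q}\bigr)l+\tfrac{l(l-1)}{2}\Bigr).
\]
Using the standing assumption $a/q\ge 2$, we have $2-a/q\le 0$, so $R_l\le \beta(q-1)l(l-1)/2$. Hence
\[
\varepsilon_l\le \frac{(q-1)\,l(l-1)}{2a(q^l-1)}\le\frac{l(l-1)}{2aq^{l-1}},
\]
where the last step uses $(q-1)q^{l-1}\le q^l-1$.

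\textbf{Left endpoint $v=\beta/q^{l+1}$.} Here $\tilde B_l\beta/q^{l+1}=\beta(l+2-a/q)$, and collecting terms gives
\[
R_l=\beta(q-1)\Bigl(\bigl(2-\tfrac aq\bigr)(l+1)+\tfrac{l(l+1)}2\Bigr)\le \beta(q-1)\tfrac{l(l+1)}{2}.
\]
Combined with $M_l\ge\beta(q^{l+1}-1)$ and $(q-1)q^l\le q^{l+1}-1$, this yields $\varepsilon_l\le l(l+1)/(2aq^l)$.

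If $R_l$ is negative somewhere, the bound holds trivially there, since $M_l>0$. Taking the maximum over the two endpoints gives $\tilde\varepsilon_l$.

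The only delicate points are the algebraic simplification of $\tilde A_l-\tilde B_l v$ at the endpoints and the use of $a/q\ge 2$ to drop the $(2-a/q)$ terms. I expect the left-endpoint computation to be the main bookkeeping obstacle. Everything else is routine.
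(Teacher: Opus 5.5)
Your proof is correct and follows the paper's argument. You treat $\varepsilon_l$ as a linear-fractional function that is monotone on the interval, evaluate $R_l$ and $M_l$ at the two endpoints, drop the nonpositive $(2-a/q)$ terms, bound $aM_l$ from below using $\beta\le 1$, and finish with $(q-1)q^{x-1}\le q^x-1$. Your explicit check that $M_l>0$ on the closed interval is a small improvement, since the paper's derivative-sign argument relies on it without stating it.
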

A derivative check shows that $\varepsilon_{l}$ is monotone in $v$, so the proof of Lemma~\ref{lm:rhovsm} follows by evaluating and bounding $\varepsilon_{l}(v)$ at the endpoints. We defer the calculation details to Appendix~\ref{app:general}. Expressing $\Gamma_l(v)$ as ${M_l(v)\cdot(1+\varepsilon_l(v)) v}/{((1-v)(1-(1-v)^n))}$, Lemma \ref{lm:rhovsm} justifies that it suffices to focus on ${M_l(v) \cdot v}/{((1-v)(1-(1-v)^n))}$ for large $l$, since $\varepsilon_l(v)$ gets negligibly small as $l$ increases. 

We now have the ingredients for the next result, which reduces the search space over $l$ in the optimization of \eqref{eq:gammal} to only two points, $l = j$ and $l$ where $\varepsilon_l(v)$ achieves its maximum. Although the latter can be challenging to characterize, it can be solved numerically: the upper bound $\tilde{\varepsilon}_l$ in Lemma~\ref{lm:rhovsm} suggests that $\varepsilon_l(v)$ has only one maximum, which typically happens at a small $l$, so an enumeration of $\varepsilon_l(v)$ at end points of $\left( {\beta}/{q^{l+1}}, {\beta}/{q^l}\right]$ for small $l$ would suffice to find this maximum. For the next result, we let $\varepsilon_l = \sup_{v \in (\beta / {q^{l+1}}, \beta / q^l]} \varepsilon_l(v)$, and define
\begin{align*}
\varepsilon^* = \max_{l \in \{0, \ldots, j/2-1\}} \varepsilon_l, \qquad  \varphi_0^* = \max_{x \in [0, \beta a / q]} \frac{1}{1-e^{-x}}\left(\frac{\beta q(1+q)}{a} - \frac{q^3}{a^2} \cdot x\right) x.
\end{align*}
\begin{lemma} \label{lm:maxbound}
For any $0 < \beta \leq \beta^*$, $\lim_{n \rightarrow \infty} \sup_{l \in \{0, \ldots,j\}} \sup_{v \in (\beta/q^{l+1}, \beta/q^l]} \Gamma_l(v) \leq \max \{1 + \varepsilon^*, \varphi_0^*\}$. 
\end{lemma}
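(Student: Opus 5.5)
We bound each inner supremum in \eqref{eq:gammal} separately and split the range of $l$ into three regimes. These are $l=0$, small $l$ with $1\le l< j/2$, and large $l$ with $j/2\le l\le j$. The common step is the factorization
\[
\Gamma_l(v)=\bigl(1+\varepsilon_l(v)\bigr)\cdot \frac{M_l(v)\,v}{(1-v)(1-(1-v)^n)},
\]
so it suffices to bound the pure ``main factor'' ratio $\Phi_l(v)=M_l(v)v/((1-v)(1-(1-v)^n))$ on $(\beta/q^{l+1},\beta/q^l]$ and then multiply by $1+\varepsilon_l$.

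The case $l=0$ is handled directly by Lemma~\ref{lm:betastar}: since $\beta\le\beta^*$, its supremum is at most $1$ as $n\to\infty$.

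For $1\le l<j/2$, we have $v>\beta/q^{l+1}\ge \beta/q^{j/2}$. Because $j=\lceil\log_q(n/a)\rceil$, this gives $nv\gtrsim \sqrt{n}\to\infty$, so $(1-v)^n\to 0$ uniformly on these pieces. Hence $\Phi_l(v)\approx M_l(v)v/(1-v)$.

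We then show this is at most $1$. Write $M_l(v)=\beta q^l(1+q)-B_l v+(q/a)(1-\beta)$. A concave-quadratic check on the interval $(\beta/q^{l+1},\beta/q^l]$ should give $M_l(v)v\le (1-v)$, in the same way the $l=0$ bound was obtained. Indeed the substitution $v=\beta x/q^{l}$ makes $q^l$ cancel in the leading terms $\beta^2 x((1+q)-qx)$, and these are exactly the terms controlled by the $\beta^*$ computation. The lower-order terms carry a factor $q^{-l}$ and are dominated by the $l=0$ case.

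Multiplying by $1+\varepsilon_l(v)\le 1+\varepsilon^*$ then bounds all pieces with $l< j/2$ by $1+\varepsilon^*$ in the limit.

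For $j/2\le l\le j$, Lemma~\ref{lm:rhovsm} gives $\varepsilon_l\le\tilde\varepsilon_l=O(l^2/q^{l})\to 0$ uniformly as $n\to\infty$. So only $\Phi_l$ matters, and there $v$ is small. Substitute $x=nv$, which is the natural scale. Then $(1-v)^n\to e^{-x}$, $1-v\to1$, and $q^l$ is comparable to $n/a$ up to a factor in $[1,q)$ coming from the ceiling in $j$. Writing $M_l(v)v=\beta q^l(1+q)v-q^{2l+1}v^2+O(v)$, we get
\[
\Phi_l \to \frac{1}{1-e^{-x}}\Bigl(\beta(1+q)\tfrac{q^l}{n}x-q\bigl(\tfrac{q^l}{n}\bigr)^2x^2\Bigr).
\]
With $q^l/n$ bounded by its maximal value $q/a$ (the case $l=j$), and the range $v\le\beta/q^l$ translating into $x\le \beta a/q$, this is at most $\varphi_0^*$. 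For $l<j$ the factor $q^l/n$ is smaller. I expect a monotonicity or rescaling argument to show the expression is still dominated by $\varphi_0^*$, possibly by first reparametrizing $y=(q^l/n)x$.

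The main obstacle is making this last step rigorous. It needs uniformity of the limit in $l$ across the whole range $j/2\le l\le j$, including the transitional $l$ where $nv$ is neither large nor of order one. The fix I would try is to interpolate: use the exact bound $1-(1-v)^n\ge 1-e^{-nv}$ and show the resulting expression is maximized in the $x$-variable regime, whichever $l$ is involved. Combining the three regimes yields the bound $\max\{1+\varepsilon^*,\varphi_0^*\}$.
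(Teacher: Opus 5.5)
Your treatment of the pieces with $l<j/2$ is correct and takes a different route from the paper. The paper bounds each piece by the global maximum $\phi_l$ and proves $\phi_l$ nonincreasing via Claim~\ref{claim:abratiobound}, which it runs only after replacing $A_l/B_l$ by its limit $z_l$. You instead compare the $l$-th piece directly with the $l=0$ piece, and that comparison is exact. To make it airtight, note that $A_l$ also contains a $-\beta$, i.e.\ $M_l(v)=\beta q^l(1+q)+\tfrac{q}{a}(1-\beta)-\beta-q^{2l+1}v$. With $v=\beta x/q^l$ and $x\in(1/q,1]$ one gets $M_l(v)v+v=\beta^2x(1+q-qx)+(1-\beta)(1+\tfrac qa)\beta x\,q^{-l}$. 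Because $(1-\beta)(1+q/a)\ge 0$, this is dominated by its value at $l=0$ with $v_0=\beta x\in(\beta/q,\beta]$, where $\phi_0\le 1$ gives $M_0(v_0)v_0+v_0\le 1$.

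The genuine gap is in the regime $j/2\le l\le j$, which is the substance of the second case. The step ``bound $q^l/n$ by $q/a$ and $x$ by $\beta a/q$'' does not work, for two reasons. First, with $c=q^l/n$, the quantity $\beta(1+q)cx-qc^2x^2$ is not monotone in $c$ at fixed $x$: its $c$-derivative at the right endpoint $x=\beta/c$ is $\beta x(1-q)<0$. Second, the $x$-range of the $l$-th piece is $(\beta/(qc),\beta/c]$, which for $c<q/a$ extends past $\beta a/q$. Since $q^j/n$ is only known to lie in $[1/a,q/a)$, your argument does not even settle $l=j$. The remaining pieces are left to an expected ``monotonicity or rescaling argument'', so the case is not proved.

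The missing step is exactly the reparametrization you mention in passing. Put $y=q^lv\in(\beta/q,\beta]$; then the main term is
\[
\frac{\beta(1+q)y-qy^2}{1-e^{-y/c}} .
\]
Its numerator is positive on $(0,\beta]$ and independent of $l$, while $1/(1-e^{-y/c})$ is increasing in $c$. Since $c\le q^j/n<q/a$, the term is at most $(\beta(1+q)y-qy^2)/(1-e^{-ay/q})$. The substitution $x=ay/q\in(\beta a/q^2,\beta a/q]$ turns this into exactly the function maximized in $\varphi_0^*$.

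This is the continuous form of the paper's argument, which maps $I_k$ to $I_{k+1}$ by $x\mapsto qx$, uses $h_{k+1}(qx)=h_k(x)/q$ and $e^{-qx}<e^{-x}$, and concludes that the maximum is at $k=0$. Unlike the paper's computation, which effectively takes $q^j=qn/a$, your version covers every value of $q^j/n$.

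The uniformity you worry about is then routine, and there is no problematic transitional range:
\begin{itemize}
\item use $1-(1-v)^n\ge 1-e^{-nv}$, which is an exact inequality;
\item use $1-v\ge 1-\beta q^{-j/2}\to 1$;
\item use $nv>\beta a/q^2$ (from $q^{j+1}<q^2n/a$), so $1-e^{-nv}$ stays bounded away from $0$ and the remainder $\bigl(\tfrac qa(1-\beta)-\beta\bigr)v$ in $M_l(v)v$ vanishes uniformly;
\item use $\sup_{l\ge j/2}\tilde\varepsilon_l\to 0$.
\end{itemize}
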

The proof of this result is technical and we provide it in detail in Appendix~\ref{app:general}; however, the idea is simple. We divide $l \in \{0, \ldots,j\}$ into two regimes, $l \leq {j}/{2} - 1$ and $l \geq {j}/{2}$: 
\begin{align*}
\sup_{l \in \{0, \ldots,j\}} \sup_{v \in (\beta/q^{l+1}, \beta/q^l]} \Gamma_l(v) = \max\left\{\sup_{l \in \{0, \ldots, j/2 - 1\}} \sup_{v \in (\beta / {q^{l+1}}, \beta / q^l]} \Gamma_l(v),  \sup_{l \in \{j/2, \ldots,j\}} \sup_{v \in (\beta/q^{l+1}, \beta/q^l]} \Gamma_l(v) \right\}. 
\end{align*}
Then, to prove the lemma is enough to show that, when $n \rightarrow \infty$, the maximum in the RHS above is upper bounded by $\max \{1 + \varepsilon^*, \varphi_0^*\}$. We accomplish this by upper bounding the supremums in each of the two regimes.

Now $W_n(a, \beta, q)$ is simplified to the best of our ability, and we are ready to present the proof of Theorem~\ref{thm:general}, in which we upper bound $\varphi_0^*$ with an expression in terms of $a, q$ and $\beta$ so that we can optimize to determine the numerical values of the parameters. Then, since both $\varphi_0^*$ and $\varepsilon^*$ do not have closed form solutions, we obtain these numerically after determining the parameters. 
\begin{proof}[Proof of Theorem~\ref{thm:general}]
Recall that we have $R_n^{\text{C}}(\text{ALG}) \leq W_n(a, \beta, q)$. We have seen in Section~\ref{subsec:general2} that $W_n(a, \beta, q)=({(a\tilde{p})}/{(\beta E)})\cdot \sup_{l \in \{0, \ldots,j\}} \sup_{v \in (\beta/q^{l+1}, \beta/q^l]} \Gamma_l(v)$. Then, using Lemma~\ref{lm:maxbound},
\begin{align*}
\lim_{n \rightarrow \infty} R_n^{\text{C}}(\text{ALG}) &\leq \frac{a\tilde{p}}{\beta E} \lim_{n \rightarrow \infty} \sup_{l \in \{0, \ldots,j\}} \sup_{v \in (\beta/q^{l+1}, \beta/q^l]} \Gamma_l(v) \leq  \frac{a\tilde{p}}{\beta E} \cdot \max \{\varphi_0^*, 1 + \varepsilon^*\}.
\end{align*}
{ We will use the RHS of this inequality to provide our guarantees. However, optimizing this RHS itself is hard, even numerically. Instead, we provide a relaxation to $\varphi_0^*$ that we can optimize and obtain parameters $a,\beta$ and $q$. Then, we use these parameters in the RHS expression above to conclude the proof of Theorem~\ref{thm:general}.}

We find an upper bound to $\varphi_0^*$ using the following claim.
\begin{claim} \label{claim:largejub}
Define $g_k(x) = (x+1)\left({\beta q(1+q)}/{(aq^k)} - {q^3}/{(a^2 q^{2k})} \cdot x\right)$. Then for all $x \geq 0$, $g_k(x) \geq {1}/{(1-e^{-x})} \cdot \left({\beta q(1+q)}/{(aq^k)} - {q^3}/{(a^2 q^{2k})} \cdot x\right) x$. Moreover, $g_k(x)$ has the unique maximum value $\varphi_k = {1}/{(4q)} \cdot \left( \beta(1+q) + {q^{2-k}}/{a}\right)^2$ for all integer $k \geq 0$.
\end{claim}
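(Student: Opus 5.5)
The plan is to compare the two factors $x/(1-e^{-x})$ and $x+1$ pointwise, and then maximize a concave quadratic. Write $c_k = \beta q(1+q)/(aq^k)$ and $d_k = q^3/(a^2q^{2k})$, and let $L_k(x) = c_k - d_k x$ denote the common linear factor. With this notation $g_k(x) = (x+1)L_k(x)$, and the right-hand side of the claim is $L_k(x)\cdot x/(1-e^{-x})$. At $x=0$ we interpret $x/(1-e^{-x})$ by its limit $1$.

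\textbf{Step 1 (the inequality).} First show $x/(1-e^{-x}) \le 1+x$ for $x\ge 0$. For $x>0$ this is equivalent to $x \le (1+x)(1-e^{-x})$, i.e.\ to $(1+x)e^{-x}\le 1$, which is just $e^x \ge 1+x$. The case $x=0$ is the limit and holds with equality. Multiplying both sides by $L_k(x)$ gives $g_k(x) \ge L_k(x)\,x/(1-e^{-x})$ whenever $L_k(x)\ge 0$.

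\textbf{Step 2 (sign of $L_k$).} This is the only delicate point: for large $x$ we have $L_k(x)<0$, and then the multiplication in Step 1 reverses the inequality. So the claim should be read, and will be used, on the range where $L_k \ge 0$, namely $x \le c_k/d_k = \beta a q^{k}(1+q)/q^2$. In the application to $\varphi_0^*$ (with $k=0$ and $x\in[0,\beta a/q]$) this holds automatically. Indeed, at the endpoint $x = \beta a/q$,
\[
L_0(\beta a/q) = \frac{\beta q(1+q)}{a} - \frac{q^3}{a^2}\cdot\frac{\beta a}{q} = \frac{\beta q}{a} > 0,
\]
and $L_0$ is decreasing, so $L_0 > 0$ on the whole interval $[0,\beta a/q]$. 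I will state this domain restriction explicitly, since it is exactly what is needed to conclude $\varphi_0^* \le \max_x g_0(x)$.

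\textbf{Step 3 (the maximum).} Expanding, $g_k(x) = -d_k x^2 + (c_k-d_k)x + c_k$ is a strictly concave quadratic because $d_k>0$. It therefore has a unique maximizer $x^* = (c_k-d_k)/(2d_k)$, with maximum value $(c_k+d_k)^2/(4d_k)$. Next, factor
\[
c_k+d_k = \frac{q}{aq^k}\left(\beta(1+q) + \frac{q^{2-k}}{a}\right).
\]
Squaring this and dividing by $4d_k = 4q^3/(a^2q^{2k})$, the factors $a^2q^{2k}$ cancel and leave $q^2/(4q^3)=1/(4q)$. This gives $\varphi_k = \frac{1}{4q}\bigl(\beta(1+q)+q^{2-k}/a\bigr)^2$, as claimed. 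If $x^*<0$, this value is still the global maximum of $g_k$ over $\mathbb{R}$, so it remains a valid upper bound on $[0,\infty)$.
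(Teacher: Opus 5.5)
Your proof is correct and follows the paper's argument: the pointwise bound $x/(1-e^{-x})\le 1+x$ from $e^x\ge 1+x$, then strict concavity of the quadratic $g_k$ and its vertex value $(c_k+d_k)^2/(4d_k)=\varphi_k$. Your Step 2 is a correct refinement that the paper's proof omits: the paper multiplies by the linear factor without checking its sign. The stated inequality in fact fails strictly for every $x>\beta a q^k(1+q)/q^2$, but, as you verify, that factor is positive on $[0,\beta a/q]$, so the application $\varphi_0^*\le\max_{x\in[0,\beta a/q]} g_0(x)\le\varphi_0$ is unaffected.
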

{  The claim implies that $\varphi_0^* \leq \max_{x\in [0,\beta a/q]} g_0(x) \leq \varphi_0$. Now, we solve $\inf_{a, \beta, q} {a\tilde{p}}/{(\beta E)} \cdot \varphi_0$ with the restriction that $\beta \leq \beta^*$ that is justified by Lemma \ref{lm:betastar}. Numerically, the minimum is achieved at $a = 3.6$, $q = 1.49$, and $\beta = 0.954$, which gives ${a\tilde{p}}/{(\beta E)} = 3.568$. }


With the above choice of parameters, we numerically obtain $\varphi_0^* = 1.165$. For $\varepsilon^*$, we compute $\varepsilon_l(v)$ at the end points ${\beta}/{q^l}$, and by enumerating the results for small $l$, we obtain that the maximum of $\varepsilon_l(v)$ happens at $l = 6$, with $\varepsilon^* = 0.171$. Substituting the numerical results and the choice of parameters back $\varphi_0^*$ and $\varepsilon^*$, we have ${a\tilde{p}}/{(\beta E)} \cdot \max \{\varphi_0^*, 1 + \varepsilon^*\} \leq 3.568 \cdot (1+0.171)=4.179$, which concludes the proof of Theorem~\ref{thm:general}. 
\end{proof}

\section{Lower Bound for OSCC and Beyond i.i.d. Values} \label{section:noniid}
In this section, we close our discussion on online contract selection with two results: a lower bound for the competitive ratio of OSCC via OSDC, and an impossibility result for non-i.i.d.\ models.  

{  We first extend the models to the setting where values are independent but drawn from possibly different distributions, i.e., the general (independent) setting. The input to the general OSDC and OSCC consists of a sequence of $n$ CDFs $F_1, \ldots, F_n$. For a model $\mathcal{M} \in \{C, D\}$, at time $i$, an online algorithm $\text{ALG}^{\mathcal{M}}$ observes the random variable $X_i$ following $F_i$, and proceeds as in the i.i.d.\ setting. The expected offline optimal cost is given by $\OPT_n(\{F_i\}_{i=1}^n)= \E_{X_1 \sim F_1, \ldots, X_n \sim F_n}\!\left[ \sum_{i=1}^n \min\{X_1,\ldots,X_i\} \right].$ The competitive ratio of $\text{ALG}^{\mathcal{M}}$ in instances of size $n$ and CDFs $F_1, \ldots, F_n \in \mathcal{F}$ is given by $
R_n^{\mathcal{M}}(\text{ALG}^{\mathcal{M}},\{F_i\}_{i=1}^n) ={\ALG^{\mathcal{M}}_n(\{F_i\}_{i=1}^n)}/{\OPT_n(\{F_i\}_{i=1}^n)}$.

We remark that the reduction in Section~\ref{subsec:lb1} from OSDC to $n$ single-selection problems holds in the general setting, and the optimal online expected cost can be computed by a sequence of $n$ DPs. Although the $i$-th problem is now a cost minimization single-selection problem over the sequence $X_1,\ldots,X_i$. Let $D_{i, j}$ be optimal expected cost when $j$ observations remain possible. Then, $D_{i, 1} = \E_{X_i \sim F_i}[X_i]$ and $D_{i, j} = \E_{X_{i-j+1} \sim F_{i-j+1}}[\min\{X_{i-j+1}, D_{i, j-1}\}]$ (the minimum of the current observed value and the expected value at the next step) for all $1 < j \leq i$. As in Lemma~\ref{lm:lbobj}, since $D_{i, i}$ gives the expected optimal cost of the $i$-th problem, the expected cost incurred by the optimal algorithm $\text{ALG}^*$ for the general OSDC is given by $\ALG_n^*(\{F_i\}_{i=1}^n) = \sum_{i=1}^n D_{i, i}$.

Our first result shows that the optimal algorithm for OSDC always achieves a lower expected cost than any online algorithms for OSCC on the same instance $F_1,\ldots,F_n$.

\begin{lemma} \label{lm:ALGLB}
For any CDFs $F_1, \ldots, F_n $, any algorithm $\text{ALG}^{\text{C}}$ for OSCC, and the optimal algorithm $\text{ALG}^*$ for OSDC, it holds that $\ALG_n(\{F_i\}_{i=1}^n) \geq \ALG^*_n(\{F_i\}_{i=1}^n)$.
\end{lemma}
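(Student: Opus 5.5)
The plan is to lower bound the cost of an arbitrary OSCC algorithm time period by time period, using the same decomposition that gave Lemma~\ref{lm:lbobj}. Fix $\text{ALG}^{\text{C}}$ and CDFs $F_1,\ldots,F_n$. For each time $m\in\{1,\ldots,n\}$ we designate one contract that covers $m$, namely the active contract started latest among those covering $m$; such a contract exists by feasibility. Let $\sigma(m)\le m$ be its start time and put $Y_m = X_{\sigma(m)}$.

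The first step is to show $\ALG_n(\{F_i\}) \ge \E[\sum_{m=1}^n Y_m]$. A contract started at time $i$ with length $T_i$ costs $T_iX_i$, and it covers only the times $i,\ldots,i+T_i-1$. Hence it is designated for at most $T_i$ periods. Summing over contracts gives $\sum_i T_iX_i \ge \sum_m Y_m$ pathwise, since all values are nonnegative.

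The second step is to show $\E[Y_m] \ge D_{m,m}$ for every $m$. The key observation is that $\sigma(m)$ is a stopping time with respect to $X_1,\ldots,X_m$, and it lies in $\{1,\ldots,m\}$. To see this, note that the contract covering $m$ that is started latest is determined at time $m$: contracts started after $m$ cannot cover $m$. The event $\{\sigma(m)=i\}$, however, depends on whether some later contract started in $(i,m]$ covers $m$, so it is not measurable at time $i$. The plan is therefore to replace $\sigma(m)$ by a genuine stopping rule with no larger cost, which I expect to be the main obstacle.

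To handle this, I would define the rule $\tau(m)$: while scanning times $1,\ldots,m$, stop at the current time $i$ as soon as $\text{ALG}^{\text{C}}$ starts at $i$ a contract reaching $m$ (i.e., $i+T_i-1\ge m$) and no later-started contract will be needed. Since that last condition is again not adapted, I would instead use the following argument. Consider the stopping rule \emph{$\tau(m)$ = the first $i\le m$ at which a contract reaching $m$ is started}. Such a time exists because $m$ itself must be covered. The cost $X_{\tau(m)}$ is then an admissible single-selection outcome, so $\E[X_{\tau(m)}]\ge D_{m,m}$ by the optimality of the DP defining $D_{m,m}$.

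Then I reassign the designation. The contract started at $\tau(m)$ covers all of $\tau(m),\ldots,m$. If we designate each $m$ to the \emph{first-started} contract reaching $m$, each contract is again designated only to times it covers, so it is charged at most $T_i$ times. Thus it suffices to use the first-started rule from the outset, and I would simply define $\sigma(m)=\tau(m)$ in the first step; the counting argument there is unchanged.

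Combining the two steps gives $\ALG_n \ge \sum_m \E[X_{\tau(m)}] \ge \sum_m D_{m,m} = \ALG^*_n$. The delicate point to verify carefully is that $\tau(m)$ is adapted: the decision $T_i$ is made online from $X_1,\ldots,X_i$ and the algorithm's internal randomness. If the algorithm is randomized, I would condition on its random seed and use that the DP value is still a lower bound for randomized stopping rules.
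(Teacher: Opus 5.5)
Your argument is correct, but it takes a different route from the paper.

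\textbf{The paper's route.} The paper proves the lemma by a one-line simulation. Run $\text{ALG}^{\text{C}}$, but queue its contracts back-to-back while keeping the same lengths $T_i$. This is a feasible OSDC policy: the OSCC intervals $[k,k+T_k-1]$ with $k\le i$ together cover $\{1,\ldots,i\}$, so $\sum_{k\le i}T_k\ge i$. Its cost $\sum_i T_iX_i$ is identical on every realization, and optimality of $\text{ALG}^*$ finishes the proof.

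\textbf{Your route.} You bypass OSDC on the lower-bound side. You charge each period $m$ to the first-started contract covering it, which, as you correctly note, is a stopping time, unlike the latest-started one. This gives the pathwise bound $\sum_i T_iX_i\ge\sum_m X_{\tau(m)}$, and optimal stopping then gives $\E[X_{\tau(m)}]\ge D_{m,m}$. In effect, you re-run the lower-bound half of Lemma~\ref{lm:lbobj} directly inside OSCC.

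\textbf{What each buys.} The paper's route is shorter and needs no information about the value of $\text{ALG}^*$, only its optimality. It also yields the stronger structural fact that every OSCC policy is matched pathwise by an OSDC policy. Your route produces the explicit bound $\ALG_n^{\text{C}}\ge\sum_m D_{m,m}$ directly, which is the quantity the OSDC analysis actually evaluates. However, to get the lemma as stated you must also invoke the achievability half of the OSDC reduction: some OSDC policy attains $\sum_m D_{m,m}$, so $\sum_m D_{m,m}\ge\ALG^*_n$. The paper asserts this via $\ALG^*_n=\sum_i D_{i,i}$, so you may cite it, but it is an extra dependency the simulation argument avoids.

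\textbf{Write-up.} Drop the detours through the latest-started designation and the non-adapted rule, and define $\tau(m)$ as the first-started covering contract from the outset.
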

\begin{proof}
Observe that any algorithm $\text{ALG}^{\text{C}}$ on OSCC can be transformed to a feasible algorithm $\text{ALG}^{\text{D}}$ on OSDC by ``queuing'' the concurrent contracts such that they are back-to-back instead of overlapping. The result then follows since any feasible algorithm for OSDC incurs at least the expected cost of the optimal $\text{ALG}^*$.
\end{proof}
Since OSDC and OSCC share the same offline optimal algorithm, Lemma~\ref{lm:ALGLB} immediately implies a lower bound on the competitive ratio of the i.i.d.\ OSCC via $R_n^* = \zeta_n$ given in Theorem~\ref{thm:lowerbound}. 
\begin{corollary} \label{cr:lbrelax}
For any positive integer $n$, under the i.i.d.\ setting, no online algorithm $\text{ALG}^{\text{C}}$ on OSCC achieves $R_n^{\text{C}}(\text{ALG}^{\text{C}}) < \zeta_n$, where $\zeta_n$ is defined in Theorem~\ref{thm:lowerbound}. 
\end{corollary}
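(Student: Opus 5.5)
The plan is to chain Lemma~\ref{lm:ALGLB} with Theorem~\ref{thm:lowerbound}. First, note that the i.i.d.\ setting is the special case $F_1=\cdots=F_n=F$ of the general independent setting. In this case the quantity $\ALG_n^*(\{F_i\}_{i=1}^n)=\sum_{i=1}^n D_{i,i}$ coincides with $\sum_{i=1}^n d_i$ from Lemma~\ref{lm:lbobj}. Indeed, $D_{i,j}=\E[\min\{X,D_{i,j-1}\}]$ with $X\sim F$ is exactly the quantile recursion \eqref{eq:lbdp1}--\eqref{eq:lbdp2}, since the minimizing $q$ in \eqref{eq:lbdp2} is $F(d_{i-1})$. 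So $D_{i,i}=d_i$, and $\ALG^*_n(F)$ is the optimal OSDC cost. Moreover, both models have the same offline benchmark $\OPT_n(F)$, as noted in Section~\ref{subsec:intro1}.

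Next, fix any online algorithm $\text{ALG}^{\text{C}}$ for OSCC and any $F\in\mathcal{F}$. By Lemma~\ref{lm:ALGLB}, $\ALG^{\text{C}}_n(F)\geq \ALG^*_n(F)$. Dividing by $\OPT_n(F)>0$ gives
\[
R_n^{\text{C}}(\text{ALG}^{\text{C}},F)\geq \frac{\ALG^*_n(F)}{\OPT_n(F)}=R_n^{\text{D}}(\text{ALG}^*,F).
\]
Taking the supremum over $F\in\mathcal{F}$ on both sides yields $R_n^{\text{C}}(\text{ALG}^{\text{C}})\geq R_n^{\text{D}}(\text{ALG}^*)\geq \inf_{\text{ALG}} R_n^{\text{D}}(\text{ALG})=R_n^*$. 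Finally, Theorem~\ref{thm:lowerbound} gives $R_n^*=\zeta_n$, which is the claim.

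The only delicate point is that the supremum in the definition of $R_n^*$ may not be attained. The worst case in Section~\ref{subsec:lb3} is a limit of smoothed distributions, since it involves a Dirac mass. This causes no difficulty, because we compare suprema directly and never evaluate at a specific worst-case $F$. A second point to check is that ``queueing'' concurrent contracts in Lemma~\ref{lm:ALGLB} yields a feasible OSDC algorithm that is online. I would verify that at each time $i$, the total length purchased so far by the OSCC algorithm is at least $i$, because every period up to $i$ is covered. Hence the queued contracts reach at least time $i$, and the OSDC cost $\sum_i T_iX_i$ is unchanged. I expect this feasibility check to be the main, though minor, obstacle; the remaining steps are bookkeeping.
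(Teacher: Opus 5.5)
Your proposal is correct and matches the paper's argument: Lemma~\ref{lm:ALGLB} (queueing OSCC contracts back-to-back yields an OSDC algorithm of no larger cost), the shared offline benchmark $\OPT_n(F)$, and $R_n^*=\zeta_n$ from Theorem~\ref{thm:lowerbound}. Your added checks make explicit what the paper leaves implicit: comparing suprema rather than evaluating at a worst-case $F$, and online feasibility of the queued contracts via $T_1+\cdots+T_i\geq i$.
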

Our second result extends the discussion beyond the i.i.d.\ setting, showing that we lose the constant competitive ratio for OSDC if the identically distributed assumption is removed. Lemma~\ref{lm:ALGLB} implies that in non-i.i.d.\ OSCC, no online algorithms achieve a constant competitive ratio as well. 
\begin{proposition} \label{prop:noniid}
When $X_i$ are independent but non-identically distributed, i.e., $X_i \sim F_i$ where $F_i \in \mathcal{F}$ for all $i$, no algorithm for OSDC achieves a constant competitive ratio. 
\end{proposition}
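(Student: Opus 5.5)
The plan is to exhibit, for every $n$, an explicit family of independent but non-identical distributions on which the optimal OSDC algorithm $\text{ALG}^*$ has ratio growing linearly in $n$. Since $\text{ALG}^*$ is optimal, this bounds every algorithm. I will use the general-setting reduction stated just above: $\ALG_n^*(\{F_i\}_{i=1}^n)=\sum_{i=1}^n D_{i,i}$, with $D_{i,1}=\E[X_i]$ and $D_{i,j}=\E[\min\{X_{i-j+1},D_{i,j-1}\}]$. The intuition is that a cheap option appears only at time $2$ and only with probability $p$. When it fails, the price is enormous. So from time $1$ every single-selection problem prefers the sure value $X_1$, while the prophet almost always gets price $0$ from time $2$ onward.

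\textbf{Instance.} Fix $p\in(0,1)$ and $M>1/(1-p)$.
\begin{itemize}
\item $X_1\equiv 1$.
\item $X_2=0$ with probability $p$ and $X_2=M$ with probability $1-p$.
\item $X_3=\cdots=X_n\equiv M$.
\end{itemize}

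\textbf{Offline cost.} Time $1$ costs $1$. Each later time costs $\min\{1,X_2,M,\dots\}$, which is $0$ with probability $p$ and $1$ otherwise. Hence $\OPT_n=1+(n-1)(1-p)$.

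\textbf{Online cost.} Fix $i\ge 2$ and compute $D_{i,i}$ backward. Every variable $X_3,\dots,X_i$ equals $M$, so the backward values stay equal to $M$ down to $D_{i,i-1}$. This gives $D_{i,i-1}=\E[\min\{X_2,M\}]=(1-p)M>1$. Therefore $D_{i,i}=\E[\min\{X_1,D_{i,i-1}\}]=\min\{1,(1-p)M\}=1$. Together with $D_{1,1}=1$, this yields $\ALG^*_n=n$.

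\textbf{Ratio.} The ratio is $n/(1+(n-1)(1-p))$. Choosing $p=1-1/n$ (and $M>n$) makes it $n/(2-1/n)\ge n/2$, which is unbounded as $n\to\infty$. Alternatively, for fixed large $n$, letting $p\to 1$ drives the ratio toward $n$. Either way no constant competitive ratio exists. By Lemma~\ref{lm:ALGLB}, the same instance also rules out a constant ratio for non-i.i.d.\ OSCC.

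\textbf{Main obstacle.} The only delicate point is whether the model requires each $F_i\in\mathcal{F}$ to be strictly increasing and continuous, since my distributions have atoms. If so, I would perturb each $X_i$ by independent small continuous noise (e.g., uniform on $[0,\delta]$), in the spirit of Proposition~\ref{prop:assumption}. Both $\OPT_n$ and each $D_{i,i}$ are continuous under such perturbations: they change by at most $O(n\delta)$, because every DP value is a $1$-Lipschitz function of the inputs. The ratio therefore remains at least $n/2-o(1)$ as $\delta\to0$. I expect this continuity bookkeeping, rather than the construction itself, to be the only step needing care.
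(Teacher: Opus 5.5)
Your proof is correct and follows essentially the paper's route. Both use an explicit instance with $X_1\equiv 1$ whose later prices have mean above $1$ but are $0$ with substantial probability, so the backward recursion pins every $D_{i,i}$ at $1$ while $\OPT_n=O(1)$. Your single cheap chance at time $2$ with $p\to 1$ replaces the paper's $X_i=2^iZ_i$, and it pushes the ratio arbitrarily close to $n$ (matching the trivial bound $n\,\E[X_1]\le n\,\OPT_n$ from contracting $X_1$ for the whole horizon), whereas the paper's instance gives roughly $n/2$.

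The perturbation step you flag as the main obstacle is unnecessary. $\mathcal{F}$ is just the family of CDFs with nonnegative support, and the paper's own instance already uses point masses and Bernoulli variables.
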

\begin{proof}
Consider the following instance: $X_1=1$ and for $i=2,\ldots,n$ let $X_i = 2^{i}\cdot Z_i$, where $Z_i$ is a $\{0,1\}-$Bernoulli random variable with parameter $1/2$ (i.e., $\mathbb{P}[Z_i=1]=1/2$). 
For $\text{ALG}^*$, $D_{1, 1} = \E[X_1] = 1$ and $D_{i, 1} = \E[X_i] = 2^{i-1}$ for all $i = 2, \ldots, n$. Then since the maximum realization of $X_{i-1}$ is $2^{i-1} = D_{i, 1}$, we have $D_{i, 2} = \E[\min\{X_{i-1}, D_{i, 1}\}] = \E[X_{i-1}] = 2^{i-2}$. By repeating this calculation iteratively, we have $D_{i, i-1} = \E[X_2] = 2 > 1 = X_1$. Therefore, $D_{i, i} = \E[\min\{X_{1}, D_{i, i-1}\}] = \E[X_{1}] = 1$ for all $i = 1, \ldots, n$. It follows that $\ALG_n^* = \sum_{i=1}^n D_{i, i} = n$. 

On the other hand, at each time $i$, we have $\mathbb{E}[\min\{X_1,\ldots,X_i\}] = 1\cdot 2^{i-1}$. Thus, 
the offline optimal solution incurs a cost of $1$ if and only if no cost has realized $0$ up to time $i$, otherwise it incurs $0$ cost. Summing up the expected cost at time $i$, we get 
$\OPT_n = 1 + 1/2 \cdot 1 + \left( 1/2 \right)^2 \cdot 1 + \ldots + \left(1/2\right)^{n-1} \cdot 1 = 2\left( 1 - \left(1/2\right)^n\right)$. Then, the ratio ${\ALG_n^*}/{\OPT_n} = {n}/{(2(1-(1/2)^n))} = \Theta(n)$ for this instance implies that no algorithm for OSDC achieves a constant competitive ratio. 
\end{proof}
} 

\section{Final Remarks}

In this work, we study OSDC and OSCC in the class of online contract selection problems with continuous contractual coverages. We develop LP machinery to analyze the competitive ratios of online algorithms for these two problems, solving the competitive ratio of OSDC exactly and improving upon existing competitive ratio bounds for OSCC by applying the LP framework on a general class of algorithms. Our tight result for OSDC in comparison with the upper bound analysis for OSCC highlights the gap between settings with and without control over the contract starting times. We note that we have to let $n \rightarrow \infty$ to simplify many involved expressions in Section~\ref{section:general}, which has posed the main obstacle in obtaining a competitive ratio upper bound for every integer $n$. As a result, we leave characterizing a competitive ratio upper bound for finite $n$ open.  


{ Proposition~\ref{prop:noniid} rules out constant competitive ratios for both contract selection problems under non-identically distributed random values. An interesting open question is whether an intermediate model where values are drawn independently from different distributions but arrive in random order admits a finite competitive ratio. Our techniques rely heavily on the i.i.d.\ assumption, so analyzing this setting will likely require fundamentally new ideas. Another interesting direction is to study free-order models, in which the online algorithm can adaptively choose the order in which values are observed, and determine whether such models admit constant competitive ratios under an appropriate notion of offline optimal cost.}



\small
\bibliographystyle{plainnat}
\bibliography{ref}

@ARTICLE{Disser2019,
  author={Yann Disser and John Fearnley and Martin Gairing and Oliver Göbel and Max Klimm and Daniel Schmand and Alexander Skopalik and Andreas Tönnis},
  journal={Mathematics of Operations Research}, 
  title={Hiring Secretaries over Time: The Benefit of
Concurrent Employment}, 
  year={2019},
  volume={45},
  number={1},
  pages={323-352}}

@misc{PS2024,
      title={Optimal Guarantees for Online Selection Over Time}, 
      author={Sebastian Perez-Salazar and Victor Verdugo},
      year={2024},
      eprint={2408.11224},
      archivePrefix={arXiv},
      primaryClass={cs.GT},
      url={https://arxiv.org/abs/2408.11224}, 
}

@ARTICLE{Manne1960,
  author={Alan S. Manne},
  journal={Management Science}, 
  title={Linear Programming and Sequential Decisions}, 
  year={1960},
  volume={6},
  number={3},
  pages={259-267}}

@inproceedings{Liu2021,
  title        = {Variable Decomposition for Prophet Inequalities and Optimal Ordering},
  author       = {Allen Liu and Renato Paes Leme and Martin Pál and Jon Schneider and Balasubramanian Sivan},
  year         = 2021,
  month        = {July},
  booktitle    = {EC'21: Proceedings of the 22nd ACM Conference on Economics and Computation},
  publisher    = {Association for Computing Machinery},
  pages        = {692},}

@article{abels2025prophet,
  title={Prophet inequalities over time},
  author={Abels, Andreas and Pitschmann, Elias and Schmand, Daniel},
  journal={ACM Transactions on Economics and Computation},
  volume={13},
  number={4},
  pages={1--36},
  year={2025},
  publisher={ACM New York, NY}
}

@article{correa2021posted,
  title={Posted price mechanisms and optimal threshold strategies for random arrivals},
  author={Correa, Jos{\'e} and Foncea, Patricio and Hoeksma, Ruben and Oosterwijk, Tim and Vredeveld, Tjark},
  journal={Mathematics of operations research},
  volume={46},
  number={4},
  pages={1452--1478},
  year={2021},
  publisher={INFORMS}
}

@inproceedings{hajiaghayi2007automated,
  title={Automated online mechanism design and prophet inequalities},
  author={Hajiaghayi, Mohammad Taghi and Kleinberg, Robert and Sandholm, Tuomas},
  booktitle={AAAI},
  volume={7},
  pages={58--65},
  year={2007}
}

@article{perez2025robust,
  title={Robust online selection with uncertain offer acceptance},
  author={Perez-Salazar, Sebastian and Singh, Mohit and Toriello, Alejandro},
  journal={Mathematics of Operations Research},
  volume={50},
  number={3},
  pages={2226--2260},
  year={2025},
  publisher={INFORMS}
}

@article{epstein2024selection,
  title={Selection and ordering policies for hiring pipelines via linear programming},
  author={Epstein, Boris and Ma, Will},
  journal={Operations Research},
  volume={72},
  number={5},
  pages={2000--2013},
  year={2024},
  publisher={INFORMS}
}

@inproceedings{babaioff2007knapsack,
  title={A knapsack secretary problem with applications},
  author={Babaioff, Moshe and Immorlica, Nicole and Kempe, David and Kleinberg, Robert},
  booktitle={International Workshop on Approximation Algorithms for Combinatorial Optimization},
  pages={16--28},
  year={2007},
  organization={Springer}
}

@inproceedings{kesselheim2014primal,
  title={Primal beats dual on online packing LPs in the random-order model},
  author={Kesselheim, Thomas and T{\"o}nnis, Andreas and Radke, Klaus and V{\"o}cking, Berthold},
  booktitle={Proceedings of the forty-sixth annual ACM symposium on Theory of computing},
  pages={303--312},
  year={2014}
}

@inproceedings{karp1990optimal,
  title={An optimal algorithm for on-line bipartite matching},
  author={Karp, Richard M. and Vazirani, Umesh V. and Vazirani, Vijay V.},
  booktitle={Proceedings of the twenty-second annual ACM symposium on Theory of computing},
  pages={352--358},
  year={1990}
}

@inproceedings{feldman2009online,
  title={Online stochastic matching: Beating 1-1/e},
  author={Feldman, Jon and Mehta, Aranyak and Mirrokni, Vahab and Muthukrishnan, Shan},
  booktitle={2009 50th Annual IEEE Symposium on Foundations of Computer Science},
  pages={117--126},
  year={2009},
  organization={IEEE}
}

@inproceedings{berzack2025dynamic,
  title={Dynamic Rental Games with Stagewise Individual Rationality},
  author={Berzack, Batya and Oshman, Rotem and Talgam-Cohen, Inbal},
  booktitle={Proceedings of the 26th ACM Conference on Economics and Computation},
  pages={785--785},
  year={2025}
}

@article{perez2025iid,
author = {Perez-Salazar, Sebastian and Singh, Mohit and Toriello, Alejandro},
title = {The I.I.D. Prophet Inequality with Limited Flexibility},
journal = {Mathematics of Operations Research},
volume = {51},
number = {1},
pages = {218-254},
year = {2026},
  publisher={INFORMS}
}

@article{Brustle2025splitting,
  title={Splitting Guarantees for Prophet Inequalities via Nonlinear Systems},
  author={Johannes Brustle and Sebastian Perez-Salazar and Victor Verdugo},
  journal={Mathematics of Operations Research},
  year={2025},
  publisher={INFORMS}
}

@BOOK{Kolmogorov1975analysis,
  TITLE = {Introductory Real Analysis},
  AUTHOR = {Andrey Nikolaevich Kolmogorov and Sergei Vasilyevich Fomin},
  YEAR = {1975}, 
  PUBLISHER = {Dover Publications},
}

@article{Krengel1977,
  title={Semiamarts and finite values},
  author={Ulrich Krengel and Louis Sucheston},
  journal={Bulletin of the American Mathematical Society},
  year={1977},
  volume={83},
  number={4},
  pages={745--747}
}

@article{SamuelCahn1984,
  title={Comparison of Threshold Stop Rules and Maximum for Independent Nonnegative Random Variables},
  author={Ester Samuel-Cahn},
  journal={The Annals of Probability},
  year={1984},
  volume={12},
  number={4},
  pages={1213--1216}
}

@inproceedings{Alaei2012onlineprophet,
  title={Online prophet-inequality matching with applications to ad allocation},
  author={Saeed Alaei and MohammadTaghi Hajiaghayi and Vahid Liaghat},
  booktitle={Proceedings of the 13th ACM Conference on Electronic Commerce},
  pages={18--35},
  year={2012},
  publisher = {Association for Computing Machinery}
}

@article{Esfandiari2017prophetsecretary,
author = {Esfandiari, Hossein and Hajiaghayi, MohammadTaghi and Liaghat, Vahid and Monemizadeh, Morteza},
title = {Prophet Secretary},
journal = {SIAM Journal on Discrete Mathematics},
volume = {31},
number = {3},
pages = {1685--1701},
year = {2017}
}

@inproceedings{Fiat2015temp,
  title={The Temp Secretary Problem},
  author={Amos Fiat and Ilia Gorelik and Haim Kaplan and Slava Novgorodov },
  booktitle={Bansal, N., Finocchi, I. (eds) Algorithms - ESA 2015.Lecture Notes in Computer Science},
  volume = {9294},
  pages={631--642},
  year={2015},
  organization = {Springer}
}

@inproceedings{Livanos2024minimization,
  title={Minimization is Harder in the Prophet World},
  author={Vasilis Livanos and Ruta Mehta},
  booktitle={Proceedings of the 2024 Annual ACM-SIAM Symposium on Discrete Algorithms},
  pages={424--461},
  year={2024}
}

@inproceedings{Chawla2010postedprice,
author = {Chawla, Shuchi and Hartline, Jason D. and Malec, David L. and Sivan, Balasubramanian},
title = {Multi-parameter mechanism design and sequential posted pricing},
year = {2010},
publisher = {Association for Computing Machinery},
booktitle = {Proceedings of the Forty-Second ACM Symposium on Theory of Computing},
pages = {311–320},
}

@article{Ehsani2024combinatorialauction,
author = {Soheil Ehsani and MohammadTaghi Hajiaghayi and Thomas Kesselheim and Sahil Singla},
title = {Prophet Secretary for Combinatorial Auctions and Matroids},
journal = {SIAM Journal on Computing},
volume = {53},
number = {6},
pages = {1641--1662},
year = {2024}
}

@inproceedings{Kleinberg2012matroid,
author = {Kleinberg, Robert and Weinberg, Seth Matthew},
title = {Matroid prophet inequalities},
year = {2012},
publisher = {Association for Computing Machinery},
booktitle = {Proceedings of the Forty-Fourth Annual ACM Symposium on Theory of Computing},
pages = {123–136},
}

@article{Qin2024,
author = {Qin, Junjie and Vardi, Shai and Wierman, Adam},
title = {Minimization Fractional Prophet Inequalities for Sequential Procurement},
journal = {Mathematics of Operations Research},
volume = {49},
number = {2},
pages = {928-947},
year = {2024},
publisher={INFORMS}
}

@article{HillKertz1982iid,
 author = {T. P. Hill and Robert P. Kertz},
 journal = {The Annals of Probability},
 number = {2},
 pages = {336--345},
 publisher = {Institute of Mathematical Statistics},
 title = {Comparisons of Stop Rule and Supremum Expectations of I.I.D. Random Variables},
 volume = {10},
 year = {1982}
}

@article{Kertz1986iid,
author = {Kertz, Robert P},
title = {Stop rule and supremum expectations of i.i.d. random variables: a complete comparison by conjugate duality},
year = {1986},
volume = {19},
number = {1},
journal = {Journal of Multivariate Analysis},
pages = {88–112},
}

@article{Jiang2025,
author = {Jiang, Jiashuo and Ma, Will and Zhang, Jiawei},
title = {Tight Guarantees for Multiunit Prophet Inequalities and Online Stochastic Knapsack},
journal = {Operations Research},
volume = {73},
number = {3},
pages = {1703-1721},
year = {2025},
publisher={INFORMS}
}

@article{Alaei2014,
author = {Alaei, Saeed},
title = {Bayesian Combinatorial Auctions: Expanding Single Buyer Mechanisms to Many Buyers},
journal = {SIAM Journal on Computing},
volume = {43},
number = {2},
pages = {930-972},
year = {2014},
}

@article{Buchbinder2014secretarylp,
author = {Buchbinder, Niv and Jain, Kamal and Singh, Mohit},
title = {Secretary Problems via Linear Programming},
journal = {Mathematics of Operations Research},
volume = {39},
number = {1},
pages = {190-206},
year = {2014},
publisher = {INFORMS}
}

@inproceedings{Rubinstein2019OptimalSP,
  title={Optimal Single-Choice Prophet Inequalities from Samples},
  author={Aviad Rubinstein and Jack Z. Wang and S. Matthew Weinberg},
  booktitle={Information Technology Convergence and Services},
  year={2019},
}

@inproceedings{Correa2019unknown,
author = {Correa, Jos\'{e} and D\"{u}tting, Paul and Fischer, Felix and Schewior, Kevin},
title = {Prophet Inequalities for I.I.D. Random Variables from an Unknown Distribution},
year = {2019},
publisher = {Association for Computing Machinery},
booktitle = {Proceedings of the 2019 ACM Conference on Economics and Computation},
pages = {3–17},
}

@inproceedings{Abolhassani2017beatingiid,
author = {Abolhassani, Melika and Ehsani, Soheil and Esfandiari, Hossein and HajiAghayi, MohammadTaghi and Kleinberg, Robert and Lucier, Brendan},
title = {Beating 1-1/e for ordered prophets},
year = {2017},
publisher = {Association for Computing Machinery},
booktitle = {Proceedings of the 49th Annual ACM SIGACT Symposium on Theory of Computing},
pages = {61–71},
}

@article{Dutting2020stochastic,
author = {D\"{u}tting, Paul and Feldman, Michal and Kesselheim, Thomas and Lucier, Brendan},
title = {Prophet Inequalities Made Easy: Stochastic Optimization by Pricing Nonstochastic Inputs},
journal = {SIAM Journal on Computing},
volume = {49},
number = {3},
pages = {540-582},
year = {2020},
}

@inproceedings{Rubinstein2017,
author = {Rubinstein, Aviad and Singla, Sahil},
title = {Combinatorial prophet inequalities},
year = {2017},
publisher = {Society for Industrial and Applied Mathematics},
booktitle = {Proceedings of the Twenty-Eighth Annual ACM-SIAM Symposium on Discrete Algorithms},
pages = {1671–1687},
}

@InProceedings{Kesselheim2016improvedtemp,
  author =	{Kesselheim, Thomas and T\"{o}nnis, Andreas},
  title =	{Think Eternally: Improved Algorithms for the Temp Secretary Problem and Extensions},
  booktitle =	{24th Annual European Symposium on Algorithms (ESA 2016)},
  pages =	{54:1--54:17},
  year =	{2016},
  volume = {57},
  publisher =	{Schloss Dagstuhl -- Leibniz-Zentrum f{\"u}r Informatik},
}

@article{Feng2024reusable,
author = {Feng, Yiding and Niazadeh, Rad and Saberi, Amin},
title = {Technical Note—Near-Optimal Bayesian Online Assortment of Reusable Resources},
journal = {Operations Research},
volume = {72},
number = {5},
pages = {1861-1873},
year = {2024},
publisher = {INFORMS}
}

@article{Faw2022reusable,
author = {Faw, Matthew and Papadigenopoulos, Orestis and Caramanis, Constantine and Shakkottai, Sanjay},
title = {Learning To Maximize Welfare with a Reusable Resource},
year = {2022},
publisher = {Association for Computing Machinery},
volume = {6},
number = {2},
journal = {Proceedings of the ACM on Measurement and Analysis of Computing Systems},
pages = {1--30}
}

@article{Goyal2023matching,
author = {Goyal, Vineet and Udwani, Rajan},
title = {Online Matching with Stochastic Rewards: Optimal Competitive Ratio via Path-Based Formulation},
journal = {Operations Research},
volume = {71},
number = {2},
pages = {563-580},
year = {2023},
publisher = {INFORMS}
}

@article{Mehta2007onlinematching,
author = {Mehta, Aranyak and Saberi, Amin and Vazirani, Umesh and Vazirani, Vijay},
title = {AdWords and generalized online matching},
year = {2007},
publisher = {Association for Computing Machinery},
volume = {54},
number = {5},
journal = {Journal of the ACM},
pages = {22–es},
}

@inproceedings{Chan2015secretaryLP,
author = {Chan, T-H. Hubert and Chen, Fei and Jiang, Shaofeng H.-C.},
title = {Revealing optimal thresholds for generalized secretary problem via continuous LP: impacts on online K-item auction and bipartite K-matching with random arrival order},
year = {2015},
publisher = {Society for Industrial and Applied Mathematics},
booktitle = {Proceedings of the Twenty-Sixth Annual ACM-SIAM Symposium on Discrete Algorithms},
pages = {1169–1188},
}

@article{Correa2024sample,
author = {Correa, Jos\'{e} and Cristi, Andr\'{e}s and Epstein, Boris and Soto, Jos\'{e} A.},
title = {Sample-Driven Optimal Stopping: From the Secretary Problem to the i.i.d. Prophet Inequality},
journal = {Mathematics of Operations Research},
volume = {49},
number = {1},
pages = {441-475},
year = {2024},
publisher = {INFORMS}
}

@InProceedings{Gupta2013stochastic,
author="Gupta, Anupam and Nagarajan, Viswanath",
editor="Goemans, Michel and Correa, Jos{\'e}",
title="A Stochastic Probing Problem with Applications",
booktitle="Integer Programming and Combinatorial Optimization",
year="2013",
publisher="Springer Berlin Heidelberg",
pages="205--216",
}

@article{Brustle2024complexity,
author = {Brustle, Johannes and Correa, Jos\'{e} and Duetting, Paul and Verdugo, Victor},
title = {The Competition Complexity of Dynamic Pricing},
journal = {Mathematics of Operations Research},
volume = {49},
number = {3},
pages = {1986-2008},
year = {2024},
publisher = {INFORMS}
}

@book{Havil2003,
 author = {Julian Havil},
 publisher = {Princeton University Press},
 title = {Gamma: Exploring Euler's Constant},
 year = {2003}
}

\small
\appendix

\section{Missing Proofs from Section~\ref{section:assumption}} \label{app:assumption}

\begin{proof}[Proof of Proposition \ref{prop:assumption}]
We show that given an algorithm $\text{ALG}$ such that $\ALG_n(\hat{F}) \leq \beta \cdot \OPT_n(\hat{F})$ for any $\hat{F}$ strictly increasing and continuously differentiable, there exists an algorithm $\text{ALG}'$ such that $\ALG'_n(F) \leq \beta(1 + \frac{2\varepsilon}{n}) \cdot \OPT_n(F)$ for any $F \in \mathcal{F}$ and $\varepsilon > 0$. The statement then follows by taking $\varepsilon$ arbitrarily small.  

For any $F \in \mathcal{F}$, with the standard smoothing argument which results in a loss that can be made arbitrarily small (see \cite{Liu2021}), we assume $F$ is continuously differentiable. Since rescaling does not impact ratios, we also assume that $\OPT_n(F) = 1$. Suppose $\{X_i\}_{i=1}^n$ are i.i.d.\ random variables following $F$. For any $\varepsilon > 0$, consider random variables $\{\hat{X}_i\}_{i=1}^n$ that with probability $\frac{1}{1 + \varepsilon / n^3}$ follow $F$ and with probability $\frac{\varepsilon / n^3}{1 + \varepsilon / n^3}$ follow $\text{Exp}(1)$. The distribution $\hat{F}$ of $\{\hat{X}_i\}_{i=1}^n$ is given by $\hat{F}(x) = \frac{1}{1 + {\varepsilon}/{n^3}} F(x) + \frac{{\varepsilon}/{n^3}}{1 + {\varepsilon}/{n^3}}(1 - e^{-x})$. Then since $\hat{F}'(x) = \frac{1}{1 + \varepsilon / n^3} F'(x) + \frac{\varepsilon / n^3}{1 + \varepsilon / n^3} \cdot e^{-x} > 0$, $\hat{F}$ is strictly increasing, which allows us to apply $\text{ALG}$ and get $\ALG_n(\hat{F}) \leq \beta \cdot \OPT_n(\hat{F})$. Note that $1 - \hat{F}(x) = \frac{1}{1 + \varepsilon / n^3} (1 - F(x)) + \frac{\varepsilon / n^3}{1 + \varepsilon / n^3} \cdot e^{-x}$. From here, we have the following upper bound on $\OPT_n(\hat{F}) = \sum_{i=1}^n \E \left[\min_{l \in \{1, \ldots, i\}} \{\hat{X}_l\} \right] = \sum_{i=1}^n \int_0^{\infty} (1 - \hat{F}(x))^i \, \mathrm{d}x$, 
\begin{align*}
\OPT_n(\hat{F}) &= \sum_{i=1}^n \int_0^{\infty} \left(\frac{1}{1 + \frac{\varepsilon}{n^3}} (1 - F(x)) + \frac{\frac{\varepsilon}{n^3}}{1 + \frac{\varepsilon}{n^3}} \cdot e^{-x} \right)^i \, \mathrm{d}x \\
&\leq \sum_{i=1}^n \int_0^{\infty} \frac{1}{\left( 1 + \frac{\varepsilon}{n^3}\right)^i} \left( (1-F(x))^i + i\left( 1 + \frac{\varepsilon}{n^3}\right)^{i-1} \cdot \frac{\varepsilon}{n^3} \cdot e^{-x} \right)\, \mathrm{d}x \\
&\leq \sum_{i=1}^n \frac{1}{\left( 1 + \frac{\varepsilon}{n^3}\right)^i} \int_0^{\infty} (1-F(x))^i \, \mathrm{d}x + \sum_{i=1}^n \int_0^{\infty} \frac{i \cdot \varepsilon \cdot e^{-x}}{n^3} \, \mathrm{d}x \\
&\leq  \frac{1}{1 + \frac{\varepsilon}{n^3}} \OPT_n(F) + \frac{\varepsilon}{n^3} \int_0^{\infty} e^{-x} \, \mathrm{d}x \cdot \sum_{i=1}^n i = \frac{1}{1 + \frac{\varepsilon}{n^3}} \OPT_n(F) + \frac{\varepsilon}{n^3} \cdot \frac{n(n+1)}{2}, 
\end{align*} 
where the first inequality comes from Claim \ref{claim:inequality}, the second inequality is due to $\frac{i \cdot \varepsilon \cdot e^{-x}}{n^3(1 + \varepsilon / n^3)} \leq \frac{i \cdot \varepsilon \cdot e^{-x}}{n^3}$, and the last inequality is due to $\frac{1}{(1 + \varepsilon / n^3)^i} \leq \frac{1}{1 + \varepsilon / n^3}$ for all $i \geq 1$. 

\begin{claim} \label{claim:inequality}
$\left( 1 - F(x) + \frac{\varepsilon \cdot e^{-x}}{n^3}\right)^i \leq (1 - F(x))^i + i\left( 1 + \frac{\varepsilon}{n^3}\right)^{i-1} \cdot \frac{\varepsilon \cdot e^{-x}}{n^3}$ for $i \in \{1, \ldots, n\}$, $\varepsilon > 0$, and $x \in [0, \infty)$. 
\end{claim}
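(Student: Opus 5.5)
Write $y=1-F(x)\in[0,1]$ and $c=\varepsilon e^{-x}/n^3$, so that $0\le c\le \varepsilon/n^3$. In this notation Claim~\ref{claim:inequality} says
\[
(y+c)^i \le y^i + i(1+\varepsilon/n^3)^{i-1}c .
\]
I would prove it with the mean value theorem applied to the convex map $t\mapsto t^i$ on $[y,y+c]$, which gives
\[
(y+c)^i-y^i = i\,\xi^{i-1}c \quad\text{for some } \xi\in[y,y+c],
\]
and hence $(y+c)^i-y^i \le i(y+c)^{i-1}c$. Equivalently, one can expand binomially: $(y+c)^i-y^i=\sum_{k\ge1}\binom{i}{k}y^{i-k}c^k$, and each term is bounded using $y\le1$ together with $\binom{i}{k}\le i\binom{i-1}{k-1}$, which yields $\le i\,c\,(1+c)^{i-1}$.

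It then remains to bound $\xi^{i-1}$. Since $\xi\le y+c\le 1+c\le 1+\varepsilon/n^3$ and $i-1\ge0$, we get $\xi^{i-1}\le(1+\varepsilon/n^3)^{i-1}$. Multiplying by $ic\ge0$ gives exactly the stated bound.

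The case $i=1$ is trivial, since it holds with equality. No real obstacle is expected. The only point to watch is that $y\le1$, which holds because $F$ is a CDF; this keeps the base of the power at most $1+\varepsilon/n^3$ rather than something larger.
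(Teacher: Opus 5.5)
Your proposal is correct and follows essentially the same approach as the paper: the paper uses convexity of $\phi(t)=(1-F(x)+t)^i$ to get $\phi(c)\le\phi(0)+\phi'(c)\,c$, which is exactly what your mean value theorem step combined with monotonicity of $t^{i-1}$ yields. Both arguments then finish with $1-F(x)\le 1$ and $e^{-x}\le 1$.
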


Now we construct a new algorithm $\text{ALG}'$ from $\text{ALG}$ that runs on an instance for OSDC or OSCC with $F$ and $n$ time steps as follows: at time $i$, with probability $\frac{1}{1 + \varepsilon / n^3}$ , observe the realization of $X_i$ following $F$; otherwise, draw a sample from $\text{Exp}(1)$. As a result, the ``observed value" now becomes $\{\hat{X}\}_{i=1}^n$. Run $\text{ALG}$ on $\{\hat{X}\}_{i=1}^n$. If the realization of $\hat{X}_i$ comes from $F$, follow the decision of $\text{ALG}$; if the realization comes from $\text{Exp}(1)$, accept $\hat{X}_i$ for all remaining time steps and pay the corresponding cost. In the worst case, this contract would last for $n$ time steps, therefore, $\ALG'_n(F)$ have the following upper bound, 
\begin{align*}
\ALG'_n(F) &\leq \ALG_n(\hat{F}) +  \mathbb{P}[\exists \hat{X}_i \sim \text{Exp}(1)] \cdot 1 \cdot n \\
& \leq \beta \cdot \OPT_n(\hat{F}) + \left(1 - \left(1 - \frac{\frac{\varepsilon}{n^3}}{1 + \frac{\varepsilon}{n^3}} \right)^n\right) \cdot n \\
& \leq \beta \cdot \left( \frac{1}{1 + \frac{\varepsilon}{n^3}} \OPT_n(F) + \frac{\varepsilon}{n^3} \cdot \frac{n(n+1)}{2}\right) + \frac{n \cdot \frac{\varepsilon}{n^3}}{1 + \frac{\varepsilon}{n^3}} \cdot n \\
&\leq \beta \cdot \frac{1}{1 + \frac{\varepsilon}{n^3}} \OPT_n(F) + \beta \cdot \frac{2\varepsilon}{n} \leq \beta\left( 1 + \frac{2\varepsilon}{n}\right) \cdot \OPT_n(F),
\end{align*}
where in the third inequality, we applied the binomial expansion and got $\left(1 - \frac{\varepsilon / n^3}{1 + \varepsilon / n^3} \right)^n \geq 1 - \frac{n \cdot \varepsilon / n^3}{1 + \varepsilon / n^3}$, in the second last inequality, we used $\beta \geq 1$, $n+1 \leq 2n$ for $n \geq 1$ and $\frac{1}{1 + \varepsilon / n^3} \leq 1$, and in the last inequality, we used $\frac{1}{\left(1 + \varepsilon / n^3\right)^2} \leq 1$ and $\OPT_n(F) = 1$. This concludes the proof. 
\end{proof}

\begin{proof}[Proof of Claim \ref{claim:inequality}]
Define $\phi(t) = (1 - F(x) + t)^i$. Since $\phi$ is convex, we have $\phi\left( \frac{\varepsilon}{n^3} \cdot e^{-x}\right) \leq \phi(0) + \phi'\left( \frac{\varepsilon}{n^3} \cdot e^{-x}\right) \cdot \frac{\varepsilon}{n^3} \cdot e^{-x}$. Applying $1 - F(x) \leq 1$ and $e^{-x} \leq 1$ for nonnegative $x$ gives the desired result. 
\end{proof}

\begin{proof}[Proof of Lemma \ref{lm:opt}]
Since $\OPT_n(F) = \sum_{i=1}^n \mathbb{E} \left[\min_{l = 1, \ldots, i} \{X_l\} \right]$, we have 
\begin{align*}
\OPT_n(F) = \sum_{i=1}^n \int_0^{\infty} x \cdot i(1 - F(x))^{i-1} \cdot f(x) \, \mathrm{d}x = \int_0^1 F^{-1}(u) \cdot \sum_{i=1}^n i(1-u)^{i-1} \, \mathrm{d}u,
\end{align*}
where in the first equality we computed the distribution of the first order statistic, and in the second equality we introduced the change of variable $u = F(x)$ and brought the sum inside the integral. Since $F^{-1}(u)$ is non-decreasing and differentiable in $u$, it can be written in terms of the integral $\int_0^u r(v) \, \mathrm{d}v$, for some nonnegative function $r$. By exchanging the order of integrals and performing the integration gives us the second characterization of $\OPT_n$.
\end{proof} 

\section{Missing Proofs from Section~\ref{section:lowerbound}} \label{app:lowerbound}

\begin{proof}[Proof of Lemma \ref{lm:lbweakdual}]
Multiplying Constraint~\eqref{constraint:LBD4} with $h(u)$ and integrating, we have 
\begin{align*}
\zeta \int_0^1 h(u) \cdot P_n'(1-u) \, \mathrm{d}u + \int_0^1 h(u) \cdot \frac{d \eta(u)}{du} \, \mathrm{d}u \geq \alpha_1 \int_0^1 h(u) \, \mathrm{d}u + \int_0^1 h(u) \cdot \left( \sum_{i=2}^n \int_u^1 \alpha_{i}(q) \, \mathrm{d}q \right) \, \mathrm{d}u.
\end{align*}
Applying Constraint~\eqref{constraint:LBP3} to the first term on the LHS and exchanging the order of integration in the RHS,  
\begin{align*}
\zeta + \int_0^1 h(u) \cdot \frac{d \eta(u)}{du} \, \mathrm{d}u &\geq \alpha_1 \int_0^1 h(u) \, \mathrm{d}u + \int_0^1 \sum_{i=2}^n \alpha_{i}(q) \int_0^q h(u) \, \mathrm{d}u \, \mathrm{d}q \\
&\geq \alpha_1 \cdot \hat{d}_1 + \int_0^1 \sum_{i=2}^n \alpha_{i}(q) \left(\hat{d}_i - (1-q)\hat{d}_{i-1} \right) \, \mathrm{d}q \geq \sum_{i=1}^n \hat{d}_i,
\end{align*}
where the second inequality applied Constraint~\eqref{constraint:LBP1}-\eqref{constraint:LBP2}, and the last inequality is obtained by regrouping the terms into $\left(\alpha_1 - \int_0^1 (1-q)\alpha_2(q) \, \mathrm{d}q \right) \hat{d}_1 + \int_0^1 \alpha_n(q) \, \mathrm{d}q \cdot \hat{d}_n + \sum_{i=2}^{n-1} \left(\int_0^1 \alpha_i(q) \, \mathrm{d}q - \int_0^1 (1-q)\alpha_{i+1}(q) \, \mathrm{d}q \right) \hat{d}_i$ and applying Constraint~\eqref{constraint:LBD1}-\eqref{constraint:LBD3}. Weak duality follows once we show $\int_0^1 h(u) \cdot \frac{d \eta(u)}{du} \, \mathrm{d}u \leq 0$. Indeed, 
\begin{align*}
\int_0^1 h(u) \cdot \frac{d \eta(u)}{du} \, \mathrm{d}u &= (h(1) \cdot \eta(1) - h(0) \cdot \eta(0)) - \int_0^1 \eta(u) \cdot \frac{d h(u)}{du} \, \mathrm{d}u \leq - \int_0^1 \eta(u) \cdot \frac{d h(u)}{du} \, \mathrm{d}u \leq 0,
\end{align*}
where the first inequality applied Constraint~\eqref{constraint:LBD5} and the nonnegativity of $h$, and the last inequality applied the nonnegativity of both $\eta$ and $\frac{d h(u)}{du}$ since $h$ is nondecreasing. This concludes the proof. 
\end{proof}

\begin{proof}[Proof of Lemma \ref{lm:lbgammaunique}]
Note that to have $\varepsilon_n = 0$, we can also solve for $\varepsilon_j$ backwards using \eqref{eq:lbrecurrence} starting from $\varepsilon_n = 0$, and seek $\zeta$ that eventually leads to $\varepsilon_1 = 1$. The proof consists of three parts. We first prove the differentiability and monotonicity of each $\varepsilon_j$ for $j = n-1, \ldots, 1$ with respect to $\zeta$. We then show that $\zeta > 1$ is a necessary condition to ensure the desired properties in $\{\varepsilon_j\}_{j=1}^n$. Finally, we show that $\{\varepsilon_j\}_{j=1}^n$ is strictly decreasing when $\zeta > 1$, and monotonicity in $\zeta$ allows us to conclude the uniqueness of $\zeta$ that achieves $\varepsilon_1 = 1$.
\paragraph{Differentiability and Monotonicity in $\zeta$:} \eqref{eq:lbrecurrence} can be rearranged into 
\begin{align}
P_n'(1-\varepsilon_{j}) - \frac{j}{\zeta} = (1-\varepsilon_{j+1}) P_n'(1-\varepsilon_{j+1}) - P_n(1-\varepsilon_{j+1}). \label{eq:lbrecurrence2}
\end{align}
First, note that the LHS of \eqref{eq:lbrecurrence2} is continuous and strictly decreasing in $\varepsilon_j$, therefore it is invertible with a differentiable inverse, thus all $\frac{\partial \varepsilon_j}{\partial \zeta}$ are well-defined for $\varepsilon_j \in [0, 1]$. By induction, we show that $\frac{\partial \varepsilon_j}{\partial \zeta} > 0$ for all $j = n-1, \ldots, 1$. At $j = n-1$, with $\varepsilon_n = 0$, the RHS of \eqref{eq:lbrecurrence2} evaluates to $P_n'(1) - P_n(1) = \frac{n(n+1)}{2} - n = \frac{n(n-1)}{2}$. Taking derivative of \eqref{eq:lbrecurrence2} with respect to $\zeta$, we get
\begin{align*}
-P_n''(1-\varepsilon_{n-1}) \cdot \frac{\partial \varepsilon_{n-1}}{\partial \zeta} + \frac{n-1}{\zeta^2} = 0,
\end{align*}
which gives $\frac{\partial \varepsilon_{n-1}}{\partial \zeta} = \frac{n-1}{\zeta^2 P_n''(1-\varepsilon_{n-1})} > 0$ because $P_n''(t) = \sum_{i=1}^n i(i-1) t^{i-2} > 0$ for all $t \in [0, 1]$. Now suppose $\frac{\partial \varepsilon_j}{\partial \zeta} > 0$ for all $j = n-1, \ldots, k+1$ for some $k \geq 1$. At $j = k$, taking derivative of \eqref{eq:lbrecurrence2} with respect to $\zeta$,
\begin{align*}
&\qquad -P_n''(1-\varepsilon_{k}) \cdot \frac{\partial \varepsilon_{k}}{\partial \zeta} + \frac{k}{\zeta^2} = -(1-\varepsilon_{k+1}) P_n''(1-\varepsilon_{k+1}) \cdot \frac{\partial \varepsilon_{k+1}}{\partial \zeta}, 
\end{align*}
so it follows from the induction hypothesis $\frac{\partial \varepsilon_{k+1}}{\partial \zeta} > 0$ that $\frac{\partial \varepsilon_{k}}{\partial \zeta} > 0$. Therefore, $\varepsilon_j$ for all $j = n-1, \ldots, 1$ are strictly increasing in $\zeta$.
\paragraph{$\zeta > 1$ is necessary:} We now show that to guarantee $\{\varepsilon_j\}_{j=1}^n$ strictly decreasing in $j$ with $\varepsilon_1 = 1$, a necessary condition is $\zeta > 1$. First note that in the RHS of \eqref{eq:lbrecurrence2}, $(1-\varepsilon) P_n'(1-\varepsilon) - P_n(1-\varepsilon) = \sum_{i=1}^n i(1-\varepsilon)^i$ is decreasing in $\varepsilon$. Therefore, for $\varepsilon_{j-1} > \varepsilon_j$, we need the RHS at $j-1$ to be smaller than the RHS at $j$: 
\begin{align*}
(1-\varepsilon_{j}) P_n'(1-\varepsilon_{j}) - P_n(1-\varepsilon_{j}) < (1-\varepsilon_{j+1}) P_n'(1-\varepsilon_{j+1}) - P_n(1-\varepsilon_{j+1}) = P_n'(1-\varepsilon_{j}) - \frac{j}{\zeta}, 
\end{align*}
where the equality is simply \eqref{eq:lbrecurrence2} at $j$. At $j = 1$, this becomes
\begin{align*}
(1-\varepsilon_{1}) P_n'(1-\varepsilon_{1}) - P_n(1-\varepsilon_{1}) < P_n'(1-\varepsilon_{1}) - \frac{1}{\zeta}.
\end{align*}
Substituting $\varepsilon_1 = 1$ into the above gives $\zeta > 1$. Therefore, to have both $\varepsilon_{j}$ decreasing in $j$ and $\varepsilon_1 = 1$, $\zeta$ has to be greater than $1$. 
\paragraph{Monotonicity of $\varepsilon_j$:} We now show by induction that with $\zeta > 1$, the resulting $\{\varepsilon_j\}_{j=1}^n$ is strictly decreasing in $j$. Recall that given $\varepsilon_{j+1}$, $\varepsilon_j$ can be solved using \eqref{eq:lbrecurrence2}, so at $j = n-1$, we solve the equation 
\begin{align*}
P_n'(1-\varepsilon_{n-1}) - \frac{n-1}{\zeta} = \frac{n(n-1)}{2}.
\end{align*}
To enforce $\varepsilon_{n-1} > \varepsilon_{n}$, we require $P_n'(1-\varepsilon_{n-1}) < P_n'(1-\varepsilon_{n}) = \frac{n(n+1)}{2}$. Plugging this into the above equation, we have $\frac{n(n+1)}{2} - \frac{n-1}{\zeta} > \frac{n(n-1)}{2}$, which implies having $\zeta > \frac{n-1}{n}$ ensures $\varepsilon_{n-1} > \varepsilon_{n}$. This is directly satisfied by $\zeta > 1$. Now suppose we already have $\varepsilon_{k+1} > \varepsilon_{k+2} > \ldots > \varepsilon_n$ for some $k \geq 1$. Then, since both $\varepsilon_{k}$ and $\varepsilon_{k+1}$ has to satisfy equation \eqref{eq:lbrecurrence2}, we have 
\begin{align*}
P_n'(1-\varepsilon_{k}) &= (1-\varepsilon_{k+1}) P_n'(1-\varepsilon_{k+1}) - P_n(1-\varepsilon_{k+1}) + \frac{k}{\zeta} \\
&< (1-\varepsilon_{k+2}) P_n'(1-\varepsilon_{k+2}) - P_n(1-\varepsilon_{k+2}) + \frac{k+1}{\zeta} = P_n'(1-\varepsilon_{k+1}),
\end{align*}
where the inequality used the induction hypothesis $\varepsilon_{k+1} > \varepsilon_{k+2}$ and the RHS of \eqref{eq:lbrecurrence2} is decreasing in $\varepsilon$. Since $P_n'(1-\varepsilon)$ is decreasing in $\varepsilon$, the above implies $\varepsilon_k > \varepsilon_{k+1}$. Following the induction, we obtain $\varepsilon_{1} > \varepsilon_{2} > \ldots > \varepsilon_n = 0$ for $\zeta > 1$. 

Given the above properties, nonnegativity of $\{\varepsilon_j\}_{j=1}^n$ follows immediately from the monotonicity and $\varepsilon_n = 0$. Since for $j < n$, all $\varepsilon_j$ are strictly increasing in $\zeta$, we can tune $\zeta$ by increasing its value until the resulting $\{\varepsilon_j\}_{j=1}^n$ has exactly $\varepsilon_1 = 1$. Uniqueness of this value for $\zeta$ is ensured by the monotonicity of $\varepsilon_j(\zeta)$. This concludes the proof. 
\end{proof}

\begin{proof}[Proof of Lemma \ref{lm:lbprimalsol}]
Recall $h(u) = \sum_{i=1}^{n-1} d_i \cdot \mathbbm{1}_{[\varepsilon_{i+1}, \varepsilon_{i})}(u) + \Delta_2 \cdot \delta_{\{1\}}$. We first verify the nonnegativity of the solution. The System~\eqref{eq:lbsimuleq1}-\eqref{eq:lbsimuleq2} has the unique solution $d_{n-1} = \left(\varepsilon_n \cdot \frac{1+S_1}{S_2} + P_n(1-\varepsilon_n) \right)^{-1}$ and $\Delta_2 = \frac{\varepsilon_n}{S_2} \cdot d_{n-1}$. Recall that both $S_1 = \sum_{i=2}^{n-1} P_n(1-\varepsilon_i) \prod_{k=2}^{i-1} (1-\varepsilon_k)$ and $S_2 = \prod_{k=2}^{n-1} (1-\varepsilon_k)$ are positive and $\varepsilon_n \in [0, 1)$, so both $\Delta_2$ and $d_{n-1}$ in the solution are positive. Then since $d_1 = d_{n-1} + \Delta_2 \sum_{i=1}^{n-2}\prod_{k=2}^i (1 - \varepsilon_k) > d_{n-1} + \Delta_2$, we have both $d_1 > 0$ and $d_2 = d_1 - \Delta_2 > 0$. Recall from the construction of $\{d_i\}_{i=1}^n$ that $d_{i+1} = d_i - \Delta_2 \prod_{k=2}^i (1 - \varepsilon_k)$. With this at $i = n-1$, \eqref{eq:lbsimuleq2} becomes $d_{n-1} - d_n - \varepsilon_n \cdot d_{n-1} = 0$, which gives the identity $d_n = (1-\varepsilon_n) d_{n-1} > 0$. This construction also ensures that $d_i$ is non-increasing in $i$, so $d_i \geq d_n > 0$ for all $i = 3, \ldots, n-2$. Therefore, both $\{d_i\}_{i=1}^n$ and $h$ are nonnegative. Moreover, it follows from the construction that $h$ is non-decreasing. 

Now inductively, we show that for each $i$, the solution $\hat{d}_i = d_i$ tightens Constraint~\eqref{constraint:LBP1}-\eqref{constraint:LBP2}. For $i = 1$,  
\begin{align*}
\hat{d}_1 = \int_0^1 h(u) \, \mathrm{d}u = \sum_{i=1}^{n-1} d_i(\varepsilon_i - \varepsilon_{i+1}) + \Delta_2 = d_1 - \sum_{i=2}^{n-1} \Delta_i \cdot \varepsilon_i - \varepsilon_n \cdot d_{n-1} + \Delta_2 = d_1,
\end{align*}
where the second last equality regrouped the terms and used $\varepsilon_1 = 1$, and the last equality evaluated the sum: from $\frac{\Delta_{i+1}}{\Delta_i} = 1 - \varepsilon_i$ we get $\varepsilon_i = \frac{\Delta_i - \Delta_{i+1}}{\Delta_i}$, so $\sum_{i=2}^{n-1} \Delta_i\varepsilon_i = \sum_{i=2}^{n-1} (\Delta_i - \Delta_{i+1}) = \Delta_2 - \Delta_n$, while $\Delta_n = \varepsilon_n \cdot d_{n-1}$ follows from $d_n = (1-\varepsilon_n) d_{n-1}$. Now suppose $\hat{d}_i = d_i$ for $i = 1, \ldots, k$ for some $k < n$. For $i = k+1$, first note that \eqref{constraint:LBP2} holds for all $q \in [0, 1]$, therefore the constraint is tightened at $q$ that minimizes its RHS. Solving the first order condition $h(q) - d_k = 0$, we obtain $h(q) = d_k$, which implies that $q \in [\varepsilon_{k+1}, \varepsilon_k)$ minimizes the RHS, thus tightens the constraint. Therefore, $\hat{d}_{k+1} = \int_0^q h(u) \, \mathrm{d}u + (1-q)\cdot \hat{d}_{k}$ for $q \in [\varepsilon_{k+1}, \varepsilon_k)$, so
\begin{align*}
\hat{d}_{k+1} &= \int_0^{\varepsilon_{k+1}} h(u) \, \mathrm{d}u + \int_{\varepsilon_{k+1}}^q h(u) \, \mathrm{d}u + (1-q)d_{k} = \sum_{i=k+1}^{n-1} d_i(\varepsilon_i - \varepsilon_{i+1}) + (1 - \varepsilon_{k+1})d_k \\
&= d_k - \sum_{i=k+1}^{n-1} \Delta_i \cdot \varepsilon_i - \varepsilon_n \cdot d_{n-1} = d_k - (\Delta_{k+1} - \Delta_n) - \Delta_n = d_{k+1},
\end{align*}
where the first equality splits the integral into the interval $[0, \varepsilon_{k+1})$ and $[\varepsilon_{k+1}, q]$ and applied the induction hypothesis $\hat{d}_k = d_k$, the third equality regrouped terms, and the second last equality evaluated the sum using the same calculation as in the base case $i=1$. Finally for $i = n$, 
\begin{align*}
\hat{d}_n = \int_{0}^{\varepsilon_n} h(u) \, \mathrm{d}u + (1-\varepsilon_n)\hat{d}_{n-1} = (1-\varepsilon_n)\hat{d}_{n-1} = (1-\varepsilon_n)d_{n-1} = d_n 
\end{align*}
since $h(u) = 0$ on $[0, \varepsilon_n)$. It follows that $\hat{d}_i = d_i$ for $i = 1, \ldots, n$, which proves the feasibility and tightness of $\{d_i\}_{i=1}^{n-1}$ to Constraint~\eqref{constraint:LBP1}-\eqref{constraint:LBP2}.

For the feasibility of Constraint~\eqref{constraint:LBP3}, evaluating the LHS of the Constraint~\eqref{constraint:LBP3} with the proposed $h$, 
\begin{align*}
\int_0^1 h(u) \cdot P_n'(1-u) \, \mathrm{d}u &= \sum_{i=1}^{n-1} d_i(P_n(1-\varepsilon_{i+1}) - P_n(1-\varepsilon_{i})) + \Delta_2 \\
&= \sum_{i=2}^{n-1} \Delta_2 \prod_{k=2}^{i-1}(1-\varepsilon_k) \cdot P_n(1-\varepsilon_{i}) + d_{n-1} \cdot P_n(1 - \varepsilon_n) + \Delta_2,
\end{align*}
where the last equality regrouped terms and applied $P_n(1 - \varepsilon_1) = 0$ and $\Delta_{i+1} = \Delta_2 \prod_{k=2}^i (1 - \varepsilon_k)$ from the construction. To satisfy Constraint~\eqref{constraint:LBP3}, we equate the above to $1$. But this is exactly equation \eqref{eq:lbsimuleq1} after rearranging. Therefore, $(\{d_i\}_{i=1}^n, h)$ satisfies Constraint~\eqref{constraint:LBP3}. 
\end{proof}

\section{Asymptotic Analysis of Theorem~\ref{thm:lowerbound} (Proof of Proposition~\ref{prop:lowerbound})} \label{app:asymptotic}

We now present the derivation of the differential equation system that yields the asymptotic competitive ratio. To study what happens as $n \rightarrow \infty$, we let $\varepsilon_j = \frac{\lambda_{j}}{n}$ and multiply \eqref{eq:lbrecurrence} by $\frac{1}{n}$: 
\begin{align*}
&\underbrace{\frac{1}{n}\left[P_n'\left( 1 - \frac{\lambda_{j+1}}{n}\right) - P_n'\left( 1 - \frac{\lambda_{j}}{n}\right) \right]}_{A} = \underbrace{\frac{1}{n} \cdot \frac{\lambda_{j+1}}{n}P_n'\left( 1 - \frac{\lambda_{j+1}}{n}\right)}_B + \underbrace{\frac{1}{n} \cdot P_n\left( 1 - \frac{\lambda_{j+1}}{n}\right)}_C - \frac{j/n}{\zeta}.
\end{align*} 
Now, suppose $\lim_{n \rightarrow \infty} \frac{j}{n} = x \in (0, 1)$. Also suppose that the piecewise function obtained by joining all $\lambda_j$ converges: for each $n$, define the piecewise linear function $\lambda_n: [0, 1] \rightarrow \mathbb{R}$ to be linear between $\lambda_n \left( \frac{j-1}{n}\right)$ and $\lambda_n \left( \frac{j}{n}\right)$ with $\lambda_n \left( \frac{j}{n}\right) = \lambda_j$, and let $\lambda(x) = \lim_{n \rightarrow \infty} \lambda_n \left( \frac{j}{n}\right)$. Now we analyze the behavior for each of $A$, $B$, and $C$ as $n \rightarrow \infty$. In what follows, note that $P_n'(t)$ has the closed form $\frac{1 - t^n (1 + n(1-t))}{(1-t)^2}$, and $P_n(t)$ has the closed form $\frac{t(1-t^n)}{1-t}$. With $n \rightarrow \infty$, for $A$ we have 
\begin{align*}
\frac{1}{n}\left[P_n'\left( 1 - \frac{\lambda_{j+1}}{n}\right) - P_n'\left( 1 - \frac{\lambda_{j}}{n}\right) \right] &= \frac{1}{1/n} \cdot \left( \frac{1 - \left( 1 - \frac{\lambda_{j+1}}{n}\right)^n (1 + \lambda_{j+1})}{(\lambda_{j+1})^2} - \frac{1 - \left( 1 - \frac{\lambda_{j}}{n}\right)^n (1 + \lambda_{j})}{(\lambda_{j})^2}\right) \\
&\rightarrow \left( \frac{1 - e^{-\lambda(x)}(1 + \lambda(x))}{(\lambda(x))^2}\right)' \\*
&= - \lambda'(x) \left(-\frac{e^{-\lambda(x)}}{\lambda(x)} -\frac{2e^{-\lambda(x)}}{(\lambda(x))^2} + \frac{2}{(\lambda(x))^3}\cdot \left(1 - e^{\lambda(x)}\right) \right),
\end{align*}
For $B$ we have 
\begin{align*}
\frac{1}{n} \cdot \frac{\lambda_{j+1}}{n}P_n'\left( 1 - \frac{\lambda_{j+1}}{n}\right) &= \frac{1 - \left( 1 - \frac{\lambda_{j+1}}{n}\right)^n (1 + \lambda_{j+1})}{\lambda_{j+1}} \rightarrow \frac{1 - e^{-\lambda(x)}(1 + \lambda(x))}{\lambda(x)} = \frac{1 - e^{-\lambda(x)}}{\lambda(x)} - e^{-\lambda(x)}.
\end{align*}
For $C$ we have  
\begin{align*}
\frac{1}{n} \cdot P_n\left( 1 - \frac{\lambda_{j+1}}{n}\right) &= \left( 1 - \frac{\lambda_{j+1}}{n}\right) \cdot \frac{1 - \left( 1 - \frac{\lambda_{j+1}}{n}\right)^n}{\lambda_{j+1}} \rightarrow \frac{1 - e^{-\lambda(x)}}{\lambda(x)}.
\end{align*}
Lastly, in the limit, $\frac{j/n}{\zeta}$ converges to $\frac{x}{\zeta}$. Combining everything, 
\begin{align*}
\lambda'(x) \left(\frac{e^{-\lambda(x)}}{\lambda(x)} + \frac{2e^{-\lambda(x)}}{(\lambda(x))^2} - \frac{2}{(\lambda(x))^3}\cdot \left(1 - e^{\lambda(x)}\right) \right) = 2 \cdot \frac{1 - e^{-\lambda(x)}}{\lambda(x)} - e^{-\lambda(x)} - \frac{x}{\zeta}.
\end{align*}
Assuming $\varepsilon_n$ approaches $0$ as $n \rightarrow \infty$, the above equation has boundary conditions $\lambda(0) = +\infty$, $\lim_{x \uparrow 1}\lambda(x) = 0$. Applying the change of variable $y = e^{-\lambda(x)}$ gives the System~\eqref{eq:ode11}-\eqref{eq:ode22}. The following result shows that there exists a unique $\zeta$ such that this system has a solution. 
\begin{lemma} \label{lm:lbode}
There exists a unique $\zeta^* > 1$ such that the System~\eqref{eq:ode11}-\eqref{eq:ode22} has a unique solution.  
\end{lemma}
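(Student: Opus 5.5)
The plan is a shooting argument in $\zeta$, mirroring the discrete argument in the proof of Lemma~\ref{lm:lbgammaunique}. I will work with the equivalent formulation in $\lambda = -\ln y$,
\[
\lambda'(x)\,K(\lambda(x)) = 2\,\frac{1-e^{-\lambda(x)}}{\lambda(x)} - e^{-\lambda(x)} - \frac{x}{\zeta},
\]
with boundary conditions $\lambda(0)=+\infty$ and $\lambda(1^-)=0$. Here $K(\lambda)$ is the coefficient derived above, which is $-\tfrac{d}{d\lambda}G(\lambda)$ for
\[
G(\lambda)=\frac{1-e^{-\lambda}(1+\lambda)}{\lambda^2}.
\]
First I will verify that $G$ is strictly decreasing on $(0,\infty)$; it is the limit of $P_n'(1-\lambda/n)/n^2$, which inherits the monotonicity of $P_n'$. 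Hence $K>0$, so the right-hand side above has a nonvanishing coefficient and the ODE is regular for $\lambda\in(0,\infty)$.

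Next I will integrate backwards from the terminal condition, as in the discrete proof. Write $H(\lambda)=2(1-e^{-\lambda})/\lambda - e^{-\lambda}$; it decreases from $1$ at $\lambda=0$ to $0$ at $\infty$. The ODE becomes $-\tfrac{d}{dx}G(\lambda(x)) = H(\lambda) - x/\zeta$. I will set $u=G(\lambda)$, which is a decreasing bijection $(0,\infty)\to(0,1/2)$, and define $\Phi=H\circ G^{-1}$. This gives the scalar ODE $u'(x) = x/\zeta - \Phi(u)$ with $u(1)=1/2$ (which encodes $\lambda=0$) and the requirement $u(0^+)=0$ (which encodes $\lambda=\infty$). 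Near $x=1$ we have $u' = 1/\zeta - 1 <0$ when $\zeta>1$, so $\lambda$ increases as $x$ decreases, consistent with the boundary data. This is the continuous analogue of the "$\zeta>1$ is necessary" step: if $\zeta\le 1$, then $u$ is not below $1/2$ just left of $1$. The solution $u_\zeta$ of this terminal value problem exists on $(x_\zeta,1]$ by Picard, since $\Phi$ is smooth on $(0,1/2)$. Possible degeneracy at the endpoint $u=1/2$ will be handled by a local expansion of $G$ and $H$ at $\lambda=0$.

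The key monotonicity step comes next. Differentiating in $\zeta$, $w=\partial_\zeta u$ satisfies $w' = -x/\zeta^2 - \Phi'(u)\,w$ with $w(1)=0$. Integrating backwards from $x=1$ gives $w>0$ on $(x_\zeta,1)$, so $u_\zeta(x)$ is strictly increasing in $\zeta$. This matches the discrete statement that $\varepsilon_j$ increases in $\zeta$. Then the quantity "the value of $u$, or the hitting point, at $x=0$" is monotone in $\zeta$, and uniqueness follows: at most one $\zeta$ makes $u_\zeta$ reach $0$ exactly at $x=0$.

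For existence I will use continuity plus two extremes. For $\zeta$ close to $1$, $u_\zeta$ drops slowly and stays positive down to $x=0$. For $\zeta$ large, the drift $x/\zeta$ is negligible, $u'\approx-\Phi(u)$, and $u$ hits $0$ (i.e.\ $\lambda\to\infty$) at some $x_\zeta>0$; this uses $\Phi(u)\sim 2\lambda^{-1}\sim c\sqrt{u}$ as $u\to0$, which makes the hitting time finite. An intermediate value argument on the hitting point then yields $\zeta^*$. Uniqueness of the solution for that $\zeta^*$ follows from uniqueness of the terminal value problem away from the endpoints.

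The main obstacle is the two singular endpoints. At $x=1$, $\lambda=0$ makes the ODE degenerate, since $K(0)$ and $\Phi'$ at $u=1/2$ must be checked for finiteness. At $x=0$, $u=0$ is a non-Lipschitz point because $\Phi\sim\sqrt{u}$. I would first try to establish the asymptotics of $G$ and $H$ at $0$ and $\infty$ explicitly, and then run the comparison and continuity arguments on compact subintervals with limits taken at the ends.
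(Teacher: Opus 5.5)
\textbf{Sign slip.} There is a sign error, and it propagates. The paper's derivation gives $\lambda'(x)\,G'(\lambda)=H(\lambda)-x/\zeta$, i.e.\ $\tfrac{d}{dx}G(\lambda(x))=H(\lambda)-x/\zeta$. So the scalar equation is $u'=\Phi(u)-x/\zeta$, not $u'=x/\zeta-\Phi(u)$. With your sign, $u'(1)=1/\zeta-1<0$ would push $u$ above $1/2$ (outside the range of $G$) as $x$ decreases. Likewise $w>0$ would mean $\varepsilon$ is \emph{decreasing} in $\zeta$. Both contradict the conclusions you attach to them. With the correct sign the variational equation is $w'=\Phi'(u)w+x/\zeta^2$ with $w(1)=0$, so $w<0$ on $(0,1)$. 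Hence $u_\zeta(x)$ is strictly decreasing in $\zeta$ for each fixed $x\in(0,1)$. That step is sound. It is also cleaner than the paper's route, which shoots forward from $y(0)=0$ and imports monotonicity from the discrete $\varepsilon_j$ by passing to a limit.

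\textbf{The gap at $x=0$.} This gap is genuine and is fatal to uniqueness as you set it up.

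First, there is no hitting point. With the correct sign, $u$ can never reach $0$ at some $x_0>0$: there $u'(x_0)=\Phi(0)-x_0/\zeta=-x_0/\zeta<0$, which is incompatible with $u>0$ on $(x_0,1)$. So no hitting point $x_\zeta>0$ exists, and the intermediate value step has nothing to act on.

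Second, and more importantly, pointwise monotonicity on $(0,1)$ does not survive the limit $x\to0$. Suppose $u_{\zeta_0}(0^+)=0$. Then for every $\zeta>\zeta_0$ you get $0<u_\zeta<u_{\zeta_0}$ on $(0,1)$, hence $u_\zeta(0^+)=0$ as well. Thus existence of one admissible $\zeta$ forces a whole half-line of them. In fact all large $\zeta$ work: the drift-free solution already reaches $u=0$ near $x\approx0.22$, since $\int_0^\infty(-G'/H)\,d\lambda\approx0.78<1$. So the claim that at most one $\zeta$ makes $u_\zeta$ vanish at $x=0$ is false. No argument using only $y(0)=0$ and $y(1^-)=1$ can give uniqueness of $\zeta$.

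\textbf{The missing idea: finer asymptotics at the origin.} Use $\Phi(u)\approx2\sqrt u$ and write $u=x^2v$. Then $xv'\approx2\sqrt v-2v-1/\zeta$. For $\zeta>2$ this has equilibria $v_\pm=s_\pm^2$ with $s_\pm=\tfrac12(1\pm\sqrt{1-2/\zeta})$. Toward $x=0$, $v_-$ attracts every trajectory with $v<v_+$, while $v_+$ is a separatrix.

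The discrete condition $\varepsilon_1=1$ selects the separatrix. For fixed $j$ and $n\to\infty$, \eqref{eq:lbrecurrence} becomes $(w_{j+1}-1)^2=w_j^2-j/\zeta$ with $w_j=1/\varepsilon_j$ and $w_1=1$. This drives $w_j/j\to s_+$, i.e.\ $\lambda(x)\sim 1/(s_+x)$.

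So $\zeta^*$ should be characterized as the unique $\zeta$ whose backward solution is this separatrix. Equivalently, $\zeta^*=\inf\{\zeta: u_\zeta(0^+)=0\}$. Your monotonicity then does give uniqueness of this threshold. Moreover, no trajectory enters the origin when $\zeta<2$, which yields $\zeta^*\ge 2>1$.

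The paper's proof does not close this gap either. It asserts that the initial value problem with $y(0)=0$ is uniquely solvable. That fails at this same non-Lipschitz point: for $\zeta>2$ a one-parameter family of solutions leaves the origin tangent to $u\approx s_-^2x^2$.
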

\begin{proof}
We first show the existence of a unique solution to the initial value problem \eqref{eq:ode11} with $y(0) = 0$ on $y < 1$. Let $F(y, x, \zeta) = \left( \frac{1}{\ln(y)} - \frac{2}{(\ln(y))^2} - \frac{2(1-y)}{y (\ln(y))^3}\right)^{-1} \cdot \left( - \frac{2(1-y)}{\ln(y)} - y - \frac{x}{\zeta} \right)$. Define $D = \{(x, y): 0 < x < 1, 0 < y < 1\}$. For any fixed $y$ and $\zeta$, $F$ is affine in $x$, thus $F$ is continuous in $x$ on $D$. Take any $y_0, y_1, x_0, x_1$ between $0$ and $1$, and define the rectangle $R = [x_0, x_1] \times [y_0, y_1]$, so that $R \subseteq D$. Consider $(x, y) \in R$ with $x$ fixed. Then note that $\ln(y)$ is continuously differentiable on $[y_0, y_1]$, and $F(y)$ is a rational combination of continuously differentiable functions, therefore $F$ is continuously differentiable in $y$. Since $R$ is a compact set, $\frac{\partial F}{\partial y}$ is bounded on $R$. Let $M = \sup_{(x, y) \in R} \left| \frac{\partial F}{\partial y} \right|$, then $M < \infty$, and it follows from the Mean Value Theorem that for any $y, \hat{y} \in [y_0, y_1]$, $|F(y) - F(\hat{y})| \leq M |y - \hat{y}|$. Therefore, $F$ is locally Lipschitz continuous in $y$ on $R$. By Picard–Lindelöf theorem (see~\cite{Kolmogorov1975analysis}, Section 8.2), the differential equation \eqref{eq:ode11} with the initial condition $y(\tilde{x}) = \tilde{y}$ where $(\tilde{x}, \tilde{y}) \in R$ has a unique solution on $[\tilde{x} - \epsilon, \tilde{x} + \epsilon]$ for some $\varepsilon > 0$. Now for $\delta > 0$ close to $0$, let $x_{\delta} > 0$ be some $x$ that is very close to $0$. Since $R$ can be taken as any closed subset of $D$, it follows from continuity that the differential equation \eqref{eq:ode11} with the initial condition $y(x_{\delta}) = \delta$ has a unique solution on $(0, 1)$. Taking both $\delta \downarrow 0$ and $x_{\delta} \rightarrow 0^+$, by continuity, \eqref{eq:ode11} with the initial condition $y(0) = 0$ has a unique solution on $[0, 1)$. 

Now recall that $y = e^{-\lambda(x)}$, where $\lambda(x)$ is the limit of $\lambda_j$, which is defined via $\varepsilon_j$, and in Proposition \ref{prop:lbgammaunique} we have shown that $\varepsilon_j$ is strictly monotone in $\zeta$. Then since $y$ is a composition of two monotonic functions, $y$ is monotone in $\zeta$. The monotonicity implies the uniqueness of $\zeta$ such that the resulting unique solution $y$ to \eqref{eq:ode11} with the initial value condition $y(0) = 0$ satisfies $\lim_{x \uparrow 1} y(x) = 1$. 
\end{proof}
We now verify that at $\zeta = \zeta_n$, $\varepsilon_n$ indeed approaches $0$ in the limit. 
\begin{lemma} \label{lm:lbasymp}
At $\zeta = \zeta_n$, $\varepsilon_n$ in the solution to \eqref{eq:lbrecurrence} approaches $0$ as $n \rightarrow \infty$.
\end{lemma}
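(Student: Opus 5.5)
I would prove Lemma~\ref{lm:lbasymp} by combining the closed form for $\zeta_n$ in Theorem~\ref{thm:lowerbound} with an upper bound on $\zeta_n$ that does not depend on $n$.

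Write $g_n(\varepsilon) = \varepsilon P_n'(1-\varepsilon) + P_n(1-\varepsilon)$, so that Proposition~\ref{prop:lbgammaunique} gives $\zeta_n = n/g_n(\varepsilon_n)$. From the closed forms recalled above,
\[
g_n(\varepsilon) = \frac{1-(1-\varepsilon)^n(1+n\varepsilon)}{\varepsilon} + \frac{(1-\varepsilon)\bigl(1-(1-\varepsilon)^n\bigr)}{\varepsilon} \leq \frac{2}{\varepsilon}.
\]
Hence $\zeta_n \geq n\varepsilon_n/2$, which rearranges to $\varepsilon_n \leq 2\zeta_n/n$. It therefore suffices to show that $\sup_n \zeta_n < \infty$. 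This would in fact give the stronger statement $\varepsilon_n = O(1/n)$, i.e.\ $\lambda_n = n\varepsilon_n$ stays bounded. That is the scaling used in the ODE derivation, although the boundary condition $\lambda(1)=0$ needs $\lambda_n \to 0$.

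To bound $\zeta_n$ uniformly, I would use the identity $R_n^* = \zeta_n$ from Theorem~\ref{thm:lowerbound}. Any OSDC algorithm with a competitive ratio bounded independently of $n$ then bounds $\zeta_n$. One option is to invoke Lemma~\ref{lm:ALGLB} in reverse: the optimal OSDC algorithm costs no more than any OSCC algorithm. Theorem~\ref{thm:general} only gives the bound $4.179$ asymptotically, but Remark~\ref{remark:alg} gives the algorithm of \cite{Disser2019}, whose ratio $6.052$ holds for every $n$. So $\zeta_n \leq 6.052$ for all $n$, and therefore $\varepsilon_n \leq 12.2/n \to 0$.

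If the statement is meant in the sharper sense $\lambda_n = n\varepsilon_n \to 0$, I would argue from the last step of the recurrence. Evaluating \eqref{eq:lbrecurrence} at $j=n-1$ gives
\[
\zeta_n = \frac{1}{P_n'(1-\varepsilon_n) - P_n'(1-\varepsilon_{n-1})}.
\]
Scaled by $1/n$, this difference behaves like $n\cdot \partial_\lambda\bigl[(1-e^{-\lambda}(1+\lambda))/\lambda^2\bigr]\,(\lambda_{n-1}-\lambda_n)/n$. Boundedness of $\zeta_n$ then forces $\lambda_{n-1}-\lambda_n = O(1/n)$. Combining this with the equation $\zeta_n = n/g_n(\varepsilon_n)$ written as $1/\zeta_n = g_n(\lambda_n/n)/n$, I would show that a nonzero limit of $\lambda_n$ is inconsistent with the boundary value $\lim_{x\uparrow 1} y(x) = 1$ in Lemma~\ref{lm:lbode}.

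I expect this last step to be the main obstacle, since it requires uniform control of the discrete recurrence near $j=n$. For the lemma as stated, the uniform bound on $\zeta_n$ together with $g_n(\varepsilon)\le 2/\varepsilon$ is enough.
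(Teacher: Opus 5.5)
Your proof is correct. Its key step, keeping $\zeta_n$ bounded, is done differently from the paper.

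Both arguments start from $\zeta_n=n/g_n(\varepsilon_n)$ and the fact that $g_n(\varepsilon)$ is of order $1/\varepsilon$. The paper uses $g_n(\delta)\to 2/\delta-1$ inside a contradiction argument. Your bound $g_n(\varepsilon)\le 2/\varepsilon$ (each of the two terms is at most $1/\varepsilon$) is the quantitative version and gives $\varepsilon_n\le 2\zeta_n/n$ directly.

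The routes split on how $\zeta_n$ is bounded. The paper asserts $\zeta_n<\zeta^*$ by an informal appeal to the monotonicity of $\varepsilon_n$ in $\zeta$ and to the ODE constant of Lemma~\ref{lm:lbode}. But the ODE derivation already assumes $\varepsilon_n\to 0$, and no actual comparison between the discrete system and the ODE is made.

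You instead use the operational meaning $\zeta_n=R_n^*$ from Theorem~\ref{thm:lowerbound}, proved independently in Section~\ref{section:lowerbound}, together with the queuing reduction of Lemma~\ref{lm:ALGLB}. Then any OSCC algorithm with an $n$-uniform guarantee bounds $\zeta_n$. This has three advantages:
\begin{itemize}
\item it avoids the circularity;
\item it covers all subsequences at once, whereas the paper's negation ``$\varepsilon_n>\delta$ for all $n$'' really needs to be taken along a subsequence;
\item it gives the rate $\varepsilon_n=O(1/n)$.
\end{itemize}
The price is importing an algorithmic upper bound. For a limit statement you do not even need the all-$n$ bound of \cite{Disser2019}: $\limsup_n\zeta_n\le 4.179$ from Theorem~\ref{thm:general} already gives $\varepsilon_n\le 2\zeta_n/n\to 0$.

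You should drop the paragraph about the ``sharper sense''. The function $g_n$ is decreasing with $g_n(0)=n$, and $g_n(\lambda/n)/n\to(2(1-e^{-\lambda})-\lambda e^{-\lambda})/\lambda$, which tends to $1$ as $\lambda\downarrow 0$. So $n\varepsilon_n\to 0$ would force $\zeta_n\to 1$, contradicting the claimed limit $\zeta^*\approx 2.472$.

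A nonzero limit of $n\varepsilon_n$ is therefore not something to rule out. Given $\zeta_n\to\zeta^*>1$, $n\varepsilon_n$ must stay bounded away from $0$. Your remark that the boundary condition $\lim_{x\uparrow 1}y(x)=1$ would require $n\varepsilon_n\to 0$ correctly identifies a problem in the paper's ODE derivation. It is not a gap in your proof of the lemma as stated.
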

\begin{proof}
By contradiction, suppose $\varepsilon_n > \delta > 0$ for all $n$. Then for $\zeta_n$ defined in Theorem~\ref{thm:lowerbound}, 
\begin{align*}
\zeta_n = \frac{n}{\varepsilon_n P_n'(1-\varepsilon_n) + P_n(1-\varepsilon_n)} > \frac{n}{\delta P_n'(1-\delta) + P_n(1-\delta)}
\end{align*}
since $\varepsilon P_n'(1-\varepsilon) + P_n(1-\varepsilon)$ is decreasing in $\varepsilon \geq 0$. Taking $n \rightarrow \infty$, using the closed form of the sums in $P_n(t)$ and $P_n'(t)$, the denominator becomes $\frac{\delta}{\delta^2} + \frac{1-\delta}{\delta} = \frac{2}{\delta}-1 < \infty$, so $\zeta_n$ approaches $\infty$. However, recall from the proof of Proposition \ref{prop:lbgammaunique} that $\varepsilon_n$ is decreasing in $\zeta$, so we should have $\zeta_n < \zeta^*$ for $\zeta^*$ stated in Lemma \ref{lm:lbode} which gives the solution with $\varepsilon_n$ approaching $0$ in the limit. This contradicts with the result that $\zeta^* < \infty$. Therefore, we must have $\varepsilon_n$ approaching $0$ in the limit with $\zeta_n$. 
\end{proof}
This implies that as $n \rightarrow \infty$, $\zeta_n$ coincides with the unique $\zeta^*$ in Lemma \ref{lm:lbode}, so Proposition \ref{prop:lowerbound} follows.

\section{Missing proofs from Section~\ref{section:alg} and Section~\ref{section:unif}} \label{app:unif}
\subsection{Missing proofs from Section~\ref{section:alg}} \label{appsub:alg}
\begin{proof}[Proof of Claim \ref{claim:kexpcost}]
Using the same change of variable $F^{-1}(u) = \int_0^u r(v) \, \mathrm{d}v$ as in Lemma \ref{lm:opt}: 
\begin{align*}
\mathbb{E}\left[ X_i \,\middle|\, F(X_i) \in \left( q_{l+1}, q_l\right]\right] = \frac{1}{q_l - q_{l+1}} \int_{q_{l+1}}^{q_l} F^{-1}(u) \, \mathrm{d}u &= \frac{1}{q_l - q_{l+1}} \int_{q_{l+1}}^{q_l} \int_0^u r(v) \, \mathrm{d}v \, \mathrm{d}u \\
&= \frac{1}{q_l - q_{l+1}} \int_0^{q_l} r(v) \cdot \int_{\max\{q_{l+1}, v\}}^{q_l} 1 \, \mathrm{d}u \, \mathrm{d}v\\
&= \frac{1}{q_l - q_{l+1}} \int_0^{q_l} r(v) \cdot \min\{q_l - q_{l+1}, q_l - v\} \, \mathrm{d}v,
\end{align*}
where in the second line we exchanged the order of integrals. 
\end{proof}

\subsection{Missing proofs from Section~\ref{section:unif}} \label{appsub:unif}

\begin{proof}[Proof of Lemma~\ref{lm:uniformkcost}]
Taking $r(v) = 1$ and substituting the inputs of $\text{ALG}$ into Lemma \ref{lm:hirecost}, $C(k)$ becomes
\begin{align*}
C(k) &= \frac{q^k}{\beta} \sum_{l=k}^j \int_0^{\beta / q^l} \min\left\{\frac{\beta(q-1)}{q^{l+1}}, \frac{\beta}{q^l} - v \right\} \, \mathrm{d}v \cdot \left\lfloor a \cdot q^l + l + \left(2 - \frac{a}{q}\right)\right\rfloor \\
&= \frac{\beta(q^2-1)}{2q^2} \sum_{l=k}^j  \frac{1}{q^{2l-k}}\cdot \left\lfloor a \cdot q^l + l + \left(2 - \frac{a}{q}\right)\right\rfloor,
\end{align*}
Similarly, the cost in state $j$ becomes 
\begin{align*}
C(j) &= \frac{q^j}{\beta} \cdot \int_0^{\beta / q^j} \left( \frac{\beta}{q^l} - v \right) \, \mathrm{d}v \cdot \left\lfloor a \cdot q^j + j + \left(2 - \frac{a}{q}\right)\right\rfloor = \frac{\beta}{2q^{j}} \cdot \left\lfloor a \cdot q^j + j + \left(2 - \frac{a}{q}\right)\right\rfloor. 
\end{align*}
\end{proof}
\begin{proof}[Proof of Lemma \ref{lm:costub}]
We upper bound $C(k)$ by removing the floor and extend the sum to infinity. For $k = 0, \ldots, j-1$, we have
\begin{align*}
C(k) &\leq \frac{\beta(q^2-1)}{2q^2} \sum_{l=k}^{\infty} \frac{1}{q^{2l-k}}\cdot \left( a \cdot q^l + l + \left(2 - \frac{a}{q}\right)\right) = \frac{a\beta(q^2-1)}{2q^2}\sum_{l=k}^{\infty} \frac{1}{q^{l-k}} + \frac{\beta(q^2-1)}{2q^2} \sum_{l=k}^{\infty} \frac{l + (2 - \frac{a}{q})}{q^{2l-k}}.
\end{align*}
Now replace the infinite sums with their respective closed form to obtain
\begin{align*}
C(k)&\leq \frac{a\beta(q+1)}{2q} + \frac{\beta(q^2-1)}{2q^2} \left( \left(2-\frac{a}{q} \right) \cdot \frac{q^2}{q^k(q^2-1)} + \frac{1}{q^k} \cdot \frac{q^2 \cdot k(q^2 - 1) + q^2}{(q^2 - 1)^2}\right), 
\end{align*}
which simplifies to $\tilde{C}(k)$. The closed form of the sum $\sum_{l=k}^{\infty} \frac{l}{q^{2l-k}} = q^k \sum_{l=k}^{\infty} \frac{l}{q^{2l}}$ is derived by taking $x = \frac{1}{q^2}$ in the identity $\sum_{l=k}^{\infty} l \cdot x^l = \frac{x^k(k(1-x) + x)}{(1-x)^2}$ for $0 < x < 1$, which is obtained by taking the derivative of $\sum_{l=k}^{\infty} x^l = \frac{x^k}{1-x}$ and multiply by $x$. Similarly, we obtain $C(j) \leq \frac{a\beta}{2} + \frac{\beta}{2q^j} \left( j + \left(2 - \frac{a}{q} \right)\right) = \tilde{C}(j)$. 
\end{proof} 

\begin{proof}[Proof of Lemma~\ref{lm:uniformdualsol}]
First note that an optimal solution to \ref{form:DLPunif} tightens all constraints. Suppose $\{\alpha_k\}_{k=0}^{j}$ tightens Constraint~\eqref{constraint:alg1D2}. Then for $k = 1, \ldots, j-2$, $\alpha_k$ satisfies the recurrence relation~\eqref{eq:dualrecursion}, which is simply a rearrangement of the following:
\begin{align*}
\alpha_k &= (1-\tilde{p}) \cdot \alpha_{k+1} + \frac{q-1}{q^k} \alpha_0 + \tilde{p} \cdot \sum_{l=1}^{k-1} \frac{q-1}{q^{k-l}} \alpha_l \\
&= (1-\tilde{p}) \cdot \alpha_{k+1} + q \left[\frac{q-1}{q^{k+1}} \alpha_0 + \tilde{p} \cdot \sum_{l=1}^{k} \frac{q-1}{q^{k+1-l}} \alpha_l + (1-\tilde{p})\alpha_{k+2} - (1-\tilde{p})\alpha_{k+2} - \tilde{p} \cdot \frac{q-1}{q}\alpha_{k} \right] \\
&= (1-\tilde{p}) \cdot \alpha_{k+1} + q[\alpha_{k+1} - (1-\tilde{p})\alpha_{k+2}] - \tilde{p} (q-1)\cdot \alpha_{k},
\end{align*}
where in the third equality, we identify that $\frac{q-1}{q^{k+1}} \alpha_0 + \tilde{p} \cdot \sum_{l=1}^{k} \frac{q-1}{q^{k+1-l}} \alpha_l + (1-\tilde{p})\alpha_{k+2} = \alpha_{k+1}$ from Constraint~\eqref{constraint:alg1D2}. Equation~\eqref{eq:dualrecursion} is a second order homogeneous recurrence relation, which has characteristic equation $q(1-\tilde{p}) \tilde{\lambda}^2 - (1 + q - \tilde{p})\tilde{\lambda} + (1 + (q-1)\tilde{p}) = 0$ with distinct roots $\tilde{\lambda}_1 = 0$ and $\tilde{\lambda}_2 = \frac{1 + (q-1)\tilde{p}}{q(1-\tilde{p})} = \lambda$. Therefore, the optimal solution $\alpha_k^*$ has the solution form $A + B \cdot \lambda^k$. We now derive the boundary conditions for obtaining $A$ and $B$. By tightening Constraint~\eqref{constraint:alg1D1} and \eqref{constraint:alg1D2}, $\alpha_1^*$ and $\alpha_2^*$ satisfy
\begin{align*}
&\alpha_1^* = (1-\tilde{p}) \alpha_2^* + \frac{q-1}{q} \cdot \alpha_0^* = (1-\tilde{p}) \alpha_2^* + \frac{q-1}{q}((1-\tilde{p})\alpha_1^*+1),
\end{align*}
which results in 
\begin{align*}
(1+(q-1)\tilde{p}) \alpha_1^* = q(1-\tilde{p})\alpha_2^* + q-1.
\end{align*}
Similarly, with the tightened Constraint~\eqref{constraint:alg1D2}, we have the following relation connecting $\alpha_{j-1}^*$ and $\alpha_j^*$,
\begin{align*}
\alpha_{j-1}^* = (1-\tilde{p}) \alpha_{j}^* + \frac{q-1}{q^{j-1}} \alpha_{0}^* + \tilde{p} \cdot \sum_{l=1}^{j-2} \frac{q-1}{q^{j-1-l}} \alpha_{l}^* &= (1-\tilde{p}) \alpha_{j}^* + q \left( \frac{q-1}{q^{j}} \alpha_{0}^* + \tilde{p} \cdot \sum_{l=1}^{j-1} \frac{q-1}{q^{j-l}} \alpha_{l}^* - \tilde{p} \cdot \frac{q-1}{q} \cdot \alpha_{j-1}^* \right) \\
&= (1-\tilde{p}) \alpha_{j}^* + q \cdot \alpha_{j}^*  - \tilde{p}(q-1) \alpha_{j-1}^*, 
\end{align*}
which results in 
\begin{align*}
(1+(q-1)\tilde{p})\alpha_{j-1}^* = (1+q-\tilde{p}) \alpha_j^*.
\end{align*}
Plugging in $\alpha_k^* = A + B \cdot \lambda^k$, we obtain the following system,
\begin{align*}
(1+(q-1)\tilde{p}) (A + B \cdot \lambda) - q(1-\tilde{p})(A + B \cdot \lambda^2) &= q-1, \\
(1+(q-1)\tilde{p})(A + B \cdot \lambda^{j-1}) - (1+q-\tilde{p})(A + B \cdot \lambda^j) &= 0.
\end{align*}
Solving this system yields the solution $A = \theta$, $B = -\theta \cdot \lambda^{-(j+1)}$. Therefore, $\alpha_k^* = \theta \cdot (1 - \lambda^{k-(j+1)})$. $\alpha_0^*$ can be solved by tightening Constraint~\eqref{constraint:alg1D1}, which after direct calculation and algebraic rearrangement yields $\alpha_0^* = \theta \cdot \left( \frac{q\tilde{p}}{q-1} - (1-\tilde{p})\lambda^{-j}\right)$. Note that this set of solutions are nonnegative: recall that $\beta > -\frac{q^2}{a(q-1)} \cdot \ln\left( \frac{q}{2q-1}\right)$, a check with direct substitution shows that this ensures $E > 0$ and $\lambda > 1$, together with $q > 1$, this guarantees nonnegativity of the solution. Finally, we verify that the proposed solution tightens Constraint~\eqref{constraint:alg1D3}. With the given $\alpha_0^*, \ldots, \alpha_{j-1}^*$, we have 
\begin{align*}
\text{RHS of \eqref{constraint:alg1D3}} &= \frac{q-1}{q^j} \theta \cdot \left( \frac{q\tilde{p}}{q-1} - (1-\tilde{p})\lambda^{-j}\right) + \tilde{p} \cdot \sum_{l=1}^{j-1} \frac{q-1}{q^{j-l}} \theta \cdot (1 - \lambda^{l-(j+1)}) \\
&= \frac{\theta \tilde{p}}{q^{j-1}} - \frac{\theta(q-1)(1-\tilde{p})}{(\lambda q)^j} + \frac{\theta \tilde{p}(q-1)}{q^j} \cdot \sum_{l=1}^{j-1} q^l - \frac{\theta \tilde{p}(q-1)}{q^j \cdot \lambda^{j+1}} \cdot \sum_{l=1}^{j-1} (\lambda q)^l \\
&= \frac{\theta \tilde{p}}{q^{j-1}} - \frac{\theta(q-1)(1-\tilde{p})}{(\lambda q)^j} + \frac{\theta \tilde{p} (q^{j-1}-1)}{q^{j-1}} - \frac{\theta \tilde{p}(q-1)((\lambda q)^{j-1} - 1)}{q^{j-1} \cdot \lambda^{j} (\lambda q - 1)} \\
&= \theta \tilde{p} \cdot \frac{(2q-1)\tilde{p} - (q-1)}{\tilde{p}(1+(q-1)\tilde{p})} = \theta(1 - \lambda^{-1}) = \text{LHS of \eqref{constraint:alg1D3}}.
\end{align*}
$\alpha_{j+1}^*$ is not involved in any other constraints and the objective, therefore we omit its derivation. Direct calculation using Constraint~\eqref{constraint:alg1D4} will yield $\alpha_{j+1}^* = 1$. This concludes the proof. 
\end{proof}

\begin{proof}[Proof of Claim \ref{claim:dmonotone}]
First, note that the last constraint forces $d_{j+1}^* = 0$. Moreover, the optimal solution $\{d_k^*\}_{k = 1}^{j+1}$ tightens all constraints in \ref{form:LPunif}: suppose there exists $k$ such that the constraint corresponding to $d_k$ is not tight; then we can increase $d_k$ to tighten this constraint and remain feasible, since $d_k$ only appears in the RHS of other constraints, which means no other constraints are violated. We prove the monotonicity of $\{d_k^*\}_{k = 1}^{j+1}$ by induction. By contradiction, suppose $d_0^* < d_1^*$, then 
\begin{align*}
d_0^* &= \frac{1-\beta^2}{2\beta} + C_0 + \sum_{l=1}^j \frac{q-1}{q^l}\tilde{d}_l^* > C_0 + \frac{q-1}{q} d_0^* + \sum_{l=1}^{j-1} \frac{q-1}{q^{l+1}}d_{1+l}^*, 
\end{align*}
which by using the assumption $q\cdot C_0 \geq C_{1}$ and rearranging, implies $d_0^* > C_1 + \sum_{l=1}^{j-1} \frac{q-1}{q^l}d_{1+l}^*$. Since $d_0^* < d_1^*$, this in turn gives $C_1 + \sum_{l=1}^{j-1} \frac{q-1}{q^l}d_{1+l}^* < d_1^*$. Now,
\begin{align*}
d_1^* &= (1 - p_1) \cdot d_{0}^* + p_1 \cdot \left(C_1 + \sum_{l=1}^{j-1} \frac{q-1}{q^l}d_{1+l}^* \right) < (1 - p_1) \cdot d_{1}^* + p_1 \cdot d_1^* = d_1^*,
\end{align*}
which is a contradiction. Thus $d_0^* \geq d_1^*$. 

Now suppose $d_{k-1}^* \geq d_k^*$ for all $k$ up to some $h \leq j$. By contradiction, suppose $d_h^* < d_{h+1}^*$. Then, 
\begin{align*}
d_h^* &= (1 - p_h) \cdot d_{h-1}^* + p_h \cdot \left( C_h + \sum_{l=1}^{j-h} \frac{q-1}{q^l}d_{h+l}^* \right) \\
&> (1 - p_h) \cdot d_{h}^* + p_h \cdot \left( C_h + \frac{q-1}{q} d_h^* + \sum_{l=1}^{j-(h+1)} \frac{q-1}{q^{l+1}}d_{h+1+l}^* \right),
\end{align*}
where in the second inequality we applied the induction hypothesis $d_{h-1}^* \geq d_h^*$ in the first term and the assumption $d_h^* < d_{h+1}^*$ in the second term. Similar to the case with $k=0$, it follows from $q\cdot C_h \geq C_{h+1}$ that $d_{h+1}^* > d_h^* \geq C_{h+1} + \sum_{l=1}^{j-(h+1)} \frac{q-1}{q^l}d_{h+1+l}^*$. Now,
\begin{align*}
d_{h+1}^* &= (1 - p_{h+1}) \cdot d_{h}^* + p_{h+1} \cdot \left( C_{h+1} + \sum_{l=1}^{j-(h+1)} \frac{q-1}{q^l}d_{h+1+l}^* \right) < (1 - p_{h+1}) \cdot d_{h+1}^* + p_{h+1} \cdot d_{h+1}^* = d_{h+1}^*, 
\end{align*}
which is a contradiction, so we must have $d_h^* \geq d_{h+1}^*$. By induction, $d_{k-1}^* \geq d_k^*$ holds for all $k = 0, \ldots, j$. Since $d_{j+1}^* = 0$, $d_j^* \geq d_{j+1}^*$ follows from the non-negativity. Therefore, $d_{k-1}^* \geq d_k^*$ for all $k = 0, \ldots, j+1$. 
\end{proof}

\begin{proof}[Proof of Claim \ref{claim:jOPT}]
From Remark \ref{lm:uniformopt}, $\OPT_n = \sum_{i=1}^n \frac{1}{i+1} = \mathcal{H}_{n+1} - 1$, where $\mathcal{H}_n$ denotes the sum of first $n$ terms in the Harmonic series. Combining $j \leq \log_q \frac{n}{a} + 1 = \frac{\ln n}{\ln q} - \frac{\ln a}{\ln q} + 1$ and $\ln n < \ln (n+1) = \mathcal{H}_{n+1} - \gamma - \text{error}_n = \OPT_n + 1 - \gamma - \text{error}_n$, where $\text{error}_n \leq \frac{1}{2n}$ (see \cite{Havil2003}).
\end{proof}

\section{Missing Proofs from Section~\ref{section:general}} \label{app:general}

\begin{proof}[Proof of Lemma~\ref{lm:weakdual}]
Multiplying Constraint~\eqref{constraint:alg2D5} by $r(v)$, then integrate and simplify,
\begin{align*}
\zeta \int_0^1 r(v) \cdot \sum_{i=1}^n (1-v)^i \, \mathrm{d}v &\geq \alpha_0 \left( \frac{1}{\beta} \int_0^1 r(v) \cdot \psi_{\text{fail}}(v) \, \mathrm{d}v + \tilde{C}(0)\right) + \tilde{p} \sum_{k=1}^{j} \alpha_k \cdot \tilde{C}(k).
\end{align*}
Applying Constraint~\eqref{constraint:alg2P5} to the LHS and Constraint~\eqref{constraint:alg2P1}-\eqref{constraint:alg2P3} to the RHS gives
\begin{align*}
\zeta &\geq \alpha_0 \left( d_0 - \sum_{l=1}^j \frac{q-1}{q^l}d_l - \frac{1}{q^j} d_{j+1} \right) + \sum_{k=1}^{j} \alpha_k \left( d_k - (1 - \tilde{p}) \cdot d_{k-1} - \tilde{p}\sum_{l=1}^{j-k} \frac{q-1}{q^l}d_{k+l} - \frac{\tilde{p}}{q^{j-k}} d_{j+1} \right) \\
&= d_0(\alpha_0 - (1 - \tilde{p}) \cdot \alpha_1 - 1) + \sum_{k=1}^{j-1} d_k \left( \alpha_k - (1 - \tilde{p}) \cdot \alpha_{k+1} - \frac{q-1}{q^k} \alpha_0 - \tilde{p} \cdot \sum_{l=1}^{k-1} \frac{q-1}{q^{k-l}} \alpha_l \right) \\
&\quad + d_j \left( \alpha_j - \frac{q-1}{q^j} \alpha_0 - \tilde{p} \cdot \sum_{l=1}^{j-1} \frac{q-1}{q^{j-l}} \alpha_l \right) + d_{j+1} \left( \alpha_{j+1} - \frac{1}{q^j} \alpha_0 - \tilde{p} \cdot \sum_{l=1}^{j} \frac{1}{q^{j-l}} \alpha_l \right) + d_0 - d_{j+1} \cdot \alpha_{j+1} \\
&\geq 0 + d_0 - 0 = d_0,
\end{align*}
where the equality is simply a regrouping of the terms and plus minus $d_0$ and $d_{j+1} \cdot \alpha_{j+1}$. Observe that rearranging Constraint~\eqref{constraint:alg2D1}-\eqref{constraint:alg2D4} into the form LHS $\geq 0$, the LHS are exactly the terms inside the brackets in the equality above, thus these terms simplify to the first zero in the second inequality. The second zero is due to $d_{j+1} \leq 0$ in Constraint~\eqref{constraint:alg2P4}. This shows the weak duality. 
\end{proof}

\begin{proof}[Proof of Lemma~\ref{lm:Nclosedform}]
Recall $\alpha_0 = \theta \cdot \frac{q\tilde{p}}{q-1}$, $\alpha_k = \theta$ for $k = 1, \ldots, j$, where $\theta = \frac{q-1}{E}$. Evaluating $N(v)$,  
\begin{align*}
N(v) &= \frac{\theta \cdot \frac{q\tilde{p}}{q-1}}{\beta} \left( \psi_{\text{fail}}(v) + \sum_{l=0}^j (aq^l + \rho_l)\psi_{l}(v) \cdot \mathbbm{1}_{[0, \beta / q^l]}(v)\right) + \frac{\theta\tilde{p}}{\beta} \sum_{k=1}^{j} q^k \left( \sum_{l=k}^j (aq^l + \rho_l)\psi_{l}(v) \cdot \mathbbm{1}_{[0, \beta / q^l]}(v)\right) \\
&= \frac{\theta}{\beta} \cdot \frac{q\tilde{p}}{q-1} \cdot \psi_{\text{fail}}(v) + \frac{\tilde{p}}{\beta E} \sum_{l=0}^j (aq^l + \rho_l)\psi_{l}(v) \cdot \mathbbm{1}_{[0, \beta / q^l]}(v) + \frac{\theta \tilde{p}}{\beta} \sum_{k=0}^{j} q^k \left( \sum_{l=k}^j (aq^l + \rho_l)\psi_{l}(v) \cdot \mathbbm{1}_{[0, \beta / q^l]}(v)\right),
\end{align*}
where we applied $\alpha_0 = \theta + \frac{\tilde{p}}{E}$, and regrouped terms so that the first summation runs from $k=0$. Swapping the order of the summation in the last term and simplifying,  
\begin{align*}
\frac{\theta \tilde{p}}{\beta} \sum_{k=0}^{j} q^k \left( \sum_{l=k}^j (aq^l + \rho_l)\psi_{l}(v) \cdot \mathbbm{1}_{[0, \beta / q^l]}(v)\right) &= \frac{\theta \tilde{p}}{\beta} \sum_{l=0}^j \left((aq^l + \rho_l)\psi_{l}(v) \cdot \mathbbm{1}_{[0, \beta / q^l]}(v) \right) \cdot \sum_{k=0}^l q^k \\*
&= \frac{\tilde{p}}{\beta E}\sum_{l=0}^j (q^{l+1}-1)(aq^l + \rho_l)\psi_{l}(v) \cdot \mathbbm{1}_{[0, \beta / q^l]}(v).
\end{align*}
Substituting back and simplifying, 
\begin{align}
N(v) &= \frac{a\tilde{p}}{\beta E} \left( \frac{q}{a} \cdot \psi_{\text{fail}}(v) + \sum_{l=0}^j q^{l+1}\left(q^l + \frac{1}{a}\cdot\rho_l \right) \psi_{l}(v) \cdot \mathbbm{1}_{[0, \beta / q^l]}(v)\right). \label{eq:Nrewrite}
\end{align}
We break the summation in \eqref{eq:Nrewrite} into terms involving $q^l$ and $\rho_l$, which results in two sums $\sum_{l=0}^j q^{2l+1} \cdot \psi_{l}(v) \cdot \mathbbm{1}_{[0, \beta / q^l]}(v)$ and $\sum_{l=0}^j q^{l+1}\cdot\rho_l \cdot \psi_{l}(v) \cdot \mathbbm{1}_{[0, \beta / q^l]}(v)$. For the first sum, observe that when $l$ is temporarily fixed and $v \in \left( \frac{\beta}{q^{l+1}}, \frac{\beta}{q^l}\right]$, $\mathbbm{1}_{[0, \beta / q^{l'}]}(v) = 0$ for $l' > l$. In the non-zero terms, $\frac{\beta}{q^{l'}} - v < \frac{\beta(q-1)}{q^{l'+1}}$ only when $l' = l$, so $\psi_{l}(v) = \frac{\beta(q-1)}{q^{l'+1}}$ for $l' < l$, and $\psi_{l}(v) = \frac{\beta}{q^l} - v$ for $l' = l$. Similarly, $\psi_{\text{fail}}(v) = 1 - \beta$. Adding the first term $\frac{q}{a} \cdot \psi_{\text{fail}}(v)$ to this sum, we obtain, upon simplification, that for $v \in \left( \frac{\beta}{q^{l+1}}, \frac{\beta}{q^l}\right]$, 
\begin{align*}
\frac{q}{a} \cdot \psi_{\text{fail}}(v) + \sum_{l'=0}^j q^{2l'+1} \cdot \psi_{l'}(v) \cdot \mathbbm{1}_{[0, \beta / q^{l'}]}(v) = \frac{q}{a} \cdot (1-\beta) + \beta(q^l - 1) + \beta q^{l+1} - q^{2l+1} \cdot v = A_l - B_l \cdot v = M_l(v). 
\end{align*}
For $v \in (\beta, 1]$, $\mathbbm{1}_{[0, \beta / q^{l}]}(v) = 0$ for all $l$, therefore the only non-zero term is $\frac{q}{a} \cdot \psi_{\text{fail}}(v)$, and in this case $\psi_{\text{fail}}(v) = 1 - v$, which yields $M_{\text{fail}}(v) = \frac{q}{a}(1-v)$. 

The sum $\sum_{l=0}^j q^{l+1} \cdot\rho_l \cdot \psi_{l}(v) \cdot \mathbbm{1}_{[0, \beta / q^l]}(v)$ can be evaluated and simplified by repeating the same steps to obtain $R_l(v)$, so we omit the calculations for brevity. Since $M_l$ and $R_l$ are defined on intervals $\left( \frac{\beta}{q^{l+1}}, \frac{\beta}{q^l}\right]$ that partition $[0, \beta]$ and $M_{\text{fail}}$ is defined over $(\beta, 1]$, summing across $l$ and substituting back into \eqref{eq:Nrewrite} recovers the expression for $N(v)$ in the lemma. 
\end{proof}

\begin{proof}[Proof of Lemma~\ref{lm:betastar}]
We first check $\beta^*$ is in $(0, 1)$. For $\beta^* > 0$, we require $2\sqrt{q} - 1 - \frac{q}{a} > 0$, so $a \geq \frac{q}{2\sqrt{q}-1}$, which is directly satisfied with $\frac{a}{q} \geq 2$ and $q > 1$. Now rewrite $\beta^*$ as $\beta^* = \frac{a}{q(a-1)}\left(2\sqrt{q} - 1 - q\right) + 1$, so for $\beta^* < 1$ we require $\frac{a(2\sqrt{q}-1-q)}{q(a-1)} < 0$, thus $\frac{-a(\sqrt{q}-1)^2}{q(a-1)} < 0$, which is directly satisfied with $a , q > 1$. 

Recall that $\frac{a}{q} \geq 2$, so $R_0(v) < 0$, thus, $\Gamma_0(v) \leq \frac{M_0(v) \cdot v}{(1-v)(1-(1-v)^n)}$. Let $f(v) =  \frac{M_0(v) \cdot v}{1-v}$, then $\frac{M_0(v) \cdot v}{(1-v)(1-(1-v)^n)} = f(v) + f(v) \cdot \frac{(1-v)^n}{1-(1-v)^n}$. For $v \in \left(\frac{\beta}{q}, \beta\right]$, $\frac{(1-v)^n}{1-(1-v)^n} \rightarrow 0$ as $n \rightarrow \infty$, therefore to upper bound $\Gamma_0(v)$, it suffices to consider the maximum of $f(v)$, which is given by taking $l=0$ in Claim \ref{claim:maxsmalll}. 
\begin{claim} \label{claim:maxsmalll}
For all nonnegative integer $l$ and $0 < \beta < 1$, $\frac{M_l(v) \cdot v}{1-v}$ has a unique interior maximum $\phi_l$ on $v \in [0, 1]$, where $\phi_l = B_l\left(1 - \sqrt{1 - \frac{A_l}{B_l}} \right)^2$.
\end{claim}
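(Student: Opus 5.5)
We need to prove Claim \ref{claim:maxsmalll}: $g(v) = \frac{(A_l - B_l v)v}{1-v}$ has a unique interior maximum on $[0,1]$ with value $\phi_l = B_l\bigl(1-\sqrt{1-A_l/B_l}\bigr)^2$.

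The approach is an explicit one-variable calculus computation after the substitution $w = 1-v$. Writing $v = 1-w$, the numerator becomes
\[
(A_l - B_l + B_l w)(1-w),
\]
so
\[
g = \frac{(A_l - B_l + B_l w)(1-w)}{w} = \frac{A_l - B_l}{w} - (A_l - B_l) + B_l - B_l w .
\]
Set $c = B_l - A_l$. Then
\[
g = -\frac{c}{w} + (B_l + c) - B_l w = (2B_l - A_l) - \Bigl(\frac{c}{w} + B_l w\Bigr).
\]
The first thing to check is the sign of $c$, i.e.\ that $B_l > A_l$. This holds because $B_l = q^{2l+1}$, while $A_l = \beta q^l(1+q) - (q/a+1)\beta + q/a$. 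Using $\beta<1$ and $q/a \le 1/2$, we get $A_l < q^l(1+q) \le q^{2l+1}$ for $l \ge 1$. For $l = 0$ we have $A_0 = 2\beta - (q/a)\beta + q/a$; since $q/a \le 1/2$ and $\beta<1$, this is strictly less than $2 - \beta/2 + \ldots$, which should be below $q$ in the relevant parameter range. I expect this inequality to need the standing parameter restrictions, or else the claim implicitly assumes $A_l < B_l$ so that the square root is real. I would state explicitly that $0 < A_l < B_l$ is used: $A_l>0$ follows from $\beta<1$, and $A_l<B_l$ is exactly what makes $\sqrt{1-A_l/B_l}$ real.

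With $c>0$ and $B_l>0$, the function $w \mapsto c/w + B_l w$ is strictly convex on $(0,\infty)$. So $g$ is strictly concave in $w$ and has a unique critical point $w^* = \sqrt{c/B_l} = \sqrt{1 - A_l/B_l} \in (0,1)$. This gives an interior maximizer $v^* = 1 - w^* \in (0,1)$, and $g\to -\infty$ as $w \to 0^+$ confirms it is the global maximum on $[0,1)$. By AM–GM, the minimum of $c/w + B_l w$ is $2\sqrt{cB_l}$, so
\[
\phi_l = 2B_l - A_l - 2\sqrt{B_l(B_l - A_l)} = B_l\Bigl(1 - 2\sqrt{1-A_l/B_l} + (1 - A_l/B_l)\Bigr) = B_l\bigl(1-\sqrt{1-A_l/B_l}\bigr)^2 .
\]

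The main obstacle is not the calculus, which is routine, but justifying $A_l<B_l$ (equivalently $w^*\in(0,1)$) for all $l\ge0$ under the paper's parameter constraints. I would handle $l\ge1$ by the crude bound above and treat $l=0$ separately using $a/q\ge2$ and $q>1$.
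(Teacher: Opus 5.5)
Your calculus is correct. The substitution $w=1-v$ gives $g=(2B_l-A_l)-\bigl(\tfrac{c}{w}+B_lw\bigr)$ with $c=B_l-A_l$, and this is a tidier version of the paper's argument. The paper instead computes $g''(v)=2(A_l-B_l)/(1-v)^3$ and solves $B_lv^2-2B_lv+A_l=0$; your AM--GM step gives the maximal value directly, without back-substitution.

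The gap is in the one step everything rests on, namely $c>0$.

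\begin{itemize}
\item For $l=0$ you miscomputed $A_0$. From the definition, $A_0=\beta(1+q)-(q/a+1)\beta+q/a=\beta q-\beta q/a+q/a$, not $2\beta-\beta q/a+q/a$. With your expression, the required inequality $A_0<B_0=q$ would actually fail at the paper's parameters: $q=1.49$, $\beta=0.954$, $a=3.6$ give about $1.93>1.49$. This is presumably why you could not close that case.
\item For $l\ge 1$, your chain $A_l<q^l(1+q)\le q^{2l+1}$ needs $q^{l+1}\ge 1+q$. This fails at $l=1$ whenever $q<(1+\sqrt5)/2$, in particular at $q=1.49$, where $q^2\approx 2.22<2.49$. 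So your bound does not establish $A_1<B_1$. The inequality is in fact true there ($A_1\approx 2.60<B_1\approx 3.31$), but you have not proved it.
\end{itemize}

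Falling back on ``the claim implicitly assumes $A_l<B_l$'' is not acceptable. The claim asserts the maximum for every $l\ge 0$ and $0<\beta<1$ under the standing restrictions $q>1$, $a/q\ge 2$, and the paper proves the inequality.

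The missing step is short. The paper observes that $B_0-A_0=(1-\beta)(q-q/a)>0$ and that $B_l-A_l$ is increasing in $l$, because $(B_{l+1}-A_{l+1})-(B_l-A_l)=(q^2-1)q^l(q^{l+1}-\beta)>0$.

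Equivalently, you can use the identity $B_l-A_l=(q^l-1)(q^{l+1}-\beta)+(1-\beta)(q^{l+1}-q/a)$. The first term is nonnegative. The second is positive, since $\beta<1$ and $q^{l+1}\ge q>q/a$ (recall $a\ge 2q>2$).

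With $0<A_l<B_l$ in hand (you correctly note $A_l>0$), you get $w^*=\sqrt{1-A_l/B_l}\in(0,1)$, and the rest of your argument goes through unchanged.
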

Now, since $\sup_{v \in \left( \beta / q, \beta\right]} \Gamma_0(v) \leq \max_{v \in \left( \beta / q, \beta\right]} f(v) \leq \max_{v \in [0, 1]} f(v)$ we have, 
\begin{align*}
\lim_{n\to \infty}\sup_{v \in \left( \beta / q, \beta\right]} \Gamma_0(v) \leq \lim_{n\to \infty}\max_{v \in [0, 1]} f(v) &= B_0\left(1 - \sqrt{1 - \frac{A_0}{B_0}} \right)^2 = q \left(1 - \sqrt{(1-\beta)\cdot \frac{a-1}{a}} \right)^2,
\end{align*}
which is at most $1$ when $\beta \leq \beta^*$ by routine calculation. 
\end{proof}

\begin{proof}[Proof of Lemma~\ref{lm:rhovsm}]
Taking the first derivative of $\varepsilon_l(v)$, $\varepsilon_l'(v) = \frac{\tilde{A}_l \cdot B_l - A_l \cdot \tilde{B}_l}{(M_l(v))^2}$. The numerator is independent of $v$, and it determines the sign of $\varepsilon_l'(v)$. This implies that $\varepsilon_l(v)$ is monotone on the interval $\left( \frac{\beta}{q^{l+1}}, \frac{\beta}{q^l}\right]$, so its maximum happens at the endpoints. Therefore it suffices to upper bound $\varepsilon_l(v)$ at the two endpoints. 

Case 1: $\varepsilon_l'(v) > 0$, $\varepsilon_l(v)$ increasing. The maximum happens at $v = \frac{\beta}{q^{l}}$, so for all $v \in \left( \frac{\beta}{q^{l+1}}, \frac{\beta}{q^l}\right]$, $\varepsilon_l(v) \leq \frac{1/a \cdot R_l\left(\beta / q^{l}\right)}{M_l\left(\beta / q^{l}\right)}$. Expanding $R_l$ and $M_l$ gives 
\begin{align*}
\varepsilon_l(v) \leq \frac{\beta(q-1)\left( \left( 2-\frac{a}{q} \right)l + \frac{(l-1)l}{2}\right)}{a\beta \left( q^l - \frac{q}{a} - 1\right) + q} &\leq \frac{\beta(q-1)\left( \left( 2-\frac{a}{q} \right)l + \frac{(l-1)l}{2}\right)}{a\beta \left( q^l - \frac{q}{a} - 1\right) + \beta q} \leq \frac{(q-1)(l-1)l}{2a(q^l - 1)} \leq \frac{(l-1)l}{2aq^{l-1}},
\end{align*}
where the second inequality used $\beta \leq 1$, the third inequality used $\frac{a}{q} \geq 2$ and the last inequality used $\frac{q-1}{q^x-1} \leq \frac{1}{q^{x-1}}$ for $x \geq 1$, which follows from $\frac{q-1}{q^x-1} \leq \frac{q - 1/q^{x-1}}{q^x-1} = \frac{(q^x-1) / q^{x-1}}{q^x-1} = \frac{1}{q^{x-1}}$.

Case 2: $\varepsilon_l'(v) < 0$, $\varepsilon_l(v)$ decreasing. The supremum happens at $v \rightarrow \frac{\beta}{q^{l+1}}$, so for all $v \in \left( \frac{\beta}{q^{l+1}}, \frac{\beta}{q^l}\right]$, $\varepsilon_l(v) \leq \lim_{v \rightarrow \beta / q^{l+1}} \frac{1/a \cdot R_l\left(v\right)}{M_l\left(v\right)}$. Expanding $R_l$ and $M_l$ gives 
\begin{align*}
\varepsilon_l(v) &\leq \frac{\beta(q-1)\left( \left( 3-{a}/{q} \right)l + {(l-1)l}/{2} + 2-{a}/{q}\right)}{a\beta \left( q^{l+1} - {q}/{a} - 1\right) + q} \\
&\leq \frac{\beta (q-1)l\left( 3 - {a}/{q} + {l-1}/{2}\right)}{a\beta \left( q^{l+1} - {q}/{a} - 1\right) + \beta q} \leq \frac{(q-1)l(l+1)}{2a(q^{l+1} - 1)} \leq \frac{l(l+1)}{2aq^l},
\end{align*}
where the second inequality used $\beta \leq 1$ and $\frac{a}{q} \geq 2$, the third inequality used $3 - \frac{1}{2} - \frac{a}{q} \leq \frac{5}{2} - 2 = \frac{1}{2}$, and the last inequality again applied $\frac{q-1}{q^x-1} \leq \frac{1}{q^{x-1}}$.

As a result, for all $l \geq 1$, $\varepsilon_l(v) \leq \max \left\{{(l-1)l} / {2aq^{l-1}}, {l(l+1)}/{2aq^{l}}\right\} = \tilde{\varepsilon}_l$.
\end{proof}

\begin{proof}[Proof of Lemma~\ref{lm:maxbound}]
Recall $\Gamma_l = \frac{M_l(v)\cdot v (1 + \varepsilon_l(v))}{(1-v)(1-(1-v)^n)}$. As discussed in the proof sketch, our task is to upper bound $\sup_{l \in \{0, \ldots, j/2 - 1\}} \sup_{v \in (\beta / {q^{l+1}}, \beta / q^l]} \Gamma_l(v)$ and $\sup_{l \in \{j/2, \ldots,j\}} \sup_{v \in (\beta/q^{l+1}, \beta/q^l]} \Gamma_l(v)$ in the limit $n \rightarrow \infty$. In what follows, recall $j = \left\lceil \log_q \frac{n}{a} \right\rceil$, so we upper bound $j$ by $\log_q \frac{n}{a} + 1$. 

Case 1: $l \leq \frac{j}{2} - 1$. To upper bound $\sup_{l \in \{0, \ldots, j/2 - 1\}} \sup_{v \in (\beta / {q^{l+1}}, \beta / q^l]} \Gamma_l$, we first focus on $\frac{M_l(v) \cdot v}{1-v}$. From Claim \ref{claim:maxsmalll}, $\sup_{v \in (\beta / {q^{l+1}}, \beta / q^l]} \frac{M_l(v) \cdot v}{1-v}$ is at most $\phi_l = B_l\left(1 - \sqrt{1 - \frac{A_l}{B_l}} \right)^2$. Let $z_l = \frac{\beta (1+q)}{q^{l+1}}$, then $\frac{A_l}{B_l} = \frac{\beta (1+q)}{q^{l+1}} - \frac{(q/a+1)\beta - q/a}{q^{2l+1}}$ approaches $z_l$ as $l$ increases, so for algebraic simplicity, we consider $B_l\left(1 - \sqrt{1 - z_l} \right)^2$ instead. Using $z_{l+1} = \frac{z_l}{q}$ and $B_{l+1} = \frac{B_l}{q^2}$, we have 
\begin{align*}
\frac{\phi_l}{\phi_{l+1}} = \frac{B_l\left(1 - \sqrt{1 - z_l} \right)^2}{B_{l+1}\left(1 - \sqrt{1 - z_{l+1}} \right)^2} = \frac{1}{q^2} \left(\frac{1 - \sqrt{1-z_l}}{1 - \sqrt{1-z_l/q}}\right)^2 \geq \frac{1}{q^2} \cdot q^2 = 1,
\end{align*}
where the inequality follows from Claim \ref{claim:abratiobound}.

\begin{claim} \label{claim:abratiobound}
$\frac{1 - \sqrt{1-x}}{1 - \sqrt{1-x/q}} \geq q$ for $x > 0$, $q > 1$.
\end{claim}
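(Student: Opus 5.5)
The cleanest route is to rationalize both the numerator and the denominator, which turns the claim into a comparison of two square roots. Throughout, I take $0 < x \le 1$, so that both square roots are real; this is the regime in which the claim is applied, with $x = z_l$ playing the role of $A_l/B_l \le 1$ inside the square root of Claim~\ref{claim:maxsmalll}. For any $0 < y \le 1$, multiplying by the conjugate gives
\begin{align*}
1 - \sqrt{1-y} = \frac{y}{1 + \sqrt{1-y}} .
\end{align*}
This expression is strictly positive, so the denominator $1 - \sqrt{1 - x/q}$ is nonzero whenever $x > 0$.

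Next I apply this identity with $y = x$ in the numerator and $y = x/q$ in the denominator. Note that $x/q \in (0,1)$ because $q > 1$. This yields
\begin{align*}
\frac{1 - \sqrt{1-x}}{1 - \sqrt{1-x/q}} = \frac{x/(1+\sqrt{1-x})}{(x/q)/(1+\sqrt{1-x/q})} = q \cdot \frac{1 + \sqrt{1 - x/q}}{1 + \sqrt{1-x}} .
\end{align*}
It therefore remains to show that the last fraction is at least $1$. Since $q > 1$ and $x > 0$, we have $x/q < x$, hence $1 - x/q > 1 - x \ge 0$. Because the square root is increasing on $[0,\infty)$, this gives $\sqrt{1-x/q} \ge \sqrt{1-x}$, so the fraction is at least $1$, which proves the claim. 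The inequality is in fact strict.

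I do not expect any real obstacle here. The only point needing care is the domain. If the claim were ever invoked with $x > 1$, the expression would not be real, so the application in the proof of Lemma~\ref{lm:maxbound} should be checked to ensure it only uses values $z_l \le 1$ (or the corresponding $A_l/B_l \le 1$). When $x$ is instead taken to be $A_l/B_l$, this condition follows from the interior-maximum statement of Claim~\ref{claim:maxsmalll}.
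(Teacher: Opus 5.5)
Your proof is correct, and it takes a different route from the paper's.

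The paper's argument uses calculus. It sets $\phi(x) = q\sqrt{1-x/q} - \sqrt{1-x}$, checks that $\phi'(x) = \frac{1}{2\sqrt{1-x}} - \frac{1}{2\sqrt{1-x/q}} \ge 0$, and concludes $\phi(x) \ge \phi(0) = q-1$. Rearranging gives $1-\sqrt{1-x} \ge q\bigl(1-\sqrt{1-x/q}\bigr)$.

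You instead rationalize the numerator and denominator, which gives the exact identity
\[
\frac{1-\sqrt{1-x}}{1-\sqrt{1-x/q}} = q\cdot\frac{1+\sqrt{1-x/q}}{1+\sqrt{1-x}} .
\]
The fact $\sqrt{1-x/q}\ge\sqrt{1-x}$, which the paper uses only to sign $\phi'$, then finishes the proof directly.

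What each route buys:
\begin{itemize}
\item Yours needs no derivatives.
\item Yours handles the endpoint $x=1$ with no extra comment. The paper's $\phi'$ blows up there, so it implicitly needs a mean-value or continuity argument.
\item Your identity gives strict inequality and shows the constant $q$ is sharp as $x\to 0^+$.
\item The paper's argument is just as short, but it yields only the inequality.
\end{itemize}

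Your domain caveat is well taken. The stated hypothesis $x>0$ tacitly requires $x\le 1$ for the square roots to be real. One small correction to how you phrased it: it is $A_l/B_l$ that is always below $1$ (by the proof of Claim~\ref{claim:maxsmalll}), not the proxy $z_l=\beta(1+q)/q^{l+1}$ used in the proof of Lemma~\ref{lm:maxbound}. The proxy can exceed $1$ for small $l$: at the chosen $q=1.49$, $\beta=0.954$ one gets $z_0\approx 1.59$ and $z_1\approx 1.07$. So your opening identification of $z_l$ with a quantity $\le 1$ is slightly too quick. Your later hedge, that the application should be checked, is exactly right, and it points to a looseness in the paper's use of the claim rather than in your proof.
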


Since $\frac{\phi_l}{\phi_{l+1}} \geq 1$, $\phi_l$ is nonincreasing in $l$. Now, with $\beta \leq \beta^*$, the proof of Lemma \ref{lm:betastar} guarantees that $\phi_0 \leq 1$, so we have $\sup_{v \in (\beta / {q^{l+1}}, \beta / q^l]} \frac{M_l(v) \cdot v}{1-v} \leq \phi_l \leq \phi_0 \leq 1$. It follows that 
\begin{align*}
\sup_{l \in \{0, \ldots, j/2 - 1\}} \sup_{v \in (\beta / {q^{l+1}}, \beta / q^l]} \Gamma_l(v) &= \sup_{l \in \{0, \ldots, j/2 - 1\}} \sup_{v \in (\beta / {q^{l+1}}, \beta / q^l]} \frac{M_l(v)\cdot v (1 + \varepsilon_l(v))}{(1-v)(1-(1-v)^n)} \\*
&\leq \max_{l \in \{0, \ldots, j/2-1\}} \sup_{v \in (\beta / {q^{l+1}}, \beta / q^l]} \frac{1}{1-(1-v)^n} \cdot (1 + \varepsilon_l). 
\end{align*}
Note that in this case, $\frac{\beta}{q^{l+1}} \geq \frac{\beta}{q^{j/2}} \geq \frac{\beta \sqrt{a}}{\sqrt{qn}}$, so for any $v \in (\beta / {q^{l+1}}, \beta / q^l]$, when $n \rightarrow \infty$,
\begin{align*}
1-(1-v)^n \geq 1-\left(1-\frac{\beta}{q^{l+1}} \right)^n \geq 1 - e^{-\beta n / q^{l+1}} \geq 1 - e^{-\beta \sqrt{an} / \sqrt{q}} \rightarrow 1.
\end{align*}
Therefore, 
\begin{align*}
\lim_{n \rightarrow \infty} \sup_{l \in \{0, \ldots, j/2 - 1\}} \sup_{v \in (\beta / {q^{l+1}}, \beta / q^l]} \Gamma_l(v) \leq \lim_{n \rightarrow \infty} \max_{l \in \{0, \ldots, j/2-1\}} \sup_{v \in (\beta / {q^{l+1}}, \beta / q^l]} \frac{1}{1-(1-v)^n} \cdot (1 + \varepsilon_l) \leq 1 + \varepsilon^*. 
\end{align*}

Case 2: $l \geq \frac{j}{2}$. In this regime, when $n$ is large, $l$ will be large enough such that we can approximate $1 - v$ by $1$ and $1-(1-v)^n$ by $1-e^{-vn}$ with negligible error for $v \in \left( \frac{\beta}{q^{l+1}}, \frac{\beta}{q^l}\right]$. Also, by Lemma \ref{lm:rhovsm}, $\varepsilon_l(v)$ is close to $0$. Therefore, we can consider $\sup_{v \in (\beta/q^{l+1}, \beta/q^l]} \frac{M_l(v) \cdot v}{1 - e^{-vn}}$ in place of $\sup_{v \in (\beta/q^{l+1}, \beta/q^l]} \Gamma_l(v)$. Re-index $l$ as $j - k$ for $k = 0, \ldots, \frac{j}{2}$, with the change of variable $x = vn$ and $j \leq \log_q \frac{n}{a} + 1$, we have
\begin{align*}
M_l(v) \cdot v &=\left(\frac{\beta nq(1+q)}{aq^k} - \left( \frac{q}{a}+1\right)\beta + \frac{q}{a} - \frac{q^3 n^2}{a^2 q^{2k}} \cdot \frac{x}{n} \right)\cdot \frac{x}{n} \rightarrow \left(\frac{\beta q(1+q)}{aq^k} - \frac{q^3}{a^2 q^{2k}} \cdot x \right)x
\end{align*}
as $n \rightarrow \infty$. Therefore,  $\sup_{v \in (\beta/q^{l+1}, \beta/q^l]} \Gamma_l(v)$ can be reformulated as 
\begin{align} 
\sup_{x \in (\beta aq^{k-2}, \beta aq^{k-1}]} \frac{1}{1 - e^{-x}}\left(\frac{\beta q(1+q)}{aq^k} - \frac{q^3}{a^2 q^{2k}} \cdot x\right)x. \label{eq:largelsup}
\end{align}
Let $h_k(x) = \frac{\beta q(1+q)}{aq^k} - \frac{q^3}{a^2 q^{2k}} \cdot x$ and $I_k = (\beta aq^{k-2}, \beta aq^{k-1}]$. To understand the behavior of \eqref{eq:largelsup}, first note that for any $x \in I_k$, we can map it to $I_{k+1}$ by $qx \in (\beta aq^{k-1}, \beta aq^{k}] = I_{k+1}$, so $h_{k+1}(\tilde{x})$ on $\tilde{x} \in I_{k+1}$ can be reformulated on $x \in I_k$ as 
\begin{align*}
h_{k+1}(qx) &= \frac{\beta q(1+q)}{aq^{k+1}} - \frac{q^3}{a^2 q^{2k+2}} \cdot (qx) = \frac{1}{q}\left(\frac{\beta q(1+q)}{aq^k} - \frac{q^3}{a^2 q^{2k}} \cdot x \right) = \frac{1}{q} \cdot h_k(x).
\end{align*}
It follows for \eqref{eq:largelsup} on $I_{k+1}$, 
\begin{align*}
\sup_{\tilde{x} \in I_{k+1}} \frac{1}{1-e^{-\tilde{x}}} \cdot h_k(\tilde{x}) \cdot \tilde{x} &= \sup_{x \in I_k} \frac{1}{1-e^{-qx}} \cdot \frac{1}{q} \cdot h_k(x) \cdot (qx) \\
&= \sup_{x \in I_k} \frac{1}{1-e^{-qx}}\cdot h_k(x) \cdot x < \sup_{x \in I_k} \frac{1}{1-e^{-x}}\cdot h_k(x) \cdot x,
\end{align*}
where the last inequality is due to $e^{-qx} < e^{-x}$ since $q > 1$ and $x > 0$. It follows that \eqref{eq:largelsup} is decreasing in $k$, so the maximum is achieved at $k = 0$. We now have 
\begin{align*}
\lim_{n \rightarrow \infty} \sup_{l \in \{j/2, \ldots,j\}} \sup_{v \in (\beta/q^{l+1}, \beta/q^l]} \Gamma_l(v) &\leq \max_{x \in [0, \beta a / q]} \frac{1}{1-e^{-x}}\left(\frac{\beta q(1+q)}{a} - \frac{q^3}{a^2} \cdot x\right) x = \varphi_0^*. 
\end{align*} 
Combining the two cases give the desired result.
\end{proof} 

\begin{proof}[Proof of Claim \ref{claim:largejub}]
For the first part, it suffices to show $x+1 \geq \frac{x}{1-e^{-x}}$ on $x \geq 0$, which is equivalent to $\frac{1+x}{x} \geq \frac{1}{1-e^{-x}}$. From Taylor expansion, $e^x \geq 1+x$ for $x \geq 0$, so $e^{-x} \leq \frac{1}{1+x}$. It follows that $1 - e^{-x} \geq \frac{x}{1+x}$. Taking the reciprocal on both sides concludes the proof. 

For the second part, observe that $g_k''(x) = - \frac{2q^3}{a^2 q^{2k}} < 0$, so $g_k$ is strictly concave, thus its global maximizer is unique. Solve for the maximizer $x^*$ using the first order condition $\frac{\beta q(1+q)}{aq^k} - \frac{q^3}{a^2 q^{2k}} - \frac{2q^3}{a^2 q^{2k}} \cdot x = 0$, which yields $x^* = \frac{\beta a(1+q)q^k}{2q^2} - \frac{1}{2}$. By evaluating $g_k(x^*)$ we get $\max_x g_k(x) = \varphi_k$.
\end{proof}

\begin{proof}[Proof of Claim \ref{claim:maxsmalll}]
The second derivative of $\frac{M_l(v) \cdot v}{1-v}$ is $\frac{2(A_l - B_l)}{(1-v)^3}$. Note that $B_l - A_l = q^l(q^{l+1} - \beta(1+q)) + \left( \frac{q}{a} + 1\right)\beta - \frac{q}{a}$ is increasing in $l$, and $B_0 - A_0 = \left( q - \frac{q}{a}\right)(1-\beta) > 0$ for $q > 1$, $\frac{q}{a} \leq \frac{1}{2}$, and $\beta < 1$, so we have $A_l - B_l < 0$ for all $l$. It follows that $\frac{2(A_l - B_l)}{(1-v)^3} < 0$, thus $\frac{M_l(v) \cdot v}{1-v}$ is strictly concave on $v \in [0, 1]$. The first derivative of $\frac{M_l(v) \cdot v}{1-v}$ is $\frac{B_l \cdot v^2 - 2B_l \cdot v + A_l}{(1-v)^2}$. Equating this to $0$, the only solution inside $v \in [0, 1]$ is $v^* = 1 - \sqrt{1 - \frac{A_l}{B_l}}$. As a result, $\frac{M_l(v) \cdot v}{1-v}$ has a unique maximum at $v^*$, which evaluates to $B_l\left(1 - \sqrt{1 - \frac{A_l}{B_l}} \right)^2$. 
\end{proof}

\begin{proof}[Proof of Claim \ref{claim:abratiobound}]
Consider $\phi(x) = q\sqrt{1 - \frac{x}{q}} - \sqrt{1-x}$. $\phi$ is non-decreasing: $\phi'(x) = -\frac{1}{2\sqrt{1 - \frac{x}{q}}} + \frac{1}{2\sqrt{1-x}}$ is nonnegative because $\sqrt{1 - \frac{x}{q}} \geq \sqrt{1-x}$ for $x > 0, q > 1$. Therefore, $\phi(x) \geq \phi(0) = q - 1$ for all $x > 0$. Rearranging, this recovers $\frac{1 - \sqrt{1-x}}{1 - \sqrt{1-x/q}} \geq q$. 
\end{proof}

\end{document}